\theoremstyle{plain}
\newtheorem{theorem}{Theorem}[section]
\newtheorem{lemma}[theorem]{Lemma}
\newtheorem{corollary}[theorem]{Corollary}
\newtheorem{proposition}[theorem]{Proposition}
\theoremstyle{definition}
\newtheorem{definition}[theorem]{Definition}
\newtheorem{remark}[theorem]{Remark}
\newtheorem{example}[theorem]{Example}
\newtheorem{conditions}[theorem]{Conditions}
\newtheorem{claim}{Claim}
\newcommand{\exampleqed}{\ensuremath{\ocircle}\par}
\newcommand{\remarkqed}{\ensuremath{\Diamond}\par}
\newcommand{\ZZ}{\mathbb{Z}}			%
\newcommand{\NN}{\mathbb{N}}			%
\newcommand{\RR}{\mathbb{R}}			%
\newcommand{\Sites}{\mathcal{S}}			%
\newcommand{\Edges}{\mathcal{E}}			%
\newcommand{\Lang}{\mathcal{L}}			%
\newcommand{\symb}[1]{\mathtt{#1}}		%
\newcommand{\isdef}{\triangleq}			%
\DeclarePairedDelimiter\abs{\lvert}{\rvert}		%
\DeclarePairedDelimiter\norm{\lVert}{\rVert}	%
\newcommand{\vast}{\bBigg@{4}}
\newcommand{\Vast}{\bBigg@{5}}
\newcommand{\Var}{%
	\operatorname{\mathrm{Var}}%
}	
\newcommand{\oo}{\circ}					%
\renewcommand{\complement}{%
	\mathsf{c}%
}
\newcommand{\supp}{%
	\operatorname{\mathrm{supp}}%
}
\newcommand{\dd}{\mathrm{d}}			%
\newcommand{\ee}{\mathrm{e}}			%
\newcommand{\smallo}{o}					%
\newcommand{\xPr}{\operatorname{\mathbb{P}}}				%
\newcommand{\xExp}{\operatorname{\mathbb{E}}}				%
\newcommand{\rv}[1]{\mathbf{#1}}		%
\newcommand{\indicator}[1]{\mathbbm{1}_{#1}}					%
\newcommand{\measures}[1]{\mathscr{#1}}		%
\newcommand{\banach}[1]{\mathscr{#1}}		%
\newcommand{\field}[1]{\mathscr{#1}}		%
\newcommand{\relation}[1]{\mathcal{#1}}		%
\newcommand{\sgn}{\operatorname{sign}}
\newcommand{\blank}{\symb{\diamond}}		%
\newcommand{\simI}{%
	\overset{\raisebox{-2pt}{$\scriptscriptstyle 1$}}{\sim}%
}
\newcommand{\simII}{%
	\overset{\raisebox{-2pt}{$\scriptscriptstyle 2$}}{\sim}%
}
\newcommand{\VS}{\mathsf{VS}}				%
\newcommand{\NS}{\mathsf{NS}}				%
\newcommand{\Sull}{\mathsf{Sull}}			%
\newcommand{\normVS}[2][]{\norm[#1]{#2}_{\VS}}
\newcommand{\normsull}[2][]{\norm[#1]{#2}_{\Sull}}
\newcommand{\normNS}[2][]{\norm[#1]{#2}_{\NS}}
\newcommand{\normNSD}[2][]{\norm[#1]{#2}^*_{\NS}}
\newcommand{\Ham}{\operatorname{Ham}}
\newcommand{\hardcore}{\mathsf{hc}}		%
\newcommand{\coloring}{\mathsf{col}}	%
\newcommand{\finiterange}{\mathsf{f}}	%
\newcommand{\ball}[2]{ B_{#1}(#2)}
\newcommand{\xConfig}[1]{%
	\begin{tikzpicture}[
		baseline=-\the\dimexpr\fontdimen22\textfont2\relax,ampersand replacement=\&]
		\matrix[
			matrix of math nodes,
			nodes={
				minimum size=1.4ex,text width=1.4ex,
				text height=1.4ex,inner sep=3pt,draw={gray!20},anchor=center
			}, row sep=1pt,column sep=1pt
		] (config) {#1};
		\node[draw,rectangle,help lines,gray!50, dashed,fit=(config),inner sep=-1pt] {};
	\end{tikzpicture}
}
\title{%
	Gibbsian representations of continuous specifications:\\
	the theorems of Kozlov and Sullivan revisited
}
\author{
	Sebasti\'an Barbieri\footnote{%
		Departamento de Matem\'atica y Ciencia de la Computaci\'on, Universidad de Santiago de Chile.	
	}
	\and
	Ricardo G\'omez\footnote{%
		Institute of Mathematics, National Autonomous University of Mexico, Mexico
	}
	\and
	Brian Marcus\footnote{
		Department of Mathematics, University of British Columbia, Canada	
	}
	\and
	Tom Meyerovitch\footnote{%
		Department of Mathematics, Ben-Gurion University of the Negev, Israel
	}
	\and
	Siamak Taati\footnote{%
		Bernoulli Institute, University of Groningen, The Netherlands
	}
}
\begin{document}

\maketitle

\begin{abstract}
	The theorems of Kozlov and Sullivan characterize Gibbs measures as measures
	with positive continuous specifications.
	More precisely, Kozlov showed that every positive continuous specification
	on symbolic configurations of the lattice is generated by a norm-summable interaction.
	Sullivan showed that every shift-invariant positive continuous specification
	is generated by a shift-invariant interaction satisfying
	the weaker condition of variation-summability.
	These results were proven in the 1970s.  An open question since that
	time is whether Kozlov's theorem holds in the shift-invariant setting,
	equivalently whether Sullivan's conclusion can be improved from
	variation-summability to norm-summability.  We show that the answer is no:
	there exist shift-invariant positive continuous specifications that
	are not generated by any shift-invariant norm-summable interaction.
	On the other hand, we give a complete proof of an extension, suggested by Kozlov, of Kozlov's theorem to a characterization
	of positive continuous specifications on configuration spaces with arbitrary hard constraints. 	
	We also present an extended version of Sullivan's theorem.
	Aside from simplifying some of the arguments in the original proof,
	our new version of Sullivan's theorem applies in various
	settings not covered by the original proof.
	In particular, it applies when the support of the specification is
	the hard-core shift or the two-dimensional $q$-coloring shift for~$q\geq 6$.

	\medskip

	\noindent
	\emph{Keywords:} Gibbs measures, specifications, interactions, hard constraints, thermodynamic formalism.

	\smallskip
	
	\noindent
	\emph{MSC2010:}
		82B03   %
		82B20,	%
		37B10,	%
		37D35   %
		60G60.	%
	
	\renewcommand{\contentsname}{\vspace{-1.5em}}
	{\footnotesize\tableofcontents}
\end{abstract}

\section{Introduction}
\label{sec:intro}

A \emph{(nearest-neighbour) Markov random field} on the $d$-dimensional integer lattice $\ZZ^d$ is a probability measure  $\mu$ on a space
$\Omega\subseteq\Sigma^{\ZZ^d}$ of configurations of symbols from a finite alphabet $\Sigma$ on $\ZZ^d$ that
satisfies the following Markovian condition:  for every finite subset $\Lambda$ of sites,
the probability distribution on events of $\Lambda$, conditioned on the complement of $\Lambda$ is a function
of only the restriction of the configuration on the external boundary of $\Lambda$.
In the case where all configurations are allowed, i.e., $\Omega = \Sigma^{\ZZ^d}$ and $\mu$ is fully supported, it follows from
the Hammersley--Clifford Theorem~\cite{HamCli68, Spi71, Ave72} that the random field can be expressed
in a Gibbsian form by a nearest-neighbour interaction.

An \emph{almost-Markovian random field} on $\ZZ^d$ is defined by the following weaker condition:
for every finite subset $\Lambda$ of sites, the probability distribution on events of $\Lambda$,
conditioned on the complement of $\Lambda$ is merely continuous as a function of the restriction of the configuration to the complement of $\Lambda$.
Assuming again that $\Omega = \Sigma^{\ZZ^d}$ and $\mu$ is fully supported, can the random field be given
in a Gibbsian form by an interaction, and in particular by a highly regular interaction?
More than forty years ago, first Sullivan~\cite[Theorem 1]{Sul73}
and then Kozlov~\cite[Theorem~1]{Koz74} answered this question positively,
with Kozlov obtaining a stronger result: namely, the random field can be expressed in a Gibbsian form by
a so-called norm-summable (also called absolutely-summable) interaction.  This is a remarkable result:
mere continuity is sufficient to represent the random field in a Gibbsian form with an interaction of
very high regularity.

These results have more to do with the so-called \emph{specification} of a
random field rather than the joint distributions of the random field itself:  the specification is
the collection of all conditional probabilities $K(x|y)$ on events on a finite set of lattice sites
conditioned on the  complement.
This viewpoint  goes back to Dobrushin~\cite{Dob68c}, who referred to specifications as
``consistent systems of conditional probabilities.''
It is the specification that is represented in Gibbsian form
\begin{align}
\label{Gibbsian}
 	K(x|y) &= \frac{1}{Z}\exp\big(-E_\Phi(x \lor y)\big)
\end{align}
where $E_\Phi$ is the energy function corresponding to an interaction $\Phi$, $x$ is a configuration on
a finite set of sites $A$, $y$ is a configuration on $A^c$, and $Z$ is a normalizing factor.  Our main interest is in finding conditions on the conditional probabilities that guarantee a representation (\ref{Gibbsian}) with an interaction
 $\Phi$ of a high degree of regularity.

Now, consider a \emph{shift-invariant} almost-Markovian random field $\mu$, again assuming $\Omega \isdef \Sigma^{\ZZ^d}$  and $\mu$ is fully supported.
It is natural to ask if, in this case, the norm-summable interaction
given by Kozlov's theorem can always be chosen to be shift-invariant.
Both Kozlov and Sullivan addressed this problem in the 1970's,  but obtained weaker results.
Kozlov~\cite[Theorem~3]{Koz74} showed that if one imposes positivity and a stronger form of continuity,
in terms of decay of modulus of continuity, on a shift-invariant specification,
then in fact one can obtain a Gibbsian representation with a shift-invariant norm-summable interaction.
On the other hand, Sullivan~\cite[Corollary to Theorem 2]{Sul73} showed that mere positivity and continuity of
a shift-invariant specification is sufficient to obtain a Gibbsian representation with a
shift-invariant interaction
that satisfies a weaker form of regularity, which he called absolute convergence.
In fact, Sullivan's interaction satisfies what we call \emph{variation-summability},
which is somewhere in between absolute convergence and norm-summability.

The question of whether every positive shift-invariant almost-Markovian specification can be represented
by a shift-invariant norm-summable interaction has been considered by several authors,
including Gross~\cite{Gro82} (see the comments on page~195), van Enter, Fern\'andez, and Sokal~\cite{EntFerSok93}
(see the remark at the end of Section~2.4.9), and Fern\'andez~\cite{Fer06} (Section~4.3.3),
but was left open up until now.
The main result of our paper, Theorem~\ref{thm:Kozlov_is_annoying_INTRO}, %
gives a negative answer to this question. The proof is somewhat nonconstructive, works already in dimension $d = 1$ and uses elementary ideas
from Banach space theory, combinatorics and probabilistic inequalities.

\begin{theorem}
	\label{thm:Kozlov_is_annoying_INTRO}
	There exists a shift-invariant, positive, almost-Markovian specification on $\Omega \isdef \{\symb{0},\symb{1}\}^{\ZZ}$ which does not admit a Gibbsian representation in terms of a shift-invariant norm-summable interaction.
\end{theorem}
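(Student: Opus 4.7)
The plan is to recast the problem in Banach-space language and apply the open mapping theorem. Let $\NS$ denote the Banach space of shift-invariant norm-summable interactions, endowed with the norm $\normNS{\cdot}$, quotiented by the subspace of interactions that give rise to the trivial specification so that the induced Gibbsian map is injective. On the target side, introduce a Banach space $\mathcal{R}$ of shift-invariant \emph{relative-energy} functions associated with continuous positive specifications---concretely, the log-ratios $\log K(x|y)/K(x'|y)$ for single-site changes, viewed as continuous functions of $y$. Because these log-ratios depend linearly on $\Phi$ in the Gibbsian formula~\eqref{Gibbsian}, the Gibbsian representation induces a bounded injective linear operator $T : \NS \to \mathcal{R}$. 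One equips $\mathcal{R}$ with a complete norm that is finite for every continuous positive shift-invariant specification---for instance, a variation-summable norm $\normsull{\cdot}$, which is finite on such specifications by Sullivan's theorem. Theorem~\ref{thm:Kozlov_is_annoying_INTRO} is then equivalent to the assertion that $T$ is \emph{not} surjective.

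Arguing by contradiction, if $T$ were surjective, the open mapping theorem would furnish a constant $C > 0$ such that every $\rho \in \mathcal{R}$ admits a representing interaction $\Phi \in \NS$ with $\normNS{\Phi} \le C \normsull{\rho}$. It therefore suffices to produce, for each $C$, an interaction $\Phi_C \in \NS$ whose $\NS$-quotient norm exceeds $C$ while its relative-energy image $T(\Phi_C) \in \mathcal{R}$ has norm at most $1$. Contradicting surjectivity in this way guarantees the existence of a continuous positive shift-invariant specification outside the image of $T$, which is precisely the statement of the theorem.

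The construction of $\Phi_C$ is the main technical step and combines combinatorics with a probabilistic inequality. The natural idea is to take $\Phi_C$ to be a sum of many short-range (say two-site) interactions scattered along $\ZZ$, with independent uniformly random signs $\varepsilon_i = \pm 1$ on their coefficients. The norm $\normNS{\Phi_C}$ is then proportional to the total number of terms, whereas the contribution to the relative energy at the origin is, by a Khintchine / Paley--Zygmund-type inequality, on average much smaller---of order the square root of the number of terms---because the sum of independent random signs concentrates around its root-mean-square. One must, however, rule out that any \emph{physically equivalent} interaction $\Phi'$ can have a substantially smaller $\NS$-norm than $\Phi_C$; this requires a combinatorial rigidity argument at the level of the quotient, showing that physical equivalence (which corresponds to adding coboundary-type corrections coming from finitely supported shift-invariant gauge changes) cannot destroy the independent-sign structure that forces $\normNS{\cdot}$ to remain large. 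I expect this quotient-level lower bound to be the main obstacle. Once it is in place, choosing the number of random terms large enough in terms of $C$ and invoking the probabilistic inequality yields the desired $\Phi_C$, contradicting surjectivity of $T$ and proving the theorem.
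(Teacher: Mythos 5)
Your Banach-space framing matches the paper's: the theorem is indeed equivalent to non-surjectivity of the bounded linear map $T\colon\Phi\mapsto\psi_\Phi$ from $\banach{B}_\NS(\Omega)$ to the Banach space $\banach{B}_\Sull(\Omega)$ of shift-invariant continuous cocycles, and the paper likewise invokes the open mapping theorem via Proposition~\ref{prop:Banach_equivalences}. After that, however, your construction has a decisive gap, and your acknowledged ``main obstacle'' is one the paper avoids entirely by dualizing.

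The random-sign construction does not control the Sullivan norm, because that norm is a \emph{supremum} over configurations: $\normsull{\psi}=\sup_{x}\abs{\psi(x,\zeta_0 x)}$. If $\Phi_C$ is shift-invariant with, say, $\Phi_{\{0,k\}}(x)=\varepsilon_k\,g(x_0,x_k)$ for $k=1,\dots,N$, then $\psi_{\Phi_C}(x,\zeta_0 x)$ is a sum over $k$ of terms involving the free symbols $x_k,x_{-k}$, and an adversarial choice of $x$ can align the sign of each summand with $\varepsilon_k$. The result is $\normsull{\psi_{\Phi_C}}=\Theta(N)$, the same order as $\normNS{\Phi_C}$. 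Khintchine--Paley--Zygmund concentration controls the value at a \emph{fixed} (or random) $x$, but the $\sup$ over $x$ destroys the cancellation. So the upper bound $\normsull{T(\Phi_C)}\le 1$ fails; this is a problem independent of the quotient-norm lower bound you flagged.

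The paper circumvents the quotient-norm lower bound by applying the dual criterion of Corollary~\ref{cor:large_norm}: rather than constructing an interaction in $\banach{B}_\NS(\Omega)$ with small image but large quotient norm, it constructs linear functionals $\langle x^{(k)},y^{(k)}\rangle\in\banach{B}_\Sull(\Omega)^*$ (evaluation at asymptotic pairs) and cocycles $\psi_k\in\banach{B}_\Sull(\Omega)$ with $\sup_k\normsull{\psi_k}<\infty$, such that $\normNSD{\langle x^{(k)},y^{(k)}\rangle\circ T}=\smallo(n(k))$ while $\abs{\psi_k(x^{(k)},y^{(k)})}$ grows linearly in $n(k)$. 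The inequality $\abs{f(y)}/\normNSD{f\circ T}\le\inf_{x\in T^{-1}(y)}\normNS{x}$ gives exactly the lower bound on preimage norms that you were after, without having to construct or analyze a preimage. The estimate on the dual norm is tractable thanks to the explicit formula of Proposition~\ref{prop:normNSD_densities}, $\normNSD{\langle x,y\rangle\circ T}=\sup_A\abs{A}^{-1}\sum_{w\in\Sigma^A}\abs{\Delta_w(x,y)}$, which turns the problem into a pattern-counting estimate. The probabilistic ingredient (Lemma~\ref{lemma:tom}, via Chernoff--Hoeffding and McDiarmid) is an existence argument for \emph{deterministic} marker words $u^{(k)},v^{(k)}\in\{\symb{0},\symb{1}\}^{n(k)}$ with two properties: no nontrivial shift of the padded configurations $x^{(k)},y^{(k)}$ is close in Hamming distance to $u^{(k)}$ or $v^{(k)}$, and occurrence counts of all patterns of shape size $\le k$ in $x^{(k)}$ and $y^{(k)}$ are balanced to $\smallo(n(k)/k)$. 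The first property lets one define finite-range interactions $\Phi^{(k)}$ (detecting Hamming closeness to $u^{(k)}$ versus $v^{(k)}$) whose cocycles $\psi_k=\psi_{\Phi^{(k)}}$ have uniformly bounded Sullivan norm yet satisfy $\psi_k(x^{(k)},y^{(k)})=-2n(k)$; the second controls the dual norm. This is the rigidity that plays the role you hoped independent random signs would play, but it lives in the marker words, not in a random interaction, and it is paired with the dual viewpoint that eliminates the quotient computation.
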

In other words, there exists a shift-invariant, full-support, almost-Markovian measure
on $\Omega \isdef \{\symb{0},\symb{1}\}^{\ZZ}$
which is not a Gibbs measure for any shift-invariant norm-summable interaction.

In our proof, we use a sequence of continuous cocycles that we obtain through a probabilistic argument. The first usage of cocycles (also called relative Hamiltonians in this context) to parameterize specifications is attributed by Gross in his lecture notes~\cite{Gro82} to Pirogov and Sinai~\cite{PirSin75}. This idea has also been used by Petersen and Schmidt~\cite{PetSch97} to study certain classes of Gibbs measures in subshifts of finite type. Recent examples of applications of this formalism can be found in~\cite{ChaMey16} or~\cite{DacNah18}. The idea to use the probabilistic method to obtain objects with interesting properties in the context of thermodynamic formalism also has precedents in the literature. See for instance Israel~\cite{Isr79} (pages~29--30) for a proof in the same spirit in a similar context.

One might ask why norm-summability is the most desirable level of regularity, other than the fact that it
has become the standard form of regularity for interactions in Gibbs theory.
One answer is that it is needed for the DLR theory
(equivalence of shift-invariant Gibbs states and equilibrium states, see~\cite{LanRue69,Dob68b}) to work.
However, variation-summability seems to be sufficient
for the ``LR'' part of the theory (equilibrium implies Gibbs) but we do not
know if it is sufficient for the ``D'' part of the theory (Gibbs implies equilibrium).

Section~\ref{sec:prelim} of this paper contains preliminary definitions and results.
In Section~\ref{sec:prelim:specification}, the notion of a specification is defined in its own right,
without mention of an underlying random field, by a simple set of axioms.
Each positive specification is associated with an abstract notion of ``energy'' expressed
by a \emph{cocycle} on the asymptotic relation.  The cocycle assigns to each pair of asymptotic configurations
a real number interpreted as the energy difference between the two configurations.
Cocycles are equivalent representations of positive specifications.
The notion of cocycles is introduced in Section~\ref{sec:prelim:cocycles}, and their correspondence
with positive specifications is discussed in Section~\ref{sec:prelim:specification-cocycle:equivalence}.
For the remainder of the paper, we work in the framework of cocyles rather than specifications.
The classes of interactions that we use in this paper are defined in
Sections~\ref{sec:prelim:interactions:VS} and~\ref{sec:prelim:interactions:NS}.
The Banach space theory that we need is based on the open mapping theorem and is described
in Sections~\ref{sec:prelim:Banach_cocycles} and~\ref{sec:prelim:surjective_Banach}.

The proof of Theorem~\ref{thm:Kozlov_is_annoying_INTRO} is given in Section~\ref{sec:main:non_surjectivity}. It uses a criterion for a linear operator to be non-surjective, Proposition~\ref{prop:Banach_equivalences}. In Sections~\ref{sec:main:3_colored_chessboard} and~\ref{sec:main:square_islands},
we also give two explicit examples of shift-invariant specifications which fail to be represented by a shift-invariant norm-summable interaction,
but in the case where $\Omega$ is a proper subset of  $\Sigma^{\ZZ^d}$, i.e.,
certain configurations may be forbidden. These are two-dimensional examples,
taken from Chandgotia and Meyerovitch~\cite{ChaMey16},
that violate a linear growth condition~\eqref{eq:linear_growth_condition}
necessary for the existence of a shift-invariant norm-summable interaction.
However, in Proposition~\ref{prop:full-shift:linear-growth-is-automatic} we show that this approach
cannot possibly work in the case~$\Omega = \Sigma^{\ZZ^d}$.

In Section~\ref{sec:main:Kozlov}, we give a complete  proof of an extension of Kozlov's theorem to the case
where some configurations may be forbidden, in particular, our result holds for any compact subset $\Omega$ of
$\Sigma^{\ZZ^d}$. This extension was already suggested by Kozlov (see the paragraphs preceding Theorem~3
in~\cite{Koz77}), who did not give a proof.
In our proof, we  first show that Markovian specifications can be represented by finite range interactions and then we proceed by approximating almost-Markovian specifications by Markovian specifications. Our proof does not rely on any aspect of $\ZZ^d$ besides the fact that it is a countable set, as again suggested by Kozlov~\cite{Koz77}.
\begin{theorem}\label{thm:Kozlov2_INTRO}
	Let $\Omega$ be a symbolic configuration space over a countable set of sites $\Sites$. Every positive almost-Markovian specification on $\Omega$ admits a Gibbsian representation in terms of a norm-summable interaction.
\end{theorem}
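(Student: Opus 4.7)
The plan is to follow the two-step strategy explicitly outlined by the authors: first establish a Hammersley--Clifford-type theorem representing Markovian specifications by finite-range interactions in the presence of arbitrary hard constraints, then approximate a continuous cocycle by Markovian cocycles and telescope the corresponding finite-range interactions so that the sum lives in the norm-summable class. Throughout I work in the cocycle framework, where a positive almost-Markovian specification corresponds to a continuous cocycle $M$ on the asymptotic relation of $\Omega$, and a norm-summable interaction corresponds to a cocycle obtained as an absolutely convergent sum of finitely supported local cocycles.

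The first step is to prove that every continuous $r$-Markovian cocycle $M$ (one depending only on the restriction to an $r$-neighborhood of the disagreement set) comes from an interaction of finite range bounded in terms of $r$. The construction is a Möbius inversion on the Boolean lattice of subsets of each finite window: for a pattern $p$ on $\Lambda$, $\Phi_\Lambda(p)$ is an alternating sum of values $M(p',\text{ref})$ over subpatterns $p'$ of $p$. The crucial modification compared to the full-shift case is that there is no reason a fixed ``vacuum'' reference configuration lies in $\Omega$; this is circumvented by performing the Möbius inversion \emph{relative to the configuration being evaluated}, so that only admissible modifications within $\Lambda$ appear, and any pair of patterns related by such a modification is by $r$-Markovianness a legitimate argument of $M$. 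The $r$-Markovianness of $M$ then forces $\Phi_\Lambda$ to vanish unless $\mathrm{diam}(\Lambda) \le 2r$, giving the finite-range claim and automatic norm-summability.

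The second step is the approximation. Using continuity of $M$ on the compact space of asymptotic pairs with a bounded disagreement set, for each $n$ I choose a radius $r_n$ and construct an $r_n$-Markovian continuous cocycle $M_n$ such that $\lVert M - M_n\rVert_\infty \le \varepsilon_n$, where $\varepsilon_n$ will be chosen later. Concretely $M_n$ is obtained by replacing, for each asymptotic pair $(x,y)$, the configurations outside the $r_n$-neighborhood of the disagreement locus by a locally determined admissible extension (measurably chosen, using countability of $\mathcal{S}$ to fix a well-ordering on patterns). By Step~1, each $M_n$ is realized by a finite-range interaction $\Phi_n$. Likewise, the difference $M_{n+1} - M_n$ is continuous and $r_{n+1}$-Markovian, so Step~1 produces a finite-range interaction $\Psi_n$ with $\Phi_{n+1} = \Phi_n + \Psi_n$, and the interaction-norm estimate coming from the explicit Möbius formula gives $\lVert \Psi_n\rVert \le C(r_{n+1})\,\lVert M_{n+1} - M_n\rVert_\infty \le 2\,C(r_{n+1})\,\varepsilon_n$, where $C(r)$ is a combinatorial constant depending only on the size of an $r$-window. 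Choosing $\varepsilon_n \le 2^{-n}/C(r_{n+1})$ from the start, the telescoping series $\Phi \isdef \Phi_0 + \sum_n \Psi_n$ converges in the Banach space of norm-summable interactions, and since the map from interactions to cocycles is continuous with respect to these norms, its cocycle is $\lim_n M_n = M$.

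The main obstacle is Step~1 under arbitrary hard constraints: the standard Hammersley--Clifford Möbius inversion depends on pasting sub-patterns against a fixed vacuum, and no such vacuum need lie in $\Omega$. Resolving this requires performing the inversion relative to the evaluated configuration so every intermediate pattern is globally admissible, and then verifying the cocycle identity by a telescoping cancellation. The secondary obstacle is the quantitative norm estimate in Step~3, but once the formula for $\Phi_\Lambda$ in Step~1 is written out, the bound $\lVert\Psi_n\rVert \le C(r_{n+1})\,\lVert M_{n+1}-M_n\rVert_\infty$ is a routine count of terms, and all combinatorial factors can be absorbed into $\varepsilon_n$ by passing to a fast enough subsequence. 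The argument uses nothing about $\mathcal{S}$ beyond its countability, as suggested by Kozlov.
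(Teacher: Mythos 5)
Your high-level plan — Markovian approximation plus telescoping of finite-range interactions — matches the paper's outline. But there are two serious gaps, one in each step, and the second one is fatal to the argument as written.

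\textbf{Step~1 (finite-range interactions for Markov cocycles).} The paper does not use a Möbius inversion at all, and it is not clear your modified inversion can be made to work. Your proposal defines $\Phi_\Lambda(p)$ ``relative to the configuration being evaluated,'' but an interaction must be a well-defined function of the pattern $p$ alone. If two configurations $x,x'\in\Omega$ both extend $p$, the set of ``admissible modifications within $\Lambda$'' may depend on which extension you use, and so may the $M$-values entering the alternating sum; there is no canonical reference configuration replacing the vacuum. Moreover, the claim that $r$-Markovianness forces $\Phi_\Lambda$ to vanish unless $\mathrm{diam}(\Lambda)\le 2r$ is a strong Hammersley--Clifford-type structural conclusion that is known to be delicate even with a safe symbol; the paper proves only that a \emph{finite-range} (not bounded-range) interaction exists, and its construction (Lemma~\ref{lem:markov_cocycle_approx_potential}) avoids inversion entirely: $\psi_\ast$ descends to a cocycle $\Delta$ on a finite equivalence relation on $\Lang_{D\setminus A}(\Omega)$, and any cocycle on a finite equivalence relation is trivially a coboundary $\Delta(p,q)=F(q)-F(p)$, yielding $\Phi$ supported on a single shape $D\setminus A$.

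\textbf{Step~3 (the norm estimate).} This is the real gap. Your bound $\lVert\Psi_n\rVert\le C(r_{n+1})\lVert M_{n+1}-M_n\rVert_\infty\le 2C(r_{n+1})\varepsilon_n$ creates an uncontrollable circularity: $\varepsilon_n$ must be at least the modulus of continuity $\omega(r_n)$ of $M$ at scale $r_n$ (you cannot do better than what $r_n$-Markovianization achieves), while $r_{n+1}\to\infty$ is forced (otherwise $M_n\not\to M$), so $C(r_{n+1})\to\infty$. There is no reason the product $C(r_{n+1})\omega(r_n)$ tends to zero, and ``passing to a fast enough subsequence'' does not help because both sequences are monotone and you control neither the modulus of continuity of $M$ nor the growth rate of the combinatorial constant $C$. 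The paper sidesteps this by \emph{not} bounding the total norm $\lVert\Psi_n\rVert$ at all. Instead, Lemma~\ref{lem:gibbs-cocycle:interaction:non-invariant:I+II} splits each increment into a piece $\Phi^1$ supported entirely on shapes disjoint from the small window $A$ --- which is allowed to be large --- and a single correction $\Phi^2$ supported on a shape meeting $A$, which is bounded by $3\varepsilon$ where $\varepsilon$ bounds the \emph{input} cocycle $\psi_\ast$ on $\relation{T}_A(\Omega)$, a quantity already controlled by the previous induction step and independent of the new Markovian radius. Norm-summability is a \emph{local} condition ($\sum_{C\cap A\neq\varnothing}\lVert\Phi_C\rVert<\infty$ for each fixed finite $A$), and the paper's construction makes the contribution of stage $n$ near each fixed $A$ eventually of order $\varepsilon_{n-1}$, while letting the far-away contribution be arbitrary. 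Your telescoping argument needs this localization, and the Möbius-style construction does not provide it: re-deriving $\Phi_n$ from scratch at each stage gives no reason for the increment $\Phi_{n+1}-\Phi_n$ to be small on shapes near a fixed finite region.

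To repair the proof along the paper's lines you would replace the Möbius step with the coboundary observation, and replace the global norm bound on $\Psi_n$ with a bound only on the part of $\Psi_n$ supported near the previously handled window, which requires deliberately building $\Psi_n$ so that the shapes meeting $A_{n-1}$ carry total weight $O(\varepsilon_{n-1})$.
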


In Section~\ref{sec:sullivan}, we generalize Sullivan's theorem to configuration spaces $\Omega$
that satisfy a certain kind of constraint on allowed configurations.
Moreover, our proof employs a simplification in the construction of Sullivan's interaction.
In particular, this allows our result to be extended to other lattices and the class of symbolic actions
of countable amenable groups.

\begin{theorem}\label{thm:sull2_INTRO}
	Let $\Omega \subseteq \Sigma^{\ZZ^d}$ be a shift of finite type which is single-site fillable and has the pivot property.
	Every shift-invariant, positive almost-Markovian specification on $\Omega$ admits a Gibbsian representation in terms of a shift-invariant variation-summable interaction.
\end{theorem}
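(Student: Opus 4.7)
The plan is to work in the cocycle framework developed in Section~\ref{sec:prelim}. By the correspondence between positive specifications and continuous cocycles, a shift-invariant positive almost-Markovian specification on $\Omega$ corresponds to a continuous shift-invariant cocycle $M$ on the asymptotic relation of $\Omega$. The task then reduces to producing a shift-invariant variation-summable interaction $\Phi$ whose associated energy cocycle equals $M$.

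The first step is to exploit the two structural hypotheses on $\Omega$. The pivot property ensures that any two asymptotic configurations in $\Omega$ are joined by a finite chain of single-site modifications that stay in $\Omega$, so $M$ is completely determined by its values on pivot pairs. Single-site fillability guarantees enough such pivot pairs to perform local surgeries: given any valid partial configuration, a chosen site can be replaced by a fixed ``reference'' symbol while remaining in $\Omega$. Together, these assumptions let us treat $M$ as a function on single-site moves and reconstruct all other values by telescoping.

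The second step is a Sullivan-style construction of $\Phi$. Fix a total order on $\ZZ^d$; for each finite $\Lambda \subset \ZZ^d$ and each pattern $p$ on $\Lambda$, define $\Phi_\Lambda(p)$ by iterated discrete derivatives of $M$ along pivot moves that successively replace the symbols of $p$ by the reference symbol, and then average over shifts to restore shift-invariance. A simplification over Sullivan's original argument is to carry out these discrete derivatives directly in cocycle terms, which makes the bookkeeping linear and the telescoping identity $E_\Phi(x) - E_\Phi(y) = M(x,y)$ transparent first on pivot pairs and, by the pivot property, on all asymptotic pairs. For the regularity estimate, $\mathrm{var}(\Phi_\Lambda)$ is bounded by the modulus of continuity of $M$ on pivot pairs supported in $\Lambda$; uniform continuity of $M$ on the compact asymptotic relation gives decay in the diameter of $\Lambda$, while the number of $\Lambda \ni 0$ of a given diameter grows only polynomially, yielding $\sum_{\Lambda \ni 0} \mathrm{var}(\Phi_\Lambda) < \infty$.

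The main obstacle is that under hard constraints one cannot toggle symbols freely, so every intermediate configuration appearing in the telescoping must be certified to lie in $\Omega$. This is precisely the role of single-site fillability, while the pivot property is what guarantees that the telescoping identity is independent of the arbitrary order in which the sites of $\Lambda$ are processed and, ultimately, that the reconstructed cocycle equals $M$ on all asymptotic pairs rather than only on pivot pairs. Verifying that these choices propagate consistently --- so that $\Phi$ is well-defined as a shift-invariant interaction and genuinely recovers $M$ --- is the technical heart of the proof and is the step that distinguishes the constrained setting from Sullivan's original full-shift case.
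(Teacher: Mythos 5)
Your proposal correctly identifies the setting (pass to cocycles, use single-site fillability to certify intermediate configurations, use the pivot property to reduce to single-site moves) but the core regularity argument does not go through, and it is also not how the paper proceeds.

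The decisive problem is the summability estimate. You propose to define $\Phi_\Lambda$ by iterated discrete derivatives (Möbius-type telescoping) of $M$ and then argue $\sum_{\Lambda \ni 0}\Var(\Phi_\Lambda) < \infty$ from uniform continuity of $M$ plus the claim that ``the number of $\Lambda \ni 0$ of a given diameter grows only polynomially.'' That count is wrong: the number of finite subsets $\Lambda \ni 0$ of $\ZZ^d$ with diameter at most $n$ is of order $2^{(2n+1)^d}$, i.e.\ exponential in $n^d$, so the modulus-of-continuity decay alone cannot beat it. Even if one restricted $\Lambda$ to boxes, continuity only gives that the modulus $\omega(n) \to 0$ with no rate, and the resulting sum $\sum_n n^{d-1}\omega(n)$ need not converge. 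This is precisely the obstruction Kozlov ran into in the shift-invariant case: his Theorem~3 obtains norm-summability only under an extra quantitative decay hypothesis on the modulus of continuity, and Theorem~\ref{thm:Kozlov_is_annoying_INTRO} of this paper shows that in general no such direct construction can yield a shift-invariant norm-summable (or, here, canonically telescoped) representation. A direct Möbius/telescoping interaction is not variation-summable for a merely continuous cocycle.

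The paper escapes this by \emph{not} constructing the interaction in one shot. It first observes (Propositions~\ref{prop:Sullivan_norm}, \ref{prop:BanachVS_complete}, \ref{prop:BLT}) that $\Phi \mapsto \psi_\Phi$ is a bounded linear map between the Banach spaces $\banach{B}_{\VS}(\Omega)$ and $\banach{B}_{\Sull}(\Omega)$, and then reduces surjectivity, via the open mapping theorem (Proposition~\ref{prop:Banach_equivalences}), to a density statement: it suffices to produce, for each $\varepsilon > 0$, a finite-range shift-invariant interaction $\Phi^\varepsilon$ with $\normVS{\Phi^\varepsilon} \le 3\normsull{\psi}$ and $\normsull{\psi_{\Phi^\varepsilon} - \psi} < \varepsilon$ (Proposition~\ref{prop:approx}). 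Crucially, each approximant is supported only on translates of a single cube $F_n$, so there is no exponential sum over shapes to control. The approximant is built from the function $f_n(x) = \psi(z(x,n),w)$, where $z(x,n)$ agrees with $x$ on $F_n$ and with a fixed reference configuration $w$ outside a slightly larger cube; single-site fillability is used in Lemma~\ref{lemma:brayatan} to construct $z(x,n)$ by filling an annulus in $\abs{K}^2$ rounds of $K$-separated sites, and the key quantitative feature — that a single-site change in $x$ propagates only to a bounded region of $z(x,n)$ — is what makes the $\VS$-norm bound and the error estimate uniform in $n$. Your sketch gestures at these issues as ``the technical heart'' but does not supply the mechanism; more importantly, the exactness of your claimed telescoping identity cannot be established, because the sum defining $E_\Phi$ does not converge for the interactions your recipe produces. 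The Banach-space completeness argument is the missing idea that converts ``approximable'' into ``representable,'' and it is where the proof actually lives.
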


Theorems~\ref{thm:Kozlov_is_annoying_INTRO},
\ref{thm:Kozlov2_INTRO}, and~\ref{thm:sull2_INTRO} are stated in the equivalent language of continuous cocycles in Theorems~\ref{thm:Kozlov_is_annoying},~\ref{thm:gibbs-cocycle:interaction:non-invariant}, and~\ref{thm:sull2} respectively.  See Section~\ref{sec:prelim:specification-cocycle:equivalence} for the correspondence between specifications and cocycles.

Sections~\ref{sec:main}, \ref{sec:main:Kozlov}, and~\ref{sec:sullivan} are independent of one another and so can be read separately once the reader has read the preliminary Section~\ref{sec:prelim}.

\paragraph{Acknowledgements.}

  The authors thank Nishant Chandgotia for helpful discussions.  
  Sebasti\'an Barbieri and Tom Meyerovitch thank the Pacific Institute for Mathematical Sciences (PIMS) and the mathematics department of the University of British Columbia (UBC) where most of this work was done while hosted as a postdoctoral fellow and PIMS distinguished visitor, respectively. Brian Marcus thanks Aernout van Enter for introducing him to Kozlov's theorem.  
  
  Sebasti\'an Barbieri was partially supported by the ANR project CoCoGro (ANR-16-CE40-0005) and the ANR project CODYS (ANR-18-CE40-0007). Ricardo G\'omez was partially supported by DGAPA-PAPIIT grant IN107718. Brian Marcus was partially supported by NSERC grant RGPIN-2017-04550. Tom Meyerovitch was partially supported by the Israeli Science Foundation (ISF grant 1052/18). Siamak Taati was partially supported by NWO grant~612.001.409.

\section{Preliminaries}
\label{sec:prelim}

\subsection{Spaces of configurations}
\label{sec:prelim:spaces_config}

The focus of this paper is on models in which the state of a physical system
is represented by a configuration of symbols.
A \emph{(symbolic) configuration} on a countable set $\Sites$ is an array
$x\isdef(x_k)_{k\in\Sites}$ of symbols from a finite alphabet $\Sigma$
indexed by the elements of $\Sites$.
We refer to the elements of $\Sites$ as \emph{sites}.
A \emph{pattern} (or a \emph{partial configuration}) is an array $w\in\Sigma^A$ where $A\subseteq\Sites$.
We call $A$ the \emph{shape} of pattern $w\in\Sigma^A$.
A \emph{finite} pattern is a pattern whose shape is finite.
The restriction of a pattern $p\in\Sigma^A$ to a shape $B\subseteq A$
is denoted by $p_B$.  Consistently with this notation, we sometimes
denote a pattern $p\in\Sigma^A$ by $p_A$ to emphasize its shape.
Given two patterns $u\in\Sigma^A$ and $v\in\Sigma^B$ satisfying $u_{A\cap B}=v_{A\cap B}$,
we write $u\lor v$ for the pattern with shape $A\cup B$,
for which $(u\lor v)_A=u$ and $(u\lor v)_B=v$.
We will write $A\Subset B$ to indicate that $A$ is a finite subset of $B$.

The set $\Sigma^\Sites$ of all symbolic configurations on $\Sites$
with symbols from~$\Sigma$ is endowed with the product topology,
which is compact and metrizable.
A \emph{cylinder set} in $\Sigma^\Sites$ is a set of the form
$[w_A]\isdef\{x\in\Sigma^\Sites: x_A=w_A\}$ where $w_A\in\Sigma^A$ is a finite
pattern with shape $A\Subset\Sites$.
The set $A$ is called the \emph{base} of the cylinder set $[w_A]$.
The cylinder sets are both open and closed and form a basis for the product topology
on $\Sigma^\Sites$.

By a \emph{(symbolic) configuration space} we shall mean a non-empty compact set
$\Omega\subseteq\Sigma^\Sites$ for some countably infinite set of sites $\Sites$ and finite~$\Sigma$.
The configuration space $\Sigma^\Sites$ is said to be \emph{full}.
We refer to the elements of $\Omega$ as the configurations of $\Sigma^\Sites$ that are
\emph{admissible} for $\Omega$.

For most of the current paper, $\Sites$ will be the $d$-dimensional square \emph{lattice} $\ZZ^d$ for some $d\in\ZZ^+$,
and we will be interested in configuration spaces on $\ZZ^d$ which respect the translation symmetry.
We denote by $\sigma$ the action of $\ZZ^d$ on the full configuration space
$\Sigma^{\ZZ^d}$ by shifts.  More specifically, $\sigma^k x$ denotes the \emph{translation} (or \emph{shift})
of a configuration $x$ by $k\in\ZZ^d$, that is, $(\sigma^k x)_i\isdef x_{i+k}$ for $i\in\ZZ^d$.
A \emph{shift space} (or \emph{subshift}) is a configuration space $\Omega\subseteq\Sigma^{\ZZ^d}$
that is shift-invariant, meaning that $\sigma^k x\in\Omega$ for each $x\in\Omega$ and $k\in\ZZ^d$.

We say that a pattern $w\in\Sigma^A$ is \emph{admissible} for a configuration space $\Omega\subseteq\Sigma^\Sites$
if $w=x_A$ for some $x\in\Omega$.
We denote the set of admissible patterns of $\Omega$ with shape $A$ by $\Lang_A(\Omega)$,
and the set of all finite admissible patterns by $\Lang(\Omega) \isdef \bigcup_{A\Subset\Sites}\Lang_A(\Omega)$.
Given a pattern $q$ with shape $B\subseteq\Sites$ and another shape $A\subseteq\Sites$,
we denote by $\Lang_{A|q}(\Omega)$ the set of all patterns $p$ with shape $A$
such that $p\lor q$ is well defined and is admissible for $\Omega$.

The shift acts on patterns as well as entire configurations: for $w\in\Sigma^A$
and $k \in \ZZ^d$, $\sigma^k w\in \Sigma^{A - k}$ and for $i \in A - k$, $(\sigma^k w)_i = w_{i+k}$.

\begin{definition}[Topological Markov property; TMP]
\label{def:tmp}
	Let $\Omega\subseteq\Sigma^\Sites$ be a configuration space
	and $A\subseteq B\subseteq\Sites$ two subsets of the sites.
	We say that $B$ is a \emph{memory set} for $A$ in $\Omega$
	if whenever $x$ and $y$ are two configurations admissible for $\Omega$ and
	satisfying $x_{B\setminus A}=y_{B\setminus A}$,
	the configuration $x_B\lor y_{\Sites\setminus A}$ is also admissible for $\Omega$.
	A configuration space $\Omega\subseteq\Sigma^\Sites$ is said to have
	the \emph{topological Markov property} (\emph{TMP} for short) if
	every finite set $A\Subset\Sites$ has a finite memory set $B\Subset\Sites$ in~$\Omega$.
\end{definition}

\begin{example}[Markov property w.r.t.\ a graph]
\label{ex:TMP_on_a_graph}
	A special case of the TMP is the Markov property inherited from
	a locally-finite graph.
	Let $G = (\Sites,\Edges)$ be a locally-finite graph with vertex set $\Sites$
	and edge set $\Edges\subseteq\binom{\Sites}{2}$.
	Let $N(s)\isdef\{s'\in\Sites:\{s,s'\}\in\Edges\}$ denote the set of neighbours of site $s\in\Sites$,
	and for $A\subseteq\Sites$, let $N(A)\isdef\bigcup_{s\in A}N(s)\setminus A$.
	Being locally-finite means that $N(s)$ is finite for every $s\in\Sites$.
	In~\cite{ChaHanMarMeyPav14, ChaMey16}, the notion of a topological Markov field was introduced.
	In our terminology, a topological Markov field on the graph $G = (\Sites,\Edges)$ is a configuration space
	$\Omega\subseteq\Sigma^\Sites$ in which for each $A\Subset\Sites$, $A \cup N(A)$ is a memory set for~$A$.
	
	A shift space which has the TMP but not with respect
	to any locally-finite graph is provided in Example~2.4 of~\cite{BarGomMarTaa18}.
	\hfill\exampleqed
\end{example}

\begin{example}[Sunny side up shift]
\label{exp:sunny-side-up}
	Perhaps the simplest example of a shift space which does not have the TMP is the so-called
	\emph{sunny side up} shift $\Omega_{\leq 1}$, which is defined as the collection of
	all configurations $x\colon\ZZ^d\to\{\symb{0},\symb{1}\}$ with at most one occurrence
	of~$\symb{1}$. Indeed, suppose there is a finite memory set $B \Subset \ZZ^d$ for $A=\{0\}$, then for any $k \in \ZZ^d \setminus B$, we have that the indicator functions $\indicator{\{k\}}$ of $\{k\}$ and $\indicator{\{0\}}$ of $\{0\}$ are in $\Omega_{\leq 1}$ and coincide in $B\setminus \{0\}$. This would imply that $\indicator{\{0,k\}}$ belongs to $\Omega_{\leq 1}$.
	\hfill\exampleqed
\end{example}

A shift space $\Omega\subseteq\Sigma^{\ZZ^d}$ is a \emph{shift of finite type} (\emph{SFT} for short)
if there exists a finite set of finite patterns $\mathcal{F}$
defining $\Omega$ in the sense that $x\in\Omega$
if and only if $(\sigma^k x)_A\notin\mathcal{F}$ for all $k\in\ZZ^d$ and $A\Subset\ZZ^d$.
The set $\mathcal{F}$ in this case is called a set of \emph{forbidden patterns}
defining $\Omega$.
A \emph{nearest-neighbour} SFT is one which has a defining
set of forbidden patterns whose shapes are nearest-neighbour pairs in $\ZZ^d$,
i.e., pairs $\{i,j\}\subseteq\ZZ^d$ with $\norm{i-j}_1=1$.
A pattern $q\in\Sigma^B$ is said to be \emph{locally admissible} with respect to
a defining set $\mathcal{F}$ of forbidden patterns for $\Omega$
if $(\sigma^k q)_A\notin\mathcal{F}$ for all $k\in\ZZ^d$ and $A\Subset\ZZ^d$
such that $A\subseteq B-k$.
Observe that every SFT has the TMP.
Namely, if $F$ denotes the union of the shapes of a finite set of forbidden patterns
defining $\Omega$, then each finite set $A\Subset\ZZ^d$ has $A+F-F$ as a memory set in~$\Omega$.
However, the class of shift spaces with the TMP is
much larger than those which are of finite type. The reader can find many examples in~\cite{ChaHanMarMeyPav14,BarGomMarTaa18}.

A symbol $\blank\in\Sigma$ is said to be a \emph{safe symbol} for
a configuration space $\Omega\subseteq\Sigma^\Sites$ if
for every $x\in\Omega$ and $k\in\Sites$, the configuration obtained
from $x$ by replacing the symbol at site $k$ with $\blank$ is admissible for~$\Omega$.
Clearly, every symbol of a full configuration space is safe.
Observe that $\symb{0}$ is a safe symbol for the sunny side up shift of Example~\ref{exp:sunny-side-up}.

\begin{example}[Hard-core shift]\label{ex:hard-core1}
	A non-trivial example of a shift space with a safe symbol is the \emph{hard-core} shift
	\begin{align}
		\Omega_\hardcore &\isdef
			\big\{
				x\in\{\symb{0},\symb{1}\}^{\ZZ^d} :
				\text{$x_i=x_j=\symb{1}$ implies $\norm{i-j}_1\neq 1$}
			\big\}
	\end{align}
	for which $\symb{0}$ is safe.
	\hfill\exampleqed
\end{example}

Although the class of shift spaces with the TMP is much larger than the class of SFTs,
in presence of a safe symbol, the TMP reduces to the finite type property.
Namely, every shift space with the TMP that has a safe symbol is of finite type
(see Proposition~\ref{prop:tmp+safe->sft}).

\begin{definition}[Asymptotic relation]
	Two configurations $x,y\in\Sigma^\Sites$ are said to be \emph{asymptotic}
	(or \emph{homoclinic}) if they disagree in no more than finitely many sites. Namely, there is $A \Subset \Sites$ so that $x_{\Sites \setminus A} = y_{\Sites \setminus A}$.
	The set of all asymptotic pairs of configurations from a configuration space $\Omega$
	is an equivalence relation which we call the \emph{asymptotic}
	(or \emph{homoclinic}) \emph{relation}
	and denote by $\relation{T}(\Omega)$.
\end{definition}

The equivalence classes of $\relation{T}(\Omega)$ will be referred to as
\emph{asymptotic classes} of $\Omega$.
Given a finite set $A\Subset\Sites$, we write $\relation{T}_A(\Omega)$
for the subset of $\relation{T}(\Omega)$
consisting of all pairs of configurations from $\Omega$ that agree outside~$A$.
Observe that $\relation{T}_A(\Omega)$ is itself an equivalence relation
and is topologically closed in $\Omega\times\Omega$.
Furthermore, $\relation{T}(\Omega)=\bigcup_{A\Subset\Sites}\relation{T}_A(\Omega)$.

\begin{remark}[A topology on the asymptotic relation]
\label{rmk:etale-topology}
	Throughout the text we will implicitly use the following topology on the asymptotic relation $\relation{T}(\Omega)$
	of a configuration space $\Omega$.
	In this topology,
	a sequence of pairs $(x_n,y_n) \in \relation{T}(\Omega)$ converges to a pair
	$(x,y) \in \relation{T}(\Omega)$ if and only if there exists a finite set $A\Subset\Sites$ such that
	$(x_n,y_n)\in \relation{T}_A(\Omega)$ for all sufficiently large $n$ and $x_n \to x$ and $y_n \to y$ with respect
	to the topology on $\Omega$.

	An equivalent way to define the topology on $\relation{T}(\Omega)$, is by declaring that for any $A \Subset \Sites$ the set
	$\relation{T}_A(\Omega)$ is an open subset and that the induced topology on $\relation{T}_A(\Omega)$ coincides
	with the topology induced by $\Omega \times \Omega$.
	It follows that for any $A\Subset\Sites$, $\relation{T}_A(\Omega)$ is also compact.
	Note that the topology on  $\relation{T}(\Omega)$ itself is not compact, and does not always coincide with the relative product topology.  The important feature of this topology is that for each finite set $A$, $\relation{T}_A(\Omega)$
is open as well as compact.
	\hfill\remarkqed
\end{remark}
\begin{remark}
	Whenever $\Omega$ has the TMP, $\relation{T}(\Omega)$  is an \emph{\'etale equivalence relation} with respect
	to the topology described above (in fact, an \emph{AF-equivalence relation}). Interested readers can
	read~\cite{Put18} and references within for more on \'etale equivalence relations,
	approximately finite equivalence relations and their significance in the theory of
	topological orbit equivalence of Cantor minimal systems.
	\hfill\remarkqed
\end{remark}

A configuration space $\Omega\subseteq\Sigma^\Sites$ is said to have
the \emph{pivot property} if for all asymptotic configurations $x,y\in\Omega$,
there is a finite sequence of admissible configurations
$x=z^{(0)},z^{(1)},\ldots,z^{(n)}=y$ such that for each $i=1,2,\ldots,n$,
the configurations $z^{(i-1)}$ and $z^{(i)}$ differ on a single site.
We call $z^{(i-1)}\to z^{(i)}$ an (admissible) \emph{pivot move} at site $k_i$,
where $k_i$ is the unique site at which $z^{(i-1)}$ and $z^{(i)}$ differ.
We say that $\Omega$ has the \emph{uniform pivot property}
if for every $A\Subset\Sites$, there exists $B\Subset\Sites$ %
such that for each $(x,y)\in\relation{T}_A(\Omega)$, there is a sequence of
admissible pivot moves transforming $x$ to $y$ in which all moves are inside $B$.
It follows from the compactness of the sets $\relation{T}_A(\Omega)$
that if a configuration space has both the pivot property
and the TMP, then it also satisfies the uniform pivot property (see Proposition~\ref{prop:TMP+pivot->uniform-pivot}).

Note that every configuration space with a safe symbol has the uniform pivot property.
Namely, if $\blank$ is a safe symbol for $\Omega$,
for every $(x,y)\in\relation{T}(\Omega)$,
we can replace, one-by-one, each of the symbols of $x$ at sites that differ from $y$
by $\blank$, and then revert, again one-by-one, to the symbols of $y$ at sites that
differ from~$x$.

\begin{example}[$q$-coloring shift]
\label{ex:colorings}
	Given an integer $q\geq 2$, the $d$-dimensional \emph{$q$-coloring} shift is defined as
	\begin{align}
		\Omega^d_{\coloring(q)} &\isdef
		\big\{
		x\in\{\symb{0},\dots,q-1\}^{\ZZ^d} :
		\text{$x_i\neq x_j$ whenever $\norm{i-j}_1=1$}
		\big\} \;.
	\end{align}
	The $q$-coloring shift $\Omega^d_{\coloring(q)}$ in statistical mechanics occurs as the support of the \emph{anti-ferromagnetic Potts model}~\cite{GeoHagMae00} in the zero-temperature limit. From the definition, it is clear that $\Omega^d_{\coloring(q)}$ does not have a safe symbol.
	But it has the pivot property when $q\geq 2d+2$ (Proposition~3.4 of~\cite{ChaMey16}).
	The shift space $\Omega^d_{\coloring(q)}$ also has the pivot property when $d=2$
	and $q\in \{2,3\}$ (Proposition~4.4 of~\cite{ChaMey16}).
	In contrast, $\Omega^d_{\coloring(q)}$ does not have the pivot property when $d = 2$ and  $q \in \{4,5\}$.
	Let us illustrate this for $q = 4$. Consider the configuration
	$x\in\{\symb{0}, \symb{1}, \symb{2}, \symb{3}\}^{\ZZ^2}$ defined by
	\begin{align}
		x_{n,m} &\isdef (n+2m) \bmod 4 \;,
	\end{align}
	and note that for every $(i,j)\in \ZZ^2$, we have $x_{i+1,j} = x_{i,j}+1\pmod{4}$,
	$x_{i,j+1} = x_{i,j-1} = x_{i,j}+2 \pmod{4}$, and $x_{i-1,j} = x_{i,j}+3\pmod{4}$.
	This shows that $x \in \Omega^2_{\coloring(4)}$.
	Furthermore, since under $x$ every site $(i,j)\in\ZZ^2$ has three distinct colors in its neighbourhood,
	no other configuration in $\Omega^2_{\coloring(4)}$ differs from $x$ at a single site.
	Nevertheless, there are configurations in $\Omega^2_{\coloring(4)}$ that are asymptotic to $x$
	but distinct from it.
	For instance, the configuration $y\in\{\symb{0}, \symb{1}, \symb{2}, \symb{3}\}^{\ZZ^2}$
	defined by
	\begin{align}
		y_{n,m} &\isdef
			\begin{cases}
				\symb{1} & \mbox{ if } (n,m) = (0,0) \;, \\
				\symb{0} & \mbox{ if } (n,m) = (1,0) \;, \\
				x_{n,m} & \text{otherwise}
		\end{cases}
	\end{align}
	is in $\Omega^2_{\coloring(4)}$ and disagrees from $x$ in exactly two positions.
	In these two configurations the symbols for each site are fixed and cannot be independently pivoted,
	one site at a time. However, we may pivot patterns, changing two sites at a time, as follows:
	\begin{align}
		\xConfig{
			\symb{1} \& \symb{2} \& \symb{3} \& \symb{0} \\
			\symb{3} \& \symb{0} \& \symb{1} \& \symb{2} \\
			\symb{1} \& \symb{2} \& \symb{3} \& \symb{0} \\
		}
		\qquad\longleftrightarrow\qquad
		\xConfig{
			\symb{1} \& \symb{2} \& \symb{3} \& \symb{0} \\
			\symb{3} \& \symb{1} \& \symb{0} \& \symb{2} \\
			\symb{1} \& \symb{2} \& \symb{3} \& \symb{0} \\
		} \;.
	\end{align}
	A similar trick works for $q = 5$. The symbols on the configuration $x$ given by
	$x_{n,m} \isdef (n+3m) \bmod 5$ are fixed but there are exchangeable patterns as above.
	\hfill\exampleqed
\end{example}

A function $f\colon\Omega\to\RR$ on a compact metric space $\Omega$
will be referred to as an \emph{observable}.
The Banach space of continuous observables with the uniform norm $\norm{\cdot}$
will be denoted by $\banach{C}(\Omega)$.
An observable $f\colon\Omega\to\RR$ on a configuration space $\Omega\subseteq\Sigma^\Sites$
is said to be \emph{local} if
there exists a finite set $A\Subset\Sites$, called the \emph{base} of $f$,
such that $f(x)$ is uniquely determined by the restriction $x_A$.
The local observables form a dense linear subspace of~$\banach{C}(\Omega)$.
Given a set $A\subseteq\Sites$, we denote by $\field{F}_A(\Omega)$ %
the $\sigma$-algebra on $\Omega$ generated by cylinder sets whose bases are included in $A$.
The Borel $\sigma$-algebra on $\Omega$ is denoted by $\field{F}(\Omega)=\field{F}_{\Sites}(\Omega)$.
The set of Borel probability measures on $\Omega$ is denoted by $\measures{P}(\Omega)$.
With the weak-* topology,
$\measures{P}(\Omega)$ is a compact metric space. %

\subsection{Cocycles and specifications}
\label{sec:prelim:cocycles_specifications}

The original problem motivating this paper and its predecessors is the problem of existence
of Gibbsian representations for almost-Markovian specifications.
The proofs of our main results invoke $\relation{T}(\Omega)$-cocycles,
which are a certain convenient parametrization of specifications on $\Omega$.
A $\relation{T}(\Omega)$-cocycle assigns to each pair of asymptotic configurations
a real number which can be thought of as the ``energy difference'' between the two configurations.
In this section, we provide definitions and basic properties of specifications and cocycles
and discuss the sense in which they are equivalent.
For the rest of the paper, we will mostly work with cocycles.

\subsubsection{Specifications}
\label{sec:prelim:specification}
In models coming from equilibrium statistical mechanics, the macroscopic states of the system
are represented by probability measures on the configuration space
(i.e., distributions of random fields).
However, the microscopic description coming from physics prescribes not the
measure itself but its conditional probabilities given the configuration
outside each finite set of sites.
A specification refers to a consistent family of such prescribed conditional probabilities.

\begin{definition}[Specification]
	Let $\Omega\subseteq\Sigma^\Sites$ be a configuration space.
	A \emph{specification} on $\Omega$ is a family $K\isdef (K_A)_{A\Subset\Sites}$ of functions
	$K_A: \Omega \times \field{F}(\Omega) \to [0,1]$ such that
	\begin{enumerate}[label={(\roman*)}]
		\item \label{sepc_cond_1} For each $x\in\Omega$, $K_A(x,\cdot)$ is a probability measure on $(\Omega,\field{F}(\Omega))$.
		\item \label{sepc_cond_2} For each $E\in\field{F}(\Omega)$, the function $x \mapsto K_A(x,E)$ is measurable with respect to $\field{F}_{\Sites\setminus A}(\Omega)$.
		\item \label{sepc_cond_3} $K_A(x,[x_B])=1$  whenever $B \Subset\Sites \setminus A$ and $x \in \Omega$.
		
		\item \label{sepc_cond_4}
		For every $x\in\Omega$ and $A\subseteq B\Subset\Sites$,
		\begin{align}
		K_B(x,[x_B]) &= K_B(x,[x_{B\setminus A}])K_A(x,[x_A]) \;.
		\end{align}
	\end{enumerate}
\end{definition}

In the older literature (for instance~\cite{Dob68c}), specifications have been called ``a consistent system of conditional probabilities''.
Following the formulation by Preston~\cite{Pre76} and Georgii~\cite{Geo88}, conditions~\ref{sepc_cond_1}--\ref{sepc_cond_3}
state that each $K_A$ is a proper probability kernel from $(\Omega,\field{F}_{\Sites \setminus A}(\Omega))$ to $(\Omega,\field{F}(\Omega))$.
Condition~\ref{sepc_cond_2} means that $K_A(x,\cdot)$ only depends on the restriction of $x$ to $\Sites \setminus A$.
Condition~\ref{sepc_cond_3} means that the measure $K_A(x,\cdot)$ is concentrated at the configurations
$y\in\Omega$ with $y_{\Sites\setminus A}=x_{\Sites\setminus A}$.
Condition~\ref{sepc_cond_4} is a consistency condition.

\begin{example}[Uniform specification]
\label{exp:uniform-specification}
	An example of a specification on an arbitrary configuration space $\Omega$
	is the \emph{uniform} specification $K^\circ\isdef(K^\circ_A)_{A\Subset\Sites}$
	defined by
	\begin{align}
		K^\circ_A(x,[v_B]\cap[u_A]) &\isdef
			\begin{dcases}
				\frac{1}{\abs[\big]{\Lang_{A|x_{\Sites\setminus A}}(\Omega)}}
					& \text{if $x\in[v_B]$ and $u_A\in\Lang_{A|x_{\Sites\setminus A}}(\Omega)$,} \\
				0	& \text{otherwise,}
			\end{dcases}
	\end{align}
	for $u\in\Lang_A(\Omega)$ and $v\in\Lang_B(\Omega)$ with $A\cap B=\varnothing$.
	\hfill\exampleqed
\end{example}

A probability measure $\mu$ is said to be \emph{consistent} with
a specification $K\isdef(K_A)_{A\Subset\Sites}$ (or \emph{specified} by $K$)
if $\mu\big([p_A]\,\big|\,\field{F}_{\Sites\setminus A}(\Omega)\big)(x)=K_A(x,[p_A])$ for every $p\in\Lang_A(\Omega)$ and
$\mu$-almost every $x$.
Every probability measure on a configuration space $\Sigma^\Sites$
is consistent with some specification on $\Sigma^\Sites$~\cite{Gol78,Pre80},
but one is often interested in specifications that satisfy symmetry/continuity conditions.

If $\Omega$ is a shift space, we say that
a specification $K\isdef(K_A)_{A\Subset\Sites}$ on $\Omega$ is \emph{shift-invariant}
if $K_{k+A}(x,E)=K_A(\sigma^k x, \sigma^k E)$ for every $A\Subset\Sites$ and $k\in\Sites$.

A specification $K\isdef(K_A)_{A\Subset\Sites}$ is \emph{local} (or \emph{Markovian}) if for every
$A\Subset\Sites$, there exists $B\Subset\Sites$ with $B\supseteq A$ such
that for each cylinder set $[w_A]$ with base $A$,
the function $K_A(\cdot,[w_A])$ is $\field{F}_{B\setminus A}(\Omega)$-measurable.
In this case, we refer to $B$ as a \emph{memory set} for $A$ with respect to $K$.
A specification is \emph{continuous}
(or  \emph{almost-Markovian})
if all its kernels $K_A$ are continuous with respect to the first variable.

Compactness of $\measures{P}(\Omega)$
ensures that every continuous specification on $\Omega$ has at least one consistent measure.
This is shown by picking an arbitrary measure $\mu$ (a \emph{boundary condition})
and taking an accumulation point of the sequence of measures $\mu K_A$ defined by $\mu K_A(W) \isdef \int K_A(\cdot,W)\, \dd\mu$ as $A\to\Sites$ along a cofinal
chain of finite subsets of $\Sites$ (i.e., taking a \emph{thermodynamic limit}).
A continuous specification may have more than one consistent measure
(e.g., the specification of the Ising model at low temperature~\cite[Chapter 3]{FriVel17}).
In general, without assuming continuity of the specification,
probability measures consistent with a specification $K\isdef(K_A)_{A\Subset\Sites}$
may or may not exist.
The set of measures consistent with a continuous specification
is a closed and convex subset of $\measures{P}(\Omega)$.
It follows (via averaging, and again using compactness) that on a shift space,
every shift-invariant continuous specification has a shift-invariant consistent measure.

A specification $K\isdef(K_A)_{A\Subset\Sites}$ on a configuration space $\Omega$
is said to be \emph{positive} (or \emph{uniformly non-null})
if $K_A(x,[x_A])>0$ for every $A\Subset\Sites$ and $x\in\Omega$.
The uniform specification $K^\circ$ on $\Omega$ is clearly positive.

We recall the following result, stated and proven in a very similar setting in~\cite[Proposition~2.5]{BarGomMarTaa18}. For completeness, we provide a proof in the appendix~(Section~\ref{apx:specification-vs-cocycle})
\begin{proposition}[Support of a positive continuous specification]
\label{prop:specification:continuous-positive:TMP}
	There exists a positive continuous specification $K\isdef(K_A)_{A\Subset\Sites}$ on
	a configuration space~$\Omega$ if and only if $\Omega$ has the TMP.
\end{proposition}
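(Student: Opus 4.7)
My plan is to prove the two directions separately. For the easier implication $(\Leftarrow)$, I would exhibit the uniform specification $K^\circ$ of Example~\ref{exp:uniform-specification} as a witness, using TMP to establish its continuity. For the direction $(\Rightarrow)$, the idea is that positivity together with continuity force the set of configurations outside $A$ that can be extended to admissible configurations to be locally determined, which yields a finite memory set.

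For the forward direction, assume $K=(K_A)_{A\Subset\Sites}$ is a positive continuous specification on $\Omega$, fix $A\Subset\Sites$, and for each $u\in\Sigma^A$ define
\[
	S_u \isdef \{x\in\Omega : u\lor x_{\Sites\setminus A}\in\Omega\}.
\]
By condition~\ref{sepc_cond_3}, $K_A(x,[u])>0$ forces $u\lor x_{\Sites\setminus A}\in\Omega$, so $x\in S_u$. Conversely, if $x\in S_u$, then $y\isdef u\lor x_{\Sites\setminus A}\in\Omega$ satisfies $y_{\Sites\setminus A}=x_{\Sites\setminus A}$, so by condition~\ref{sepc_cond_2} we get $K_A(x,[u])=K_A(y,[y_A])$, which is strictly positive by hypothesis. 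Hence $S_u=\{x\in\Omega : K_A(x,[u])>0\}$. The function $x\mapsto K_A(x,[u])$ is continuous, so $S_u$ is open; on the other hand $S_u$ is closed because $\Omega$ is closed in $\Sigma^\Sites$. Thus $S_u$ is clopen in the compact space $\Omega$, and covering $S_u$ and its complement by finitely many cylinder sets produces a finite $B_u\subseteq\Sites\setminus A$ such that membership in $S_u$ depends only on $x_{B_u}$. Taking $B\isdef A\cup\bigcup_{u\in\Lang_A(\Omega)}B_u$, finite because $\Sigma^A$ is finite, I claim $B$ is a memory set for $A$: if $x,y\in\Omega$ agree on $B\setminus A$, then $x\in S_{x_A}$ and this property depends only on $x_{B_{x_A}}\subseteq x_{B\setminus A}=y_{B\setminus A}$, so $y\in S_{x_A}$, which means $x_A\lor y_{\Sites\setminus A}=x_B\lor y_{\Sites\setminus A}\in\Omega$.

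For the backward direction, assume $\Omega$ has the TMP and consider the uniform specification $K^\circ$. Positivity is immediate since $x_A\in\Lang_{A|x_{\Sites\setminus A}}(\Omega)$ always holds. For continuity, fix $A\Subset\Sites$ and a finite memory set $B\Subset\Sites$ for $A$; then $\Lang_{A|x_{\Sites\setminus A}}(\Omega)$ depends only on $x_{B\setminus A}$ (using the memory property to transfer extensions from $x$ to any $y$ with $y_{B\setminus A}=x_{B\setminus A}$), which makes $K^\circ_A(\cdot,[u_A])$ locally constant, hence continuous. The probability-kernel axioms and condition~\ref{sepc_cond_3} follow directly from the definition, while the consistency condition~\ref{sepc_cond_4} reduces to the elementary counting identity
\[
	\abs{\Lang_{B|x_{\Sites\setminus B}}(\Omega)} = \abs{\Lang_{A|x_{\Sites\setminus A}}(\Omega)} \cdot \#\{u_B\in\Lang_{B|x_{\Sites\setminus B}}(\Omega) : u_{B\setminus A}=x_{B\setminus A}\}^{-1}\cdot \abs{\Lang_{B|x_{\Sites\setminus B}}(\Omega)},
\]
equivalently $K^\circ_B(x,[x_{B\setminus A}])=\abs{\Lang_{A|x_{\Sites\setminus A}}(\Omega)}/\abs{\Lang_{B|x_{\Sites\setminus B}}(\Omega)}$, which multiplied by $K^\circ_A(x,[x_A])$ gives $K^\circ_B(x,[x_B])$. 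The only genuinely non-routine step in the whole argument is the clopen/locality upgrade in the forward direction; everything else is bookkeeping.
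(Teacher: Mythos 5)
Your backward direction matches the paper's (uniform specification, with continuity supplied by the TMP), only with the kernel axioms verified in more detail. Your forward direction, however, takes a genuinely different and arguably cleaner route. The paper's proof is quantitative: by compactness and positivity it extracts a uniform lower bound $\varepsilon := \tfrac{1}{2}\inf_{x} K_A(x,[x_A]) > 0$, then by uniform continuity chooses $B$ so that $K_A(\cdot,[p])$ varies by less than $\varepsilon$ over configurations agreeing on $B\setminus A$, which forces $K_A(y,[x_A]) > 0$ and hence admissibility. Your proof is topological: positivity identifies $S_u$ with $\{x : K_A(x,[u]) > 0\}$, continuity gives openness, closedness of $\Omega$ gives closedness of $S_u$, and clopen-in-compact gives local determination. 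This decouples the three hypotheses (positivity, continuity, compactness of $\Omega$) more transparently than the paper's combined estimate, and makes clearer that positivity is only used to pin down the set $S_u$, not quantitatively.

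One step is stated a bit too quickly: the claim that covering the clopen $S_u$ and its complement by finitely many cylinders yields a base $B_u \subseteq \Sites\setminus A$. Clopen-ness together with compactness gives a finite $B_u \Subset \Sites$, and condition~\ref{sepc_cond_2} separately gives $\field{F}_{\Sites\setminus A}(\Omega)$-measurability of $S_u$; but these two facts do not automatically combine to give $B_u \subseteq \Sites\setminus A$, since for a proper configuration space one can have $\field{F}_D(\Omega) \cap \field{F}_E(\Omega) \supsetneq \field{F}_{D\cap E}(\Omega)$. The cleanest repair is to observe that $S_u$ is saturated under the projection $\pi\colon\Omega\to\Sigma^{\Sites\setminus A}$, and that $\pi$ is a closed (hence quotient) map onto its compact image; therefore $\pi(S_u)$ is clopen in $\pi(\Omega)$ and so a finite union of cylinders of $\Sigma^{\Sites\setminus A}$, which pulls back to give a valid $B_u \subseteq \Sites\setminus A$. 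With this one observation supplied, your argument is complete and correct.
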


\subsubsection{Cocycles on the asymptotic equivalence relation}
\label{sec:prelim:cocycles}

Let $\relation{R}\subseteq X\times X$ be an equivalence relation on a set $X$.
A (real-valued) \emph{cocycle} on $\relation{R}$ (or \emph{$\relation{R}$-cocycle})
is a function $\Delta\colon\relation{R}\to\RR$
satisfying $\Delta(a,b)+\Delta(b,c)=\Delta(a,c)$ whenever $(a,b),(b,c)\in\relation{R}$.
Given any $\relation{R}$-cocycle $\Delta\colon\relation{R}\to\RR$ one can find
a ``potential'' function  $F\colon A\to\RR$ such that $\Delta(a,b)=F(b)-F(a)$
for every $(a,b)\in\relation{R}$.
Moreover,  $F$ as above is uniquely determined up to a constant on each  equivalence class of $\relation{R}$.
On the intuitive level, a  cocycle $\psi\colon\relation{T}(\Omega)\to\RR$ on the asymptotic equivalence relation of a configuration space $\Omega\subseteq\Sigma^\Sites$
can be thought of as a notion of ``energy'' on configurations,
with $\psi(x,y)$ being the ``energy required to modify $x$ into $y$''.
Although as stated above we can always find a ``potential'' function $F\colon\Omega \to \RR$ so that $\psi(x,y) = F(y)-F(x)$ for every $(x,y) \in \relation{T}(\Omega)$, even under suitable assumptions on the cocycle $\psi$ one can rarely find such $F$  as above with ``nice'' proprieties such as  shift invariance and continuity or even Borel measurability.
A cocycle on the asymptotic relation $\relation{T}(\Omega)$ will sometimes be referred to as a cocycle on $\Omega$.

We call a cocycle $\psi\colon\relation{T}(\Omega)\to\RR$ \emph{continuous} (or \emph{almost-Markovian}) if
for each $A\Subset\Sites$, the restriction of $\psi$ to $\relation{T}_A(\Omega)$ is continuous
with respect to the induced topology from $\Omega\times\Omega$.
This terminology is justified by the fact that continuous cocycles are precisely the cocycles
that are continuous with respect to the topology introduced in Remark~\ref{rmk:etale-topology}.
By compactness, the restriction of a continuous cocycle to each $\relation{T}_A(\Omega)$
is bounded and uniformly continuous.

We say a cocycle $\psi\colon\relation{T}(\Omega)\to\RR$ is said to be  \emph{local}
(or \emph{Markvovian})
if for every $A\Subset\Sites$, there exists $B\Subset\Sites$ with $B\supseteq A$
such that $\psi(x',y')=\psi(x,y)$ whenever $(x,y),(x',y')\in\relation{T}_A(\Omega)$,
$x'_{B\setminus A}=x_{B\setminus A}$ and $y'_{B\setminus A}=y_{B\setminus A}$.
The set $B$ will then be referred to as the \emph{memory set} associated to $A$
for the cocycle $\psi$. Clearly, any Markovian cocycle is continuous.

Given a shift space $\Omega\in\Sigma^{\ZZ^d}$, a  cocycle $\psi\colon\relation{T}(\Omega) \to \RR$
is called \emph{shift-invariant} if
$\psi(\sigma^k x,\sigma^k y)=\psi(x,y)$ for each $(x,y)\in\relation{T}(\Omega)$ and $k\in\ZZ^d$.

\subsubsection{$\relation{T}(\Omega)$-cocycles as parametrization of positive specifications}
\label{sec:prelim:specification-cocycle:equivalence}
Our goal now is to explain why $\relation{T}(\Omega)$-cocycles can be thought of  as convenient way to parametrize positive specifications on $\Omega$.  To explain this idea, note that
given a non-empty finite set $X$, there is a simple one-to-one correspondence
between the cocycles $\Delta\colon X\times X\to\RR$ on the full equivalence relation $X\times X$
and the positive probability distributions $p\colon X\to(0,1)$ on~$X$.
The correspondence is given by the equality
\begin{align}
	\frac{p(b)}{p(a)} &= \ee^{-\Delta(a,b)}
\end{align}
for $a,b\in X$.
The probability distribution $p$ satisfying this equality
is the \emph{Boltzmann distribution} associated to $\Delta$.

On a configuration space $\Omega\subseteq\Sigma^\Sites$,
there is a similar one-to-one correspondence between
measurable cocycles $\psi\colon\relation{T}(\Omega)\to\RR$ on the asymptotic relation $\relation{T}(\Omega)$
and the positive specifications $K\isdef (K_A)_{A\Subset\Sites}$ on $\Omega$.
The correspondence is given by the equality
\begin{align}
\label{eq:specification-vs-cocycle}
	\frac{K_A(y,[y_A])}{K_A(x,[x_A])} &=
		\ee^{-\psi(x,y)}.
\end{align}
for each $A\Subset\Sites$ and $(x,y)\in\relation{T}_A(\Omega)$.
The correspondence in the reverse  direction is given by

\begin{align}
\label{eq:cocycle-vs-specification}
\psi(x,y) &= -\log\left[\frac{K_A(y,[y_A])}{K_A(x,[x_A])}\right].
\end{align}

We record this correspondence in the following proposition:
\begin{proposition}[Positive specifications $\equiv$ measurable cocycles]
\label{prop:specification-vs-cocycle:equivalence}
	Let $\Omega$ be a configuration space.
	The equalities~\eqref{eq:specification-vs-cocycle} and~\eqref{eq:cocycle-vs-specification} define
	a one-to-one correspondence between measurable cocycles $\psi$ on $\relation{T}(\Omega)$
	and positive specifications $K\isdef (K_A)_{A\Subset\Sites}$ on $\Omega$.
\end{proposition}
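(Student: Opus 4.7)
The plan is to verify the two directions of the correspondence and then confirm they are mutual inverses. Throughout, I work with cylinders of the form $[u_A \lor x_{\Sites\setminus A}]$ and extend measures to $\field{F}(\Omega)$ only at the end.

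For the specification-to-cocycle direction, given a positive specification $K$ and $(x,y) \in \relation{T}_A(\Omega)$, set $\psi(x,y) \isdef -\log[K_A(y,[y_A])/K_A(x,[x_A])]$, which is well-defined thanks to positivity. The first task is to check that this definition does not depend on the choice of $A$. If $A \subseteq B$, the consistency axiom (iv) applied to $x$ and $y$ factors
\[
K_B(x,[x_B]) = K_B(x,[x_{B\setminus A}])\,K_A(x,[x_A]),
\]
and similarly for $y$. Because $x$ and $y$ agree on $\Sites \setminus A$, the function $z \mapsto K_B(z,[z_{B\setminus A}])$ takes the same value at $x$ and $y$ (using axioms (ii) and (iii)), and the common factor cancels in the ratio. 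The general case of two incomparable shapes reduces to this by enlarging both to their union. The cocycle identity $\psi(x,y)+\psi(y,z)=\psi(x,z)$ then reduces to a telescoping of logarithms once $A$ is chosen large enough to contain the discrepancies among all three configurations, and measurability of $\psi$ on each $\relation{T}_A(\Omega)$ is inherited from the measurability of the kernels $K_A(\cdot,[w_A])$.

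For the cocycle-to-specification direction, given a measurable cocycle $\psi$, define
\[
K_A(x,[u_A \lor x_{\Sites\setminus A}]) \isdef \frac{\ee^{-\psi(x,\,u_A \lor x_{\Sites\setminus A})}}{Z_A(x_{\Sites\setminus A})}, \quad Z_A(x_{\Sites\setminus A}) \isdef \sum_{v_A \in \Lang_{A|x_{\Sites\setminus A}}(\Omega)} \ee^{-\psi(x,\,v_A \lor x_{\Sites\setminus A})},
\]
for admissible $u_A$, and extend $K_A(x,\cdot)$ to $\field{F}(\Omega)$ as the unique probability measure concentrated on $[x_{\Sites\setminus A}]$. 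Axioms (i) and (iii) are immediate; axiom (ii) follows because the cocycle identity makes both numerator and denominator scale by the same factor $\ee^{-\psi(x,x')}$ when $x$ is replaced by any $x'$ with $x'_{\Sites\setminus A} = x_{\Sites\setminus A}$, so the ratio depends only on the common tail.

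The main obstacle will be verifying axiom (iv) for $A \subseteq B$. The key identity is
\[
\psi(x,\,u_A \lor u_{B\setminus A} \lor x_{\Sites\setminus B}) = \psi(x,\,x_A \lor u_{B\setminus A} \lor x_{\Sites\setminus B}) + \psi(x_A \lor u_{B\setminus A} \lor x_{\Sites\setminus B},\,u_A \lor u_{B\setminus A} \lor x_{\Sites\setminus B}),
\]
which lets $Z_B(x_{\Sites\setminus B})$ factor as a sum over admissible $u_{B\setminus A}$, each summand being $\ee^{-\psi(x,\,x_A \lor u_{B\setminus A} \lor x_{\Sites\setminus B})}$ multiplied by the inner partition function $Z_A$ associated with the tail $u_{B\setminus A} \lor x_{\Sites\setminus B}$. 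Substituting this factorization back into the defining formulas produces exactly $K_B(x,[x_B]) = K_B(x,[x_{B\setminus A}])\,K_A(x,[x_A])$. Mutual inversion of the two constructions then follows by direct substitution: the partition functions cancel in the ratio used to define $\psi$ from $K$, and conversely, normalising the Boltzmann weights over $\Lang_{A|x_{\Sites\setminus A}}(\Omega)$ recovers the original kernel $K_A$.
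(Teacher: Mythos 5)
Your overall architecture matches the paper's: define the two maps by the Boltzmann formulas, check the axioms, and observe that the partition functions cancel for mutual inversion. Your treatment of axiom~(ii) via the common scaling factor $\ee^{-\psi(x,x')}$ is exactly right. The trouble is in your handling of axiom~(iv).

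Your ``key identity'' applies the cocycle equation at the intermediate configuration $x_A \lor u_{B\setminus A} \lor x_{\Sites\setminus B}$, but a cocycle is only defined on $\relation{T}(\Omega)$, i.e.\ on pairs of configurations that are both admissible. In a general configuration space $\Omega \subsetneq \Sigma^\Sites$, it can happen that $x \in \Omega$ and $u_A \lor u_{B\setminus A} \lor x_{\Sites\setminus B} \in \Omega$ while the hybrid $x_A \lor u_{B\setminus A} \lor x_{\Sites\setminus B}$ is not admissible, so the term $\psi(x,\, x_A \lor u_{B\setminus A} \lor x_{\Sites\setminus B})$ appearing in your proposed factorization of $Z_B$ is simply undefined. (Equivalently: the set of admissible $(u_A, u_{B\setminus A})$ is in general not a product set.) Your argument would be fine on the full shift, but the proposition is stated for an arbitrary configuration space, and the decomposition of $Z_B$ you propose does not go through there.

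Fortunately, axiom~(iv) does not need this factorization at all; it only asks for the single scalar identity $K_B(x,[x_B]) = K_B(x,[x_{B\setminus A}])\,K_A(x,[x_A])$, and this follows directly. Since $\psi(x,x)=0$, you have $K_A(x,[x_A]) = 1/Z_{A|x_{\Sites\setminus A}}$ and $K_B(x,[x_B]) = 1/Z_{B|x_{\Sites\setminus B}}$. Meanwhile, summing the defining formula for $K_B$ over the admissible $r_A$ gives
\begin{align}
K_B\big(x,[x_{B\setminus A}]\big)
= \sum_{r_A \in \Lang_{A|x_{\Sites\setminus A}}(\Omega)} \frac{1}{Z_{B|x_{\Sites\setminus B}}} \ee^{-\psi(x,\, x_{\Sites\setminus A}\lor r_A)}
= \frac{Z_{A|x_{\Sites\setminus A}}}{Z_{B|x_{\Sites\setminus B}}} \;,
\end{align}
and the product collapses to $1/Z_{B|x_{\Sites\setminus B}}$ as required. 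The only cocycle evaluations used here are of the form $\psi(x,\, x_{\Sites\setminus A}\lor r_A)$ with $r_A$ admissible, so there is never any question of a non-admissible intermediate point. This is essentially the computation in the paper's appendix; if you replace your factorization step with it, the rest of your proof stands.
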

The proof of this proposition amounts to  a direct calculation. The conditions~\ref{sepc_cond_1}--\ref{sepc_cond_4}  for $(K_A)_{A \Subset \Sites}$ are all together equivalent to the cocycle equation
\begin{align}
	\psi(z,x) &= \psi(y,x)+\psi(z,y) \qquad\text{for every $(x,y)\in\relation{T}(\Omega)$.}
\end{align}
For completeness, we include a proof of this proposition in Appendix~\ref{apx:specification-vs-cocycle}.
The parametrization of positive specifications via cocycles provides a convenient formalism in which to state our results. As an example, the uniform specification on a configuration space $\Omega$
(Example~\ref{exp:uniform-specification})
corresponds to the zero cocycle $\psi\equiv 0$ on $\relation{T}(\Omega)$.

The cocycle associated to a positive continuous (resp., Markovian)
specification is clearly continuous (resp., Markovian).
The converse is however not true:
according to~\Cref{prop:specification:continuous-positive:TMP},
a positive specification on a configuration space $\Omega$
cannot be continuous unless $\Omega$ has the TMP,
whereas the zero cocycle on every configuration space is continuous (even Markovian).
From the proof of~\Cref{prop:specification-vs-cocycle:equivalence}
(in particular, Equation~\eqref{eq:specification-of-a-cocycle:def}),
it follows that the specification associated to a continuous cocycle on~$\Omega$
is continuous if and only if for every finite pattern $p_A\in\Lang_A(\Omega)$,
the function %
$x\mapsto\indicator{\Omega}(x_{\Sites\setminus A}\lor p_A)$
is continuous (hence, local).
The latter condition is equivalent to $\Omega$ having the TMP.

\begin{proposition}[Positive continuous specification $\equiv$ continuous cocycle]
\label{prop:specification-vs-cocycle:TMP:continuity}
	Let $\Omega$ be a configuration space satisfying the TMP.
	Let $K$ be a positive specification on $\Omega$
	and $\psi$ its corresponding cocycle on $\relation{T}(\Omega)$.
	Then,
	$K$ is continuous (resp., Markovian) if and only if $\psi$ is continuous (resp., Markovian).
\end{proposition}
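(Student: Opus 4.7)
The proposition is essentially a formal consolidation of the discussion immediately preceding its statement, so the plan is to make that discussion precise using the explicit Boltzmann-style formula that is derived in the proof of Proposition~\ref{prop:specification-vs-cocycle:equivalence}. Since the correspondence between positive specifications and measurable cocycles is already established, the task is only to compare the regularity on each side.

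For the forward direction (continuous/Markovian $K$ implies continuous/Markovian $\psi$), the plan is to read off continuity directly from~\eqref{eq:cocycle-vs-specification}. Fix $A\Subset\Sites$ and a sequence $(x_n,y_n)\in\relation{T}_A(\Omega)$ converging to $(x,y)\in\relation{T}_A(\Omega)$ in the topology described in Remark~\ref{rmk:etale-topology}; by construction $x_n\to x$ and $y_n\to y$ in $\Omega$, and for all large $n$ one has $(x_n)_A=x_A$ and $(y_n)_A=y_A$. Continuity of $K_A(\cdot,[x_A])$ and $K_A(\cdot,[y_A])$ together with positivity then let us take logarithms in~\eqref{eq:cocycle-vs-specification}, giving $\psi(x_n,y_n)\to\psi(x,y)$. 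The Markovian case is the same formula read differently: if $B$ is a memory set of $A$ for $K$, then $K_A(\cdot,[x_A])$ and $K_A(\cdot,[y_A])$ depend only on the restriction to $B\setminus A$, so the same holds for $\psi(x,y)$, and $B$ is a memory set of $A$ for $\psi$.

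For the reverse direction (continuous/Markovian $\psi$ implies continuous/Markovian $K$), the plan is to invoke the explicit formula
\begin{equation*}
K_A(x,[p_A]) \;=\; \indicator{\Omega}(x_{\Sites\setminus A}\lor p_A)\;\cdot\;\frac{\ee^{-\psi(x,\, x_{\Sites\setminus A}\lor p_A)}}{\sum_{q\in\Lang_{A\mid x_{\Sites\setminus A}}(\Omega)} \ee^{-\psi(x,\, x_{\Sites\setminus A}\lor q)}}
\end{equation*}
obtained in the proof of Proposition~\ref{prop:specification-vs-cocycle:equivalence}. Continuity of $K_A(\cdot,[p_A])$ then reduces to two ingredients: (i) continuity of $x\mapsto\indicator{\Omega}(x_{\Sites\setminus A}\lor p_A)$ for every finite pattern $p_A$, which is exactly the TMP for $\Omega$ (as noted in the paragraph preceding the proposition's statement), and (ii) continuity of $x\mapsto\psi(x,\,x_{\Sites\setminus A}\lor p_A)$ on the (clopen) set where $x_{\Sites\setminus A}\lor p_A\in\Omega$, which is immediate from continuity of the restriction of $\psi$ to $\relation{T}_A(\Omega)$. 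In the Markovian case, one combines a memory set $B$ for $A$ coming from TMP (which exists by Definition~\ref{def:tmp}) with a memory set $B'$ for $A$ coming from the locality of $\psi$: each term in the numerator, denominator and indicator then depends only on the restriction of $x$ to $(B\cup B')\setminus A$, so $B\cup B'$ serves as a memory set for $A$ with respect to $K$.

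The only subtlety in the plan is careful bookkeeping of memory sets and a clean appeal to the fact that TMP is exactly what makes $x\mapsto\indicator{\Omega}(x_{\Sites\setminus A}\lor p_A)$ continuous; there are no new ideas beyond the correspondence formula itself, so once that formula is recalled from Appendix~\ref{apx:specification-vs-cocycle} both directions and both regularity levels (continuous and Markovian) fall out in parallel.
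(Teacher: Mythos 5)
Your proof is correct and follows exactly the route the paper intends: the paper gives no separate proof of this proposition but relies on the informal discussion immediately preceding it (taking logarithms in~\eqref{eq:cocycle-vs-specification} for one direction, and the explicit Boltzmann formula~\eqref{eq:specification-of-a-cocycle:def} together with the observation that TMP is exactly what makes $x\mapsto\indicator{\Omega}(x_{\Sites\setminus A}\lor p_A)$ continuous for the other). You have simply made that discussion precise, including the routine bookkeeping of memory sets in the Markovian case, so there is nothing to add.
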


\subsection{Interactions}
\label{sec:prelim:interactions}

The cocycles and specifications arising in statistical mechanics are usually
generated by interaction potentials.

An \emph{interaction potential} (an \emph{interaction}, for short)
on a configuration space $\Omega \subseteq \Sigma^\Sites$
is a function $\Phi \colon \Lang(\Omega) \to \RR$ assigning a real value $\Phi(w)$
to each admissible pattern $w\in\Lang(\Omega)$.
The ``physical interpretation'' of the value $\Phi(w)$ is ``the energy contribution
of the pattern $w$''.
Given $C\Subset\Sites$, we also define a local function $\Phi_C\colon\Omega\to\RR$
by $\Phi_C(x)\isdef\Phi(x_C)$, so that an interaction can equivalently be described
by the family $(\Phi_C)_{C\Subset\Sites}$.
An interaction $\Phi$ on a shift space $\Omega\subseteq\Sigma^{\ZZ^d}$
is \emph{shift-invariant} if
$\Phi(\sigma^k w)=\Phi(w)$ for each $w\in\Lang(\Omega)$ and $k\in\ZZ^d$,
or equivalently, if $\Phi_{k+C}(x)=\Phi_C(\sigma^k x)$ for all $x\in\Omega$,
$C\Subset\ZZ^d$ and $x\in\Omega$.

Given an interaction $\Phi\colon\Lang(\Omega)\to\RR$,
we formally define for every $(x,y)\in\relation{T}(\Omega)$
\begin{align}
\label{eq:cocycle:interaction}
	\psi_\Phi(x,y) &\isdef \sum_{C\Subset\Sites}\big[\Phi(y_C)-\Phi(x_C)\big]
\end{align}
To make sense of the  infinite sum in~\eqref{eq:cocycle:interaction}, certain assumptions on the interaction $\Phi$ are required.
The simplest case in which the sum is meaningful is when
$\Phi$ has \emph{finite range}, that is, for every $A\Subset\Sites$,
$\Phi_C\equiv 0$ for all but finitely many $C\Subset\Sites$ with $A\cap C\neq\varnothing$.
In this case $\psi_\Phi\colon\relation{T}(\Omega) \to \RR$ is clearly a Markovian cocycle.
We say that $\Phi$ is \emph{uniformly convergent} if for every $A\Subset\Sites$,
the sum in~\eqref{eq:cocycle:interaction} converges uniformly over $\relation{T}_A(\Omega)$,
where the convergence of the series is interpreted in the net sense, along the
directed family of finite subsets of $\Sites$.
In other words, $\Phi$ is uniformly convergent if
for every $\varepsilon>0$, there exists $J_0\Subset\Sites$
such that
\begin{align}
	\abs[\Bigg]{%
		\psi_\Phi(x,y) - \sum_{C\subseteq J}\big[\Phi_C(y)-\Phi_C(x)\big]
	} &< \varepsilon
\end{align}
for every $J\Subset\Sites$ satisfying $J\supseteq J_0$ and each $(x,y)\in\relation{T}_A(\Omega)$.
It follows that $\psi_\Phi\colon\relation{T}(\Omega) \to \RR$ is a continuous cocycle whenever $\Phi$ is  a uniformly convergent interaction,
because its restriction to $\relation{T}_A(\Omega)$ is the uniform limit
of a net of continuous functions.

Whenever we can express a cocycle $\psi\colon\relation{T}(\Omega) \to \RR$ in  the form $\psi = \psi_\Phi$ given by~\eqref{eq:cocycle:interaction}, we call this a \emph{Gibbsian representation}
for the cocycle $\psi$.
For certain applications, it is desirable to have a ``better'' Gibbsian
representation with stronger regularity properties, beyond uniform convergence. In particular, the well known theorems of Dobrushin, Lanford and Ruelle~\cite{Dob68b,LanRue69} relate shift-invariant Gibbs measures and equilibrium measures for the class of norm-summable interactions, which we introduce later.
The purpose of the current paper is to follow up on the question of the existence
of
Gibbsian
representations for continuous (or Markovian) cocycles in terms of
``nice'' families of interactions.

\begin{remark}[Gibbsian
representations of Markovian cocycles]
	The Markovian case was first addressed independently by
	Hammersley and Clifford~\cite{HamCli68}, Averintsev~\cite{Ave72} and Spitzer~\cite{Spi71}.
	They showed that on a full configuration space,
	every cocycle (equivalently, positive specification)
	that has the Markov property with respect to a locally-finite graph on the set of sites
	is generated by a unique finite-range interaction satisfying a certain ``canonical'' property
	(see also~\cite{Gri73}).
	This interaction is ``canonical'' in that it assigns non-zero values only to patterns whose shapes
	are cliques of the graph, and which do not have an occurrence of a fixed
	``vacuum'' symbol.

	For further references and state-of-the-art results on existence or non-existence of
	Gibbsian representations for Markovian specifications on configurations spaces with constraints
	see~\cite{ChaMey16,Cha17}.
	
	While the focus of this paper is on Gibbsian
representations of continuous cocycles,
	in Section~\ref{sec:Kozlov:Markov} we show that every
	Markov cocycle on a configuration space with the TMP is generated by a (``non-canonical'')
	finite-range interaction.
	\hfill\remarkqed
\end{remark}

\subsubsection{Variation-summable interactions.}
\label{sec:prelim:interactions:VS}

Let $\Omega\subseteq\Sigma^\Sites$ be a configuration space.
The \emph{variation} of a continuous observable $f\colon\Omega\to\RR$
on a finite set $A\Subset\Sites$ is defined as
\begin{align}
	\Var_A(f) &\isdef
		\sup_{(x,y)\in\relation{T}_A(\Omega)} \abs[\big]{f(y)-f(x)} \;.
\end{align}
We use the shorthand $\Var_s( \cdot )\isdef\Var_{\{s\}}( \cdot )$ for $s\in\Sites$.
Note that $\Var_A(f)=0$ whenever $f$ is a local observable whose base
does not intersect $A$.

An interaction $\Phi \colon \Lang(\Omega) \to \RR$ is called \emph{variation-summable}
if for every $A\Subset\ZZ^d$,
\begin{align}
\label{eq:variation_summable}
	\sum_{\substack{C\Subset\Sites\\ C\cap A\neq\varnothing}}\Var_A(\Phi_C)  < \infty \;.
\end{align}
Observe that~\eqref{eq:variation_summable} implies that the
sum~\eqref{eq:cocycle:interaction} converges absolutely,
uniformly over each $\relation{T}_A(\Omega)$.
In particular, every variation-summable interaction is uniformly convergent.

\begin{proposition}[Variation-summability under uniform pivot property]
\label{prop:interaction:variation-summable:single-site}
	Let $\Omega\subseteq\Sigma^\Sites$ be a configuration space satisfying
	the uniform pivot property.
	Then, an interaction $\Phi$ on $\Omega$ is variation-summable
	if and only if
	\begin{align}
		\sum_{\substack{C\Subset\Sites\\ C\ni s}} \Var_s(\Phi_C) &< \infty
	\end{align}
	for every $s\in\Sites$.
\end{proposition}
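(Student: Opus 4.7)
The forward direction is immediate: setting $A = \{s\}$ in the defining condition~\eqref{eq:variation_summable} for variation-summability gives exactly the single-site summability, so there is nothing to prove.

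For the converse, I would fix $A \Subset \Sites$ and invoke the uniform pivot property to obtain a finite set $B \Subset \Sites$, which I enlarge if necessary so that $A \subseteq B$, with the property that every $(x,y) \in \relation{T}_A(\Omega)$ is joined by a sequence of admissible pivot moves taking place at sites of $B$. The first point to establish is a uniform bound on the number of moves needed. Along any such pivot sequence the intermediate configurations all agree with $x$ outside $B$, so they belong to the finite set of admissible configurations differing from $x$ only inside $B$, which has cardinality at most $\abs{\Sigma}^{\abs{B}}$. By deleting loops one obtains a shortest pivot path of length at most some $N = N(B) \leq \abs{\Sigma}^{\abs{B}}$; this uniform length bound is the one step that uses compactness/finiteness in a slightly non-obvious way, and I expect it to be the main (mild) obstacle.

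Once $N$ is in hand, the rest is a direct chaining argument. For an arbitrary $(x,y) \in \relation{T}_A(\Omega)$ and $C \Subset \Sites$, writing a pivot sequence $x = z^{(0)}, z^{(1)}, \dots, z^{(n)} = y$ of length $n \leq N$ with pivoted sites $k_1,\dots,k_n \in B$, the triangle inequality gives
\begin{align}
\abs[\big]{\Phi_C(y) - \Phi_C(x)}
&\leq \sum_{i=1}^n \abs[\big]{\Phi_C(z^{(i)}) - \Phi_C(z^{(i-1)})}
\leq \sum_{i=1}^n \Var_{k_i}(\Phi_C)
\leq N \sum_{s \in B} \Var_s(\Phi_C).
\end{align}
Taking the supremum over $(x,y) \in \relation{T}_A(\Omega)$ yields $\Var_A(\Phi_C) \leq N \sum_{s \in B} \Var_s(\Phi_C)$.

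Finally, I would sum this estimate over $C$ and swap the two sums. The key observation making the swap harmless is that $\Phi_C$ depends only on the restriction to $C$, so $\Var_s(\Phi_C) = 0$ whenever $s \notin C$. This collapses the double sum to
\begin{align}
\sum_{\substack{C\Subset\Sites\\ C\cap A \neq \varnothing}} \Var_A(\Phi_C)
\;\leq\; N \sum_{s \in B} \sum_{\substack{C\Subset\Sites\\ C \ni s}} \Var_s(\Phi_C),
\end{align}
and the right-hand side is finite because $B$ is finite and each inner sum is finite by hypothesis. This establishes variation-summability and completes the proof.
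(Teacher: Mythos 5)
Your proof is correct and follows essentially the same route as the paper: invoke the uniform pivot property to localize the pivot sequence in a finite set $B$, remove cycles to get a uniform bound on the path length, chain the triangle inequality to deduce $\Var_A(\Phi_C)\leq N\sum_{s\in B}\Var_s(\Phi_C)$, and then sum over $C$ using that $\Var_s(\Phi_C)=0$ for $s\notin C$. The only cosmetic difference is that the paper bounds the number of visits per site (by $\abs{\Lang_B(\Omega)}$) rather than the total path length, yielding a marginally tighter but immaterially different constant.
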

The proof of the above proposition can be found in Appendix~\ref{apx:interactions:variation-summable}.

Let now $\Omega\subseteq\Sigma^{\ZZ^d}$ be a shift space with the uniform pivot property.
From the above proposition, it follows that
a shift-invariant interaction $\Phi$ on $\Omega$ is variation-summable if and only if
\begin{align}\label{eq:normVS_def}
	\normVS{\Phi}
		&\isdef \sum_{\substack{C\Subset\ZZ^d\\ C\ni 0}} \Var_0(\Phi_C)
\end{align}
is finite.
The function $\normVS{\cdot}$ is a seminorm, because it clearly satisfies
the subadditivity and homogeneity conditions.
It is however not a norm because, for instance, $\normVS{c+\Phi}=\normVS{\Phi}$ for every $c\in\RR$.

Define an equivalence relation $\overset{\Omega}{\sim}$ on $\Lang(\Omega)$ by declaring $w,w' \in \Lang(\Omega)$ to be equivalent if and only if they have the same shape $C\Subset\ZZ^d$ and there exist
$(x,x') \in \relation{T}(\Omega)$ such that $x_C = w$ and $x'_C =w'$.

\begin{lemma}
\label{lem:Phi_bound_VS}
	Let $\Omega$ be a shift space with the pivot property and  let $\Phi\colon\Lang(\Omega) \to \RR$ be
	a shift-invariant variation-summable interaction.
	Then, for every $A \Subset \ZZ^d$ and  $w,w' \in \Lang_A(\Omega)$ with $w \overset{\Omega}{\sim} w'$ we have
	\begin{align}
		\abs[\big]{\Phi(w)- \Phi(w')} \leq  \abs{\Lang_A(\Omega)}\, \normVS{\Phi} \;.
	\end{align}
\end{lemma}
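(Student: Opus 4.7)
The plan is to leverage the pivot property to connect $w$ and $w'$ by a sequence of $A$-patterns with consecutive patterns differing at a single site in $A$, and then compress this walk into a simple path of length at most $\abs{\Lang_A(\Omega)}-1$ in an auxiliary graph on $\Lang_A(\Omega)$. First, pick witnesses $(x,x') \in \relation{T}(\Omega)$ with $x_A = w$ and $x'_A = w'$, and invoke the pivot property to obtain a finite sequence $x = z^{(0)}, z^{(1)}, \ldots, z^{(n)} = x'$ of admissible configurations in which $z^{(i-1)}$ and $z^{(i)}$ differ at exactly one site $k_i$.

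Next, introduce an auxiliary undirected graph $G$ with vertex set $\Lang_A(\Omega)$, placing an edge $\{u, u'\}$ whenever there exist $y, y' \in \Omega$ that differ at a single site $s \in A$ and satisfy $y_A = u$ and $y'_A = u'$. The sequence of restrictions $z^{(0)}_A, z^{(1)}_A, \ldots, z^{(n)}_A$ stays in $\Lang_A(\Omega)$: for $k_i \notin A$ the restriction is unchanged, while for $k_i \in A$ consecutive restrictions are distinct and adjacent in $G$ by construction. Hence $w$ and $w'$ lie in the same connected component of $G$, so there is a simple path $w = \eta_0, \eta_1, \ldots, \eta_m = w'$ in $G$ with $m \leq \abs{\Lang_A(\Omega)} - 1$.

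To conclude, for each edge $\{\eta_j, \eta_{j+1}\}$ select witnesses $y^{(j)}, y^{(j+1)} \in \Omega$ differing at a single site $s_j \in A$ and realizing $\eta_j, \eta_{j+1}$ on $A$. Then
\begin{align}
\abs{\Phi(\eta_j) - \Phi(\eta_{j+1})} = \abs{\Phi_A(y^{(j)}) - \Phi_A(y^{(j+1)})} \leq \Var_{s_j}(\Phi_A) = \Var_0(\Phi_{A-s_j}),
\end{align}
where the final equality uses the shift-invariance of both $\Phi$ and $\Omega$. Since $s_j \in A$ forces $0 \in A - s_j$, the term $\Var_0(\Phi_{A - s_j})$ is one of the nonnegative summands defining $\normVS{\Phi}$, hence $\Var_0(\Phi_{A-s_j}) \leq \normVS{\Phi}$. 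Telescoping along the simple path and applying the triangle inequality yields
\begin{align}
\abs{\Phi(w) - \Phi(w')} \leq \sum_{j=0}^{m-1} \abs{\Phi(\eta_j) - \Phi(\eta_{j+1})} \leq m \cdot \normVS{\Phi} \leq \abs{\Lang_A(\Omega)} \normVS{\Phi}.
\end{align}
The principal subtlety lies in the compression step: the pivot sequence from $x$ to $x'$ can revisit the same $A$-pattern arbitrarily often, so a direct telescoping along the original sequence only bounds $\abs{\Phi(w)-\Phi(w')}$ by $n \cdot \normVS{\Phi}$ with $n$ uncontrolled; passing to the quotient graph $G$ and extracting a simple path is what produces the factor $\abs{\Lang_A(\Omega)}$.
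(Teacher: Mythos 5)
Your proof is correct and follows essentially the same route as the paper's: both build a graph on $\Lang_A(\Omega)$, use the pivot property to connect $w$ and $w'$, extract a shortest or simple path of length at most $\abs{\Lang_A(\Omega)}-1$, and bound each edge contribution by $\Var_{s}(\Phi_A)=\Var_0(\Phi_{A-s})\leq\normVS{\Phi}$ using shift-invariance. The only cosmetic difference is the edge definition (you use a single pivot inside $A$, the paper uses a pivot sequence with exactly one move inside $A$), but these yield the same edge set since pivots outside $A$ leave the $A$-restriction unchanged.
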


Lemma~\ref{lem:Phi_bound_VS} can be used to obtain the following.
\begin{proposition}
\label{prop:interaction:variation-summable:seminorm}
	Let $\Omega$ be a shift space satisfying
	the uniform pivot property and let $\Phi\colon\Lang(\Omega) \to \RR$ be a shift-invariant variation-summable interaction.
	Then $\normVS{\Phi}=0$ if and only if for every $C \Subset \ZZ^d$ the function $\Phi_C$
	is constant on each asymptotic class of $\Omega$.
\end{proposition}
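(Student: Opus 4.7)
The plan is to establish each direction separately, with the forward direction ($\Rightarrow$) essentially an immediate application of Lemma~\ref{lem:Phi_bound_VS} and the reverse direction ($\Leftarrow$) following directly from the definitions.

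For the reverse direction, I would assume that $\Phi_C$ is constant on each asymptotic class of $\Omega$ for every $C \Subset \ZZ^d$. Fix $C \Subset \ZZ^d$ with $0 \in C$ and consider any pair $(x,y) \in \relation{T}_{\{0\}}(\Omega)$. Since $x$ and $y$ agree outside the single site $\{0\}$, they are asymptotic, so by hypothesis $\Phi_C(x) = \Phi_C(y)$. This shows $\Var_0(\Phi_C) = 0$ for each such~$C$, and summing over $C \ni 0$ yields $\normVS{\Phi} = 0$. Note that this direction makes no use of the uniform pivot property.

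For the forward direction, I would apply Lemma~\ref{lem:Phi_bound_VS} (which is where the pivot hypothesis on $\Omega$ enters). Suppose $\normVS{\Phi} = 0$. Then for every $A \Subset \ZZ^d$ and every pair $w, w' \in \Lang_A(\Omega)$ with $w \overset{\Omega}{\sim} w'$, the lemma gives
\begin{align}
\abs[\big]{\Phi(w) - \Phi(w')} \leq \abs{\Lang_A(\Omega)} \cdot \normVS{\Phi} = 0,
\end{align}
hence $\Phi(w) = \Phi(w')$. Now, given an arbitrary $C \Subset \ZZ^d$ and asymptotic pair $(x, x') \in \relation{T}(\Omega)$, set $A := C$, $w := x_C$, and $w' := x'_C$; the pair $(x, x')$ itself witnesses $w \overset{\Omega}{\sim} w'$, so $\Phi_C(x) = \Phi(w) = \Phi(w') = \Phi_C(x')$. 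Thus $\Phi_C$ is constant on each asymptotic class.

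The only potential obstacle is a conceptual check rather than a computational one: one must verify that the pattern-level equivalence $\overset{\Omega}{\sim}$ matches the configuration-level condition of being asymptotic when restricted to a common shape. But this matching is built into the definition of $\overset{\Omega}{\sim}$, so the argument is immediate. The real content is contained in Lemma~\ref{lem:Phi_bound_VS}, which already packages the use of the uniform pivot property.
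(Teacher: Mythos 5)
Your proof is correct and follows essentially the same approach as the paper: the reverse direction is an immediate unwinding of the definition of $\normVS{\cdot}$, and the forward direction is a direct application of Lemma~\ref{lem:Phi_bound_VS}. You spell out the steps slightly more explicitly than the paper's terse proof, but the substance and key lemma are identical.
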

See Appendix~\ref{apx:interactions:variation-summable} for the proof of Lemma~\ref{lem:Phi_bound_VS} and Proposition~\ref{prop:interaction:variation-summable:seminorm}.

If we identify two interactions $\Phi^{(1)}$ and $\Phi^{(2)}$ whenever
$\normVS{\Phi^{(2)}-\Phi^{(1)}}=0$, then we get a normed linear space.
This space together with the norm $\normVS{\cdot}$ actually forms
a Banach space which we denote by $\banach{B}_\VS(\Omega)$.
In the specific case where $\Omega$ is a shift space that admits a safe symbol $\blank$, we can identify  $\banach{B}_\VS(\Omega)$
with the space of interactions $\Phi$  satisfying $\normVS{\Phi} < \infty$ and $\Phi(\blank^C) =0$ for every $C \Subset \ZZ^d$, because for any interaction $\Phi$ satisfying $\normVS{\Phi} < \infty$, there exists a unique interaction $\Phi'$ satisfying
$\normVS{\Phi-\Phi'} =0$ and $\Phi'(\blank^C) =0$ for every $C \Subset \ZZ^d$. This basic idea extends to the more general case where $\Omega$ is a shift space with the pivot property.  Namely, choose a set $L_0 \subseteq \Lang(\Omega)$ which includes precisely one representative from each equivalence class of $\overset{\Omega}{\sim}$.  Now for any interaction $\Phi$ satisfying $\normVS{\Phi} < \infty$, there exists a unique interaction $\Phi'$ satisfying $\normVS{\Phi-\Phi'} =0$ such that $\Phi'(w) =0$ for every $w \in L_0$.

For completeness, we give a proof of completeness of $\normVS{\cdot}$.
\begin{proposition}[Completeness of the $\VS$-norm]
\label{prop:BanachVS_complete}
	Let $\Omega$ be a shift space with the uniform pivot property.
	Then, the norm $\normVS{\cdot}$ on $\banach{B}_\VS(\Omega)$ is complete.
\end{proposition}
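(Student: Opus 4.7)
The plan is to exploit Lemma~\ref{lem:Phi_bound_VS} to convert Cauchy convergence in $\normVS{\cdot}$ into pointwise convergence of the underlying interactions, and then upgrade this to norm convergence via a Fatou-type estimate. The whole argument works most cleanly if we first fix a canonical normalization for the equivalence classes in $\banach{B}_\VS(\Omega)$.

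Fix a shift-invariant set $L_0 \subseteq \Lang(\Omega)$ containing precisely one pattern from each equivalence class of $\overset{\Omega}{\sim}$. Such an $L_0$ can be obtained by picking one $\sim$-class per orbit of the shift action on $\sim$-classes, choosing a representative pattern, and extending by shift-invariance; since any non-empty finite shape $C \Subset \ZZ^d$ has trivial stabilizer under translation, the shifts of a chosen representative all lie in distinct $\sim$-classes, so $L_0$ indeed meets each $\sim$-class exactly once. As explained in the text preceding the statement, combined with Proposition~\ref{prop:interaction:variation-summable:seminorm}, the normalization $\Phi|_{L_0} \equiv 0$ singles out a unique shift-invariant representative of every element of $\banach{B}_\VS(\Omega)$.

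Let $(\Phi^{(n)})_{n \geq 1}$ be a Cauchy sequence of such normalized representatives. For any $A \Subset \ZZ^d$ and any $w \in \Lang_A(\Omega)$, let $w_0 \in L_0$ denote the representative of the $\sim$-class of $w$. Applying Lemma~\ref{lem:Phi_bound_VS} to the shift-invariant variation-summable interaction $\Phi^{(m)} - \Phi^{(n)}$, and using $\Phi^{(m)}(w_0) = \Phi^{(n)}(w_0) = 0$, we obtain
\begin{align*}
	\abs[\big]{\Phi^{(m)}(w) - \Phi^{(n)}(w)} \;\leq\; \abs{\Lang_A(\Omega)} \cdot \normVS{\Phi^{(m)} - \Phi^{(n)}}.
\end{align*}
Hence $(\Phi^{(n)}(w))_n$ is Cauchy in $\RR$, and we may define $\Phi(w) \isdef \lim_n \Phi^{(n)}(w)$. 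This pointwise limit $\Phi$ is an interaction, is shift-invariant, and vanishes on $L_0$.

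It remains to verify that $\Phi^{(n)} \to \Phi$ in the $\VS$-norm (and therefore $\Phi \in \banach{B}_\VS(\Omega)$). For each $C \Subset \ZZ^d$ with $0 \in C$ and each $(x,y) \in \relation{T}_{\{0\}}(\Omega)$, pointwise convergence of $\Phi^{(m)}$ yields
\begin{align*}
	\abs[\big]{(\Phi - \Phi^{(n)})_C(y) - (\Phi - \Phi^{(n)})_C(x)}
	= \lim_{m} \abs[\big]{(\Phi^{(m)} - \Phi^{(n)})_C(y) - (\Phi^{(m)} - \Phi^{(n)})_C(x)}
	\leq \liminf_{m} \Var_0\big(\Phi^{(m)}_C - \Phi^{(n)}_C\big).
\end{align*}
Taking the supremum over $(x,y) \in \relation{T}_{\{0\}}(\Omega)$, summing over $C \ni 0$, and applying Fatou's lemma for counting measures gives
\begin{align*}
	\normVS{\Phi - \Phi^{(n)}} \;\leq\; \sum_{\substack{C \Subset \ZZ^d\\ C\ni 0}} \liminf_{m} \Var_0\big(\Phi^{(m)}_C - \Phi^{(n)}_C\big) \;\leq\; \liminf_{m} \normVS{\Phi^{(m)} - \Phi^{(n)}},
\end{align*}
and this tends to $0$ as $n \to \infty$ by the Cauchy property. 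The only step that is not completely routine is arranging the shift-invariant choice of $L_0$ so that the normalization $\Phi|_{L_0} \equiv 0$ picks out a unique shift-invariant representative; once that bookkeeping is in place, the analytic part reduces to a direct application of Lemma~\ref{lem:Phi_bound_VS} followed by the standard Fatou upgrade.
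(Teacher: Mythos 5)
Your proof is correct, and while it shares the paper's first half — normalizing via a transversal $L_0$ of the $\overset{\Omega}{\sim}$-classes and using Lemma~\ref{lem:Phi_bound_VS} to extract a pointwise limit — it takes a genuinely different route to finish. The paper upgrades pointwise convergence to norm convergence by extracting a \emph{uniform} tail bound on the series $\sum_{C\ni 0}\Var_0(\Phi^{(n)}_C)$ from the Cauchy hypothesis (an equicontinuity-of-tails argument, then estimating the finitely many remaining terms); you instead use lower-semicontinuity of $\Var_0$ under pointwise convergence plus Fatou's lemma for the counting sum to get $\normVS{\Phi-\Phi^{(n)}}\le \liminf_m\normVS{\Phi^{(m)}-\Phi^{(n)}}$ in one stroke. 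The Fatou route is shorter and is the standard move for completeness of weighted $\ell^1$-type spaces; the paper's route is more elementary but requires the extra bookkeeping with the cutoff set $B$.

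There is one further point where your write-up actually tightens the paper's: you insist that $L_0$ be chosen \emph{shift-invariant}, and justify that this is possible because a finite non-empty shape $C\Subset\ZZ^d$ has trivial translational stabilizer, so the shift action on $\overset{\Omega}{\sim}$-classes is free and one can pick one class per shift-orbit and propagate a representative by translation. This matters: if $L_0$ were an arbitrary transversal, the normalized representative $\Phi'$ with $\Phi'|_{L_0}\equiv 0$ need not itself be shift-invariant (the class-constant correction need not be shift-invariant on $\sim$-classes), and the subsequent ``clearly $\Phi$ is shift-invariant'' would not be automatic. The paper passes over this in the remark preceding the proposition; your construction of a shift-invariant $L_0$ makes the normalization legitimate.
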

\begin{proof}
	Let $\Phi^{(1)}, \Phi^{(2)}, \ldots$ be a Cauchy sequence in $\banach{B}_\VS(\Omega)$. We need to show that the sequence converges with respect to the  norm $\normVS{\cdot}$.  Let $L_0 \subseteq\Lang(\Omega)$ be a set containing precisely one representative from each equivalence class of $\overset{\Omega}{\sim}$.
	By the remark above the proposition, it is no loss of generality to assume that $\Phi^{(n)}(w)=0$ for each $w\in L_0$ and $n \in \NN$.
	
	Take $w \in  \Lang(\Omega)$, and let  $w'$ be the unique element of $L_0$ such that $w \overset{\Omega}{\sim} w'$.  Then, $\Phi^{(n)}(w)=\Phi^{(n)}(w)-\Phi^{(n)}(w')$.
	Using Lemma~\ref{lem:Phi_bound_VS}, it follows that $(\Phi^{(n)}(w))_{n=1}^\infty$ is a Cauchy sequence of real numbers, and thus converges to a real number which we denote by $\Phi(w)$. This defines an interaction $\Phi\colon\Lang(\Omega) \to \RR$ which is the pointwise limit of $(\Phi^{(n)})_{n=1}^\infty$.
	Clearly, $\Phi$ is shift-invariant and satisfies $\Phi(w')=0$ for each $w'\in L_0$.
	
	From pointwise convergence, it directly follows that $\Var_0(\Phi_C) = \lim_{n \to \infty}\Var_0(\Phi^{(n)}_C)$ for every $C \Subset \ZZ^d$.  Since $(\Phi^{(n)})_{n=1}^\infty$ is a  Cauchy sequence in $\banach{B}_{\VS}(\Omega)$, the series $\sum_{\substack{C\Subset\ZZ^d\\ C\ni 0}} \Var_0(\Phi^{(n)}_C)$ converges uniformly in $n$, in the sense that for any $\varepsilon >0$, there exists a finite set $B \Subset \ZZ^d$ such that
	\begin{align}
		\sup_n \sum_{\substack{C\Subset\ZZ^d\\ C\ni 0,\ C \not\subseteq B }} \Var_0(\Phi^{(n)}_C)
			&< \varepsilon \;,
	\end{align}
	This shows that
	\begin{align}
		\normVS{\Phi} &\leq
			\sum_{\substack{C\Subset\ZZ^d\\ C\ni 0,\ C \subseteq B }} \Var_0(\Phi_C) + \varepsilon \;,
	\end{align}
	and in particular $\Phi \in \banach{B}_{\VS}(\Omega)$.
	It also follows that
	\begin{align}
		\limsup_{n \to \infty} \normVS{\Phi-\Phi^{(n)}} &\leq
			\lim_{n \to \infty}\sum_{\substack{C\Subset\ZZ^d\\ C\ni 0,\ C \subseteq B }}
				\Var_0(\Phi_C-\Phi^{(n)}_C) + \varepsilon = \varepsilon \;.
	\end{align}
	Since $\varepsilon >0$ was arbitrary, this shows that the sequence $(\Phi^{(n)})_{n=1}^\infty$ converges in norm to $\Phi$.
\end{proof}

Sullivan showed that every shift-invariant continuous cocycle
(equivalently, shift-invariant positive continuous specification)
on a full shift space is generated by a shift-invariant variation-summable interaction
(see Corollary of Theorem~2 in~\cite{Sul73}). To be precise, the statement in~\cite{Sul73} only mentions a slightly weaker property called  ``absolute convergence'', but an inspection of the proof reveals that it yields a variation-summable interaction.
In Theorem~\ref{thm:sull2} below, we extend Sullivan's result to more general families of
shift spaces.

\subsubsection{Norm-summable interactions.}
\label{sec:prelim:interactions:NS}

An interaction $\Phi$ on a configuration space $\Omega\subseteq\Sigma^\Sites$
is \emph{norm-summable} (also called \emph{absolutely summable})
if for every $A\Subset\Sites$,
\begin{align}
\label{eq:norm_summable}
	\sum_{\substack{C\Subset\Sites\\ C\cap A\neq\varnothing}}\norm{\Phi_C} < \infty \;,
\end{align}
where $\norm{\cdot}$ denotes the uniform norm.
Clearly, every norm-summable interaction is also variation-summable.
In particular, norm-summable interactions are uniformly convergent.

Observe that for every interaction $\Phi$,
\begin{align}
\label{eq:norm_sum_finite}
	\sum_{\substack{C\Subset\Sites\\ C\cap A\neq\varnothing}}\norm{\Phi_C}
	&= \sum_{a\in A}\sum_{\substack{C\Subset\Sites\\ C\ni a}}\frac{1}{\abs{A\cap C}}\norm{\Phi_C}
	\leq \sum_{s \in A} \sum_{\substack{C\Subset\Sites\\ C\ni s}}\norm{\Phi_C} \;.
\end{align}
Thus,
$\Phi$ is norm-summable if and only if
\begin{align}
\label{eq:norm_summable:2}
	\sum_{\substack{C\Subset\Sites\\ C\ni s}}\norm{\Phi_C} < \infty
\end{align}
for every $s\in\Sites$.

Of special interest is the case in which $\Omega$ is a shift space
and $\Phi$ is a shift-invariant interaction.
In this case, $\Phi$ is norm-summable if and only if
\begin{align}
	\normNS{\Phi} & \isdef \sum_{\substack{C\Subset\Sites\\ C\ni 0}}
\norm{\Phi_C} < \infty \;.
\end{align}
It is well known that the space of shift-invariant norm-summable
interactions on a shift space $\Omega$ with the norm $\normNS{\cdot}$
is a Banach space (see Section 4.1 of~\cite{Rue04}).
We denote this Banach space by $\banach{B}_\NS(\Omega)$ .
Once again, for completeness, we give a proof of completeness.
\begin{proposition}[Completeness of the $\NS$-norm]
\label{prop:BanachNS_complete}
	Let $\Omega$ be a shift space.
	Then, the norm $\normNS{\cdot}$ on $\banach{B}_\NS(\Omega)$ is complete.
\end{proposition}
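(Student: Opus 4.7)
The plan is to mimic the proof of Proposition~\ref{prop:BanachVS_complete}, with simplifications afforded by the fact that $\normNS{\cdot}$ directly controls values of $\Phi$ on admissible patterns (no quotienting by an equivalence relation is needed, since $\normNS{\Phi}=0$ forces $\Phi$ to vanish on every admissible pattern whose shape contains $0$, and shift-invariance then forces $\Phi \equiv 0$ on $\Lang(\Omega)$).

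I would start with a Cauchy sequence $(\Phi^{(n)})_{n \geq 1}$ in $\banach{B}_\NS(\Omega)$ and construct its limit pointwise. Fix $w \in \Lang(\Omega)$ with shape $C$; by shift-invariance we may assume $0 \in C$. Then
\begin{align}
    \abs[\big]{\Phi^{(n)}(w)-\Phi^{(m)}(w)} \leq \norm[\big]{\Phi^{(n)}_C-\Phi^{(m)}_C} \leq \normNS{\Phi^{(n)}-\Phi^{(m)}} \;,
\end{align}
so $(\Phi^{(n)}(w))_{n\geq 1}$ is Cauchy in $\RR$ and converges to some value $\Phi(w)$. This defines a shift-invariant interaction $\Phi\colon\Lang(\Omega)\to\RR$ (extend from shapes containing $0$ to all finite shapes using shift-invariance; consistency is automatic since each $\Phi^{(n)}$ is shift-invariant).

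Next I would show $\Phi \in \banach{B}_\NS(\Omega)$ and that convergence is in norm. The key observation is that since $(\Phi^{(n)})_{n\geq 1}$ is Cauchy, the series $\sum_{C\Subset\ZZ^d,\, C \ni 0}\norm{\Phi^{(n)}_C}$ converges uniformly in $n$: for every $\varepsilon>0$ there exists $B\Subset\ZZ^d$ such that
\begin{align}
    \sup_n \sum_{\substack{C\Subset\ZZ^d\\ C\ni 0,\ C\not\subseteq B}}\norm[\big]{\Phi^{(n)}_C} < \varepsilon \;.
\end{align}
Pointwise convergence on admissible patterns, together with the fact that each $\Phi^{(n)}_C$ is a local function depending only on coordinates in $C$, gives $\norm{\Phi_C}=\lim_n \norm{\Phi^{(n)}_C}$ and $\norm{\Phi_C-\Phi^{(n)}_C} \to 0$ for every fixed $C$. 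Therefore
\begin{align}
    \normNS{\Phi} \leq \sum_{\substack{C\Subset\ZZ^d\\ C\ni 0,\ C\subseteq B}}\norm[\big]{\Phi_C}+\varepsilon<\infty \;,
\end{align}
so $\Phi \in \banach{B}_\NS(\Omega)$, and similarly
\begin{align}
    \limsup_{n\to\infty}\normNS{\Phi-\Phi^{(n)}} \leq \lim_{n\to\infty}\sum_{\substack{C\Subset\ZZ^d\\ C\ni 0,\ C\subseteq B}}\norm[\big]{\Phi_C-\Phi^{(n)}_C}+\varepsilon = \varepsilon \;.
\end{align}
Since $\varepsilon>0$ was arbitrary, $\Phi^{(n)} \to \Phi$ in $\normNS{\cdot}$.

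No step here is really the main obstacle: the argument is essentially the standard completeness proof for an $\ell^1$-type space of sequences $(\norm{\Phi_C})_{C \ni 0}$. The only point to be a bit careful about is justifying the uniform tail bound above from the Cauchy condition, which is a standard consequence of $\normNS{\Phi^{(n)}-\Phi^{(m)}}<\varepsilon$ for $n,m$ large combined with the fact that each individual $\normNS{\Phi^{(n)}}$ is finite and so each individual tail is small.
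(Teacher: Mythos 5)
Your proof is correct, but it takes a different route from the one the paper actually uses for this proposition. The paper invokes the characterization of completeness via absolutely convergent series: it starts from $\sum_n\normNS{\Phi^{(n)}}<\infty$, defines $\Phi_C\isdef\sum_n\Phi^{(n)}_C$, obtains norm-summability by Tonelli (swapping the non-negative double sum over $n$ and over $C\ni 0$), and then shows $\normNS[\big]{\sum_{n=1}^N\Phi^{(n)}-\Phi}\to 0$ with an explicit enumeration of shapes and a three-term $\varepsilon$-splitting. You instead start from a Cauchy sequence, build the pointwise limit, and extract a uniform tail bound on $\sum_{C\ni 0,\,C\not\subseteq B}\norm{\Phi^{(n)}_C}$ from the Cauchy hypothesis — which is exactly the structure of the paper's proof of the adjacent $\VS$-completeness result (Proposition~\ref{prop:BanachVS_complete}), and you say so explicitly. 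Both are standard $\ell^1$-style completeness arguments and both are valid here. What the paper's choice buys is that the absolutely-convergent-series route sidesteps the need to derive the uniform tail bound (it falls out of Tonelli for free), at the cost of a slightly more involved final convergence estimate; your choice buys uniformity of exposition with the $\VS$ case, which is arguably the more natural presentation since the two propositions sit side by side. The one place worth a sentence of care in your write-up: the tail of $\normNS{\Phi-\Phi^{(n)}}$ is bounded by the sum of the two tails, so the final display should really carry $2\varepsilon$ rather than $\varepsilon$; this is harmless since $\varepsilon$ is arbitrary, but the bookkeeping should be honest. (The paper's $\VS$ proof has the same small slip.)
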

\begin{proof}
	Suppose that $\sum_{n=1}^\infty\normNS{\Phi^{(n)}}<\infty$. It suffices to show that there is a norm-summable interaction~$\Phi$ such that
$\normNS[\big]{\sum_{n=1}^N {\Phi^{(n)}} - \Phi}$ converges to zero as $N\to\infty$.

	For this, first observe that $\sum_{n=1}^\infty\norm[\big]{\Phi^{(n)}_C}<\infty$
	for every $C\Subset\ZZ^d$ such that $C \ni 0$.
	Since $\RR^{\Lang_C(\Omega)}$ %
	is a finite dimensional Banach space with the uniform norm,
	$\Phi_C\isdef\sum_{n=1}^\infty\Phi^{(n)}_C$ exists uniformly.
	Clearly, this defines a shift-invariant interaction $\Phi$.
	Furthermore, $\Phi$ is norm-summable because
	\begin{align}
		\normNS{\Phi} &=
			\sum_{\substack{C\Subset\ZZ^d\\ C\ni 0}} \norm{\Phi_C} \le
			\sum_{\substack{C\Subset\ZZ^d\\ C\ni 0}}\sum_{n=1}^\infty \norm[\big]{\Phi^{(n)}_C} =
			\sum_{n=1}^\infty\sum_{\substack{C\Subset\ZZ^d\\ C\ni 0}} \norm[\big]{\Phi^{(n)}_C} =
			\sum_{n=1}^\infty \normNS{\Phi^{(n)}} <
			\infty \;.
	\end{align}
	It remains to show that $\normNS[\big]{\sum_{n=1}^N {\Phi^{(n)}} - \Phi}\to 0$ as $N\to\infty$.
	
	Enumerate the finite subsets of $\ZZ^d$ that contain $0$ as $C_1, C_2, \ldots$. Since
	\begin{align}
		\sum_{m=1}^\infty \sum_{n=1}^\infty\norm{\Phi_{C_m}^{(n)}}
			= \sum_{n=1}^\infty\normNS{\Phi^{(n)}}<\infty \;,
	\end{align}
	given $\varepsilon > 0$, there exists $M>0$ such that
	\begin{align}
		\sum_{m=M+1}^\infty \sum_{n=1}^\infty\norm[\big]{\Phi_{C_m}^{(n)}} < \varepsilon.
	\end{align}
	Moreover, since $\Phi$ is norm-summable, we may also assume that
	\begin{align}
		\sum_{m=M+1}^\infty\norm{\Phi_{C_m}} < \varepsilon.
	\end{align}
	It follows that for every $N\geq 1$,
	\begin{align}
		\normNS[\bigg]{\sum_{n=1}^N {\Phi^{(n)}} - \Phi} &=
			\sum_{m=1}^\infty \norm[\bigg]{\sum_{n=1}^N {\Phi_{C_m}^{(n)}} - \Phi_{C_m}} \\
		&\leq
			\sum_{m=1}^M \norm[\bigg]{\sum_{n=1}^N {\Phi_{C_m}^{(n)}} - \Phi_{C_m}}
			+ \sum_{m=M+1}^\infty\norm[\bigg]{\sum_{n=1}^N {\Phi_{C_m}^{(n)}} - \Phi_{C_m}} \\
		&\leq
			\sum_{m=1}^M\norm[\bigg]{\sum_{n=1}^N {\Phi_{C_m}^{(n)}} - \Phi_{C_m}}
			+ \sum_{m=M+1}^\infty\sum_{n=1}^N \norm[\big]{{\Phi_{C_m}^{(n)}}}
			+ \sum_{m=M+1}^\infty\norm{\Phi_{C_m}} \\
		&\leq
			\sum_{m=1}^M\norm[\bigg]{\sum_{n=1}^N {\Phi_{C_m}^{(n)}} - \Phi_{C_m}}
			+ 2\varepsilon \;.
	\end{align}
	Now choose $N_0$ so large that for each $m = 1,2,\ldots, M$ and $N \geq N_0$,
	\begin{align}
		\norm[\bigg]{\sum_{n=1}^N {\Phi_{C_m}^{(n)}} - \Phi_{C_m}} &< \varepsilon/M
	\end{align}
	Then, for $N \geq N_0$,
	\begin{align}
		\normNS[\bigg]{\sum_{n=1}^N {\Phi^{(n)}} - \Phi} < 3\varepsilon \;,
	\end{align}
	concluding the proof.
\end{proof}

Let us remark that the set of shift-invariant finite-range interactions
on a shift space $\Omega$ is dense in~$\banach{B}_\NS(\Omega)$.

Kozlov~\cite{Koz74} showed that every continuous cocycle (equivalently, positive continuous specification)
on a full configuration space is generated by a norm-summable interaction.
However, when the cocycle is shift-invariant, Kozlov's construction does not provide
a shift-invariant norm-summable interaction.
This raises the question of whether every shift-invariant continuous cocycle on a full shift
is generated by a shift-invariant norm-summable interaction.
The main result of the current paper (Theorem~\ref{thm:Kozlov_is_annoying}) answers this question
in the negative:
there exists a continuous shift-invariant cocycle on the one-dimensional binary full shift
which is not generated by any shift-invariant norm-summable interaction.
On the other hand, we extend Kozlov's non-shift-invariant result to continuous cocycles
on any configuration space satisfying the TMP (Theorem~\ref{thm:gibbs-cocycle:interaction:non-invariant}).

\subsection{A Banach space of cocycles}
\label{sec:prelim:Banach_cocycles}

In this section we show that the space of all
continuous shift-invariant cocycles on a shift space   that has the TMP
and the pivot property is in fact a Banach space
with an appropriately defined norm. This result, which is  crucial for the proof of Theorem~\ref{thm:sull2}, is essentially due to  Sullivan~\cite{Sul73}, at least for cocycles on the full-shift.

Let $\Omega\subseteq\Sigma^{\ZZ^d}$ be a shift space.
For a site $k \in \ZZ^d$, define $\zeta_k\colon \Omega \to \Omega$ by
\begin{align} \label{eq:zeta_k_def}
	(\zeta_k x)_s \isdef
		\begin{cases}
			x_s		& \text{if $s\neq k$,} \\
			a(x,k) 	& \text{if $x=k$,}
		\end{cases}
\end{align}
for all $s\in\ZZ^d$,
where $a(x,k)$ is the smallest symbol $a \in \Sigma$, according to some fixed total order on $\Sigma$,
such that $x_{\ZZ^d \setminus \{k\}} \lor a_{\{k\}}$
is admissible in $\Omega$.
Observe that if $\Omega$ has the TMP, then $a(x,k)$ is a function of
$x_{k+B}$ for some $B \Subset \ZZ^d$. In particular, the maps
$\zeta_k$ are continuous.
When the shift space $\Omega$ has a safe symbol $\blank$, it is convenient
to assume that $\blank$ is the minimum element of $\Sigma$.
This will make sure that $a(x,k)=\blank$ for every $x \in \Omega$ and $k \in \ZZ^d$,
and the definition of $\zeta_k$ becomes
\begin{align}
\label{eq:zeta_k_safe_symbol}
	(\zeta_k x)_s \isdef
		\begin{cases}
			x_s		& \text{if $s\neq k$,} \\
			\blank 	& \text{if $s=k$.}
		\end{cases}
\end{align}

Given a shift-invariant cocycle $\psi$ on $\relation{T}(\Omega)$, define
\begin{align}
\label{eq:Sull_norm_def}
	\normsull{\psi} & \isdef \sup_{x \in \Omega} \abs[\big]{\psi(x,\zeta_0 x)} \;.
\end{align}
When $\Omega$ has the pivot property,
$\normsull{\cdot}$ is a norm, which we call the \emph{Sullivan} norm. If $\Omega$ also has the TMP, Proposition~\ref{prop:Sullivan_norm} below shows that the Sullivan norm is complete and hence turns
the space of shift-invariant continuous cocycles on $\relation{T}(\Omega)$
into a Banach space, which we denote by $\banach{B}_\Sull(\Omega)$.

In order to prove the completeness of the Sullivan norm, we use the following lemma.
\begin{lemma}
\label{lem:cont_inv_coycles_embed_linearly_in_cont_function}
	Let $\Omega$ be a shift space with the TMP and the pivot property.
	Consider the map $F\colon \banach{B}_\Sull(\Omega) \to \banach{C}(\Omega)$ given by
	\begin{align}
		F(\psi)(x) & \isdef \psi(x,\zeta_0 x) %
	\end{align}
	for each $\psi\in\banach{B}_\Sull(\Omega)$ and $x\in\Omega$.
	Then $F$ is an injective linear map.
	Furthermore, the image $F\big(\banach{B}_\Sull(\Omega)\big)$ is a closed linear
	subspace of $\banach{C}(\Omega)$ with respect to the topology of the uniform norm. %
\end{lemma}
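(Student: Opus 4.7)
The plan is to treat the three assertions (linearity, injectivity, closedness of image) separately, with the harder work concentrated in the closedness statement.

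First I would check that $F$ actually takes values in $\banach{C}(\Omega)$ and that $F$ is linear and norm-preserving. Linearity is immediate from the definition. For continuity of $F(\psi)$, note that under the TMP the value $a(x,0)$ depends on only finitely many coordinates of $x$, so $\zeta_0$ is continuous; moreover $(x,\zeta_0 x)$ always lies in the fixed compact set $\relation{T}_{\{0\}}(\Omega)$, on which the continuous cocycle $\psi$ is continuous with respect to the product topology. Hence $F(\psi)\in\banach{C}(\Omega)$. The identity $\normsull{\psi}=\sup_{x\in\Omega}\abs{\psi(x,\zeta_0 x)}=\norm{F(\psi)}_\infty$ shows that $F$ is an isometry, so injectivity will be automatic once we know $\normsull{\cdot}$ is a genuine norm — which is itself what I would prove next.

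For injectivity, suppose $F(\psi)\equiv 0$. Using shift-invariance of $\Omega$ applied to the definition of $a(x,k)$ one checks $\sigma^k\zeta_k x=\zeta_0\sigma^k x$, and then shift-invariance of $\psi$ gives $\psi(x,\zeta_k x)=F(\psi)(\sigma^k x)=0$ for every $x\in\Omega$ and $k\in\ZZ^d$. If $x,y\in\Omega$ differ only at a single site $k$, then $\zeta_k x=\zeta_k y$, so the cocycle property yields $\psi(x,y)=\psi(x,\zeta_k x)-\psi(y,\zeta_k y)=0$. By the pivot property, every $(x,y)\in\relation{T}(\Omega)$ can be connected by a chain of single-site pivots, and the cocycle property propagates the vanishing along the chain.

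For closedness of the image, I would argue by construction. Suppose $f_n\isdef F(\psi_n)$ converges uniformly to some $f\in\banach{C}(\Omega)$; I want to produce a $\psi\in\banach{B}_\Sull(\Omega)$ with $F(\psi)=f$. Given $(x,y)\in\relation{T}(\Omega)$, pick any pivot sequence $x=z^{(0)},z^{(1)},\dots,z^{(m)}=y$ with $z^{(i-1)}\to z^{(i)}$ a pivot at site $k_i$. Exactly as in the injectivity computation, each increment satisfies $\psi_n(z^{(i-1)},z^{(i)})=f_n(\sigma^{k_i}z^{(i-1)})-f_n(\sigma^{k_i}z^{(i)})$, so summing yields
\begin{align}
\psi_n(x,y) &= \sum_{i=1}^{m}\bigl[f_n(\sigma^{k_i}z^{(i-1)})-f_n(\sigma^{k_i}z^{(i)})\bigr].
\end{align}
Letting $n\to\infty$ gives a candidate $\psi(x,y)\isdef\sum_{i=1}^m\bigl[f(\sigma^{k_i}z^{(i-1)})-f(\sigma^{k_i}z^{(i)})\bigr]$, and this value is independent of the chosen pivot sequence because the left-hand sides $\psi_n(x,y)$ are. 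Being a pointwise limit of shift-invariant cocycles, $\psi$ is automatically a shift-invariant cocycle, and $F(\psi)(x)=\lim_n\psi_n(x,\zeta_0 x)=\lim_n f_n(x)=f(x)$.

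The main obstacle is verifying that $\psi$ is continuous on each $\relation{T}_A(\Omega)$; I expect to do this by upgrading pointwise convergence to uniform convergence on $\relation{T}_A(\Omega)$. Here I would invoke the uniform pivot property (which holds by Proposition~\ref{prop:TMP+pivot->uniform-pivot} since $\Omega$ has both the TMP and the pivot property): there exists $B\Subset\ZZ^d$ such that every $(x,y)\in\relation{T}_A(\Omega)$ admits a pivot sequence with all sites $k_i\in B$. All configurations along such a sequence agree outside $B$, so after deleting repetitions the length $m$ is bounded by $\abs{\Sigma}^{\abs{B}}$. The display above then gives the uniform estimate
\begin{align}
\sup_{(x,y)\in\relation{T}_A(\Omega)}\abs[\big]{\psi_n(x,y)-\psi_{n'}(x,y)} &\leq 2\abs{\Sigma}^{\abs{B}}\,\norm{f_n-f_{n'}}_\infty,
\end{align}
so $(\psi_n)$ is uniformly Cauchy, hence uniformly convergent, on $\relation{T}_A(\Omega)$. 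Uniform limits of continuous functions being continuous, $\psi$ is continuous on each $\relation{T}_A(\Omega)$ and thus $\psi\in\banach{B}_\Sull(\Omega)$, closing the argument.
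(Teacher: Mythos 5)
Your proof is correct and follows essentially the same route as the paper's: reduce via the commutation $\sigma^k\zeta_k x=\zeta_0\sigma^k x$ and the cocycle equation to values on $\relation{T}_0(\Omega)$ for injectivity, and for closedness reconstruct $\psi$ from $f=\lim F(\psi_n)$ via pivot chains, with the uniform pivot property (from TMP + pivot, Proposition~\ref{prop:TMP+pivot->uniform-pivot}) supplying uniformity over $\relation{T}_A(\Omega)$. One modest streamlining in your closedness argument: by bounding the (repetition-free) pivot-chain length by $\abs{\Sigma}^{\abs{B}}$ and concluding that $(\psi_n)$ is uniformly Cauchy on $\relation{T}_A(\Omega)$, you obtain continuity of $\psi$ from the uniform limit of continuous functions, and thereby avoid the paper's stronger (and only lightly justified) claim that the pivot sites $k_i$ and intermediate configurations $z^{(i)}$ can themselves be chosen to depend continuously on $(x,y)$.
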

\begin{proof} %
	Continuity of $F(\psi)$ follows from continuity of $\psi$ and continuity of the map
	$\zeta_0$. It is easy to see that $F$ is linear.

	To prove that $F$ is injective, we need to use the pivot property of $\Omega$.
	Suppose that $F(\psi)=0$. We will show that $\psi$ is the zero cocycle, meaning that
	$\psi(x, y)=0$ for every $(x,y) \in \relation{T}(\Omega)$. By the pivot property, there exists
	a sequence of sites $k_1, \ldots, k_n \in \ZZ^d$ and
	a sequence of configurations $x = z^{(0)}, z^{(1)}, \ldots, z^{(n)}=y$ such that
	$(z^{(i-1)}, z^{(i)}) \in \relation{T}_{k_i}(\Omega)$ for $i=1, \ldots, n$.
	By the cocycle property and shift-invariance,
	\begin{align}
		\psi(x, y) &= \sum_{i=1}^n\psi\big(z^{(i-1)}, z^{(i)}\big)
			= \sum_{i=1}^n
				\psi\big(\sigma^{k_i}z^{(i-1)}, \sigma^{k_i}z^{(i)}\big) \;.
	\end{align}
	Since $\big(\sigma^{k_i}z^{(i-1)}, \sigma^{k_i}z^{(i)}\big)\in \relation{T}_0(\Omega)$ for every
	$i=1, \ldots, n$, it suffices to show that $\psi(x, y) = 0$ for every $(x, y) \in \relation{T}_0(\Omega)$.
	Note that for $(x, y) \in \relation{T}_0(\Omega)$, we have $\zeta_0 x=\zeta_0 y$.
	Hence, the cocycle equation gives
	\begin{align}
	\label{eq:psi_T_0_zeta}
	 	\psi(x,y) &=
	 		\psi(x,\zeta_0x) + \psi(\zeta_0y, y) = F(\psi)(x) - F(\psi)(y) = 0
	\end{align}
	whenever $(x,y)\in\relation{T}_0(\Omega)$.
	This completes the proof of injectivity of $F$.

	It remains to show that $F\big(\banach{B}_\Sull(\Omega)\big)$ is closed in $\banach{C}(\Omega)$
	with respect to the uniform norm $\norm{\cdot}$.
	Suppose $f$ is in the closure of $F\big(\banach{B}_\Sull(\Omega)\big)$. Thus, there exists a sequence
	$(\psi_t)_{t=1}^\infty$ in $\banach{B}_\Sull(\Omega)$ such that $f$ is the uniform limit of $F(\psi_t)$.
	We show that the limit
	 \begin{align}
	 	\psi & \isdef  \lim_{t \to \infty}\psi_t
	\end{align}
	exists and that it is uniform on $\relation{T}_A(\Omega)$ for every $A \Subset \ZZ^d$.
	So, let $A\Subset\ZZ^d$.
	Then, for every $(x,y)\in\relation{T}_A(\Omega)$, as above, we can find
	a sequence of sites $k_1, \ldots, k_n \in \ZZ^d$ and
	a sequence of configurations $x = z^{(0)}, z^{(1)}, \ldots, z^{(n)}=y$ such that
	$(z^{(i-1)}, z^{(i)}) \in \relation{T}_{k_i}(\Omega)$ for $i=1, \ldots, n$.
	Hence,
	 \begin{align}
	 	\psi_t(x,y) &=
	 		\sum_{i=1}^n\psi_t \big( z^{(i-1)},z^{(i)} \big) =
	 		\sum_{i=1}^n
			\Big[
				F(\psi_t)\big(\sigma^{k_i}z^{(i-1)}\big)-F(\psi_t)\big(\sigma^{k_i}z^{(i)}\big)
			\Big] \;,
	\end{align}
	where the second equality is obtained similarly to~\eqref{eq:psi_T_0_zeta}
	with $\psi_t$ replacing $\psi$.
	Because $\Omega$ has  the TMP and the pivot property, it also has the uniform pivot property, so the number $n$ can be chosen independently of the pair $(x,y)\in\relation{T}_A(\Omega)$,
	thus only depending on $A$.
	Furthermore, the TMP implies that the sites $k_1, \ldots, k_n$ and
	the configurations $z^{(0)}, z^{(1)}, \ldots, z^{(n)}$
	can be chosen to be continuous in $(x,y)\in\relation{T}_A(\Omega)$,
	meaning that there exists a finite set $B\supseteq A$ such that
	the sites $k_i$ and the symbols $z^{(i)}_{k_i}$ depend only on $x_B$ and $y_B$.
	We see that as $t\to\infty$, the right-hand side converges to
	\begin{align}
		\sum_{i=1}^n
		\Big[
			f\big(\sigma^{k_i}z^{(i-1)}\big)
			-f\big(\sigma^{k_i}z^{(i)}\big)
		\Big] \;.
	\end{align}
	Furthermore, the convergence is uniform over $(x,y) \in \relation{T}_A(\Omega)$
	because the convergence of $F(\psi_t)$ to $f$ is uniform.
	Since a pointwise
	limit of cocycles is also a cocycle, $\psi$ is a cocycle on $\relation{T}(\Omega)$.
	Since for each $A\Subset\ZZ^d$, the convergence is uniform over $\relation{T}_A(\Omega)$,
	it follows that $\psi$ is a continuous cocycle.
	Shift-invariance of $\psi$ follows from shift-invariance of $\psi_t$ and taking limits.
	Lastly, for every $x \in \Omega$,
	\begin{align}
		F(\psi)(x) &=
			\psi(x,\zeta_0x) =
			\lim_{t\to\infty} \psi_t(x,\zeta_0 x) =
			\lim_{t \to \infty} F(\psi_t)(x) =
			f(x) \;.
	\end{align}
	Thus, $F(\psi)=f$.	
\end{proof}

\begin{proposition}[Completeness of the Sullivan norm]
\label{prop:Sullivan_norm}
	Let $\Omega$ be a shift space with the TMP and the pivot property.
	Then $\norm{\cdot}_{\Sull}$
	is a complete norm on $\banach{B}_\Sull(\Omega)$.
\end{proposition}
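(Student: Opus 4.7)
The plan is to deduce the completeness of $\normsull{\cdot}$ essentially as an immediate corollary of Lemma~\ref{lem:cont_inv_coycles_embed_linearly_in_cont_function}, by observing that the map $F$ defined there is in fact an isometric embedding of $\banach{B}_\Sull(\Omega)$ into $\banach{C}(\Omega)$ equipped with the uniform norm. Once this is observed, completeness follows from the fact that $\banach{C}(\Omega)$ is itself a Banach space and $F\big(\banach{B}_\Sull(\Omega)\big)$ is a closed subspace.

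First I would verify that $\normsull{\cdot}$ is genuinely a norm. Non-negativity, homogeneity, and the triangle inequality are immediate from the definition in~\eqref{eq:Sull_norm_def}. For positive definiteness one uses the pivot property: if $\normsull{\psi} = 0$, then $F(\psi) \equiv 0$, and the injectivity of $F$ established in Lemma~\ref{lem:cont_inv_coycles_embed_linearly_in_cont_function} forces $\psi$ to be the zero cocycle.

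Next, directly from the definitions,
\begin{align}
	\norm{F(\psi)} = \sup_{x \in \Omega} \abs[\big]{\psi(x,\zeta_0 x)} = \normsull{\psi},
\end{align}
so $F$ is an isometric linear injection from $\banach{B}_\Sull(\Omega)$ into $\banach{C}(\Omega)$. Now let $(\psi_n)_{n=1}^\infty$ be a Cauchy sequence in $\banach{B}_\Sull(\Omega)$. Then $(F(\psi_n))_{n=1}^\infty$ is Cauchy in $\banach{C}(\Omega)$ with the uniform norm. Since $\banach{C}(\Omega)$ is complete, there exists $f \in \banach{C}(\Omega)$ with $F(\psi_n) \to f$ uniformly. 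By Lemma~\ref{lem:cont_inv_coycles_embed_linearly_in_cont_function}, the image $F\big(\banach{B}_\Sull(\Omega)\big)$ is closed in $\banach{C}(\Omega)$, hence $f = F(\psi)$ for some $\psi \in \banach{B}_\Sull(\Omega)$. Then $\normsull{\psi_n - \psi} = \norm{F(\psi_n) - F(\psi)} \to 0$, which proves completeness.

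The substantive work has already been carried out in Lemma~\ref{lem:cont_inv_coycles_embed_linearly_in_cont_function}: the genuine obstacle was proving closedness of $F\big(\banach{B}_\Sull(\Omega)\big)$, which required using the TMP together with the uniform pivot property to reconstruct a continuous shift-invariant cocycle on all of $\relation{T}(\Omega)$ from a uniform limit of its ``single-site'' restrictions to $\{(x, \zeta_0 x) : x \in \Omega\}$. Given that lemma, the completeness statement is a one-line invocation of the fact that a closed subspace of a Banach space is Banach, transported back through the isometry $F$.
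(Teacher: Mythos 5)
Your argument is correct and is essentially identical to the paper's proof: both observe that $F$ from Lemma~\ref{lem:cont_inv_coycles_embed_linearly_in_cont_function} is an isometric linear embedding onto a closed subspace of $\banach{C}(\Omega)$, so $\normsull{\cdot}$ is the pullback of a complete norm and is therefore complete. The only difference is that you spell out the Cauchy-sequence argument rather than citing the closed-subspace-of-Banach fact directly, which is a matter of exposition.
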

\begin{proof}
	The function $F$ in the statement of
	Lemma~\ref{lem:cont_inv_coycles_embed_linearly_in_cont_function}
	is an embedding of the linear space $\banach{B}_\Sull(\Omega)$ onto a closed
	linear subspace of $\banach{C}(\Omega)$, thus $F(\banach{B}_\Sull(\Omega))$ together
	with the uniform norm $\norm{\cdot}$ is a Banach space. By definition,
	the norm $\normsull{\cdot}$ is simply the pullback of
	the uniform norm $\norm{\cdot}$ via $F$, and thus it is a complete norm.
\end{proof}

\begin{remark}[Completeness of the Sullivan norm: another sufficient condition]
\label{rmk:Sullivan_norm_safe}
	If $\Omega$ is a shift space with a safe symbol, then it satisfies the pivot property, in fact the uniform pivot property, but not necessarily the TMP.  Nevertheless,  we claim that the conclusion of Proposition~\ref{prop:Sullivan_norm} still holds.  To see this, first observe that the TMP was used only to show continuity of certain quantities within the proof of  Lemma~\ref{lem:cont_inv_coycles_embed_linearly_in_cont_function},
	namely, continuity of
	\begin{enumerate*}[label={(\alph*)}]
		\item \label{item:sullivan:completeness:safe:1} $\zeta_0x$ as a function of $x$, and
		\item \label{item:sullivan:completeness:safe:2} the sites $k_i$, and
		\item \label{item:sullivan:completeness:safe:3} the symbols $z^{(i)}_{k_i}$,
			as functions of $(x,y) \in \relation{T}_A(\Omega)$.
	\end{enumerate*}
	But the safe symbol assumption guarantees continuity of these quantities,
	even without the TMP assumption:
	
	For~\ref{item:sullivan:completeness:safe:1}, continuity amounts to continuity of  $(\zeta_0x)_0$, which in this case is the constant $\blank$.
	
	For~\ref{item:sullivan:completeness:safe:2} and~\ref{item:sullivan:completeness:safe:3}, the sequence of sites $k_i$ is the concatenation of
	two enumerations of the sites of $A$ at which $x$ and $y$ differ, and the sequence of symbols $z^{(i)}_{k_i}$ is the concatenation of a sequence of the constants
	$\blank$ followed by the sequence of symbols $y_{k_i}$.
	\hfill\remarkqed
\end{remark}

\subsection{Surjective maps between Banach spaces}
\label{sec:prelim:surjective_Banach}

We are interested in the question of whether every
cocycle of a given type
has a Gibbsian representation of the form~\eqref{eq:cocycle:interaction}
in terms of a ``well-behaved'' interaction.
Specifically, for a shift space $\Omega$, we would like to know if every
shift-invariant continuous cocycle on $\Omega$ can be represented by
a shift-invariant norm-summable interaction and, failing that, if it
can be represented by a shift-invariant variation-summable interaction.
These questions can be reformulated as the question of
surjectivity of certain bounded linear transformations between Banach spaces.

\begin{proposition}[Continuity of $\Phi\mapsto\psi_\Phi$]
\label{prop:BLT}Let $\Omega$ be a shift space with the TMP and the pivot property.
The map $\Phi \mapsto \psi_\Phi$ defines a bounded linear transformation from the Banach space $\banach{B}_{\NS}(\Omega)$ to the Banach space $\banach{B}_{\Sull}(\Omega)$ and also from the Banach space $\banach{B}_{\VS}(\Omega)$ to the Banach space $\banach{B}_{\Sull}(\Omega)$.
\end{proposition}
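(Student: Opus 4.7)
The plan is to verify, for each of the two specified domains, three properties of the map $T\colon \Phi \mapsto \psi_\Phi$: (i) that $\psi_\Phi$ actually lies in $\banach{B}_\Sull(\Omega)$, i.e., is a shift-invariant continuous cocycle on $\relation{T}(\Omega)$; (ii) that $T$ is linear; and (iii) that $T$ is bounded. Linearity in (ii) is immediate from inspection of~\eqref{eq:cocycle:interaction}, since each summand $\Phi_C(y) - \Phi_C(x)$ is linear in $\Phi$. For (i), the cocycle identity for $\psi_\Phi$ is a term-by-term consequence of telescoping $\Phi_C(z) - \Phi_C(x) = (\Phi_C(z)-\Phi_C(y)) + (\Phi_C(y) - \Phi_C(x))$; shift-invariance of $\psi_\Phi$ follows from the shift-invariance of $\Phi$; and continuity on each $\relation{T}_A(\Omega)$ is exactly the content of the uniform convergence property noted after equation~\eqref{eq:variation_summable} (for the VS case) and which is a fortiori true for norm-summable interactions.

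The main quantitative step is the single bound $\normsull{\psi_\Phi} \leq \normVS{\Phi}$, from which both statements of the proposition follow. Since $(x,\zeta_0 x) \in \relation{T}_{\{0\}}(\Omega)$, all terms indexed by $C\not\ni 0$ vanish in~\eqref{eq:cocycle:interaction}, so
\begin{align}
\abs[\big]{\psi_\Phi(x,\zeta_0 x)}
\leq \sum_{\substack{C\Subset\ZZ^d\\ C\ni 0}} \abs[\big]{\Phi_C(\zeta_0 x) - \Phi_C(x)}
\leq \sum_{\substack{C\Subset\ZZ^d\\ C\ni 0}} \Var_0(\Phi_C) = \normVS{\Phi}.
\end{align}
Taking the supremum over $x \in \Omega$ in the definition~\eqref{eq:Sull_norm_def} yields $\normsull{\psi_\Phi} \leq \normVS{\Phi}$, which is the desired boundedness for the VS domain. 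For the NS domain, combining this with the trivial estimate $\Var_0(\Phi_C) \leq 2\norm{\Phi_C}$ gives $\normsull{\psi_\Phi} \leq 2\normNS{\Phi}$.

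There is one compatibility point to address for the VS case, since $\banach{B}_\VS(\Omega)$ is a quotient: the map $T$ must be well-defined on equivalence classes. This is precisely the content of Proposition~\ref{prop:interaction:variation-summable:seminorm}, which, together with the hypothesis that $\Omega$ has the (uniform) pivot property, ensures that $\normVS{\Phi-\Phi'} = 0$ implies each $(\Phi-\Phi')_C$ is constant on every asymptotic class, so that $\psi_{\Phi-\Phi'}$ vanishes identically on $\relation{T}(\Omega)$ and therefore $\psi_\Phi = \psi_{\Phi'}$. I do not anticipate a genuine obstacle: the argument is essentially a two-line manipulation of the defining series, and the hypotheses (TMP plus pivot property) are used only to invoke the already established framework ensuring that $\banach{B}_\Sull(\Omega)$ and $\banach{B}_\VS(\Omega)$ are well-defined Banach spaces.
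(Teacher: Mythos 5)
Your proof is correct and follows essentially the same approach as the paper: both bound $\abs{\psi_\Phi(x,\zeta_0 x)}$ by observing that only the sets $C\ni 0$ contribute, then estimate the summands by $\Var_0(\Phi_C)$ (for VS) or $2\norm{\Phi_C}$ (for NS). The only cosmetic difference is that you derive the NS bound from the VS one via $\Var_0(\Phi_C)\leq 2\norm{\Phi_C}$, whereas the paper carries out the two estimates in parallel from the same starting inequality, and your treatment of well-definedness on $\banach{B}_\VS(\Omega)$ matches the paper's closing remark.
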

\begin{proof}
	Since
	\begin{align}
	\label{eq:bound_linear_operator}
		\normsull{\psi_\Phi} &
		= \sup_{x \in \Omega} \abs[\big]{\psi_\Phi\big(x,\zeta_0 x\big)}
		\leq \sup_{x \in \Omega} \sum_{A\Subset \ZZ^d} \abs[\big]{\Phi_{A}(x) - \Phi_{A}(\zeta_0 x)}
		= \sup_{x \in \Omega}\sum_{A \ni 0} \abs[\big]{\Phi_{A}(x) - \Phi_{A}(\zeta_0 x)} \;,
	\end{align}
	we deduce that
	\begin{align}
		\normsull{\psi_\Phi} &
		\leq \sum_{A \ni 0}  \sup_{x \in \Omega}\abs{\Phi_{A}(x)}
		+ \sum_{A \ni 0}   \sup_{x \in \Omega}\abs{\Phi_{A}(\zeta_0 x)}
		\le 2\normNS{\Phi}
	\end{align}
	and thus $\Phi \mapsto \psi_\Phi$, viewed as a linear operator from $\banach{B}_{\NS}(\Omega)$ to $\banach{B}_{\Sull}(\Omega)$
	has operator norm at most~$2$.
	Similarly, again from	\eqref{eq:bound_linear_operator}, we deduce that
	\begin{align}
		\normsull{\psi_\Phi} &
		\leq \sum_{A \ni 0} \sup_{x \in \Omega}\abs[\big]{\Phi_{A}(x) - \Phi_{A}(\zeta_0 x)}
		\leq \sum_{A \ni 0} \Var_0(\Phi_A)
		= \normVS{\Phi}
	\end{align}
	In particular $\psi_\Phi =0$ whenever $\normVS{\Phi}=0$
	and thus $\Phi \mapsto \psi_\Phi$ gives a well defined   operator from  $\banach{B}_{\VS}(\Omega)$ to $\banach{B}_{\Sull}(\Omega)$ of norm at most~$1$.
\end{proof}

Let $\ball{Z}{R}$ denote the open ball of
radius $R\geq 0$ centered at the origin of a Banach space $Z$.

\begin{proposition}[Surjectivity of bounded linear maps]
\label{prop:Banach_equivalences}
	Let $(X,\norm{\cdot}_X)$ and $(Y,\norm{\cdot}_Y)$ be Banach
	spaces and $T \colon X \to Y$ be a bounded linear transformation.
	The following are equivalent:
	\begin{enumerate}[label={\rm(\alph*)}]
		\item \label{prop:Banach_equivalences:1}
			$T$ is surjective.
		\item \label{prop:Banach_equivalences:2}
			For some $R > 0$, $T\big(\ball{X}{R}\big)$ contains $\ball{Y}{1}$.
		\item \label{prop:Banach_equivalences:3}
			For some $R > 0$, $T\big(\ball{X}{R}\big)$ is dense in $\ball{Y}{1}$.
	\end{enumerate}
\end{proposition}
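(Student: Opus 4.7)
My plan is to prove the cycle (a) $\Rightarrow$ (b) $\Rightarrow$ (c) $\Rightarrow$ (b), together with the trivial implication (b) $\Rightarrow$ (a). The implication (b) $\Rightarrow$ (a) is immediate from linearity: every $y \in Y$ lies in $n \cdot \ball{Y}{1} \subseteq T(n \cdot \ball{X}{R}) = T(\ball{X}{nR})$ for some sufficiently large integer $n$, so $T$ is surjective. The implication (b) $\Rightarrow$ (c) is tautological, since $\ball{Y}{1}$ is dense in itself. The implication (a) $\Rightarrow$ (b) is precisely the content of the open mapping theorem: a surjective bounded linear map between Banach spaces is an open map, so $T(\ball{X}{1})$ is an open neighbourhood of $0 \in Y$ and hence contains some $\ball{Y}{r}$ with $r>0$; by linearity, $T(\ball{X}{1/r})$ then contains $\ball{Y}{1}$.

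The nontrivial implication is (c) $\Rightarrow$ (b), which I would prove by the standard ``successive approximations'' argument lying at the heart of the usual proof of the open mapping theorem. Fix $R>0$ as in (c) and let $y \in \ball{Y}{1}$. By density, pick $x_1 \in \ball{X}{R}$ with $\norm{y - Tx_1}_Y < 1/2$. Then $2(y - Tx_1) \in \ball{Y}{1}$, so density gives $z_2 \in \ball{X}{R}$ with $\norm{2(y - Tx_1) - Tz_2}_Y < 1/2$; rescaling by $1/2$ yields $x_2 \isdef z_2/2 \in \ball{X}{R/2}$ satisfying $\norm{y - T(x_1+x_2)}_Y < 1/4$. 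Iterating this recipe produces a sequence $x_n \in \ball{X}{R/2^{n-1}}$ with $\bigl\|y - T\sum_{k=1}^n x_k\bigr\|_Y < 1/2^n$ for every $n$. Completeness of $X$ together with the bound $\sum_{n=1}^\infty R/2^{n-1} = 2R$ guarantees that $x \isdef \sum_{n=1}^\infty x_n$ converges in $X$ with $\norm{x}_X < 2R$, and continuity of $T$ then gives $Tx = y$. Hence $\ball{Y}{1} \subseteq T(\ball{X}{2R})$, establishing (b) with radius $2R$.

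I expect no genuine obstacle here: the statement is a classical and elementary consequence of the open mapping theorem, and the only nonroutine step is the telescoping argument used for (c) $\Rightarrow$ (b), which is standard once the density hypothesis is available. The role of completeness of both $X$ and $Y$ is transparent in the cycle above: completeness of $Y$ is needed to invoke the open mapping theorem in (a) $\Rightarrow$ (b), and completeness of $X$ is needed to sum the telescoping series in (c) $\Rightarrow$ (b).
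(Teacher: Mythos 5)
Your proof is correct and follows essentially the same approach as the paper: the open mapping theorem for (a)~$\Rightarrow$~(b), the trivial implications, and the standard successive-approximation (telescoping series) argument for (c)~$\Rightarrow$~(b). If anything, your bookkeeping in the inductive step is slightly cleaner than the paper's, which has a minor typographical slip in the error bound ($1/2^n$ where $1/2^{n+1}$ is meant) that does not affect the substance.
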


This proposition is an exercise in functional analysis based on the open mapping theorem.  We give a  proof in Appendix~\ref{apx:surjectivity}.

So, in order to show that a bounded linear transformation from one Banach space
to another is not surjective, we only need to show that condition \ref{prop:Banach_equivalences:2}
above does not hold, which essentially says that the map is not open.

The real dual of a Banach space $Z$ will be denoted by $Z^*$.
In accordance with this, we write $\norm{\cdot}^*_Z$ for the dual norm on $Z^*$. Let us also write for a real-valued function $f\colon \NN \to \RR$ that $f(n)=o(n)$ if $\lim_{n \to \infty} \frac{f(n)}{n} = 0$ and $f(n) = \Omega(n)$ if $f(n) \geq Kn$ for some positive $K>0$ and all large enough $n$.
\begin{corollary}[Sufficient condition for non-surjectivity]
\label{cor:large_norm}
	Let $X$ and $Y$ be Banach spaces.
	Let $T \colon X \to Y$ be a bounded linear transformation.
	Suppose that there exist a sequence of vectors $y_n \in Y$ and
	a sequence of bounded linear functionals $f_n\colon Y \to \RR$ in $Y^*$ such that
	\begin{enumerate}[label={\rm(\roman*)}]
		\item \label{cor:large_norm:c1} $\norm[\big]{f_n\circ T}_X^* = \smallo(n)$ as $n\to\infty$,
		\item \label{cor:large_norm:c3} $\sup_{n} \norm{y_n}_Y < \infty$,
		\item \label{cor:large_norm:c2} $\abs[\big]{f_n(y_n)} = \Omega(n)$ as $n\to\infty$.
	\end{enumerate}
	Then, $T$ is not surjective.
\end{corollary}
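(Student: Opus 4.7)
My plan is to argue by contradiction, leveraging Proposition~\ref{prop:Banach_equivalences} to extract quantitative information about preimages. Suppose $T$ is surjective. Then by the equivalence $\ref{prop:Banach_equivalences:1} \Leftrightarrow \ref{prop:Banach_equivalences:2}$ in Proposition~\ref{prop:Banach_equivalences}, there is some $R > 0$ such that $T\big(\ball{X}{R}\big) \supseteq \ball{Y}{1}$. This is the only place where the Banach space structure and the open mapping theorem enter; once this uniform bound on preimages is available, the rest of the argument is a direct chase through the hypotheses.

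By hypothesis~\ref{cor:large_norm:c3}, there exists a constant $M > 0$ such that $\norm{y_n}_Y \leq M$ for all $n$. Consequently $y_n / (2M) \in \ball{Y}{1}$, and by the choice of $R$ we can pick, for each $n$, a preimage $x_n \in \ball{X}{R}$ with $T(x_n) = y_n / (2M)$. Setting $\tilde{x}_n \isdef 2M x_n$, we obtain $T(\tilde{x}_n) = y_n$ with $\norm{\tilde{x}_n}_X < 2MR$.

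Now I apply the functional $f_n$ and use the fact that $f_n(y_n) = (f_n \circ T)(\tilde{x}_n)$. The operator norm bound gives
\begin{align}
	\abs[\big]{f_n(y_n)} &= \abs[\big]{(f_n \circ T)(\tilde{x}_n)}
		\leq \norm[\big]{f_n \circ T}_X^* \cdot \norm{\tilde{x}_n}_X
		< 2MR \cdot \norm[\big]{f_n \circ T}_X^* \;.
\end{align}
By hypothesis~\ref{cor:large_norm:c1}, the right-hand side is $o(n)$ as $n \to \infty$. However, hypothesis~\ref{cor:large_norm:c2} asserts that $\abs{f_n(y_n)} \geq Kn$ for some $K > 0$ and all sufficiently large $n$, which contradicts the bound above for $n$ large enough. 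Hence $T$ cannot be surjective.

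There is no real obstacle here; the corollary is essentially a quantitative reformulation of the failure of condition~\ref{prop:Banach_equivalences:2} in Proposition~\ref{prop:Banach_equivalences}. The role of the three hypotheses is transparent: \ref{cor:large_norm:c3} allows us to rescale the $y_n$ into the unit ball, \ref{cor:large_norm:c1} gives a sub-linear bound on how large $f_n$ can be on any set of the form $T(\ball{X}{R})$, and \ref{cor:large_norm:c2} forces $f_n(y_n)$ to grow linearly, so the two growth rates are incompatible under the assumed surjectivity.
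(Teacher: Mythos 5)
Your proof is correct and takes essentially the same approach as the paper: both arguments reduce to Proposition~\ref{prop:Banach_equivalences} and the inequality $\abs{f_n(y_n)} \leq \normNSD[]{f_n \circ T}\,\norm{x}_X$ applied to preimages $x$ of the (rescaled) $y_n$. The only difference is that you phrase it as an explicit proof by contradiction after assuming surjectivity, whereas the paper directly shows that condition~\ref{prop:Banach_equivalences:2} fails; the underlying estimate and use of the three hypotheses are identical.
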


\begin{proof}
	Without loss of generality, we can assume that $\sup_n\norm{y_n}_Y<1$.
	Let $f \in Y^*$.
	If $y = T(x)$ for some $x \in X$, then
	\begin{align}
		\abs{f(y)} &= \abs{f\circ T(x)} \leq \norm{f \circ T}_X^* \norm{x}_X
	\end{align}
	Thus, for all $y \in Y$,
	\begin{align}
	\label{eq:lower-bound}
		\frac{\abs{f(y)}}{\norm{f \circ T}_X^*} &\leq \inf_{x \in T^{-1}(y)} \norm{x}_X \;.
	\end{align}
	The left-hand side of \eqref{eq:lower-bound} tends to infinity by
	setting $f = f_n$ and $y = y_n$ and so there is no ball centered
	at the origin in $X$ whose image contains the ball~$\ball{Y}{1}$ of radius
	$1$ centered at 0 in $Y$. This fact contradicts
	condition~\ref{prop:Banach_equivalences:2} in Proposition~\ref{prop:Banach_equivalences}
	and so $T$ is not surjective.
\end{proof}

We will apply this result in section~\ref{sec:main} to show that there exists a
shift-invariant continuous cocycle on a full shift that cannot be represented
by a shift-invariant norm-summable interaction.  We will also apply Proposition~\ref{prop:Banach_equivalences} in section~\ref{sec:sullivan} to show that, for a large
collection of SFTs $\Omega$, every shift-invariant continuous cocycle on $\Omega$ can be represented by a shift-invariant variation-summable interaction.

\section{Shift-invariant norm-summable representations may not exist!}
\label{sec:main}

The purpose of this section is to prove that the map $\Phi\mapsto\psi_\Phi$ which assigns
to every norm-summable shift-invariant interaction a
continuous shift-invariant cocycle on a full shift is not surjective.
Before we proceed to the main result, we describe simpler and more
explicit examples on proper subshifts.  These results are based on the
``linear growth condition''.

Let $\Omega\subseteq\Sigma^{\Sites}$ be a configuration space.
We say that a cocycle $\psi$ on $\relation{T}(\Omega)$ satisfies the \emph{linear growth condition} if
there exists a constant $C\geq 0$ such that
\begin{align}
\label{eq:linear_growth_condition}
	\abs{\psi(x,y)} &\leq C\abs{A}
\end{align}
for every $A\Subset\Sites$ and $(x,y)\in\relation{T}_A(\Omega)$.

\begin{proposition}[Linear growth of $\psi_\Phi$]
\label{prop:linear_growth_condition}
	Let $\Omega$ be a shift space.
	Any shift-invariant continuous cocycle on $\relation{T}(\Omega)$ that is
	represented by a shift-invariant norm-summable interaction
	satisfies the linear growth condition.
\end{proposition}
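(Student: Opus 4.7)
The plan is a direct computation starting from the defining formula \eqref{eq:cocycle:interaction} for $\psi_\Phi$. The key observation is that the infinite sum collapses to a finite geometry around $A$ whenever $(x,y)\in\relation{T}_A(\Omega)$, and norm-summability plus shift-invariance then forces linear growth in $\abs{A}$.

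First, I would note that because $x$ and $y$ agree outside $A$, the local observable $\Phi_C$ satisfies $\Phi_C(y)=\Phi_C(x)$ for every $C\Subset\ZZ^d$ with $C\cap A=\varnothing$. Therefore the (absolutely convergent) defining sum reduces to
\begin{align}
	\psi_\Phi(x,y) &= \sum_{\substack{C\Subset\ZZ^d\\ C\cap A\neq\varnothing}} \big[\Phi_C(y)-\Phi_C(x)\big] \;.
\end{align}
Next I would apply the trivial bound $\abs{\Phi_C(y)-\Phi_C(x)}\leq 2\norm{\Phi_C}$ term by term, yielding
\begin{align}
	\abs[\big]{\psi_\Phi(x,y)} &\leq 2\sum_{\substack{C\Subset\ZZ^d\\ C\cap A\neq\varnothing}} \norm{\Phi_C} \;.
\end{align}

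The last step is to bound the right-hand side linearly in $\abs{A}$. Using the combinatorial inequality recorded in \eqref{eq:norm_sum_finite}, I would write
\begin{align}
	\sum_{\substack{C\Subset\ZZ^d\\ C\cap A\neq\varnothing}} \norm{\Phi_C}
		&\leq \sum_{s\in A} \sum_{\substack{C\Subset\ZZ^d\\ C\ni s}} \norm{\Phi_C} \;,
\end{align}
and then invoke shift-invariance of $\Phi$, which gives $\sum_{C\ni s}\norm{\Phi_C}=\sum_{C\ni 0}\norm{\Phi_C}=\normNS{\Phi}$ for every site $s$. Combining these inequalities yields $\abs{\psi_\Phi(x,y)}\leq 2\normNS{\Phi}\,\abs{A}$, so the constant $C\isdef 2\normNS{\Phi}$ witnesses the linear growth condition.

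There is no real obstacle here: the proof is essentially bookkeeping once one exploits that $x$ and $y$ agree outside $A$. The only subtlety worth spelling out is that $\normNS{\Phi}<\infty$ is exactly what guarantees both that the original defining series converges absolutely (so rearrangement and term-by-term bounding are legitimate) and that the per-site sums $\sum_{C\ni s}\norm{\Phi_C}$ are finite; shift-invariance then ensures they are all equal to the single constant $\normNS{\Phi}$, which is what converts the $\abs{A}$-fold sum into the linear bound.
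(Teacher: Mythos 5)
Your proof is correct and takes essentially the same route as the paper: both reduce to the sum over $C$ meeting $A$, bound $\abs{\Phi_C(y)-\Phi_C(x)}$ by $2\norm{\Phi_C}$, and convert the resulting sum into an $\abs{A}$-fold copy of $\normNS{\Phi}$ using the double-counting inequality \eqref{eq:norm_sum_finite} and shift-invariance, arriving at the same constant $2\normNS{\Phi}$.
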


\begin{proof}
	For $A\Subset\ZZ^d$ and $(x,y) \in \relation{T}_{A}(\Omega)$,
	\begin{align}
		\psi_\Phi(x,y) &= \sum_{\substack{F\Subset\ZZ^d\\ F\cap A\neq\varnothing}}
			\big[ \Phi_F(x)-\Phi_F(y) \big]
		= \sum_{k\in A}\sum_{\substack{F\Subset\ZZ^d\\ F\ni k}}\frac{1}{\abs{A\cap F}} \big[ \Phi_F(x)-\Phi_F(y) \big] \;.
	\end{align}
	By the triangle inequality,
	\begin{align}
		\abs{\psi_\Phi(x,y)} &\leq		
			\sum_{k \in A}\sum_{\substack{F\Subset\ZZ^d\\ F\ni k}}\frac{1}{\abs{A\cap F}}\big(\abs{\Phi_{F}(x)}+\abs{\Phi_F(y)}\big)
			\leq 2\sum_{k \in A}\sum_{\substack{F\Subset\ZZ^d\\ F\ni k}}\norm{\Phi_F} \;.
	\end{align}
	Since $\psi$ is shift-invariant, for any $k \in \ZZ^d$,
	\begin{align}
		\sum_{\substack{F\Subset\ZZ^d\\ F\ni k}}\norm{\Phi_F} &= \sum_{\substack{F\Subset\ZZ^d\\ F\ni 0}}\norm{\Phi_F} = \normNS{\Phi} \;.
	\end{align}
	Thus, $\abs{\psi_\Phi(x,y)}\leq 2\normNS{\Phi}\,\abs{A}$,
	which means $\psi_\Phi$ satisfies the linear growth condition.
\end{proof}

\subsection{The ``height difference'' cocycle on $3$-colorings}
\label{sec:main:3_colored_chessboard}

The following is a simple example of a shift-invariant continuous (in fact, Markov)
cocycle on a two-dimensional SFT, which is not given by a shift-invariant
norm-summable interaction. The example appears in~\cite{ChaMey16},
providing a Markovian specification which does not come from a nearest-neighbour interaction.

\begin{example}[$3$-colorings: absence of linear growth]
\label{ex:3_colored_chessboard}
	Let $\Omega = \Omega^2_{\coloring(3)}$ denote the two-dimensional $3$-coloring shift
	as in Example~\ref{ex:colorings},
	namely, the shift space consisting of all proper colorings of the standard Cayley graph
	of $\ZZ^2$,
	\begin{align}
		\Omega^2_{\coloring(3)} & \isdef
			\big\{ x \in \{\symb{0},\symb{1},\symb{2}\}^{\ZZ^2}:
				\text{$x_n \neq x_{n + e_i}$ for every $n \in \ZZ^2$ and $i=1,2$} \big\} \;,
	\end{align}
	where $e_1\isdef(1,0)$ and $e_2\isdef(0,1)$.
	As discussed in~\cite{ChaMey16},
	every $x \in \Omega^2_{\coloring(3)}$, can be ``lifted'' to a \emph{height function}
	$\hat{x} \in {\ZZ}^{\ZZ ^2}$, where
	\begin{enumerate}[label={(\roman*)}]
		\item $\hat{x}_n = x_n \pmod{3}$ for every $n \in \ZZ ^ 2$,
		\item $\abs{\hat x_n - x_{n +e_i}}=1$ for every $n \in \ZZ ^ 2$ and $i=1,2$.
	\end{enumerate}
	Such a ``lift'' is unique up to addition by an integer multiple of $3$.
	Furthermore, for any $(x,y) \in \relation{T}(\Omega)$ there is a pair of lifts
	$(\hat{x},\hat{y}) \in \ZZ^{\ZZ ^ 2}\times \ZZ^{\ZZ ^2}$
	as above such that $\hat{x}$ and $\hat{y}$ are asymptotic,
	and this pair of lifts is unique up to addition by a common integer multiple of $3$.
	For $(x,y) \in \relation{T}(\Omega)$ we can define
	\begin{align}
		\psi(x,y) & \isdef \sum_{n \in \ZZ^2}[\hat{x}_n -\hat{y}_n],
	\end{align}
	where $(\hat x, \hat y)$ is an asymptotic pair of lifts. This is a well defined shift-invariant
	continuous cocycle (in fact Markov), as shown in Section $4$ of~\cite{ChaMey16}.

	We claim that $\psi$ is not generated by any shift-invariant norm-summable interaction.
	To show this, we show that it violates the linear growth condition
	(see Proposition~\ref{prop:linear_growth_condition}).
	Indeed, for $i \in \NN$, define $\hat x^{(i)} , \hat y^{(i)}\in \ZZ^{\ZZ^2}$ by
	\begin{align}
		\hat{x}^{(i)}_n & \isdef
		\begin{cases}
			i - \norm{n}_1 				& \text{if $\norm{n}_1 \leq i$,} \\
			(i - \norm{n}_1) \bmod 2 	& \text{otherwise,}
		\end{cases}
	\end{align}
	and
	\begin{align}
		\hat y^{(i)}_n \isdef
		\begin{cases}
			-i + \norm{n}_1				& \text{if $\norm{n}_1 \leq i$,} \\
			(i - \norm{n}_1) \bmod 2	& \text{otherwise.}
		\end{cases}
	\end{align}
	Let $x^{(i)},y^{(i)} \in \Omega$ be given by $x^{(i)}_n \isdef \hat{x}^{(i)}_n \bmod 3$
	and $y^{(i)}_n = \hat{y}^{(i)}_n \bmod 3$. Then
	$(x^{(i)},y^{(i)}) \in \relation{T}_{B_i^{(2)}}(\Omega)$ where
	$B_i^{(2)} \isdef \{ n \in \ZZ^2: \norm{n}_1 \le i\}$.
	Also, a simple calculation shows that
	$\psi(x^{(i)},y^{(i)}) = \sum_{n \in \ZZ^2}[\hat{x}^{(i)}_n - \hat{y}^{(i)}_n]$
	is precisely equal to the cardinality of the set
	$B_i^{(3)} \isdef \{ n \in \ZZ^3: \norm{n}_1 \le i\}$.
	Thus
	\begin{align}
		\lim_{i \to \infty} \frac{\abs[\big]{\psi(\hat x^{(i)} , \hat y^{(i)})}}{\abs{B_i^{(2)}}} &= \infty \;,
	\end{align}
	and so \eqref{eq:linear_growth_condition} cannot hold for any constant $C\geq 0$.
	\hfill\exampleqed
\end{example}

The cocycle $\psi \colon \relation{T}(\Omega) \to \RR$ in the above example is indeed a very explicit
example of of shift-invariant continuous cocycle which is not given by a norm-summable interaction.
However, as shown in~\cite[Proposition~6.2]{ChaMey16},
every shift-invariant probability measure on $\Omega^2_{\coloring(3)}$ that is consistent
with the specification associated to $\psi$ is ``frozen'',
meaning that the asymptotic relation restricted to the support of the measure is equal to the diagonal relation.
Note that a frozen measure on a configuration space~$\Omega$
is consistent with \emph{any} specification on~$\Omega$.
Thus, the specification associated to $\psi$ should be considered as a pathological example
which does not ``genuinely'' specify any shift-invariant almost-Markovian random field.

\subsection{Almost Markovian random fields on the ``square-islands shift''}
\label{sec:main:square_islands}

The next example, again taken from~\cite{ChaMey16}, provides another shift-invariant
continuous cocycle that is not generated by any shift-invariant norm-summable interaction.
As in Example~\ref{ex:3_colored_chessboard}, the cocycle in the following example
is defined on a two-dimensional SFT.
However, unlike in Example~\ref{ex:3_colored_chessboard}, the specification associated to the cocycle
presented below admits consistent measures that have full support on the underlying SFT.

\begin{example}[Square-islands shift: absence of linear growth]
	Section $9$ of~\cite{ChaMey16} describes a certain infinite family $(\mu_p)_{p \in (0,1)^{\NN}}$
	of two-dimensional shift-invariant Markov random fields.
	This family is parametrized by  sequences $p= (p_j)_{j=1}^\infty \in (0,1)^{\NN}$.
	The support of these  Markov random fields is a certain SFT
	called ``the square-island shift'', whose admissible configurations consist
	of ``square islands'' in a ``sea'' of blank tiles. Each ``square island''
	has one of two colors (say red and blue), is square shaped (hence the name),
	and has positive integer ``radius'' $j \in \NN$. All the probability measures
	$\mu_p$ collapse to the same probability measure $\mu$ once the colors
	of the islands are ``forgotten''.
	This projection $\mu$ is consistent with the uniform specification $K^\circ$
	(see Example~\ref{exp:uniform-specification}).
	According to $\mu_p$, given the locations and sizes of the
	islands, the colors of the islands are independent,
	and each island of size $j$ is blue with probability $p_j$.
	The cocycle $M_p$ associated to the specification of $\mu_p$
	(which is a Markov cocycle, hence continuous) is described explicitly in Section~9 of~\cite{ChaMey16}.
	We claim that $M_p$ is not generated by any shift-invariant norm-summable interaction,
	again due to the fact that $M_p$ does not satisfy the linear growth condition
	(see Proposition~\ref{prop:linear_growth_condition}).
	
	Let $x^{(j)}$ and $y^{(j)}$ be two asymptotic configurations each corresponding to
	a unique island of size $j$ centered at the origin, surrounded by an infinite sea,
	one colored blue and the other red. Then,
	\begin{align}
		\abs[\big]{M_p(x^{(j)},y^{(j)})} &= \log(p_j)-\log(1-p_j).
	\end{align}
	Thus, if $p_j$ tends rapidly enough to $0$ or to $1$ (say if, $p_j\isdef e^{-j^3}$),
	then
	\begin{align}
		\lim_{j \to \infty} \frac{\abs[\big]{M_p(\hat x^{(j)} , \hat y^{(j)})}}{j^2} &= \infty \;,
	\end{align}
	and so $M_p$ is not representable by a shift-invariant norm-summable interaction.
	\hfill\exampleqed
\end{example}

\subsection{Non-surjectivity of the map $\Phi \mapsto \psi_\Phi$ restricted to $\banach{B}_{\NS}(\Omega)$ on a full shift}
\label{sec:main:non_surjectivity}

This section is devoted to the proof of Theorem~\ref{thm:Kozlov_is_annoying_INTRO}, which is our main result. In consequence of the identification between positive, almost-Markovian specifications and continuous cocycles (Proposition~\ref{prop:specification-vs-cocycle:TMP:continuity}), it suffices to prove the following equivalent statement.

\begin{theorem}[No general shift-invariant norm-summable representation]
\label{thm:Kozlov_is_annoying}
	There exists a continuous shift-invariant cocycle
	$\psi$ on the asymptotic relation of $\Omega \isdef \{\symb{0},\symb{1}\}^{\ZZ}$ for which
	there is no shift-invariant norm-summable interaction $\Phi$ such that
	$\psi = \psi_{\Phi}$.
\end{theorem}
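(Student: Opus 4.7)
My approach is to apply Corollary~\ref{cor:large_norm} to the bounded linear map $T\colon\banach{B}_\NS(\Omega)\to\banach{B}_\Sull(\Omega)$ given by $T(\Phi)\isdef\psi_\Phi$, whose continuity is provided by Proposition~\ref{prop:BLT}. It suffices to construct a sequence of cocycles $\psi_n\in\banach{B}_\Sull(\Omega)$ with $\sup_n\normsull{\psi_n}<\infty$ and a sequence of continuous linear functionals $f_n\in\banach{B}_\Sull(\Omega)^*$ satisfying $\abs{f_n(\psi_n)}=\Omega(n)$ while $\normNSD{f_n\circ T}=\smallo(n)$; this yields non-surjectivity of $T$, hence a continuous shift-invariant cocycle with no shift-invariant norm-summable representation.

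The functionals $f_n$ will be signed sums of point-evaluations. Pick $N_n\asymp n$, select $N_n$ distinct configurations $x^{(1)},\ldots,x^{(N_n)}\in\Omega$ with $x^{(i)}_0=\symb{1}$, and set $y^{(i)}\isdef\zeta_0 x^{(i)}$, so that $(x^{(i)},y^{(i)})\in\relation{T}_{\{0\}}(\Omega)$. Define
\begin{align*}
    f_n \isdef \sum_{i=1}^{N_n}\epsilon_i\,\delta_{(x^{(i)},y^{(i)})},
\end{align*}
where $\delta_{(x,y)}(\psi)\isdef\psi(x,y)$ and the signs $\epsilon_i\in\{-1,+1\}$ are to be chosen. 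For any norm-summable $\Phi$, exchanging the order of summation gives
\begin{align*}
    f_n(\psi_\Phi) = \sum_{C\ni 0}\sum_{i=1}^{N_n}\epsilon_i\bigl[\Phi_C(y^{(i)})-\Phi_C(x^{(i)})\bigr].
\end{align*}
For each fixed $C\ni 0$, the inner sum is a Rademacher-type sum of $N_n$ terms of magnitude at most $\Var_0(\Phi_C)\le 2\norm{\Phi_C}$. A Khintchine (or Hoeffding) inequality yields, for random $\epsilon_i$, an $\bigo(\sqrt{N_n}\,\norm{\Phi_C})$ bound with high probability; combining this with a union bound over a countable dense subfamily of $\banach{B}_\NS(\Omega)$ (e.g., rational-valued finite-range interactions of unit NS-norm) and a truncation over shapes $C$ of large diameter (controlled by the tail of $\sum_{C\ni 0}\norm{\Phi_C}$) allows one to deterministically select signs for which $\abs{f_n(\psi_\Phi)}=\bigo(\sqrt{N_n\log N_n})\,\normNS{\Phi}$ uniformly in $\Phi$. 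With $N_n\asymp n$, this delivers $\normNSD{f_n\circ T}=\smallo(n)$.

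The cocycles $\psi_n$ are then constructed so that $\psi_n(x^{(i)},y^{(i)})=\epsilon_i$ for every $i$. Using the embedding $F\colon\banach{B}_\Sull(\Omega)\hookrightarrow\banach{C}(\Omega)$ of Lemma~\ref{lem:cont_inv_coycles_embed_linearly_in_cont_function}, since $y^{(i)}=\zeta_0 x^{(i)}$, this reduces to exhibiting $g_n=F(\psi_n)$ in the image of $F$ with $g_n(x^{(i)})=\epsilon_i$ and $\norm{g_n}\le 1$. Choosing the $x^{(i)}$ to be pairwise distinguishable by distinct local patterns near site $0$, and pairwise unrelated under shifts, one constructs $g_n$ as an indicator-type local function agreeing with $\epsilon_i$ on $x^{(i)}$; Sullivan's theorem (our Theorem~\ref{thm:sull2_INTRO}) then produces a shift-invariant variation-summable interaction $\Phi_n$ with $\psi_n\isdef\psi_{\Phi_n}\in\banach{B}_\Sull(\Omega)$ having the desired $F(\psi_n)=g_n$. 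This yields $f_n(\psi_n)=\sum_i\epsilon_i^2=N_n=\Omega(n)$ with $\normsull{\psi_n}\le 1$.

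The main obstacle is step two: although the Khintchine/Hoeffding estimate yields the $\bigo(\sqrt{N_n})$ decay for any \emph{fixed} $\Phi$, upgrading to a uniform bound over all of $\banach{B}_\NS(\Omega)$ is delicate. The size of the dense subfamily in the union bound, the truncation on shape diameter, and the geometry of the configurations $x^{(i)}$ must all be coordinated so that the sign cancellations persist through the infinite sum $\sum_{C\ni 0}$, without being overwhelmed by the combinatorial multiplicity of distinguishable patterns on large shapes $C$. This careful balancing, together with the nonconstructive existence argument underlying the choice of signs, is the technical heart of the proof.
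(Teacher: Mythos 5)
The proposal correctly sets up the Banach-space framework (Proposition~\ref{prop:BLT}, Corollary~\ref{cor:large_norm}) but then departs from the paper and runs into a fatal obstruction. You build $f_n$ as a signed sum of $N_n\asymp n$ evaluation functionals at single-site asymptotic pairs $(x^{(i)},\zeta_0 x^{(i)})$, and hope to force $\normNSD{f_n\circ T}=\smallo(n)$ by choosing the Rademacher signs $\epsilon_i$ well. This cannot work with your choice of configurations. Generalizing the computation in Proposition~\ref{prop:normNSD_densities} via Lemma~\ref{lem:cocycles-from-interactions:decomposition}, one finds that
\begin{align}
\normNSD{f_n\circ T}
\;=\;
\sup_{A\Subset\ZZ}\frac{1}{\abs{A}}\sum_{w\in\Sigma^A}
\abs[\Big]{\sum_{i=1}^{N_n}\epsilon_i\,\Delta_w\big(x^{(i)},y^{(i)}\big)}\;.
\end{align}
Fix a shape $A\ni 0$ with $\abs{A}$ large enough. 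For each $i$ and each $j$ with $0\in A+j$, the pair $\big((\sigma^j x^{(i)})_A,(\sigma^j y^{(i)})_A\big)$ contributes a $-1$ and a $+1$ to two distinct patterns. Your hypothesis that the $x^{(i)}$ are ``pairwise distinguishable by distinct local patterns near $0$'' and ``pairwise unrelated under shifts'' ensures that, once $\abs{A}$ is moderately large (say $\abs{A}\gtrsim\log N_n$), \emph{all} $2N_n\abs{A}$ of these patterns are distinct, so for each fixed $w$ the inner sum over $i$ has at most one nonzero term and the signs $\epsilon_i$ cancel nothing. Then $\sum_w\abs{\sum_i\epsilon_i\Delta_w}=2N_n\abs{A}$, so the supremum is at least $2N_n=\Theta(n)$ \emph{for every choice of signs}, defeating condition~\ref{cor:large_norm:c1}. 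This is not a technical ``delicacy''; the adversarial interaction that witnesses the dual norm simply picks $\Phi(w)=\tfrac{1}{\abs{A}}\sgn\big(\sum_i\epsilon_i\Delta_w\big)$ and reads off the signs, which a Khintchine argument for fixed $\Phi$ can never preclude. There is also a secondary issue in the last step: the image of $F$ is a proper (nonlocally characterized) subspace of $\banach{C}(\Omega)$, so an ``indicator-type local function'' taking prescribed values $\epsilon_i$ at $x^{(i)}$ need not lie in $F\big(\banach{B}_\Sull(\Omega)\big)$; it must satisfy the cocycle-compatibility constraints coming from commuting pivots at different sites.

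The paper circumvents both issues by working with a \emph{single} evaluation functional $\langle x^{(k)},y^{(k)}\rangle$ per $k$, but chosen so that $x^{(k)}$ and $y^{(k)}$ differ on a large block $[0,n(k)-1]$. The cancellation then occurs \emph{inside} each $\Delta_w(x^{(k)},y^{(k)})$, because the two configurations are built from independent random words and Lemma~\ref{lemma:tom}\ref{lemma:tom:lowcounts} guarantees their low-complexity pattern counts nearly agree; this is what makes $\normNSD{\langle x^{(k)},y^{(k)}\rangle\circ T}=\smallo(n(k))$. The companion cocycle $\psi_k$ is then defined directly from a finite-range interaction $\Phi^{(k)}$ built out of Hamming distances to the two words, with Lemma~\ref{lemma:tom}\ref{lemma:tom:HamDis} controlling $\normsull{\psi_k}$ and the evaluation $\psi_k(x^{(k)},y^{(k)})$. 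If you want to salvage a ``sum of single-site pivots'' functional, you would have to choose the $x^{(i)}$ so that their pattern multisets overlap heavily at all scales --- which is in direct tension with being able to prescribe independent values $\epsilon_i$ for a bounded cocycle.
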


In the examples presented in the previous subsections, the absence of
shift-invariant norm-summable interactions was due to the failure of the linear growth condition.
On a full shift, every shift-invariant continuous cocycle satisfies the linear growth condition,
and hence this approach would not work.

\begin{proposition}[Linear growth on full shifts]
\label{prop:full-shift:linear-growth-is-automatic}
Every shift-invariant continuous cocycle on a shift space with a safe symbol
(in particular, a full shift) satisfies the linear growth condition.
\end{proposition}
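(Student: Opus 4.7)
The plan is to directly leverage the safe symbol $\blank$ and shift-invariance to bound $\psi(x,y)$ site-by-site along a pivot path that passes through the ``all-blank-on-$A$'' configuration. I will take as the constant simply the Sullivan-type quantity $C \isdef \sup_{x\in\Omega} \abs{\psi(x,\zeta_0 x)}$, where $\zeta_0$ is the single-site replacement by $\blank$ as in~\eqref{eq:zeta_k_safe_symbol}. This quantity is finite because $x\mapsto(x,\zeta_0 x)$ is a continuous map from $\Omega$ into $\relation{T}_{\{0\}}(\Omega)$ (by the safe symbol property, $\blank$ is always admissible), and the continuous cocycle $\psi$ is bounded on the compact set $\relation{T}_{\{0\}}(\Omega)$.

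Fix $A \Subset \ZZ^d$ and $(x,y)\in\relation{T}_A(\Omega)$. Let $\bar{x}$ be obtained from $x$ by replacing every symbol at sites in $A$ with $\blank$, and similarly define $\bar{y}$; by the safe symbol property both $\bar{x},\bar{y}\in\Omega$, and since $x$ and $y$ agree off $A$, we have $\bar{x}=\bar{y}$. By the cocycle identity,
\begin{align}
\psi(x,y) = \psi(x,\bar{x}) + \psi(\bar{x},\bar{y}) + \psi(\bar{y},y) = \psi(x,\bar{x}) - \psi(y,\bar{y}).
\end{align}
It therefore suffices to bound $\abs{\psi(x,\bar{x})}$ by $C\abs{A}$; the analogous bound for $\psi(y,\bar{y})$ then gives $\abs{\psi(x,y)}\le 2C\abs{A}$.

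To bound $\psi(x,\bar{x})$, enumerate $A=\{k_1,\ldots,k_n\}$ with $n=\abs{A}$ and construct the pivot path $x=z^{(0)},z^{(1)},\ldots,z^{(n)}=\bar{x}$ where $z^{(i)}$ is obtained from $z^{(i-1)}$ by replacing the symbol at site $k_i$ by $\blank$ (again admissible by the safe symbol property). Then $(z^{(i-1)},z^{(i)})\in\relation{T}_{\{k_i\}}(\Omega)$ and $\sigma^{k_i} z^{(i)} = \zeta_0(\sigma^{k_i} z^{(i-1)})$. Using the cocycle identity and shift-invariance,
\begin{align}
\abs{\psi(x,\bar{x})} = \abs[\Big]{\sum_{i=1}^n \psi\big(\sigma^{k_i}z^{(i-1)},\,\zeta_0\sigma^{k_i}z^{(i-1)}\big)} \le n\,C = C\abs{A}.
\end{align}
Combining with the analogous bound for $y$ yields $\abs{\psi(x,y)}\le 2C\abs{A}$, which is the linear growth condition with constant $2C$. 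There is no serious obstacle here beyond verifying that the single-site replacements remain in $\Omega$, which is exactly what the safe symbol property ensures, and the observation that continuity of $\psi$ on the compact set $\relation{T}_{\{0\}}(\Omega)$ makes the Sullivan-type constant $C$ finite.
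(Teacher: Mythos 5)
Your proof is correct and follows essentially the same route as the paper's: both pivot each configuration site-by-site through the ``all-blank-on-$A$'' configuration, use shift-invariance to reduce each single-site pivot to $\relation{T}_0(\Omega)$, and bound by the Sullivan-type constant $C=\normsull{\psi}$ to get $\abs{\psi(x,y)}\le 2\abs{A}\normsull{\psi}$. Your remark on finiteness of $C$ via continuity and compactness of $\relation{T}_0(\Omega)$ is a welcome explicit verification that the paper leaves implicit.
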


\begin{proof}
	Let $\Omega$ be a shift space that has a safe symbol~$\blank$.
	Let $A\Subset\ZZ^d$ be arbitrary and take $x,y\in\relation{T}_A(\Omega)$.
	By pivoting the sites in $A$ one by one from $x$ to the safe symbol
	and then pivoting them back to $y$, we find a sequence
	of pivots of length at most $2\abs{A}$ from $x$ to $y$.
	This shows that for any shift-invariant continuous cocycle $\psi$, we have
	$\abs{\psi(x,y)}\leq 2\abs{A}\normsull{\psi}$.
	
	More specifically, let $\psi$ be an arbitrary shift-invariant continuous cocycle.
	Let $k_1,k_2,\ldots,k_n$ be an enumeration of the elements of $A$.
	Set $x^{(0)}\isdef x$ and recursively define $x^{(i)}\isdef\zeta_{k_i} x^{(i-1)}$, that is,
	$x^{(i)}$ is obtained from $x^{(i-1)}$ by turning the symbol at site $k_i$ to~$\blank$.
	By the cocycle equation and the triangle inequality,
	\begin{align}
		\abs[\big]{\psi\big(x,x_{\ZZ^\setminus A}\lor\blank^A\big)} &=
			\abs[\bigg]{\sum_{i=1}^n \psi\big(x^{(i-1)},x^{(i)}\big)} \leq
			\sum_{i=1}^n \abs[\big]{\psi\big(x^{(i-1)},x^{(i)}\big)} \leq \abs{A}\normsull{\psi} \;.
	\end{align}
	Similarly,
	$\abs[\big]{\psi\big(y,y_{\ZZ^\setminus A}\lor\blank^A\big)} \leq \abs{A}\normsull{\psi}$.
	Observe that $x_{\ZZ^\setminus A}\lor\blank^A = y_{\ZZ^\setminus A}\lor\blank^A$.
	Putting these together and using again the cocycle equation and the triangle inequality,
	we get
	\begin{align}
		\abs[\big]{\psi(x,y)} &=
			\abs[\big]{
				\psi\big(y,y_{\ZZ^\setminus A}\lor\blank^A\big) -
				\psi\big(x,x_{\ZZ^\setminus A}\lor\blank^A\big)
			} \\
		&\leq
			\abs[\big]{\psi\big(x,x_{\ZZ^\setminus A}\lor\blank^A\big)} +
			\abs[\big]{\psi\big(y,y_{\ZZ^\setminus A}\lor\blank^A\big)}
			\leq 2\abs{A}\normsull{\psi} \;,
	\end{align}
	which proves the claim.
\end{proof}

\subsubsection{Proof strategy}

For the rest of this section, we set $\Sigma\isdef\{\symb{0},\symb{1}\}$ and
the configuration space will be the one-dimensional binary full shift $\Omega\isdef\Sigma^\ZZ= \{\symb{0},\symb{1}\}^{\ZZ}$.
Let $T\colon\banach{B}_\NS(\Omega) \to \banach{B}_{\Sull}(\Omega)$ denote the map defined by
\begin{align}
	T(\Phi) &\isdef \psi_\Phi \;,
\end{align}
where $\psi_\Phi$ is the cocycle given in~\eqref{eq:cocycle:interaction}.

Every asymptotic pair $(x,y) \in \relation{T}(\Omega)$ defines a bounded linear functional
$\langle x,y\rangle\in\banach{B}_{\Sull}^*$ given by the evaluation map
\begin{align}
	\langle x,y\rangle(\psi) &\isdef \psi(x,y) \qquad \text{for $\psi\in\banach{B}_{\Sull}$.}
\end{align}

By Corollary~\ref{cor:large_norm},
in order to prove Theorem~\ref{thm:Kozlov_is_annoying}, it suffices to
show the existence of a sequence of asymptotic pairs
$\{(x^{(k)},y^{(k)})\}_{k \in \NN}\in \relation{T}(\Omega)$, a sequence
$\{\psi_k\}_{k \in \NN}$ of shift-invariant continuous cocycles,
and a strictly increasing function $n\colon \NN\to \NN$ such that:
\begin{conditions}[Sufficient for non-surjectivity of $T$]\
\label{con:kozlov_is_annoying}
	\begin{enumerate}[label={(\roman*)}]
		\item \label{con:kozlov_is_annoying:dual-norm-small}
			$\normNSD[\big]{\langle x^{(k)},y^{(k)}\rangle\oo T} = \smallo(n(k))$ as $k\to\infty$,
		\item \label{con:kozlov_is_annoying:sullivan-norm-bounded}
			$\sup_k \normsull{\psi_k}<\infty$,
		\item \label{con:kozlov_is_annoying:evaluation-large}
			$\abs[\big]{\psi_k(x^{(k)},y^{(k)})} = \Omega(n(k))$ as $k\to\infty$.
	\end{enumerate}
\end{conditions}

It will be useful to establish a concrete formula for
$\normNSD{\langle x,y\rangle\oo T} $.
Given an asymptotic pair $(x,y) \in\relation{T}(\Omega)$
and a finite pattern $w\in\{\symb{0},\symb{1}\}^A$ with shape $A \Subset \ZZ$, let
\begin{align}
	\label{eq:count-diff-cocycle}
	\Delta_w(x,y) &\isdef
		\sum_{i\in\ZZ}\big[\indicator{w}((\sigma^i y)_A) - \indicator{w}((\sigma^i x)_A)\big]
\end{align}
denote the difference in the number of occurrences of $w$ in $x$ and $y$.
Note that $\Delta_w$ is a cocycle on~$\relation{T}(\Omega)$, in fact a cocycle
generated by a finite-range interaction.

\begin{proposition}[Formula for the dual $\NS$-norm]
\label{prop:normNSD_densities}
	Let $(x,y)\in\relation{T}(\Omega)$ be an asymptotic pair.
	Then, for $\eta \isdef \langle x,y\rangle\oo T\in\banach{B}_{\NS}^*$, we have
	\begin{align}
		\normNSD{\eta} = \sup_{A\Subset \ZZ} \frac{1}{\abs{A}} \sum_{w \in \Sigma^A} \abs{\Delta_w(x,y)} \;.
	\end{align}
\end{proposition}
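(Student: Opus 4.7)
My plan is to compute $\normNSD{\eta}$ by first rewriting $\eta(\Phi)=\psi_\Phi(x,y)$ so as to expose its $\ell^1$-like pairing structure against $\Phi$, and then applying the standard duality between $\ell^1$ and $\ell^\infty$ shape-class by shape-class. For a shift-invariant $\Phi\in\banach{B}_{\NS}(\Omega)$, I would start from $\eta(\Phi)=\sum_{C\Subset\ZZ}[\Phi_C(y)-\Phi_C(x)]$ and group terms according to the translation class $[A]$ of the shape $C$, picking a canonical representative $A$ of each class. Using shift-invariance of $\Phi$ to rewrite $\Phi_{A+k}(y)=\Phi((\sigma^k y)_A)$ and then collecting equal values of the pattern $(\sigma^k y)_A$ transforms the expression into
\begin{align}
\eta(\Phi) = \sum_{[A]}\sum_{w\in\Sigma^A}\Phi(w)\,\Delta_w(x,y).
\end{align}
The rearrangement is legitimate because only the $C$ meeting the (finite) difference region of $x$ and $y$ actually contribute.

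Next I would observe that in $\ZZ$ every nonempty finite set $A$ has exactly $|A|$ translates containing $0$, so shift-invariance of $\Phi$ yields
\begin{align}
\normNS{\Phi} = \sum_{[A]} |A|\cdot\norm{\Phi_A}.
\end{align}
Combining this with the previous display and bounding $|\Phi(w)|\le\norm{\Phi_A}$ gives the upper estimate
\begin{align}
|\eta(\Phi)| \;\le\; \sum_{[A]}|A|\,\norm{\Phi_A}\cdot\frac{1}{|A|}\sum_{w\in\Sigma^A}|\Delta_w(x,y)| \;\le\; M\,\normNS{\Phi},
\end{align}
where $M$ denotes the supremum on the right-hand side of the target identity, hence $\normNSD{\eta}\le M$.

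For the matching lower bound, I would fix an arbitrary $A\Subset\ZZ$ and build a test interaction $\Phi^{(A)}$ supported on the single translation class $[A]$ by declaring $\Phi^{(A)}(w):=\sgn(\Delta_w(x,y))$ for $w\in\Sigma^A$ and extending by shift-invariance, with $\Phi^{(A)}$ set to $0$ on every other shape class. Then $\normNS{\Phi^{(A)}}\le|A|$ while $\eta(\Phi^{(A)})=\sum_{w\in\Sigma^A}|\Delta_w(x,y)|$, so $\normNSD{\eta}\ge|A|^{-1}\sum_w|\Delta_w(x,y)|$; taking the supremum over $A$ closes the argument. The only step deserving genuine care is the regrouping in the first display: since $x$ and $y$ agree outside some finite set $D$, each shape class contributes at most $2|A|\,|D|$ nonzero terms, and the double series is dominated by $|D|\,\normNS{\Phi}$, which justifies the rearrangement. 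No further obstacle appears.
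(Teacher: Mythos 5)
Your proof is correct and follows essentially the same route as the paper: you rederive what the paper isolates as a separate lemma (the decomposition $\eta(\Phi)=\sum_{[A]}\sum_{w\in\Sigma^A}\Phi(w)\Delta_w(x,y)$), establish the upper bound via the identity $\normNS{\Phi}=\sum_{[A]}|A|\norm{\Phi_A}$, and obtain the lower bound by testing against a sign-interaction supported on a single shape class. The only cosmetic difference is that the paper normalizes its test interaction by $1/|A|$ so that $\normNS{\Phi}=1$, whereas you keep it unnormalized with $\normNS{\Phi^{(A)}}\le|A|$ and divide at the end; the two are equivalent.
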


To prove this proposition, we use the following lemma.

\begin{lemma}[Countable linear decomposition]
\label{lem:cocycles-from-interactions:decomposition}
	For every $\Phi\in\banach{B}_\NS(\Omega)$ and $(x,y)\in\relation{T}(\Omega)$, we have
	\begin{align}
		\psi_\Phi(x,y) &=
			\sum_{A\bmod\ZZ}\sum_{w\in\Sigma^A} \Phi(w)\Delta_w(x,y) \;,
	\end{align}
	where $A\bmod\ZZ$ is a shorthand to indicate summing over finite subsets of $\ZZ$ modulo shift.
\end{lemma}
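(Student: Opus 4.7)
The plan is to start from the definition $\psi_\Phi(x,y)=\sum_{C\Subset\ZZ}[\Phi(y_C)-\Phi(x_C)]$, observe that only finitely many ``shapes'' of $C$ really matter, then group the sum over $C$ into shift-orbits so that shift-invariance of $\Phi$ converts the sum over translates into the combinatorial count $\Delta_w(x,y)$.

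More concretely, let $D\Subset\ZZ$ be the (finite) set of sites on which $x$ and $y$ disagree. For any $C\Subset\ZZ$ with $C\cap D=\varnothing$ we have $y_C=x_C$ and thus $\Phi(y_C)-\Phi(x_C)=0$. Moreover, bounding $|\Phi(y_C)-\Phi(x_C)|\le 2\|\Phi_C\|$ and using~\eqref{eq:norm_sum_finite} together with norm-summability of $\Phi$ yields
\begin{align}
	\sum_{\substack{C\Subset\ZZ\\C\cap D\neq\varnothing}}\abs{\Phi(y_C)-\Phi(x_C)} \;\le\; 2\sum_{\substack{C\Subset\ZZ\\C\cap D\neq\varnothing}}\norm{\Phi_C} \;\le\; 2\abs{D}\normNS{\Phi} \;<\;\infty,
\end{align}
so the defining series for $\psi_\Phi(x,y)$ is absolutely convergent. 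This absolute convergence will justify every subsequent reordering and grouping.

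Next, I fix a system $\mathcal{A}$ of representatives of finite nonempty subsets of $\ZZ$ modulo translation. Since finite nonempty subsets of $\ZZ$ have trivial shift-stabilizer, every $C\Subset\ZZ$ with $C\neq\varnothing$ is of the form $C=A+i$ for a unique $A\in\mathcal{A}$ and a unique $i\in\ZZ$. By absolute convergence, I may regroup
\begin{align}
	\psi_\Phi(x,y) \;=\; \sum_{A\in\mathcal{A}}\sum_{i\in\ZZ}\bigl[\Phi(y_{A+i})-\Phi(x_{A+i})\bigr],
\end{align}
plus a trivial contribution from $C=\varnothing$. Now shift-invariance of $\Phi$ gives $\Phi(y_{A+i})=\Phi((\sigma^i y)_A)$ and likewise for $x$, so the inner sum becomes $\sum_{i\in\ZZ}[\Phi((\sigma^i y)_A)-\Phi((\sigma^i x)_A)]$. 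Here the inner sum is in fact a \emph{finite} sum, since $(\sigma^i y)_A=(\sigma^i x)_A$ whenever $(A+i)\cap D=\varnothing$.

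Finally, for each $A$ I expand $\Phi$ on shape $A$ patterns as the trivial identity $\Phi((\sigma^i z)_A)=\sum_{w\in\Sigma^A}\Phi(w)\,\indicator{w}((\sigma^i z)_A)$. Substituting this for both $z=y$ and $z=x$, swapping the (finite) inner sums over $i$ and $w\in\Sigma^A$, and recognizing the resulting expression as $\Delta_w(x,y)$ via~\eqref{eq:count-diff-cocycle} yields
\begin{align}
	\psi_\Phi(x,y) \;=\; \sum_{A\in\mathcal{A}}\sum_{w\in\Sigma^A}\Phi(w)\,\Delta_w(x,y),
\end{align}
which is the claimed identity with $\sum_{A\bmod\ZZ}$ interpreted as summation over the chosen representatives. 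The only nontrivial point in the argument is the absolute-convergence estimate in the first step, since once that is in place all rearrangements and sum-swaps are elementary; so the main obstacle is really just being careful about bookkeeping and invoking norm-summability at the right moment.
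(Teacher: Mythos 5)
Your proof is correct and follows essentially the same chain of equalities as the paper's proof: rewrite $\psi_\Phi(x,y)$ by grouping the shapes $C\Subset\ZZ$ into shift-orbits, use shift-invariance of $\Phi$ to pull out a translate, expand $\Phi$ on each shape as a sum over patterns $w\in\Sigma^A$, and swap sums to recognize $\Delta_w(x,y)$. The only difference is that you supply an explicit absolute-convergence estimate (via $\abs{\Phi(y_C)-\Phi(x_C)}\le 2\norm{\Phi_C}$ and~\eqref{eq:norm_sum_finite}) to justify the reorderings, which the paper leaves implicit; this is a welcome but minor addition rather than a different route.
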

\begin{proof}
	\begin{align}
		\psi_\Phi(x,y) & =
			\sum_{A\Subset\ZZ} \big[\Phi(y_A) - \Phi(x_A)\big] \\
		& =
			\sum_{A \bmod \ZZ} \sum_{i \in \ZZ}
			\big[\Phi((\sigma^i y)_A) - \Phi((\sigma^i x)_A)\big] \\
		& =
			\sum_{A \bmod \ZZ} \sum_{i \in \ZZ}
			\Bigg[
				\sum_{w \in \Sigma^A}\Phi(w)\indicator{w}((\sigma^i y)_A) -
				\sum_{w \in \Sigma^A}\Phi(w)\indicator{w}((\sigma^i x)_A)
			\Bigg] \\
		& =
			\sum_{A \bmod \ZZ} \sum_{w \in \Sigma^A}\Phi(w) \sum_{i \in \ZZ}
			\big[
				\indicator{w}((\sigma^i y)_A) - \indicator{w}((\sigma^i x)_A)
			\big] \\
		& =
			\sum_{A \bmod \ZZ} \sum_{w \in \Sigma^A}\Phi(w)\Delta_w(x,y) \;. \\
		& \qedhere
	\end{align}
\end{proof}

\begin{proof}[Proof of Proposition \ref{prop:normNSD_densities}]
	Applying Lemma~\ref{lem:cocycles-from-interactions:decomposition},
	for every interaction $\Phi\in\banach{B}_\NS(\Omega)$, we have
	\begin{align}
		\abs{\eta(\Phi)} &= \abs{\psi_\Phi(x,y)} \\
		& \leq
			\sum_{A\bmod\ZZ} \sum_{w \in \Sigma^A} \abs[\big]{\Phi(w)\Delta_w(x,y)} \\
		& \leq
			\sum_{A\bmod\ZZ} \big(\abs{A}\sup_{v \in \Sigma^A}\abs{\Phi(v)}\big)
		    \big(1/\abs{A}\big)\sum_{w \in \Sigma^A} \abs{\Delta_w(x,y)} \\
		& \leq
			\normNS{\Phi} \sup_{A\Subset \ZZ} \frac{1}{\abs{A}}\sum_{w \in \Sigma^A} \abs{\Delta_w(x,y)} \;.
	\end{align}
	Taking supremum over $\Phi$ with $\normNS{\Phi}=1$, we get
	\begin{align}
	\label{eq:normNSD_densities}
		\normNSD{\eta} &\leq \sup_{A\Subset \ZZ^d} \frac{1}{\abs{A}} \sum_{w \in \Sigma^A} \abs{\Delta_w(x,y)} \;.
	\end{align}
	
	For the reversed inequality, given a fixed $A \Subset \ZZ^d$,
	define a shift-invariant interaction %
	$\Phi$ by
	\begin{align}
		\Phi(w) &\isdef
			\begin{cases}
				\frac{1}{\abs{A}} \sgn\big(\Delta_w(x,y)\big)
					& \text{if $w \in \Sigma^{A+i}$ for some $i \in \ZZ$,}\\
				0	& \text{otherwise.}
			\end{cases}
	\end{align}
	Then Lemma~\ref{lem:cocycles-from-interactions:decomposition} implies that
	\begin{align}
		\eta(\Phi) &= \psi_\Phi(x,y) =
			\frac{1}{\abs{A}} \sum_{w\in\Sigma^A} \abs{\Delta_w(x,y)} \;.
	\end{align}
	Note that $\normNS{\Phi} = 1$.  Thus,
	\begin{align}
		\normNSD{\eta} &\geq  \frac{1}{\abs{A}}\sum_{w \in \Sigma^A} \abs{\Delta_w(x,y)} \;.
	\end{align}
	Taking supremum over $A\Subset\ZZ$, we obtain
	\begin{align}
		\normNSD{\eta} &\geq
			\sup_{A\Subset \ZZ^d} \frac{1}{\abs{A}} \sum_{w \in \Sigma^A} \abs{\Delta_w(x,y)} \;. \\
		& \qedhere
	\end{align}
\end{proof}

Let us give an informal explanation on how we shall
construct sequences $\{(x^{(k)},y^{(k)})\}_{k \in \NN}\in \relation{T}(\Omega)$ and $\{\psi_k\}_{k \in \NN}$ satisfying Conditions~\ref{con:kozlov_is_annoying}\ref{con:kozlov_is_annoying:dual-norm-small}--\ref{con:kozlov_is_annoying:evaluation-large}.
Recall that the Hamming distance of a pair of words $u,v \in \{\symb{0},\symb{1}\}^n$
is the number of positions $i$ such that $u_i\neq v_i$.  The Hamming distance of
two asymptotic configurations $x,y\in\Omega$ is defined similarly.

First, we shall show that for every $k\in\NN$ and sufficiently large $n=n(k)$,
there exist two words $u^{(k)},v^{(k)}\in\{\symb{0},\symb{1}\}^n$
such that
the configurations $x^{(k)},y^{(k)}\in\Omega$ obtained by padding
$u^{(k)}$ and $v^{(k)}$ with~$\symb{0}$s on both sides satisfy the following properties:
\begin{enumerate}[label={(\alph*)}]
	\item \label{property:u-v:no-large-overlap}
		No shift of $x^{(k)}$ or $y^{(k)}$ is close to either $x^{(k)}$ or $y^{(k)}$ in the Hamming distance.
	\item \label{property:close-frequencies}
		For every pattern $w$ whose shape has no more than $k$ elements,
		the number of occurrences of $w$ in $x^{(k)}$ and in $y^{(k)}$ are close.
\end{enumerate}
The proof is via a probabilistic argument.  Namely, we show that for $n$ large enough,
the probability that two words $\rv{u}$ and $\rv{v}$ chosen independently uniformly at random
from $\{\symb{0},\symb{1}\}^n$ satisfy both of the above properties is positive.

Next, we shall define a shift-invariant interaction $\Phi^{(k)}$
with the following properties:
\begin{enumerate}[label={(\alph*)}, resume]
	\item \label{property:interaction:range-n}
		$\Phi^{(k)}$ assigns non-zero values only to (translations of) words of length $n(k)$.
	\item \label{property:interaction:positive-if-close-to-u}
		$\Phi^{(k)}(w)>0$ if (a translation of) $w$ is close to $u^{(k)}$ in the Hamming distance.
	\item \label{property:interaction:negative-if-close-to-v}
		$\Phi^{(k)}(w)<0$ if (a translation of) $w$ is close to $v^{(k)}$ in the Hamming distance.
\end{enumerate}
We shall use $\Phi^{(k)}$ to define a shift-invariant continuous cocycle $\psi_k$.

As we shall see, Property~\ref{property:close-frequencies} will imply that
$\normNSD{(x^{(k)},y^{(k)})} = \smallo(n(k))$ as $k\to\infty$.
Using Properties~\ref{property:u-v:no-large-overlap} and
\ref{property:interaction:range-n}--\ref{property:interaction:negative-if-close-to-v},
we will show that $\normsull{\psi_k}$ is uniformly bounded and that
$\abs[\big]{\psi_k(x^{(k)},y^{(k)})}=\Omega(n(k))$ as $k\to\infty$.  %

Let us now proceed with the detailed proof.

\subsubsection{The probabilistic argument}

In this section, we establish the existence of words $u^{(k)},v^{(k)}$
with Properties~\ref{property:u-v:no-large-overlap} and~\ref{property:close-frequencies}.
We will use the following two well-known probabilistic inequalities.

\begin{proposition}[Chernoff--Hoeffding bound \cite{Hoe63}]
	Let $X_1,\dots, X_n$ be i.i.d.\ random variables with mean~$\mu$,
	and let $X\isdef\sum_{i = 1}^n X_i$. Then, for every $\delta\in(0,1)$,
	\begin{align}
		\xPr\big( X \leq (1-\delta)n\mu\big)
		&\leq \exp\left( \frac{-\delta^2n\mu}{2}\right) \;.
	\end{align}
\end{proposition}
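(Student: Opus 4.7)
The plan is to invoke the standard Chernoff--Cram\'er exponential Markov method. For any $\lambda>0$, Markov's inequality applied to the nonnegative random variable $e^{-\lambda X}$ gives
\begin{align}
    \xPr\big(X \leq (1-\delta)n\mu\big)
    = \xPr\big(e^{-\lambda X} \geq e^{-\lambda(1-\delta)n\mu}\big)
    \leq e^{\lambda(1-\delta)n\mu}\,\xExp\!\big[e^{-\lambda X}\big],
\end{align}
and the i.i.d.\ assumption turns the moment generating function into a product: $\xExp[e^{-\lambda X}] = \xExp[e^{-\lambda X_1}]^n$. In the subsequent application the $X_i$ are Bernoulli with parameter~$\mu$, so I would use the identity $\xExp[e^{-\lambda X_i}] = 1-\mu+\mu e^{-\lambda}$ together with the elementary bound $1+t\leq e^{t}$ to get $\xExp[e^{-\lambda X_1}]^n \leq \exp\!\big(n\mu(e^{-\lambda}-1)\big)$; the same MGF estimate in fact holds for any $[0,1]$-valued $X_i$ with mean $\mu$ by convexity of $\lambda\mapsto e^{-\lambda t}$.

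Plugging this into the previous display and optimizing over $\lambda$ by the (admissible, since $\delta<1$) choice $\lambda=-\log(1-\delta)$ yields the classical multiplicative Chernoff bound
\begin{align}
    \xPr\big(X\leq (1-\delta)n\mu\big) \leq \left(\frac{e^{-\delta}}{(1-\delta)^{1-\delta}}\right)^{n\mu}.
\end{align}

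The remaining step is purely analytic: to convert this into the stated form I need the inequality
\begin{align}
    (1-\delta)\log(1-\delta) + \delta \geq \frac{\delta^2}{2} \qquad \text{for } \delta\in(0,1).
\end{align}
Letting $\varphi(\delta)$ denote the difference of the two sides, one checks $\varphi(0)=\varphi'(0)=0$ and computes $\varphi''(\delta) = \delta/(1-\delta)\geq 0$, so $\varphi\geq 0$ throughout $[0,1)$. Multiplying this inequality by $-n\mu$ and exponentiating turns the Chernoff bound above into $\exp(-\delta^2 n\mu/2)$, as claimed.

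There is no real obstacle here; the whole argument is a textbook application of Chernoff's method combined with a one-line convexity estimate for the final analytic inequality. The only mild subtlety is that the statement as written implicitly requires the $X_i$ to be bounded (so that the MGF is finite and admits the above estimate), and the proof above covers precisely the $[0,1]$-valued case in which the bound is invoked later in the paper's probabilistic construction.
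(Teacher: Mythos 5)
The paper does not prove this proposition; it simply cites Hoeffding~\cite{Hoe63} and uses the bound as a black box. Your derivation is the standard exponential-Markov (Cram\'er--Chernoff) argument, and it is correct in all the essential steps: Markov's inequality applied to $e^{-\lambda X}$, factorization of the MGF by independence, the Bernoulli bound $1-\mu+\mu e^{-\lambda}\le\exp(\mu(e^{-\lambda}-1))$ via $1+t\le e^t$, the optimizing choice $\lambda=-\log(1-\delta)$, and the analytic inequality $(1-\delta)\log(1-\delta)+\delta\ge\delta^2/2$ verified by checking $\varphi(0)=\varphi'(0)=0$ and $\varphi''(\delta)=\delta/(1-\delta)\ge0$. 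You are also right to flag a small imprecision in the statement: no such concentration bound can hold for arbitrary i.i.d.\ variables with finite mean, so one must read in an implicit boundedness (or sub-Gaussian) hypothesis; in the paper's single application the $X_i$ are Bernoulli, so no harm results. One minor slip in your write-up: the reduction from $[0,1]$-valued $X_i$ to the Bernoulli MGF bound uses convexity of $t\mapsto e^{-\lambda t}$ on $[0,1]$ (so the secant through $t=0$ and $t=1$ majorizes the function), not convexity in $\lambda$; the resulting estimate $\xExp[e^{-\lambda X_i}]\le 1-\mu+\mu e^{-\lambda}$ is the one you use, so this does not affect the argument.
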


The \emph{variation} of a function $f\colon\RR^n\to\RR$ on its $i$th variable
is defined as
\begin{align}
	\underset{\text{$x_j=y_j$ for $j\neq i$}}{\sup_{x_1,\ldots,x_n}\sup_{y_1,\ldots,y_n}}
	\abs[\big]{f(y_1,\ldots,y_n)-f(x_1,\ldots,x_n)} \;.
\end{align}

\begin{proposition}[McDiarmid's   bounded differences inequality \cite{McD89}]
	Let $X_1,\dots, X_n$ be independent random variables,
	and let $f\colon \RR^n \to \RR$ be a measurable function
	whose variation on the $i$th variable is bounded by $c_i$.
	Then, for every $\varepsilon>0$,
	\begin{align}
		\xPr\Big(
			\abs[\big]{f(X_1,\dots,X_n)-\xExp[f(X_1,\dots,X_n)]} \geq \varepsilon
		\Big)
		&\leq 2\exp\left( \frac{-2\varepsilon^2}{\sum_{i=1}^n c_i^2} \right) \;.
	\end{align}
\end{proposition}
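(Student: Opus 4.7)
The plan is to follow the classical Doob martingale and Azuma method. First I would introduce the filtration $\field{F}_k \isdef \sigma(X_1,\dots,X_k)$ (with $\field{F}_0$ trivial) and the Doob martingale $M_k \isdef \xExp[f(X_1,\dots,X_n) \mid \field{F}_k]$, so that $M_0 = \xExp[f(X_1,\dots,X_n)]$ and $M_n = f(X_1,\dots,X_n)$. The telescoping identity $M_n - M_0 = \sum_{k=1}^n D_k$ with $D_k \isdef M_k - M_{k-1}$ reduces the problem to the tail control of a sum of martingale differences.

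The key step is to establish the almost-sure bound $\abs{D_k} \leq c_k$. Using independence of the $X_i$, the conditional expectation collapses to integration over only the variables with index $>k$: setting $g_k(x_1,\dots,x_k) \isdef \xExp[f(x_1,\dots,x_k,X_{k+1},\dots,X_n)]$, one has $M_k = g_k(X_1,\dots,X_k)$ and $M_{k-1} = \xExp[g_k(X_1,\dots,X_{k-1},X_k) \mid \field{F}_{k-1}]$. Fixing the first $k-1$ coordinates, the hypothesis that $f$ has $c_k$-variation in its $k$-th coordinate passes through the integration over $X_{k+1},\dots,X_n$, so the map $x_k \mapsto g_k(X_1,\dots,X_{k-1},x_k)$ has range contained in an interval of length at most $c_k$. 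Consequently $D_k$, being the difference between this function evaluated at $X_k$ and its average over $X_k$, is almost surely bounded by $c_k$ in absolute value.

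With this bound established, I would invoke Hoeffding's lemma conditionally: for every $\lambda \in \RR$, $\xExp[e^{\lambda D_k} \mid \field{F}_{k-1}] \leq e^{\lambda^2 c_k^2/8}$, since conditionally on $\field{F}_{k-1}$ the variable $D_k$ has mean zero and range confined to an interval of length at most $c_k$. Iterating this estimate via the tower property yields $\xExp[e^{\lambda (M_n - M_0)}] \leq \exp\bigl(\lambda^2 \sum_{k=1}^n c_k^2 / 8\bigr)$. A Chernoff--Markov argument with the optimal choice $\lambda = 4\varepsilon / \sum_{k=1}^n c_k^2$ then gives the one-sided bound $\xPr(M_n - M_0 \geq \varepsilon) \leq \exp(-2\varepsilon^2 / \sum_{k=1}^n c_k^2)$; applying the same argument to $-f$ and taking a union bound produces the two-sided inequality with the factor of $2$.

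The main obstacle is the derivation of $\abs{D_k} \leq c_k$ from the pointwise variation hypothesis: it is there that one must carefully decouple the $k$-th coordinate using Fubini together with independence of the $X_i$, so that the variation control on $f$ survives conditional integration. The rest is essentially a mechanical Chernoff/Hoeffding calculation.
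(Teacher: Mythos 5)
The paper does not prove this result; it cites it directly as a known inequality from~\cite{McD89}. Your proposal reproduces the standard proof — Doob martingale decomposition, the conditional range bound $\abs{D_k}$-controlled-by-$c_k$ via independence and Fubini, conditional Hoeffding's lemma, Chernoff optimization, and a union bound for the two-sided statement — which is essentially McDiarmid's original argument. One small but important point you got right that is easy to fumble: to land the constant $2$ in the exponent you need the observation that $D_k$ conditionally lies in an interval of length at most $c_k$ (not merely $\abs{D_k}\le c_k$, which gives only the interval $[-c_k,c_k]$ of length $2c_k$ and would cost a factor of $4$ in the exponent); you state this explicitly before invoking Hoeffding's lemma, and your optimal $\lambda = 4\varepsilon/\sum c_i^2$ with the resulting exponent $-2\varepsilon^2/\sum c_i^2$ checks out. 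The only implicit step worth naming is integrability of $f(X_1,\dots,X_n)$, but this is immediate since the bounded-differences hypothesis forces $f$ to be globally bounded (move coordinate-by-coordinate from any fixed reference point).
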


We will denote the Hamming distance between two words $u,v\in\{\symb{0},\symb{1}\}^n$
by $\Ham(u,v)$.
Given a finite pattern $w\in\{\symb{0},\symb{1}\}^D$, a set $I\subseteq\ZZ$ and
an asymptotic pair $(x,y)\in\relation{T}(\Omega)$, define
\begin{align}
	\Delta_w^I(x,y) &\isdef
		\sum_{i\in I}
		\big[\indicator{w}((\sigma^i y)_D) - \indicator{w}((\sigma^i x)_D)\big] \;.
\end{align}
This is consistent with the notation $\Delta_w(x,y)$ introduced in~\eqref{eq:count-diff-cocycle},
with $\Delta_w^\ZZ(x,y)=\Delta_w(x,y)$.

\begin{lemma}[Existence of marker words]
	\label{lemma:tom}
	For every $\varepsilon>0$, $\delta\in (0,1)$ and $k \in \NN$,
	there exists $n(\varepsilon,\delta,k) \in \NN$ such that for all $n\geq n(\varepsilon,\delta,k)$,
	there exist two words $u,v \in \{\symb{0},\symb{1}\}^n$ such that
	if we let $x,y \in \{\symb{0},\symb{1}\}^{\ZZ}$
	be the configurations defined by
	\begin{align}
	\label{eq:tom_lemma1}
		x_j \isdef
		\begin{cases}
			u_j				& \textup{if $0 \leq j < n$,}\\
			\symb{0}		& \textup{otherwise,}
		\end{cases}
		\qquad \textnormal{and} \qquad
		y_j \isdef
		\begin{cases}
			v_j				& \textup{if $0 \leq j < n$,}\\
			\symb{0}		& \textup{otherwise,}
		\end{cases}
	\end{align}
	then the following properties hold:
	\begin{enumerate}[label={\textup{(\alph*)}}]
		\item \label{lemma:tom:HamDis}
			We have
			\begin{align}
				\Ham(u,v) &> \left(1-\delta \right)\frac{n}{2} \;, \label{eq:tom_lemma2:1}
			\end{align}
			and for every $j\in\ZZ\setminus\{0\}$,
			\begin{align}
				\Ham\big((\sigma^jx)_{[0,n-1]},u\big) &> \left(1-\delta \right)\frac{n}{2} \;, &
				\Ham\big((\sigma^jx)_{[0,n-1]},v\big) &> \left(1-\delta \right)\frac{n}{2} \;,
					\label{eq:tom_lemma2:2}   \\
				\Ham\big((\sigma^jy)_{[0,n-1]},u\big) &> \left(1-\delta \right)\frac{n}{2} \;, &
				\Ham\big((\sigma^jy)_{[0,n-1]},v\big) &> \left(1-\delta \right)\frac{n}{2} \;.
					\label{eq:tom_lemma2:3}
			\end{align}

		\item \label{lemma:tom:lowcounts}
			For every pattern $w\in\{\symb{0},\symb{1}\}^D$ whose shape satisfies
			$\abs{D}\leq k$ and $D\subseteq \kappa+[0,n-1]$ for some $\kappa\in\ZZ$,
			and every interval $I\subseteq\ZZ$, we have
			\begin{align}\label{eq:tom_lemma3}
				\abs[\big]{\Delta_w^{I}(x,y)} &< \varepsilon n \;.
			\end{align}
	\end{enumerate}
\end{lemma}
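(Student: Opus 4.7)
The plan is to prove Lemma~\ref{lemma:tom} by a probabilistic argument. I sample $\rv{u}, \rv{v} \in \{\symb{0},\symb{1}\}^n$ independently and uniformly at random, let $\rv{x}, \rv{y}$ be the padded configurations from~\eqref{eq:tom_lemma1}, and show that for $n$ sufficiently large in terms of $\varepsilon, \delta, k$ both properties~\ref{lemma:tom:HamDis} and~\ref{lemma:tom:lowcounts} hold with positive probability, which yields the existence of deterministic $u, v$. The key tools are the Chernoff--Hoeffding inequality, McDiarmid's bounded-differences inequality, and union bounds over the (polynomially many) relevant events.

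For property~\ref{lemma:tom:HamDis}, the Hamming distance $\Ham(\rv{u},\rv{v})$ is a sum of $n$ independent Bernoulli$(1/2)$ variables, so Chernoff--Hoeffding gives~\eqref{eq:tom_lemma2:1} with failure probability at most $\exp(-c_0 \delta^2 n)$. For a shift $j \in \ZZ \setminus \{0\}$, the word $(\sigma^j\rv{x})_{[0,n-1]}$ is the concatenation of a substring of $\rv{u}$ and a block of $\symb{0}$s, and $\Ham\big((\sigma^j\rv{x})_{[0,n-1]},\rv{u}\big)$ has mean $n/2$. When $\lvert j\rvert \geq n$ this distance is simply the number of $\symb{1}$s in $\rv{u}$, which is concentrated around $n/2$ by Chernoff--Hoeffding; when $0<\lvert j\rvert<n$, each bit $\rv{u}_\ell$ appears in at most two summands of the Hamming decomposition, so McDiarmid's inequality with bounded-difference constant $2$ yields concentration with failure probability $\exp(-c_1 \delta^2 n)$. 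The same argument handles the three remaining comparisons in~\eqref{eq:tom_lemma2:2}--\eqref{eq:tom_lemma2:3} (those involving $\rv{v}$ actually benefit from the independence of $\rv{u}$ and $\rv{v}$). A union bound over the $\bigo(n)$ relevant shifts and the four comparisons bounds the total failure probability of property~\ref{lemma:tom:HamDis} by $C_1 n\exp(-c_1 \delta^2 n)$.

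For property~\ref{lemma:tom:lowcounts}, write $N_w(z,I) \isdef \sum_{i\in I}\indicator{w}((\sigma^i z)_D)$, so that $\Delta_w^I(x,y) = N_w(y,I) - N_w(x,I)$. Since $\rv{x}$ and $\rv{y}$ are identically distributed, $\xExp[N_w(\rv{x},I)] = \xExp[N_w(\rv{y},I)]$, and it suffices to show each count concentrates within $\varepsilon n/2$ of this common mean. Flipping a single bit $\rv{u}_\ell$ affects at most $\lvert D\rvert \leq k$ summands of $N_w(\rv{x},I)$, namely those indexed by $i = \ell - d$ with $d\in D$, each by at most $1$, so McDiarmid's inequality gives
\begin{align}
	\xPr\!\left(\abs[\big]{N_w(\rv{x},I) - \xExp[N_w(\rv{x},I)]} \geq \tfrac{\varepsilon n}{2}\right)
		&\leq 2\exp\!\left(-\tfrac{\varepsilon^2 n}{2k^2}\right),
\end{align}
and analogously for $N_w(\rv{y},I)$. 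Since $\Delta_w^I$ is invariant under simultaneous translation of $D$ and $I$, I may assume $\min D = 0$ and $D \subseteq [0,n-1]$; moreover $\Delta_w^I$ depends on $I$ only through its intersection with $J \isdef \{i \in \ZZ : (i+D)\cap[0,n-1]\neq\varnothing\}$, which has at most $2n-1$ elements. The number of distinct relevant triples $(D,w,I)$ is therefore at most $\binom{n-1}{k-1} \cdot 2^k \cdot \bigo(n^2) = \bigo(n^{k+1})$, and a final union bound controls the failure probability of property~\ref{lemma:tom:lowcounts} by $C_2 n^{k+1}\exp(-c_2 \varepsilon^2 n / k^2)$.

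Combining the two bounds, the total failure probability tends to $0$ as $n\to\infty$, so for $n$ sufficiently large (depending on $\varepsilon,\delta,k$) a uniformly random pair satisfies both properties simultaneously with positive probability, yielding the desired deterministic $(u,v)$. The most delicate step will be the bookkeeping in property~\ref{lemma:tom:lowcounts}: carefully parametrizing the ``effective'' triples $(D,w,I)$ to keep the union bound polynomial in $n$, and verifying the variation constant $k$ for McDiarmid in the presence of the zero-padding boundary where $(i+D)$ straddles $[0,n-1]$.
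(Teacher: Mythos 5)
Your proposal is correct and follows the same high-level probabilistic-method framework as the paper's proof (sample $\rv{u},\rv{v}$ uniformly and independently, establish each required inequality with exponentially small failure probability via Chernoff--Hoeffding and McDiarmid, and take a union bound), but the concentration steps are organized differently in a way worth comparing. For property~\ref{lemma:tom:lowcounts}, the paper works with the indicator random variables $\indicator{w}\big((\sigma^i\rv{x})_D\big)$ directly: it partitions the relevant index set $I'$ into classes within each of which the windows $i+D$ are pairwise disjoint (so that the indicators within a class are independent), applies a Hoeffding-type bound class-by-class with accuracy $\varepsilon n/(2k)$, and recombines. You instead apply McDiarmid's inequality directly to the $n$ underlying independent bits $\rv{u}_0,\dots,\rv{u}_{n-1}$, noting that flipping any single bit perturbs at most $\abs{D}\le k$ of the summands by $1$ each, so $\sum_\ell c_\ell^2\le nk^2$; this reaches the same rate $\exp\!\big(-c\,\varepsilon^2 n/k^2\big)$ in one step and avoids producing the partition of $I'$ altogether. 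That is a genuine simplification: the paper asserts that, since $\abs{D}\le k$, a greedy algorithm partitions $I'$ into $k$ such classes, but for sparse shapes such as $D=\{0,1,3\}$ the number of classes needed in fact exceeds $\abs{D}$ (the conflict graph on consecutive $i$'s contains a $4$-clique), so the paper's construction really needs $\bigo(k^2)$ classes --- this only changes constants, but your route does not have to worry about it at all. For property~\ref{lemma:tom:HamDis}, the paper observes that for each fixed $j\neq 0$ the $n$ difference indicators $\big(\indicator{\rv{x}_{i+j}\neq\rv{x}_i}\big)_i$ are themselves jointly independent Bernoulli$(\nicefrac{1}{2})$ variables (an invertible linear change of variables over $\ZZ/2\ZZ$) and applies Chernoff--Hoeffding directly; for $0<\abs{j}<n$ you use McDiarmid with bounded-difference constant $2$, trading a slightly worse exponent for not having to verify the independence claim. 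Both routes close the lemma; yours is arguably the cleaner of the two.
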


\begin{proof}
	Let $n \in \NN$ be a positive integer whose value we shall specify later.
	Let $\rv{x}$ and $\rv{y}$ be two random configurations from $\Omega$
	in which, the symbols $\rv{x}_i$ and $\rv{y}_i$ for $i\in[0,n-1]$ are
	independent with $(\nicefrac{1}{2},\nicefrac{1}{2})$-Bernoulli distribution,
	whereas $\rv{x}_i\isdef\rv{y}_i\isdef \symb{0}$ for $i\in\ZZ\setminus[0,n-1]$.
	Let $\rv{u} \isdef \rv{x}_{[0,n-1]}$ and $\rv{v}\isdef \rv{x}_{[0,n-1]}$.
	Note that if we define
	\begin{align}
		\indicator{\rv{x}_i = \rv{y}_i} \isdef
		\begin{cases}
			1 & \text{if $\rv{x}_i \neq \rv{y}_i$,}\\
			0 & \text{if $\rv{x}_i = \rv{y}_i$,}\\
		\end{cases}
	\end{align}
	then the random variables $\indicator{\rv{x}_i = \rv{y}_i}$ for $i\in[0,n-1]$
	are also independent with $(\nicefrac{1}{2},\nicefrac{1}{2})$-Bernoulli distribution.
	By definition,
	$\Ham(\rv{u},\rv{v})
	= \sum_{i \in [0,n-1]}\indicator{\rv{x}_i = \rv{y}_i}$.
	Noting that
	$\xExp[\indicator{\rv{x}_i = \rv{y}_i}] = \nicefrac{1}{2}$
	and using the Chernoff--Hoeffding bound yields
	\begin{align}
		\xPr\left( \Ham(\rv{u},\rv{v}) \leq (1-\delta)\frac{n}{2}\right)
		&\leq \exp\left( \frac{-\delta^2n}{4}\right) \;.
	\end{align}

	Similarly, for any fixed  $j \in \ZZ \setminus \{0\}$ each of the
	$n$-tuples $(\indicator{\rv{x}_{i+j} = \rv{x}_i})_{ i \in [0,n-1]}$,
	$(\indicator{\rv{x}_{i+j} = \rv{y}_i})_{ i \in [0,n-1]}$,
	$(\indicator{\rv{y}_{i+j} = \rv{x}_i})_{ i \in [0,n-1]}$
	and $(\indicator{\rv{y}_{i+j} = \rv{y}_i})_{ i \in [0,n-1]}$
	consists of independent $(\nicefrac{1}{2},\nicefrac{1}{2})$-Bernoulli random variables.	
	Thus, using the Chernoff--Hoeffding bound as before we get
	\begin{align}
		\label{eq:Chernoff1}
		\xPr\left( \Ham((\sigma^j\rv{x})_{[0,n-1]},\rv{u}) \leq (1-\delta)\frac{n}{2}\right)
		& \leq \exp\left( \frac{-\delta^2n}{4}\right),\\
		\label{eq:Chernoff2}
		\xPr\left( \Ham((\sigma^j\rv{x})_{[0,n-1]},\rv{v}) \leq (1-\delta)\frac{n}{2}\right)
		& \leq \exp\left( \frac{-\delta^2n}{4}\right),\\
		\label{eq:Chernoff3}
		\xPr\left( \Ham((\sigma^j\rv{y})_{[0,n-1]},\rv{u}) \leq (1-\delta)\frac{n}{2}\right)
		& \leq \exp\left( \frac{-\delta^2n}{4}\right),\\
		\label{eq:Chernoff4}
		\xPr\left( \Ham((\sigma^j\rv{y})_{[0,n-1]},\rv{v}) \leq (1-\delta)\frac{n}{2}\right)
		& \leq \exp\left( \frac{-\delta^2n}{4}\right).
	\end{align}
	By definition of $\rv{x}$ and $\rv{y}$,
	whenever $\abs{j}\geq n$,
	\begin{align}
		\Ham((\sigma^j\rv{x})_{[0,n-1]},\rv{u})
		& = \Ham((\sigma^j\rv{y})_{[0,n-1]},\rv{u})
		= \sum_{i \in [0,n-1]}\indicator{\rv{u}_i=\symb{1}},\\
		\Ham((\sigma^j\rv{x})_{[0,n-1]},\rv{v})
		& = \Ham((\sigma^j\rv{y})_{[0,n-1]},\rv{v})
		= \sum_{i \in [0,n-1]}\indicator{\rv{v}_i=\symb{1}}.
	\end{align}
	Therefore, in order to satisfy conditions~\eqref{eq:Chernoff1},~\eqref{eq:Chernoff2},~\eqref{eq:Chernoff3} and ~\eqref{eq:Chernoff4} for all $j \in \ZZ \setminus \{0\}$, it suffices to satisfy them for $j \in [-n,n] \setminus \{0\}$.
	Consequently, applying the union bound, we obtain that
	\begin{align}
	\label{eq:lemma:top:proof:bound:condition:a}
		\xPr(E_{\text{a}}) &\leq (8n+1)\exp\left( \frac{-\delta^2n}{4}\right) \;,
	\end{align}
	where $E_{\text{a}}$ is the event that $\rv{u}=u$ and $\rv{v}=v$ for some $u,v\in\{\symb{0},\symb{1}\}^n$
	that fail at least one of the conditions~\eqref{eq:tom_lemma2:1}--\eqref{eq:tom_lemma2:3}.
	
	Next, let $I \subseteq \ZZ$ be an interval and $w \in \{\symb{0},\symb{1}\}^{{D}}$ where $D \Subset \ZZ$ is a shape so that $\abs{D}\leq k$ and there is $\kappa \in \ZZ$ such that $D \subseteq \kappa+[0,n-1]$. Notice that if $i \notin [-\kappa-n+1,-\kappa+n-1]$ then $\indicator{w}((\sigma^i \rv{x})_D) = \indicator{w}((\sigma^i \rv{y})_D)$ because $(\sigma^i \rv{x},\sigma^i \rv{y})\in \relation{T}_{-i+[0,n-1]}(\Omega)$ and $-i+[0,n-1] \cap \kappa+[0,n-1] = \varnothing$. Hence if we let $I' = I \cap [-\kappa-n+1,-\kappa+n-1]$ we get that
	\begin{align}
	\Delta_{w}^I(\rv{x},\rv{y}) = \Delta_{w}^{I'}(\rv{x},\rv{y}).
	\end{align}
	Let us consider a partition $\mathcal{P}$ of $I'$ into $k$ parts such that whenever $i,j$ are two distinct elements of
	some $P \in \mathcal{P}$, $i+{D}$ is disjoint from $j+{D}$.
	(As $\abs{D}\leq k$, a greedy algorithm yields such a partition.) For $i\in I'$, define a random variable $\rv{W}^\rv{x}_i$ by
	\begin{align}
		\rv{W}^\rv{x}_{i}
		&\isdef \indicator{w}\big((\sigma^i\rv{x})_D\big) =
		\begin{cases}
			1 & \text{if $\rv{x}_{i+a} = w_a$ for every $a \in {D}$,}\\
			0 & \text{otherwise.}\\
		\end{cases}
	\end{align}	
	Define $\rv{W}^\rv{y}_i$ analogously
	and note that $\rv{W}^\rv{y}_i$ and
	$\rv{W}^\rv{x}_i$ are identically distributed.
	Note also that whenever $i,j$ are distinct elements of some $P \in \mathcal{P}$,
	the random variables $\rv{W}^\rv{x}_i$ and
	$\rv{W}^\rv{x}_j$ are independent, and the same holds if we replace $\rv{x}$ by $\rv{y}$.
	
	Let $i_1<i_2<\cdots<i_\ell$ be the elements of $I'$ ordered from left to right.
	For $P \in \mathcal{P}$, define $f_P\colon \{0,1\}^\ell \to \NN$
	by $f_P(x_1,x_2,\ldots,x_\ell) = \sum_{s: i_s \in P}x_s$ and note that
	changing the value of $(x_1,x_2,\ldots,x_\ell)$ at one coordinate modifies $f_P(x_1,x_2,\ldots,x_\ell)$
	by at most $1$. Let $f\colon \{0,1\}^\ell \to \NN$ be defined by
	$f(x_1,x_2,\ldots,x_\ell) \isdef \sum_{P \in \mathcal{P}} f_P(x_1,x_2,\ldots,x_\ell) = \sum_{i=1}^\ell x_i$.
	Observe that
	\begin{align}
		\Delta_w^{I}(\rv{x},\rv{y}) & =
		\sum_{i \in I'} \big[\indicator{w}((\sigma^i\rv{x})_D)-\indicator{w}((\sigma^i\rv{y})_D)\big] \\
		& = \sum_{i \in  I'} \big[\rv{W}^\rv{x}_{i}-\rv{W}^\rv{y}_{i}\big]\\
		& = f\big(\rv{W}^\rv{x}_{i_1},\dots,\rv{W}^\rv{x}_{i_{\ell}}\big)
		     -f\big(\rv{W}^\rv{y}_{i_1},\dots,\rv{W}^\rv{y}_{i_{\ell}}\big) \\
		& = \sum_{P \in \mathcal{P}}\Big[
			f_P\big(\rv{W}^\rv{x}_{i_1},\dots,\rv{W}^\rv{x}_{i_{\ell}}\big)
			-f_P\big(\rv{W}^\rv{y}_{i_1},\dots,\rv{W}^\rv{y}_{i_{\ell}}\big)
			\Big].
	\end{align}
	Note that $\abs{P} \leq \abs{I'} \leq 2n-1 \leq 2n$ for every $P \in \mathcal{P}$.
	Using McDiarmid's inequality we have
	\begin{align}
		\xPr\left(
			   \abs[\Big]{f_P\big(\rv{W}^\rv{x}_{i_1},\dots,\rv{W}^\rv{x}_{i_{\ell}}\big)
			   -\xExp\big[f_P\big(\rv{W}^\rv{x}_{i_1},\dots,\rv{W}^\rv{x}_{i_{\ell}}\big)\big]}
			   \geq \frac{\varepsilon n}{2k}
			\right)
		& \leq 2\exp\left(- \frac{2}{\abs{P}}\left(\frac{\varepsilon n}{2k}\right)^2  \right) \\
		&  \leq  2\exp\left( -\frac{\varepsilon^2 n}{4k^2} \right) \;.
	\end{align}
	Similarly,
	\begin{align}
		\xPr\left(
			\abs[\Big]{f_P\big(\rv{W}^\rv{y}_{i_1},\dots,\rv{W}^\rv{y}_{i_{\ell}}\big)
			-\xExp\big[f_P\big(\rv{W}^\rv{y}_{i_1},\dots,\rv{W}^\rv{y}_{i_{\ell}}\big)\big]}
			\geq 	\frac{\varepsilon n}{2k}
		\right)
		&\leq  2\exp\left(- \frac{\varepsilon^2 n}{4k^2} \right) \;.
	\end{align}
	Since
	$\xExp\big[f_P\big(\rv{W}^\rv{x}_{i_1},\dots,\rv{W}^\rv{x}_{i_{\ell}}\big)\big]
	= \xExp\big[f_P\big(\rv{W}^\rv{y}_{i_1},\dots,\rv{W}^\rv{y}_{i_{\ell}}\big)\big]$,
	we obtain, by applying the pigeonhole principle and the union bound, that
	\begin{align}
		\xPr\left(
			\abs[\Big]{f_P\big(\rv{W}^\rv{x}_{i_1},\dots,\rv{W}^\rv{x}_{i_{\ell}}\big)
			-f_P\big(\rv{W}^\rv{y}_{i_1},\dots,\rv{W}^\rv{y}_{i_{\ell}}\big)}
			\geq \frac{\varepsilon n}{k}
		\right)
		\leq 4 \exp\left( \frac{-\varepsilon^2 n}{4k^2} \right) \;.
	\end{align}	
	Using the fact that $f=\sum_{P\in\mathcal{P}} f_P$ and
	applying the pigeonhole principle and the union bound once more yields
	\begin{align}
		\xPr\left(
			\abs[\Big]{f\big(\rv{W}^\rv{x}_{i_1},\dots,\rv{W}^\rv{x}_{i_{\ell}}\big)
			-f\big(\rv{W}^\rv{y}_{i_1},\dots,\rv{W}^\rv{y}_{i_{\ell}}\big)}
			\geq \varepsilon n
		\right)
		& \leq  4k\exp\left( \frac{-\varepsilon^2 n}{4k^2} \right) \;.
	\end{align}
	Therefore
	\begin{align}
		\xPr\left(
			\abs[\big]{\Delta_w^{I}(\rv{x},\rv{y})} \geq \varepsilon n
		\right)
		\leq 4k\exp\left( \frac{-\varepsilon^2 n}{4k^2} \right).
	\end{align}
	
	At this point, we would like to use the latter bound to argue that the probability
	that condition~\eqref{eq:tom_lemma3} fails for some choices of $D$, $w$ and $I$
	is small.
	To this end, notice that if $D = \kappa+D'$ for some $\kappa\in \ZZ$, then, for any $I \subseteq \ZZ$, $w \in \{\symb{0},\symb{1}\}^{{D}}$ and $w' \in \{\symb{0},\symb{1}\}^{{D'}}$ so that $w(d)=w'(d-\kappa)$, we have 
	\begin{align}
			\abs[\big]{\Delta_w^{I}(x,y)} 
			&= \abs[\big]{\Delta_{w'}^{I+\kappa}(x,y)}
			\mbox{ for every } x,y \in \{\symb{0},\symb{1}\}^\ZZ \mbox{ such that } (x,\symb{0}^\ZZ),(y,\symb{0}^\ZZ) \in \relation{T}_{[0,n-1]}(\Omega).
	\end{align}
	Therefore the conditions for $D,D'$ that are the same modulo a shift are redundant and we may assume that $D \subseteq [0,n-1]$. 
	
	Fix some $D\subseteq [0,n-1]$ of cardinality $\abs{D} = m \leq k$. There are $2^{m}$ choices for $w \in \{\symb{0},\symb{1}\}^{{D}}$ and we already argued that we may assume that $I \subseteq [-n+1,n-1]$, therefore, there are $(2n-1)(n-1)$ choices for $I$. We get the following upper bound on the number $F(n,k)$ of conditions of the form
		$\abs[\big]{\Delta_w^{I}(\rv{x},\rv{y})} \geq \varepsilon n$ to impose:
		\begin{align}
			F(n,k) &
			\leq  \sum_{m =1}^{k}(2n-1)(n-1)2^{m}\binom{n}{m} \\
			& \leq 2n^22^k\sum_{m =1}^{k}\binom{n}{m} \\
			& \leq k2^{k+1}n^{k+2}.
		\end{align}

	Therefore, applying the union bound, we obtain
	\begin{align}
	\label{eq:lemma:top:proof:bound:condition:b}
		\xPr(E_{\text{b}}) &\leq F(n,k)\, 4k\exp\left( \frac{-\varepsilon^2 n}{4k^2} \right)\\
		& \leq  2^{k+3}k^2n^{k+2}\exp\left( \frac{-\varepsilon^2 n}{4k^2} \right) \;,
	\end{align}
	where $E_{\text{b}}$ is the event that $\rv{x}=x$ and $\rv{y}=y$ for some $x,y\in\{\symb{0},\symb{1}\}^\ZZ$
	that fail~\ref{lemma:tom:HamDis} for at least one choice of $I \Subset \ZZ$, $D$ with $\abs{D}\leq k$ so that there is $\kappa \in \ZZ$ such that $D \subseteq \kappa+[0,n-1]$ and
	$w\in\{\symb{0},\symb{1}\}^D$.

	Let us now choose $n = n(\varepsilon,\delta,k)$ sufficiently large so that
	\begin{align}
		(8n+1)\exp\left( \frac{-\delta^2n}{4}\right) +
		2^{k+3}k^2n^{k+2}\exp\left( \frac{-\varepsilon^2 n}{4k^2} \right)
		&< 1 \;.
	\end{align}
	Then, combining~\eqref{eq:lemma:top:proof:bound:condition:a} and~\eqref{eq:lemma:top:proof:bound:condition:b},
	we find that $\xPr(E_{\text{a}}\cup E_{\text{b}})<1$.
	In particular, with positive probability, $\rv{u}=u$ and $\rv{v}=v$ for some $u,v\in\{\symb{0},\symb{1}\}^n$
	that satisfy both conditions~\ref{lemma:tom:lowcounts} and~\ref{lemma:tom:HamDis}.
\end{proof}

\subsubsection{Proof of Theorem~\ref{thm:Kozlov_is_annoying}}

	We shall now define the objects that will satisfy Conditions~\ref{con:kozlov_is_annoying}\ref{con:kozlov_is_annoying:dual-norm-small}--\ref{con:kozlov_is_annoying:evaluation-large}.

	Fix $\delta\isdef\nicefrac{1}{2}$ and $K\isdef 16$.
	Given $k\in\NN$, set $\varepsilon_k\isdef\frac{1}{k^2 2^{k}}$,
	and pick an increasing function $n:\NN\to\NN$ such that
	$n(k)\geq n(\varepsilon_k,\delta,k)$, where $n(\varepsilon_k,\delta,k)$ is as in Lemma~\ref{lemma:tom}.
	Let $u^{(k)},v^{(k)}\in\{\symb{0},\symb{1}\}^{n(k)}$ be two words
	satisfying conditions~\eqref{eq:tom_lemma2:1}--\eqref{eq:tom_lemma2:3} and~\eqref{eq:tom_lemma3}
	as in Lemma~\ref{lemma:tom}.
	Let $x^{(k)},y^{(k)}\in\{\symb{0},\symb{1}\}^\ZZ$ be the configurations
	corresponding to $u^{(k)},v^{(k)}$ via~\eqref{eq:tom_lemma1}.

	For $k \in \NN$, let $\Phi^{(k)} \colon \{\symb{0},\symb{1}\}^{\ZZ} \to \RR$ be given by
	\begin{align}
		\Phi^{(k)}(x) &\isdef
			\begin{multlined}[t]
				\max \left\{
					0, n(k)-K \cdot \Ham\big(x_{[0,n(k)-1]} ,u^{(k)}\big)
				\right\} \\
				- \max \left\{
					0, n(k)-K \cdot \Ham\big(x_{[0,n(k)-1]} ,v^{(k)}\big)
				\right\} \;.
			\end{multlined}
	\end{align}
	Note that $\Phi^{(k)}(x)$ depends only on $x_{[0,n(k)-1]}$.
	Therefore, the function
	$\psi_k\colon \relation{T}(\Omega) \to \RR$ defined by
	\begin{align}
	\label{eq:cocycle:sequence}
		\psi_k(x,y) & \isdef \sum_{j \in \ZZ} \big[\Phi^{(k)}(\sigma^j y)-\Phi^{(k)}(\sigma^j x)\big],
	\end{align}
	is a cocycle on $\relation{T}(\Omega)$ which is continuous (in fact, local) and shift-invariant.

	\begin{claim}[Verification of Condition~\ref{con:kozlov_is_annoying}\ref{con:kozlov_is_annoying:dual-norm-small}]
		$\normNSD[\big]{(x^{(k)},y^{(k)})} = o\big(n(k)\big)$.
	\end{claim}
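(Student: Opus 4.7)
The plan is to use Proposition~\ref{prop:normNSD_densities}, which rewrites the quantity of interest as
$\normNSD[\big]{(x^{(k)},y^{(k)})} = \sup_{A \Subset \ZZ} \frac{1}{\abs{A}} \sum_{w \in \Sigma^A} \abs[\big]{\Delta_w(x^{(k)}, y^{(k)})}$, and to bound this supremum by $o(n(k))$ uniformly in $A$. The driving observation is that because $x^{(k)}$ and $y^{(k)}$ agree with $\symb{0}^\ZZ$ outside $[0, n(k)-1]$, at each shift $i \in \ZZ$ the patterns $(\sigma^i x^{(k)})_A$ and $(\sigma^i y^{(k)})_A$ are forced to carry $\symb{0}$ at every position of $A$ outside the \emph{active set} $A(i) \isdef A \cap (-i + [0, n(k)-1])$. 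Thus only $w \in \Sigma^A$ with $\supp(w) \subseteq A(i)$ contribute at shift $i$.

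Concretely, I would partition $\ZZ$ into the level sets $I_B \isdef \{i : A(i) = B\}$, where $B$ ranges over the subsets of $A$ of the form $A \cap [\ell, \ell + n(k)-1]$. A sliding-window argument shows that each $I_B$ is an interval, and that every such $B$ fits in a translate of $[0, n(k)-1]$. Using the identity $\indicator{w}((\sigma^i z)_A) = \indicator{\supp(w) \subseteq A(i)}\cdot \indicator{(\sigma^i z)_{A(i)} = w_{A(i)}}$, partitioning shifts by the value of $A(i)$, and matching each $v \in \Sigma^B$ to the unique $w \in \Sigma^A$ with $w_B = v$ and $w|_{A \setminus B} = \symb{0}$, I obtain the bookkeeping inequality
\[
	\sum_{w \in \Sigma^A} \abs[\big]{\Delta_w(x^{(k)}, y^{(k)})} \leq \sum_B \sum_{v \in \Sigma^B} \abs[\big]{\Delta_v^{B, I_B}(x^{(k)}, y^{(k)})}.
\]

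I would then split the right-hand side according to $\abs{B}$. For $\abs{B} \leq k$, Lemma~\ref{lemma:tom}\ref{lemma:tom:lowcounts} applies to every $v \in \Sigma^B$ (since $I_B$ is an interval and $B$ fits in a translate of $[0, n(k)-1]$), giving $\abs[\big]{\Delta_v^{B, I_B}} < \varepsilon_k n(k)$. Since there are at most $\abs{A}$ contiguous subsets of $A$ of each cardinality, using $\varepsilon_k = 1/(k^2 2^k)$ bounds the total contribution of such $B$'s by $\abs{A}\cdot 2^{k+1}\varepsilon_k n(k) = \frac{2\abs{A}}{k^2}\,n(k)$. For $\abs{B} > k$, I would use the coarse estimate $\sum_{v \in \Sigma^B} \abs[\big]{\Delta_v^{B, I_B}} \leq 2\abs{I_B}$ coming from the fact that only two patterns can disagree at any single shift. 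Summing this over $B$ with $\abs{B} > k$ equals $2\abs{J}$ with $J \isdef \{i : \abs{A(i)} > k\}$, and the double-counting identity $\sum_i \abs{A(i)} = \sum_{a \in A}\abs{\{i : a \in A(i)\}} = \abs{A}\,n(k)$ forces $k\abs{J} \leq \abs{A}\,n(k)$, so this part contributes at most $\frac{2\abs{A}}{k}\,n(k)$.

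Combining the two regimes gives $\sum_w \abs{\Delta_w(x^{(k)}, y^{(k)})} \leq \bigl(\tfrac{2}{k^2} + \tfrac{2}{k}\bigr)\abs{A}\,n(k)$ for every $A \Subset \ZZ$, so dividing by $\abs{A}$ and taking the supremum yields $\normNSD[\big]{(x^{(k)},y^{(k)})} \leq \bigl(\tfrac{2}{k^2} + \tfrac{2}{k}\bigr) n(k) = o(n(k))$, as desired. The delicate step is the case $\abs{B} > k$, for which Lemma~\ref{lemma:tom} is not directly applicable: the naive per-shift bound $\sum_w \abs{\Delta_w^A} \leq 2(n(k) + \mathrm{diam}(A))$ would fail to be $o(n(k))$ per site when $A$ is very spread out, and it is precisely the double-counting observation that shifts with large active sets can occupy at most a $1/k$-fraction of the relevant range that rescues the estimate.
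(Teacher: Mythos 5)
Your proof is correct, and it follows the same high-level strategy as the paper: reduce via Proposition~\ref{prop:normNSD_densities} to bounding $\frac{1}{\abs{A}}\sum_{w \in \Sigma^A} \abs{\Delta_w(x^{(k)},y^{(k)})}$, split shifts according to the size of the active set $A(i) = A \cap (-i+[0,n(k)-1])$, invoke Lemma~\ref{lemma:tom}\ref{lemma:tom:lowcounts} for shifts with small active sets, and use a double-counting bound for shifts with large active sets (the paper's injection $\varphi$ gives precisely your identity $\sum_i \abs{A(i)} = \abs{A}\,n(k)$). The one place where you genuinely streamline the argument is the small-active-set regime: the paper's treatment of the corresponding term~\eqref{eq:smiley} splits according to $B(w)$ (the intersection of $A$ with the convex hull of $\supp(w)$), leading to a three-way case analysis~\eqref{eq:diamondsuit}--\eqref{eq:heartsuit} with the extra observation that the case $\abs{B(w)}>k$ vanishes, and then covers the relevant shifts by $O(k)$ intervals $[j]_{\simeq_A}$. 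Your bookkeeping inequality
\begin{align}
\sum_{w \in \Sigma^A} \abs[\big]{\Delta_w(x^{(k)},y^{(k)})} &\leq \sum_B \sum_{v \in \Sigma^B} \abs[\big]{\Delta_v^{I_B}(x^{(k)},y^{(k)})}
\end{align}
avoids the $B(w)$-splitting entirely by directly collapsing every $w\in\Sigma^A$ onto its restriction $v=w_B$ to the common active set $B$ for each block $I_B$, which is cleaner and absorbs the vanishing case automatically. Both arguments use Lemma~\ref{lemma:tom}\ref{lemma:tom:lowcounts} on the same pairs $(D,I)=(A(i),I_B)$, both control the count of active sets of each cardinality by $\abs{A}$, and both yield a bound of the form $O(\abs{A}\,n(k)/k)$, hence $o(n(k))$; the constant differs ($4n/k$ versus $6n/k$) but is immaterial.
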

	
	\begin{proof}
		By Proposition~\ref{prop:normNSD_densities},
		it suffices to show that for every $A \Subset \ZZ$,
		\begin{align}
			\label{eq:BIGsuit}
			\frac{1}{\abs{A}}\sum_{w \in \Sigma^A} \abs[\big]{\Delta_w(x^{(k)},y^{(k)})}
			\leq \frac{6n(k)}{k}.
		\end{align}
		Indeed, if the above holds then,
		\begin{align}
			\lim_{k \to \infty}\frac{1}{n(k)} \normNSD[\big]{(x^{(k)},y^{(k)})}
			= \lim_{k \to \infty}\frac{1}{n(k)}\sup_{A \Subset \ZZ} \frac{1}{\abs{A}}
			   \sum_{w \in \Sigma^A} \abs[\big]{\Delta_w(x^{(k)},y^{(k)})}
			\leq \lim_{k \to \infty} \frac{6}{k}
			= 0.
		\end{align}
		For simplicity of notation,
		for the remainder of the proof of this claim,
		we shall denote $x \isdef x^{(k)}$, $y \isdef y^{(k)}$, %
		$n\isdef n(k)$ and $\varepsilon\isdef\varepsilon_k$.
		Fix $A = \{a_1 < a_2 < \dots < a_{\abs{A}}\} \Subset \ZZ$.
		Let us define,
		\begin{align}
			R & = \big\{j \in \ZZ : (j+A)\cap [0,n-1] \neq \varnothing  \big\}\\
			R_{\leq k} & = \big\{j \in \ZZ : 1\leq \abs[\big]{(j+A)\cap [0,n-1]}\leq k \big\}\\
			R_{> k} & = \big\{j \in \ZZ :  \abs[\big]{(j+A)\cap [0,n-1]}> k \big\}.
		\end{align}		
		Note that $R$ is the disjoint union of
		$R_{\leq k}$ and $R_{>k}$ and that for every
		$j \notin R$ and $w \in \{\symb{0},\symb{1}\}^A$, we have
		$\indicator{w}((\sigma^j x)_A)-\indicator{w}((\sigma^j y)_A) = 0$.
		We may thus write
		\begin{align}
			\frac{1}{\abs{A}}\sum_{w \in \Sigma^A} \abs{\Delta_w(x,y)}
			& = \frac{1}{\abs{A}}\sum_{w \in \Sigma^A}
				\abs[\Bigg]{\sum_{j \in \ZZ}\indicator{w}((\sigma^j x)_A)-\indicator{w}((\sigma^j y)_A)}\\
			& = \frac{1}{\abs{A}}\sum_{w \in \Sigma^A}
				\abs[\Bigg]{\sum_{j \in R}\indicator{w}((\sigma^j x)_A)-\indicator{w}((\sigma^j y)_A)}\\
			& \label{eq:smiley}
				\leq
				\frac{1}{\abs{A}}\sum_{w \in \Sigma^A}
				\abs[\Bigg]{
					\sum_{j \in R_{\leq k}}\indicator{w}((\sigma^j x)_A)-\indicator{w}((\sigma^j y)_A)
				} \\
			& \label{eq:frownie}
				+ \frac{1}{\abs{A}}\sum_{w \in \Sigma^A}
				\abs[\Bigg]{
					\sum_{j \in R_{> k}}\indicator{w}((\sigma^j x)_A)-\indicator{w}((\sigma^j y)_A)
				} \;.
		\end{align}	
		It suffices to bound both terms~\eqref{eq:smiley} and \eqref{eq:frownie}.
		In order to bound~\eqref{eq:frownie}, note that
		there is an injective function
		$\varphi\colon [1,\dots,k] \times R_{>k}\to [0,n-1] \times A$
		defined by $\varphi(i,j) \isdef \big(\ell(i,j),a(i,j)\big)$ where
		$\ell(i,j)$ is the $i$-th element from left to right of
		$(A+j)\cap [0,n-1]$ (there are at least $k+1$ elements by definition
		of $R_{>k}$) and $a(i,j)$ is $\ell(i,j)-j$. In particular,
		it follows by taking the cardinality of these sets that
		\begin{align}
			k\abs{R_{>k}} &\leq \abs{A}n.
		\end{align}		
		Using this inequality, we obtain the following bound for~\eqref{eq:frownie}:		
		\begin{align}
			\eqref{eq:frownie} & \leq \frac{1}{\abs{A}}\sum_{w \in \Sigma^A}
				\sum_{j \in R_{> k}}\abs[\big]{\indicator{w}((\sigma^j x)_A)-\indicator{w}((\sigma^j y)_A)}\\
			& = \frac{1}{\abs{A}} \sum_{j \in R_{> k}}
					\sum_{w \in \Sigma^A}
					\abs[\big]{
						\indicator{w}((\sigma^j x)_A)-\indicator{w}((\sigma^j y)_A)
					} \\
			& \leq \frac{1}{\abs{A}} \sum_{j \in R_{> k}}2 = \frac{\abs{R_{>k}}}{\abs{A}} \leq \frac{2n}{k}.
		\end{align}
		
		In order to bound~\eqref{eq:smiley}, we shall further
		divide it into three sums. For $w \in \{\symb{0},\symb{1}\}^A$
		let $B(w)$ be the intersection of $A$ with the
		convex hull of $w^{-1}(\symb{1})$ in $\ZZ$, that is $B(w) = [\min w^{-1}(\symb{1}), \max w^{-1}(\symb{1})] \cap A$. Noting that
		$B(w)=\varnothing$ if and only if $w = \symb{0}^A$,
		we can write,		
		\begin{align}
			\eqref{eq:smiley}
			& \label{eq:diamondsuit}
				= \frac{1}{\abs{A}}
				\abs[\Bigg]{
					\sum_{j \in R_{\leq k}}
					\big[\indicator{\symb{0}^A}((\sigma^j x)_A)-\indicator{\symb{0}^A}((\sigma^j y)_A)\big]
				} \\
			& \label{eq:spadesuit}
				+ \frac{1}{\abs{A}}
				\sum_{\substack{w \in \Sigma^A \\ 0 < \abs{B(w)}\leq k}}
				\abs[\Bigg]{
					\sum_{j \in R_{\leq k}}
					\big[\indicator{w}((\sigma^j x)_A)-\indicator{w}((\sigma^j y)_A)\big]
				}\\
			& \label{eq:heartsuit}
				+\frac{1}{\abs{A}}
				\sum_{\substack{w \in \Sigma^A \\ \abs{B(w)}> k}}
				\abs[\Bigg]{
					\sum_{j \in R_{\leq k}}
					\big[\indicator{w}((\sigma^j x)_A)-\indicator{w}((\sigma^j y)_A)\big]
				} \;.
		\end{align}
		Note that~\eqref{eq:heartsuit} is equal to $0$. Indeed, whenever
		$(B(w)+j) \cap (\ZZ\setminus [0,n-1]) \neq \varnothing$
		we have that $\indicator{w}((\sigma^j x)_A)=\indicator{w}((\sigma^j y)_A) = 0$.
		In particular, if $\abs{B(w)}>k$ we have that
		$\indicator{w}((\sigma^j x)_A)-\indicator{w}((\sigma^j y)_A)=0$ whenever
		$j \in R_{\leq k}$.
		
		Let us now give a bound for~\eqref{eq:spadesuit}. By the same argument
		as in the case of~\eqref{eq:heartsuit},	we may write,
		\begin{align}
			\eqref{eq:spadesuit} & = \frac{1}{\abs{A}}\sum_{\substack{w \in \Sigma^A \\ 0 < \abs{B(w)}\leq k}}
			\abs[\vast]{
				\sum_{\substack{j \in R_{\leq k} \\ (B(w)+j)\subseteq[0,n-1]}}
				\big[\indicator{w}((\sigma^j x)_A)-\indicator{w}((\sigma^j y)_A)\big]
			} \;.
		\end{align}
		For $j \in R$, let $A(j)=  A \cap ([0,n-1]-j)$ and note that
		it is an induced non-empty convex subset of $A$
		(the intersection of $A$ with an interval). Let us denote
		$j \simeq_A j'$ whenever $A(j)=A(j')$. It is easy to see
		that each equivalence class $[j]_{\simeq_A}$ is an interval.
		We claim that the set of all $j \in R_{\leq k}$ such that
		$(B(w)+j)\subseteq[0,n-1] \neq \varnothing$ can be covered by
		$2k-1$ disjoint intervals $[j]_{\simeq_A}$. Indeed, as
		$\varnothing \neq B(w) \subseteq[0,n-1] - j$, we have that
		$B(w) \cap A(j) \neq \varnothing$. As $j \in R_{\leq k}$
		we know that $\abs{A(j)}\leq k$. Finally, as $A(j)$ is an induced
		convex subset and $\abs{B(w)}\leq k$, there are at most
		$2k-\abs{B(w)}$ possible choices for $A(j)$ and hence $2k-1$
		disjoint equivalence classes which cover %
		$R_{\leq k}\cap \{j: B(w) + j \subseteq [0,n-1]\}$.
		Thus, we may write
		\begin{align}
			R_{\leq k} \cap \big\{j \in \ZZ : (B(w)+j)\subseteq[0,n-1] \big\}
			= [j_1]_{\simeq_A} \uplus [j_2]_{\simeq_A} \uplus
			\cdots \uplus [j_m]_{\simeq_A}
		\end{align}
		with $m \leq 2k-1$.
		Note that for each $i\in\{1,2,\ldots,m\}$, the restriction $w_{A(j)}$
		is the same for all $j\in[j_i]_{\simeq_A}$ and $\abs{A(j)}\leq k$.
		Therefore,
		\begin{align}
			\eqref{eq:spadesuit} & = \frac{1}{\abs{A}}\sum_{\substack{w \in \Sigma^A \\ 0 < \abs{B(w)}\leq k}}
			\abs[\Bigg]{
				\sum_{i =1}^m \sum_{j \in [j_i]_{\simeq_A}}
				\big[\indicator{w}((\sigma^j x)_A)-\indicator{w}((\sigma^j y)_A)\big]
			} \\
			& \leq \frac{1}{\abs{A}}\sum_{\substack{w \in \Sigma^A \\ 0 < \abs{B(w)}\leq k}}
			\sum_{i =1}^m
			\abs[\Bigg]{
				\sum_{j \in [j_i]_{\simeq_A}}
				\big[\indicator{w_{A(j_i)}}((\sigma^j x)_{A(j_i)})-\indicator{w_{A(j_i)}}((\sigma^j y)_{A(j_i)})\big]
			} \;.
		\end{align}
		Applying Item~\ref{lemma:tom:lowcounts} of Lemma~\ref{lemma:tom}
		with $D\isdef A(j_i)$ and $I \isdef [j_i]_{\simeq_A}$ we obtain that for every $i \in \{1,\dots,m\}$,
		\begin{align}
				\abs[\big]{\Delta^{I}_{w_{D}}(x,y)} = \abs[\Bigg]{
				\sum_{j \in [j_i]_{\simeq_A}}
				\indicator{w_{A(j_i)}}((\sigma^j x)_{A(j_i)})-\indicator{w_{A(j_i)}}((\sigma^j y)_{A(j_i)})
				}
			&\leq \varepsilon n.
		\end{align}
		Also note that in order to describe a word in $\{\symb{0},\symb{1}\}^A$ for which
		$\abs{B(w)}\leq k$, it suffices to select the left-most element of $B(w)$
		from $\abs{A}$ possible values and then the potentially non-zero values from among
		$2^{k-2}$ possibilities. It follows that
		\begin{align}
			\abs[\big]{\big\{ w \in \{\symb{0},\symb{1}\}^{A} : \abs{B(w)}\leq k \big\}} &\leq \abs{A} 2^{k-2} \leq \abs{A} 2^{k}.
		\end{align}		
		Putting the two equations above together, we get
		\begin{align}
			\eqref{eq:spadesuit} & \leq
				\frac{1}{\abs{A}}\sum_{\substack{w \in \Sigma^A \\ 0 < \abs{B(w)}\leq k}}
				\sum_{i =1}^m  \varepsilon n\\
			& \leq  \frac{1}{\abs{A}} 2^k\abs{A}(2k-1)\varepsilon n \\
			& \leq  2^k(2k-1)\varepsilon n = \frac{(2k-1) n}{k^2} \leq \frac{2n}{k} \\
		\end{align}
			
		It only remains to bound~\eqref{eq:diamondsuit}.
		In this case, as at most $k$ consecutive elements of $A$
		can intersect $[0,n-1]$ at the same time, we can cover
		$R_{\leq k}$ by at most $2k\abs{A}$ different intervals $[j]_{\simeq_A}$.
		Therefore, applying part~\ref{lemma:tom:lowcounts} of Lemma~\ref{lemma:tom} as before,
		we obtain the bound,
		\begin{align}
			\eqref{eq:diamondsuit} &
			\leq  \frac{1}{\abs{A}}2k\abs{A}\varepsilon n
			= 2k \varepsilon n \leq \frac{2n}{k}.
		\end{align}
		As the choice of $A$ was arbitrary, we obtain~\eqref{eq:BIGsuit}, which completes the proof of the claim.
	\end{proof}

	Next, we shall prove two technical propositions
	which will aid us in settling Conditions~\ref{con:kozlov_is_annoying}\ref{con:kozlov_is_annoying:sullivan-norm-bounded} and~\ref{con:kozlov_is_annoying}\ref{con:kozlov_is_annoying:evaluation-large}.

	\begin{proposition}
	\label{claim:ci1}
		Let $k \in \NN$ and $w \in \{\symb{0},\symb{1}\}^{n(k)}$. Then either
		\begin{align}
			\max\left\{ 0, n(k) - K \cdot \Ham\big(w,u^{(k)}\big) \right\} = 0
			\quad\text{or}\quad
			\max\left\{ 0, n(k) - K \cdot \Ham\big(w,v^{(k)}\big) \right\} = 0 \;.
		\end{align}
	\end{proposition}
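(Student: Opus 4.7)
The plan is to argue by contradiction using the triangle inequality for the Hamming distance, combined with the lower bound on $\Ham(u^{(k)}, v^{(k)})$ from Lemma~\ref{lemma:tom}\ref{lemma:tom:HamDis}. Specifically, the conclusion of the proposition is equivalent to the assertion that at least one of the inequalities $\Ham(w, u^{(k)}) \geq n(k)/K$ or $\Ham(w, v^{(k)}) \geq n(k)/K$ holds. So I will suppose, for contradiction, that
\begin{align}
\Ham(w, u^{(k)}) &< \frac{n(k)}{K} & &\text{and} & \Ham(w, v^{(k)}) &< \frac{n(k)}{K}.
\end{align}

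The triangle inequality for the Hamming distance then gives
\begin{align}
\Ham(u^{(k)}, v^{(k)}) &\leq \Ham(u^{(k)}, w) + \Ham(w, v^{(k)}) < \frac{2 n(k)}{K} = \frac{n(k)}{8},
\end{align}
using the fixed value $K = 16$.

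On the other hand, recall that the words $u^{(k)}, v^{(k)}$ were chosen via Lemma~\ref{lemma:tom} with the parameter $\delta = \nicefrac{1}{2}$. Inequality~\eqref{eq:tom_lemma2:1} of that lemma therefore yields
\begin{align}
\Ham(u^{(k)}, v^{(k)}) &> (1-\delta)\frac{n(k)}{2} = \frac{n(k)}{4},
\end{align}
which contradicts the previous display since $n(k)/8 < n(k)/4$. This contradiction establishes the proposition. The entire argument is a one-line application of the triangle inequality, so there is no serious obstacle; the only thing to be careful about is matching the constants $K = 16$ and $\delta = \nicefrac{1}{2}$ chosen just before the statement, which together ensure $2/K < (1-\delta)/2$.
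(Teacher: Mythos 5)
Your proof is correct and is essentially identical to the paper's: both argue by contradiction, using the triangle inequality for the Hamming distance together with the bound $\Ham(u^{(k)},v^{(k)}) > (1-\delta)n(k)/2$ from Lemma~\ref{lemma:tom}\ref{lemma:tom:HamDis}, and both rely on the choice $K=16$, $\delta = \nicefrac{1}{2}$ so that $2/K < (1-\delta)/2$.
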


	\begin{proof}
		This will follow from Item~\ref{lemma:tom:HamDis} of Lemma~\ref{lemma:tom}.
		Indeed, suppose that both quantities above are simultaneously positive.
		Then we would have,
		\begin{align}
			\frac{n(k)}{K} &>  \max\left\{\Ham\big(w, u^{(k)}\big), \Ham\big(w, v^{(k)}\big)\right\}
		\end{align}
		which implies
		\begin{align}
			\frac{2n(k)}{K} & > \Ham\big(w, u^{(k)}\big) + \Ham\big(w, v^{(k)}\big)
			\geq \Ham\big(u^{(k)}, v^{(k)}\big) > (1-\delta)\frac{n(k)}{2} =\frac{n(k)}{4} \;,
		\end{align}
		which implies that $8>K$. Since $K = 16$, this yields a contradiction.
	\end{proof}

	\begin{proposition}\label{claim:ci2}
		Let $z \in \{\symb{0},\symb{1}\}^{\ZZ}$ be a configuration and suppose that
		$\Phi^{(k)}(z) \neq 0$. Then there exists a ``safe interval''
		$I_{\mathtt{SAFE}} \isdef \left[-\frac{n(k)}{8},\frac{n(k)}{8}\right]$
		such that
		\begin{align}
			\Phi^{(k)}(\sigma^j z) = 0 \qquad\text{for every $j \in I_{\mathtt{SAFE}} \setminus\{0\}$.}
		\end{align}
	\end{proposition}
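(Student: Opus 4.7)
The plan is to pick the ``active'' side of $\Phi^{(k)}(z)$ via Proposition~\ref{claim:ci1} and then force both maxima defining $\Phi^{(k)}(\sigma^j z)$ to vanish, by combining the triangle inequality for the Hamming distance with the bounds of Lemma~\ref{lemma:tom}(\ref{lemma:tom:HamDis}). Since $\Phi^{(k)}(z) \neq 0$, Proposition~\ref{claim:ci1} tells us that exactly one of the two maxima defining $\Phi^{(k)}(z)$ is strictly positive; without loss of generality (the symmetric case being handled by the bounds in Lemma~\ref{lemma:tom} that involve $y^{(k)}$ in place of $x^{(k)}$) assume $\Ham\big(z_{[0,n(k)-1]}, u^{(k)}\big) < n(k)/K = n(k)/16$. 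Now fix $j \in I_{\mathtt{SAFE}} \setminus \{0\}$, so $0 < |j| \leq n(k)/8$. It is enough to show that $\Ham\big(z_{[j, j+n(k)-1]}, w\big) \geq n(k)/K$ for each $w \in \{u^{(k)}, v^{(k)}\}$, as this kills both maxima in $\Phi^{(k)}(\sigma^j z)$.

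For this, restrict to the overlap $O \isdef [0, n(k)-1] \cap \big([0, n(k)-1] - j\big)$, which has length $n(k) - |j| \geq 7n(k)/8$, and apply the triangle inequality in the Hamming metric via the intermediate word $(u^{(k)}_{i+j})_{i \in O}$:
\begin{align}
\Ham\big(z_{[j, j+n(k)-1]}, w\big)
&\geq \#\big\{i \in O : z_{i+j} \neq w_i\big\} \\
&\geq \#\big\{i \in O : u^{(k)}_{i+j} \neq w_i\big\} - \#\big\{i \in O : z_{i+j} \neq u^{(k)}_{i+j}\big\}.
\end{align}
The subtracted term is bounded above by $\Ham\big(z_{[0, n(k)-1]}, u^{(k)}\big) < n(k)/16$. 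The positive term counts disagreements between the translated $u^{(k)}$ and $w$ on $O$, and is extracted from Lemma~\ref{lemma:tom}(\ref{lemma:tom:HamDis}): the inequality $\Ham\big((\sigma^j x^{(k)})_{[0, n(k)-1]}, w\big) > n(k)/4$ differs from the overlap count only by disagreements originating from the $\symb{0}$-padded tail of $x^{(k)}$, and this contribution is at most $|j| \leq n(k)/8$. Hence the overlap count exceeds $n(k)/4 - n(k)/8 = n(k)/8$, and combining gives $\Ham\big(z_{[j, j+n(k)-1]}, w\big) > n(k)/8 - n(k)/16 = n(k)/K$ for each $w \in \{u^{(k)}, v^{(k)}\}$, as desired.

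The arithmetic is tight and is the only real point that needs attention: the choices $\delta = 1/2$ and $K = 16$ are tuned so that the chain $n(k)/4 - |j| - \Ham\big(z_{[0, n(k)-1]}, u^{(k)}\big) > n(k)/K$ just barely holds, and it is saved only because Lemma~\ref{lemma:tom} delivers strict inequalities and $\Phi^{(k)}(z) \neq 0$ provides a strict upper bound on $\Ham(z_{[0,n(k)-1]}, u^{(k)})$. The case $j < 0$ follows verbatim using the overlap $O = [-j, n(k)-1]$, and the remaining case in which $z_{[0, n(k)-1]}$ is close to $v^{(k)}$ rather than to $u^{(k)}$ is handled symmetrically, with the entries of Lemma~\ref{lemma:tom}(\ref{lemma:tom:HamDis}) that involve $y^{(k)}$ supplying the autocorrelation and cross-correlation bounds.
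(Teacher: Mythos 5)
Your proof is correct and follows essentially the same route as the paper: assume (without loss of generality) that $z_{[0,n-1]}$ is within Hamming distance $n/K$ of $u^{(k)}$, use the reverse triangle inequality with a shift of $u^{(k)}$ as the intermediate word, invoke Lemma~\ref{lemma:tom}\ref{lemma:tom:HamDis} for the autocorrelation/cross-correlation lower bound $n/4$, and track an error of size $|j|$ coming from the $\symb{0}$-padded tail. The only cosmetic difference is that you restrict to the overlap $O$ and absorb the tail contribution into the positive term, whereas the paper works on the full window $[0,n-1]$ and instead absorbs the same $|j|$ into the bound $\Ham\big((\sigma^j z)_{[0,n-1]},(\sigma^j x^{(k)})_{[0,n-1]}\big)<n/K+|j|$; these are interchangeable bookkeeping choices leading to the identical arithmetic $n/4-|j|-n/K\geq n/K$.
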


	\begin{proof}
		For brevity, let us set $n \isdef n(k)$. If $\Phi^{(k)}(z)\neq 0$, then we either have
		\begin{align}
		\label{eq:claim:ci2:case-1}
			\max\left\{ 0, n - K \cdot \Ham\big(z_{[0,n-1]}, u^{(k)}\big) \right\} &> 0
		\shortintertext{or}
			\max\left\{ 0, n - K \cdot \Ham\big(z_{[0,n-1]}, v^{(k)}\big) \right\} &> 0 \;.
		\end{align}
		The two cases are analogous,
		so without loss of generality, let us assume~\eqref{eq:claim:ci2:case-1}.
		This means
		\begin{align}
			\frac{n}{K} & > \Ham\big(z_{[0,n-1]}, u^{(k)}\big) = \Ham\big(z_{[0,n-1]}, x^{(k)}_{[0,n-1]}\big) \;.
		\end{align}
		It follows that for every $j\in\ZZ$,
		\begin{align}
			\Ham\big((\sigma^j z)_{[0,n-1]},(\sigma^j x^{(k)})_{[0,n-1]}\big) &< \frac{n}{K} + \abs{j} \;.
		\end{align}
		By the triangle inequality,
		\begin{align}
			\MoveEqLeft[3]\nonumber
			\Ham\big((\sigma^j z)_{[0,n-1]}, x^{(k)}_{[0,n-1]}\big) \\
			&\geq
				\Ham\big((\sigma^j x^{(k)})_{[0,n-1]}, x^{(k)}_{[0,n-1]}\big) -
				\Ham\big((\sigma^j z)_{[0,n-1]},(\sigma^j x^{(k)})_{[0,n-1]}\big) \;,\\
			\MoveEqLeft[3]\nonumber
			\Ham\big((\sigma^j z)_{[0,n-1]}, y^{(k)}_{[0,n-1]}\big) \\
			&\geq
				\Ham\big((\sigma^j x^{(k)})_{[0,n-1]}, y^{(k)}_{[0,n-1]}\big) -
				\Ham\big((\sigma^j z)_{[0,n-1]},(\sigma^j x^{(k)})_{[0,n-1]}\big) \;.
		\end{align}		
		By Item~\ref{lemma:tom:HamDis} of Lemma~\ref{lemma:tom}
		we know that for every non-zero $j \in \ZZ$,
		\begin{align}
			\Ham\big((\sigma^j x^{(k)})_{[0,n-1]},x^{(k)}_{[0,n-1]}\big) &\geq (1-\delta)\frac{n}{2} \;,\\
			\Ham\big((\sigma^j x^{(k)})_{[0,n-1]},y^{(k)}_{[0,n-1]}\big) &\geq (1-\delta)\frac{n}{2} \;.
		\end{align}
		Putting these bounds together and recalling that
		$u^{(k)} = x^{(k)}_{[0,n-1]}$ and $v^{(k)} = y^{(k)}_{[0,n-1]}$, we obtain
		\begin{align}
			\Ham\big((\sigma^j z)_{[0,n-1]}, u^{(k)}\big)
				&\geq (1-\delta) \frac{n}{2} - \frac{n}{K} - \abs{j} \;, \\
			\Ham\big((\sigma^j z)_{[0,n-1]},v^{(k)}\big)
				&\geq (1-\delta) \frac{n}{2} - \frac{n}{K} - \abs{j} \;.
		\end{align}
 		Note that $\max\left\{ 0, n - K \cdot \Ham\big((\sigma^jz)_{[0,n-1]}, u^{(k)}\big)\right\} = 0$
		if and only if $\Ham\big((\sigma^j z)_{[0,n-1]}, u^{(k)}\big) \geq \nicefrac{n}{K}$
		and that the same holds if we replace $u^{(k)}$ by $v^{(k)}$.
		It follows that $\Phi^{(k)}(\sigma^j z)=0$ for all non-zero $j\in\ZZ$ for which
		\begin{align}
			(1-\delta) \frac{n}{2} - \frac{n}{K} - \abs{j} &\geq \frac{n}{K} \;.
		\end{align}
		Plugging in $\delta = \nicefrac{1}{2}$ and $K = 16$
		and solving for $\abs{j}$ we get $\Phi^{(k)}(\sigma^j z)=0$ for all non-zero $j$ with
		\begin{align}
			\abs{j} &\leq \frac{n}{8} \;,
		\end{align}
		concluding the proof.
	\end{proof}

	\begin{claim}[Verification of Condition~\ref{con:kozlov_is_annoying}\ref{con:kozlov_is_annoying:sullivan-norm-bounded}]
		$\sup_k \normsull{\psi_k}<\infty$.
	\end{claim}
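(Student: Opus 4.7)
My plan is to bound $\normsull{\psi_k} = \sup_{x \in \Omega} \abs{\psi_k(x, \zeta_0 x)}$ uniformly in $k$ by combining a per-term Lipschitz estimate with the sparsity provided by Proposition~\ref{claim:ci2}. Since the binary full shift has safe symbol~$\symb{0}$, the map $\zeta_0$ just overwrites site~$0$ by $\symb{0}$, so $x$ and $\zeta_0 x$ differ in at most one position. Expanding via~\eqref{eq:cocycle:sequence},
\begin{align*}
	\psi_k(x, \zeta_0 x) &= \sum_{j \in \ZZ} \big[\Phi^{(k)}(\sigma^j \zeta_0 x) - \Phi^{(k)}(\sigma^j x)\big].
\end{align*}
Because $\Phi^{(k)}$ depends only on the coordinates in $[0, n(k)-1]$, the summand indexed by $j$ can be nonzero only when $j$ lies in the window $W_k \isdef [-(n(k)-1), 0]$, a priori leaving up to $n(k)$ potentially nonzero terms.

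Two ingredients reduce this to an $O(1)$ bound. The first is a Lipschitz estimate in the Hamming distance: each of the maps $w \mapsto \max\{0, n(k) - K \cdot \Ham(w, u^{(k)})\}$ and $w \mapsto \max\{0, n(k) - K \cdot \Ham(w, v^{(k)})\}$ is $K$-Lipschitz under a single-bit flip (since $\Ham(\cdot,\cdot)$ is $1$-Lipschitz and the truncation at $0$ preserves the Lipschitz constant), so $\abs{\Phi^{(k)}(\sigma^j \zeta_0 x) - \Phi^{(k)}(\sigma^j x)} \leq 2K$ for every~$j$. The second is the sparsity furnished by Proposition~\ref{claim:ci2}: for any $z \in \Omega$, the set $S_z \isdef \{j \in \ZZ : \Phi^{(k)}(\sigma^j z) \neq 0\}$ has pairwise gaps strictly greater than $n(k)/8$, and therefore $\abs{S_z \cap W_k} \leq 9$.

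A summand at $j$ is nonzero only if $j \in S_x \cup S_{\zeta_0 x}$, and when restricted to $W_k$ this union has cardinality at most~$18$. Combining this with the Lipschitz bound yields $\abs{\psi_k(x, \zeta_0 x)} \leq 36 K$, a constant independent of both $x$ and~$k$, which proves the claim. The main non-routine point is the sparsity argument: using only the Lipschitz estimate one would obtain the useless bound $O(n(k))$, and Proposition~\ref{claim:ci2} is precisely what is needed to replace the raw count of potentially nonzero terms by an $O(1)$ count, making the bound $k$-uniform.
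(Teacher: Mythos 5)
Your proof is correct and follows essentially the same strategy as the paper: bound each summand by $2K$ via the Lipschitz (in Hamming distance) property of $w \mapsto \max\{0, n(k) - K\cdot\Ham(w,\cdot)\}$, and then use Proposition~\ref{claim:ci2} to conclude that only $O(1)$ summands are nonzero. The only cosmetic differences are that the paper works with a general pair $(x,y)\in\relation{T}_0(\Omega)$ rather than the specific pair $(x,\zeta_0 x)$, and you restrict to the tighter window $[-(n(k)-1),0]$ (where the summand can actually be nonzero) instead of the paper's $[-n,n]$, yielding a slightly smaller final constant ($36K$ vs. $68K$), but the idea and structure are identical.
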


	\begin{proof}
		For brevity, let us write $n \isdef n(k)$.
		Let $(x,y) \in \relation{T}_{0}(\Omega)$.
		By the definition of $\psi_k$, we have
		\begin{align}
			\psi_k(x,y)
			&= \sum_{j \in \ZZ}\big[\Phi^{(k)}(\sigma^j y)-\Phi^{(k)}(\sigma^j x)\big]
			= \sum_{j \in [-n,n]}[\Phi^{(k)}(\sigma^j y)-\Phi^{(k)}(\sigma^j x)].
		\end{align}
		Since $(x,y)\in \relation{T}_{0}(\Omega)$, for every $j \in \ZZ$,
		$\Ham\big((\sigma^j x)_{[0,n-1]},(\sigma^j y)_{[0,n-1]}\big) \leq 1$.
		Consequently,
		\begin{align}
			\abs[\Big]{\Ham\big((\sigma^j x)_{[0,n-1]},u^{(k)}\big)-\Ham\big((\sigma^j y)_{[0,n-1]},u^{(k)}\big)}
				&\leq 1 \;, \\
			\abs[\Big]{\Ham\big((\sigma^j x)_{[0,n-1]},v^{(k)}\big)-\Ham\big((\sigma^j y)_{[0,n-1]},v^{(k)}\big)}
				&\leq 1 \;.
		\end{align}
		Therefore,
		\begin{align}
			\abs[\Big]{
				\max\left\{0,n- K\cdot \Ham\big((\sigma^j x)_{[0,n-1]},u^{(k)}\big) \right\}
				- \max\left\{0,n- K\cdot \Ham\big((\sigma^j y)_{[0,n-1]},u^{(k)}\big) \right\}
			}
			&\leq K \;,\\
			\abs[\Big]{
				\max\left\{0,n- K\cdot \Ham\big((\sigma^j x)_{[0,n-1]},v^{(k)}\big) \right\}
				-  \max\left\{0,n- K\cdot \Ham\big((\sigma^j y)_{[0,n-1]},v^{(k)}\big) \right\}
			}
			&\leq K \;,
		\end{align}
		which yields $\abs[\big]{\Phi^{(k)}(\sigma^j y)- \Phi^{(k)}(\sigma^j x)} \leq 2K$
		for every $j \in \ZZ$.
	
		By Proposition~\ref{claim:ci2}, if for some $j \in [-n,n]$
		we have $\Phi^{(k)}(\sigma^j x) \neq 0$, then for every
		$j' \in (j+[-\nicefrac{n}{8},\nicefrac{n}{8}])\setminus\{j\}$ we have
		that $\Phi^{(k)}(\sigma^{j'}x) = 0$. Consequently, if we let
		$I_x \isdef \{\ell \in [-n,n] : \Phi^{(k)}(\sigma^\ell(x)) \neq 0 \}$,
		then $\abs{I_x} \leq 17$. Similarly, if we let
		$I_y \isdef \{\ell \in [-n,n] : \Phi^{(k)}(\sigma^\ell(y)) \neq 0 \}$,
		then $\abs{I_y} \leq 17$. Consequently, we obtain
		\begin{align}
			\psi_k(x,y)
			& \leq \sum_{j \in I_x \cup I_y} \abs[\big]{\Phi_k(\sigma^j y)-\Phi_k(\sigma^j x)}
				\leq 2K(\abs{I_x}+\abs{I_y})
				\leq 68K
				= 544,
		\end{align}
		which proves the claim.
	\end{proof}

	\begin{claim}[Verification of Condition~\ref{con:kozlov_is_annoying}\ref{con:kozlov_is_annoying:evaluation-large}]
		$\abs[\big]{\psi_k(x^{(k)},y^{(k)})} = \Omega\big(n(k)\big)$.
	\end{claim}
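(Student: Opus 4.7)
The plan is to show that only the $j=0$ term in the defining series $\psi_k(x^{(k)},y^{(k)}) = \sum_{j \in \ZZ}[\Phi^{(k)}(\sigma^j y^{(k)}) - \Phi^{(k)}(\sigma^j x^{(k)})]$ contributes, and that this single term already has magnitude $2n(k)$.

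First I would evaluate the $j=0$ term. By construction, $(x^{(k)})_{[0,n(k)-1]} = u^{(k)}$ and $(y^{(k)})_{[0,n(k)-1]} = v^{(k)}$. Hence $\Ham(x^{(k)}_{[0,n(k)-1]}, u^{(k)})=0$ while by \eqref{eq:tom_lemma2:1} we have $\Ham(x^{(k)}_{[0,n(k)-1]}, v^{(k)}) = \Ham(u^{(k)},v^{(k)}) > (1-\delta)n(k)/2 = n(k)/4$. Since $K=16$, we have $K \cdot \Ham(u^{(k)},v^{(k)}) > 4n(k) \geq n(k)$, so the second max vanishes. This gives $\Phi^{(k)}(x^{(k)}) = n(k) - 0 = n(k)$. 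By the symmetric computation, $\Phi^{(k)}(y^{(k)}) = 0 - n(k) = -n(k)$, so the $j=0$ contribution is $-2n(k)$.

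Next I would show that every $j \neq 0$ contributes zero. By Item \ref{lemma:tom:HamDis} of Lemma~\ref{lemma:tom} (specifically \eqref{eq:tom_lemma2:2} and \eqref{eq:tom_lemma2:3}), for every $j \in \ZZ\setminus\{0\}$ all four quantities $\Ham((\sigma^j x^{(k)})_{[0,n(k)-1]}, u^{(k)})$, $\Ham((\sigma^j x^{(k)})_{[0,n(k)-1]}, v^{(k)})$, $\Ham((\sigma^j y^{(k)})_{[0,n(k)-1]}, u^{(k)})$, and $\Ham((\sigma^j y^{(k)})_{[0,n(k)-1]}, v^{(k)})$ exceed $(1-\delta)n(k)/2 = n(k)/4 > n(k)/K$. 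Hence both max operators in the definition of $\Phi^{(k)}(\sigma^j x^{(k)})$ and $\Phi^{(k)}(\sigma^j y^{(k)})$ vanish, so $\Phi^{(k)}(\sigma^j x^{(k)}) = \Phi^{(k)}(\sigma^j y^{(k)}) = 0$ for all $j \neq 0$.

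Combining the two observations, $\psi_k(x^{(k)},y^{(k)}) = -2n(k)$, so $\abs{\psi_k(x^{(k)},y^{(k)})} = 2n(k) = \Omega(n(k))$, as required. No step appears to be a genuine obstacle here: the construction of $\Phi^{(k)}$ as a ``bump centered on $u^{(k)}$ minus a bump centered on $v^{(k)}$'' together with the Hamming separation in Lemma~\ref{lemma:tom}\ref{lemma:tom:HamDis} was designed precisely to make this direct computation work, and the choice $K=16$ was calibrated so that $n(k)/K < (1-\delta)n(k)/2$.
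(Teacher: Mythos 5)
Your proposal is correct and takes essentially the same approach as the paper: split off the $j=0$ term (which yields $-2n(k)$) and show every other term vanishes via the Hamming-separation bounds from Lemma~\ref{lemma:tom}\ref{lemma:tom:HamDis}. The only cosmetic difference is that for the $j=0$ term the paper cites Proposition~\ref{claim:ci1} (which itself was derived from \eqref{eq:tom_lemma2:1}), whereas you redo the Hamming estimate directly; the underlying reasoning is identical.
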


	\begin{proof}
		We shall in fact show that	$\psi_k(x^{(k)},y^{(k)}) = -2n(k)$.
		As before, let us set $n \isdef n(k)$ for simplicity.
		We have
		\begin{align}
			\psi_k(x^{(k)},y^{(k)})
			& = \sum_{j \in \ZZ} \big[\Phi^{(k)}(\sigma^jy^{(k)}) - \Phi^{(k)}(\sigma^jx^{(k)})\big]\\
			& = \Phi^{(k)}(y^{(k)}) - \Phi^{(k)}(x^{(k)})
			       + \sum_{j \in \ZZ\setminus \{0\}}
			       	\big[\Phi^{(k)}(\sigma^jy^{(k)}) - \Phi^{(k)}(\sigma^jx^{(k)})\big]\\
			& = -2n(k) + \sum_{j \in \ZZ\setminus \{0\}}
				\big[\Phi^{(k)}(\sigma^jy^{(k)}) - \Phi^{(k)}(\sigma^jx^{(k)})\big] \;.
		\end{align}
		The last equality follows by Proposition~\ref{claim:ci1}
		and the definition of $x^{(k)}$ and $y^{(k)}$.
		Using Item~\ref{lemma:tom:HamDis} of Lemma~\ref{lemma:tom},
		we can deduce that for $j \in \ZZ\setminus \{0\}$,
		\begin{align}
			n-K \cdot \Ham\big((\sigma^jx^{(k)})_{[0,n-1]},u^{(k)}\big)
				&< n \Big(1 - \frac{K}{2}(1-\delta) \Big) \;,\\
			n-K \cdot \Ham\big((\sigma^jx^{(k)})_{[0,n-1]},v^{(k)}\big)
				&< n \Big(1 - \frac{K}{2}(1-\delta) \Big) \;,\\
			n-K \cdot \Ham\big((\sigma^jy^{(k)})_{[0,n-1]},u^{(k)}\big)
				&< n \Big(1 - \frac{K}{2}(1-\delta) \Big) \;,\\
			n-K \cdot \Ham\big((\sigma^jy^{(k)})_{[0,n-1]},v^{(k)}\big)
				&< n \Big(1 - \frac{K}{2}(1-\delta) \Big) \;.
		\end{align}
		Recalling $K = 16$ and $\delta = \nicefrac{1}{2}$, note that
		$1 - \frac{K}{2}(1-\delta)<0$. Therefore,
		\begin{align}
			\Phi^{(k)}(\sigma^jx^{(k)})
				& =
					\begin{multlined}[t]
						\max\left\{ 0, n-K\cdot\Ham\big((\sigma^jx^{(k)})_{[0,n-1]},u^{(k)}\big)  \right\} \\
			     		- \max\left\{ 0, n-K\cdot\Ham\big((\sigma^jx^{(k)})_{[0,n-1]},v^{(k)}\big)  \right\}
					\end{multlined} \\
				& = 0 \;,\\
			\Phi^{(k)}(\sigma^jy^{(k)})
				& = 
					\begin{multlined}[t]
						\max\left\{ 0, n-K\cdot\Ham\big((\sigma^jy^{(k)})_{[0,n-1]},u^{(k)}\big)  \right\} \\
						- \max\left\{ 0, n-K\cdot\Ham\big((\sigma^jy^{(k)})_{[0,n-1]},v^{(k)}\big)  \right\} 
					\end{multlined} \\
				& = 0 \;,
		\end{align}
		for each $j\in\ZZ\setminus\{0\}$, from which it follows that
		\begin{align}
			\sum_{j \in \ZZ\setminus \{0\}} \big[\Phi^{(k)}(\sigma^jy^{(k)}) - \Phi^{(k)}(\sigma^jx^{(k)})\big]
				&= 0.
		\end{align}
		Hence $\psi_k(x^{(k)},y^{(k)}) = -2n$.
	\end{proof}
	
	It follows that Conditions~\ref{con:kozlov_is_annoying}\ref{con:kozlov_is_annoying:dual-norm-small}--\ref{con:kozlov_is_annoying:evaluation-large} are satisfied, and therefore the map $T\colon\banach{B}_\NS(\Omega) \to \banach{B}_{\Sull}(\Omega)$ is not surjective. This completes the proof of Theorem~\ref{thm:Kozlov_is_annoying}.

\section{Non-shift-invariant norm-summable representations (Kozlov's theorem)}
\label{sec:main:Kozlov}

In this section we present two extended versions of Kozlov's theorem,
the first is for Markovian cocycles (Theorem~\ref{thm:markov-cocycle:interaction:non-invariant}) and the second for continuous cocycles (Theorem~\ref{thm:gibbs-cocycle:interaction:non-invariant}). The latter is equivalent to Theorem~\ref{thm:Kozlov2_INTRO} but stated in the formalism of continuous cocycles.
The proof of the latter is similar to that of the former but uses
approximations.  As mentioned in the introduction, Kozlov~\cite[two paragraphs before Theorem 3]{Koz77}
stated a similar result, but without proof.

\subsection{Finite-range interactions for Markov cocycles}
\label{sec:Kozlov:Markov}

The famous Hammersley--Clifford Theorem \cite{HamCli68,Spi71,Ave72} deals with Markovian specifications
compatible with a given locally-finite graph whose vertices are $\Sites$. Assuming the specification is \emph{strictly  positive},
it implies the existence of a compatible finite-range interaction, which is furthermore supported on cliques
of the corresponding graph. It is known that the positivity assumption in the Hammersley--Clifford Theorem
cannot be completely removed, although it can be somewhat relaxed (see for instance \cite{ChaMey16, Cha17} and the references therein). However, by forgetting about the graph structure,
we can prove a completely general statement about the existence of finite-range interactions for Markov cocycles
on any configuration space satisfying the TMP.

\begin{theorem}[Finite-range interactions for Markov cocycles]
	\label{thm:markov-cocycle:interaction:non-invariant}
	Let $\Omega\subseteq\Sigma^\Sites$ be a configuration space over a countable set of sites~$\Sites$
	and assume that $\Omega$ satisfies the TMP.
	Then, every Markov cocycle on $\relation{T}(\Omega)$
	is generated by a finite-range interaction.	
\end{theorem}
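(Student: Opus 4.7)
The plan is an inductive construction along an enumeration $\Sites = \{s_1, s_2, \ldots\}$, with $A_n := \{s_1, \ldots, s_n\}$. At each step $n$, I invoke the TMP together with the Markov property of $\psi$ to pick a finite memory set $B_n \supseteq A_n$ that works simultaneously for $\Omega$ and $\psi$, and I construct a finite-range interaction $\Phi^{(n)}$, supported on shapes $C$ that contain the new site $s_n$ and sit inside $B_n$, so that the cumulative interaction $\Phi^{[n]} := \sum_{k \le n}\Phi^{(k)}$ induces a cocycle $\psi_{\Phi^{[n]}}$ agreeing with $\psi$ on all of $\relation{T}_{A_n}(\Omega)$. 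Since $\relation{T}(\Omega) = \bigcup_n \relation{T}_{A_n}(\Omega)$, the limit $\Phi := \sum_n \Phi^{(n)}$ will then satisfy $\psi_\Phi = \psi$. The support condition $s_n \in C$ is what delivers the finite-range property: for any shape $C$ only finitely many indices $n$ (those with $s_n \in C$) can contribute to $\Phi_C$, and careful control of the shapes that actually appear---exploiting the fact that the Markov neighborhood of a single site is bounded independently of $n$---prevents infinitely many shapes from touching any given site.

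For the inductive step I consider the error cocycle $\eta_n := \psi - \psi_{\Phi^{[n-1]}}$, which vanishes on $\relation{T}_{A_{n-1}}(\Omega)$ by induction and is again a Markov cocycle with memory $B_n$. The Markov property lets me represent $\eta_n$ on $\relation{T}_{A_n}(\Omega)$ by a potential $H_n \colon \Lang_{B_n}(\Omega) \to \RR$: for each admissible boundary $u \in \Lang_{B_n \setminus A_n}(\Omega)$, fix a reference completion $w^*(u) \in \Lang_{A_n | u}(\Omega)$ (nonempty by the TMP) and normalize $H_n(w^*(u) \lor u) := 0$; then set $H_n(w) := \eta_n(\widehat{w^*(u) \lor u}, \widehat w)$ for any $w \in \Lang_{B_n}(\Omega)$ with $w_{B_n \setminus A_n} = u$, where $\widehat{\cdot}$ denotes any admissible extension to $\Omega$ (the choice being immaterial because $\eta_n$ is Markov with memory $B_n$). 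One checks $\eta_n(x,y) = H_n(y_{B_n}) - H_n(x_{B_n})$ for every $(x,y) \in \relation{T}_{A_n}(\Omega)$. A finite Möbius-style inversion on $H_n$, singling out subsets of $B_n$ that contain $s_n$, produces interaction values $\Phi^{(n)}_C$ supported exactly on such shapes; the inductive vanishing of $\eta_n$ on $\relation{T}_{A_{n-1}}(\Omega)$ forces the contributions from subsets $C \not\ni s_n$ to cancel.

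The main obstacle is carrying out the Möbius-type inversion in a configuration space with hard constraints. On a full shift, the classical Hammersley--Clifford argument sets each coordinate to a global reference symbol to read off interaction coefficients, but here arbitrary combinations of local symbols may be inadmissible, so the inversion must remain inside the space of admissible patterns. The resolution is to use the boundary-dependent reference completions $w^*(u)$ in place of a global vacuum, invoking the TMP (via the memory set $B_n$) and the Markov property of $\eta_n$ to guarantee that the inversion is well-defined and produces an interaction with the claimed localized support; one also needs to verify that the supports of $\Phi^{(n)}$ can be kept inside a bounded-size neighborhood of $s_n$ so that any fixed site receives contributions from only finitely many $n$. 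Once these points are settled, the identity $\psi_\Phi = \psi$ on $\relation{T}(\Omega)$ is immediate and finite range follows from the layered support structure of the $\Phi^{(n)}$.
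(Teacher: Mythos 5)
Your high-level plan matches the paper's: induct along a cofinal chain $A_1\subsetneq A_2\subsetneq\cdots$, at each step extend the matching of $\psi_{\Phi^{[n]}}$ with $\psi$ from $\relation{T}_{A_{n-1}}(\Omega)$ to $\relation{T}_{A_n}(\Omega)$ using that the error cocycle $\eta_n$ vanishes on $\relation{T}_{A_{n-1}}(\Omega)$, and arrange the support of the increment so that in the limit each shape receives only finitely many contributions. However, the proposal has a genuine gap exactly where you flag ``the main obstacle'': the M\"obius-style inversion is never actually carried out, and without it the construction of the increments $\Phi^{(n)}_C$ is undefined. In a constrained configuration space there is no vacuum symbol, so the classical inclusion--exclusion on subsets of $B_n$ is unavailable. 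Your boundary-dependent references $w^*(u)$ do let you define a potential $H_n$ on $\Lang_{B_n}(\Omega)$, but they do not produce a coherent decomposition $H_n = \sum_C \Phi^{(n)}_C$ over varying $C$: the reference completions on different shapes need not be compatible, and you give no argument that any candidate inversion telescopes back to $H_n$. Moreover, the justification for finite range you offer --- ``the Markov neighborhood of a single site is bounded independently of $n$'' --- is not available here: there is no shift-invariance hypothesis (so the memory of $\{s_n\}$ need not be uniformly bounded over $n$), and the memory of $\eta_n = \psi - \psi_{\Phi^{[n-1]}}$ can grow with the supports of the earlier increments. The property that actually controls finite range is that the increments avoid $A_{n-1}$ (which, together with $C\cap A_n\neq\varnothing$, forces $s_n\in C$), after which cofinality of the chain does the rest; your proposal gestures at this but relies on the wrong mechanism.

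The paper avoids the M\"obius inversion entirely by placing the whole increment on a \emph{single} shape. Since $\eta_n$ vanishes on $\relation{T}_{A_{n-1}}(\Omega)$ and has a joint memory set $D_n \supseteq A_n$ (for both $\Omega$ and $\eta_n$), for $(x,y)\in\relation{T}_{A_n}(\Omega)$ the value $\eta_n(x,y)$ depends only on $x_{D_n\setminus A_{n-1}}$ and $y_{D_n\setminus A_{n-1}}$, so $\eta_n$ descends to a cocycle $\Delta_n$ on a finite equivalence relation on $\Lang_{D_n\setminus A_{n-1}}(\Omega)$. Taking any potential $F_n$ generating $\Delta_n$ and setting $\Phi^{(n)}$ equal to $F_n$ on the single shape $D_n\setminus A_{n-1}$ and zero elsewhere yields an increment satisfying both $\psi_{\Phi^{(n)}}=\eta_n$ on $\relation{T}_{A_n}(\Omega)$ and $\Phi^{(n)}_C = 0$ whenever $C\cap A_{n-1}\neq\varnothing$. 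Finite range then follows as you intended (for any $A\Subset\Sites$, choose $n$ with $A\subseteq A_n$; all later increments avoid $A$), without any inversion or bounded-neighborhood claim. The price is that the resulting interaction is ``non-canonical'' --- each step contributes a single possibly large shape --- but that is all the theorem asserts.
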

The proof of Theorem~\ref{thm:markov-cocycle:interaction:non-invariant} will be based on the following lemma.
\begin{lemma}[Partial extension]\label{lem:markov_cocycle_approx_potential}
	Let $\Omega \subseteq\Sigma^\Sites$ be a configuration space with the TMP.
	Let $A \Subset \Sites$ be a finite set of sites and
	$\psi_\ast \colon \relation{T}(\Omega) \to \RR$ a Markov cocycle on $\Omega$ such that
	$\psi_\ast(x,y)=0$ for every $(x,y)\in\relation{T}_A(\Omega)$.
	Then, for every $B \Subset \Sites$, there exists a finite-range interaction
	$\Phi$ such that
	\begin{enumerate}[label={\textup{(\roman*)}}]
		\item \label{item:markov_cocycle_approx_i} $\psi_{\Phi}(x,y)=\psi_\ast(x,y)$
		for every $(x,y)\in\relation{T}_B(\Omega)$, and
		\item \label{item:markov_cocycle_approx_ii} $\Phi_C=0$ whenever $C \cap A \ne \varnothing$.
	\end{enumerate}
\end{lemma}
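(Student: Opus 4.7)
The plan is to reduce the lemma to constructing a single local potential function on $\Omega$ and then promote it to an interaction concentrated on a single shape.

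First, combine the TMP of $\Omega$ with the Markov property of $\psi_\ast$ to choose a finite set $N\Subset\Sites$ with $N\supseteq A\cup B$ that is simultaneously (a) a memory set for $A\cup B$ in $\Omega$, and (b) a memory set for $A\cup B$ with respect to $\psi_\ast$, meaning that for every $(x,y)\in\relation{T}_{A\cup B}(\Omega)$ the value $\psi_\ast(x,y)$ is determined by $(x_N,y_N)$. Such an $N$ is obtained by taking the union of the two finite memory sets provided by the hypotheses, using the fact that enlarging a memory set preserves the property in both senses.

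Next, introduce an auxiliary cocycle $\tilde\psi$ on the equivalence relation on $\Lang_N(\Omega)$ given by agreement on $N\setminus(A\cup B)$: for admissible patterns $u,v$ on $N$ with $u_{N\setminus(A\cup B)}=v_{N\setminus(A\cup B)}$, set
\begin{align}
\tilde\psi(u,v)\isdef \psi_\ast(x,y),
\end{align}
where $(x,y)\in\Omega\times\Omega$ is any pair with $x_N=u$, $y_N=v$, and $x_{\Sites\setminus N}=y_{\Sites\setminus N}$. Existence of such a pair follows from (a), and independence of the choice of $(x,y)$ follows from (b). Since each equivalence class of $\Lang_N(\Omega)$ under this relation is finite, $\tilde\psi$ admits an antiderivative $F^*\colon \Lang_N(\Omega)\to\RR$ satisfying $\tilde\psi(u,v)=F^*(v)-F^*(u)$ on each class. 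The hypothesis that $\psi_\ast$ vanishes on $\relation{T}_A(\Omega)$ forces $\tilde\psi(u,v)=0$ whenever $u,v$ agree on $N\setminus A$, because any associated pair $(x,y)$ then differs at most on $A$ and hence lies in $\relation{T}_A(\Omega)$. Therefore $F^*$ is constant on the fibers of the projection $u\mapsto u_{N\setminus A}$ and descends to a well-defined function $F\colon \Lang_{N\setminus A}(\Omega)\to\RR$.

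Finally, define the interaction by $\Phi(w)\isdef F(w)$ for $w\in\Lang_{N\setminus A}(\Omega)$ and $\Phi(w)\isdef 0$ for admissible patterns of every other shape. This is a finite-range interaction supported on the single shape $N\setminus A$, which is disjoint from $A$, giving property~\ref{item:markov_cocycle_approx_ii}. For property~\ref{item:markov_cocycle_approx_i}, for $(x,y)\in\relation{T}_B(\Omega)\subseteq\relation{T}_{A\cup B}(\Omega)$ every term in the series defining $\psi_\Phi(x,y)$ with $C\neq N\setminus A$ vanishes, leaving
\begin{align}
\psi_\Phi(x,y) = F(y_{N\setminus A})-F(x_{N\setminus A}) = \tilde\psi(x_N,y_N) = \psi_\ast(x,y).
\end{align}
The main obstacle is the construction of $\tilde\psi$: establishing both the existence of a test pair $(x,y)$ with prescribed $N$-patterns and the independence of $\psi_\ast(x,y)$ from that choice requires using the two memory set properties in tandem.
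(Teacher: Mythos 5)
Your proof is correct and uses essentially the same argument as the paper: choose a simultaneous memory set, pass to a cocycle on a finite equivalence relation of patterns, take an antiderivative, and exploit the vanishing of $\psi_\ast$ on $\relation{T}_A(\Omega)$ to make the resulting interaction live on a shape disjoint from $A$. The only difference is cosmetic: you define the auxiliary cocycle on $\Lang_N(\Omega)$ and show the antiderivative descends to $\Lang_{N\setminus A}(\Omega)$, whereas the paper first proves directly that $\psi_\ast$ on $\relation{T}_B(\Omega)$ depends only on the $(D\setminus A)$-restrictions and then defines the cocycle on $\Lang_{D\setminus A}(\Omega)$.
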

\begin{proof}
	Let $D\supseteq A\cup B$ be a large enough finite set such that
	\begin{enumerate}[label={(\alph*)}]
		\item \label{cond:memory_markov:1}
			$D$ is a memory set for $A$ with respect to $\Omega$, witnessing the TMP of $\Omega$,
		\item \label{cond:memory_markov:2}
			$D$ is a memory set for $B$ with respect to $\psi_\ast$,
			witnessing the Markov property of $\psi_\ast$.
	\end{enumerate}
	Let us first argue that for $(x,y)\in\relation{T}_B(\Omega)$, the value of $\psi_\ast(x,y)$
	is uniquely determined by the restrictions of $x$ and $y$ to $D\setminus A$.
	Indeed, let $(x',y')\in\relation{T}_B(\Omega)$ be any other pair such that
	$x'_{D\setminus A}=x_{D\setminus A}$ and
	$y'_{D\setminus A}=y_{D\setminus A}$.  Define
	\begin{align}
		x^{\bullet} \isdef x_{A^\complement} \lor x'_{D}
		\qquad\text{and}\qquad
		& y^{\bullet} \isdef y_{A^\complement} \lor y'_{D} \;,
	\end{align}
	and note that $x^{\bullet},y^{\bullet}\in\Omega$ by property~\ref{cond:memory_markov:1}.
	Then,
	\begin{align}
		\psi_\ast(x',y') &=
			\psi_\ast(x^{\bullet},y^{\bullet}) \\
		&=
			\psi_\ast(x^{\bullet},x) + \psi_\ast(x,y) + \psi_\ast(y, y^{\bullet}) \\
		&=
			\psi_\ast(x,y) \;,
	\end{align}
	where the first equality is by property~\ref{cond:memory_markov:2}, the second is the cocycle equation,
	and the thirds is by assumption
	and the fact that $(x^{\bullet},x), (y, y^{\bullet})\in\relation{T}_A(\Omega)$.
		
	Consider now the equivalence relation $\sim$ on $\Lang_{D\setminus A}(\Omega)$
	defined by declaring $p\sim q$ whenever there exists a pair $(x,y)\in\relation{T}_B(\Omega)$
	such that $x_{D\setminus A}=p$ and $y_{D\setminus A}=q$.
	By the above discussion, $\psi_\ast$ induces a cocycle $\Delta$ on $\sim$, where
	\begin{align}
		\Delta(p,q) &\isdef \psi_\ast(x,y)
	\end{align}
	for some (and hence every) choice of $(x,y)\in\relation{T}_B(\Omega)$
	with $x_{D\setminus A}=p$ and $y_{D\setminus A}=q$.
	Since $\Delta$ is a cocycle on an equivalence relation on a finite set, it is generated by
	a potential $F:\Lang_{D\setminus A}(\Omega)\to\RR$ in the sense that
	\begin{align}
	\label{eq:cocycle:finite:potential}
		\Delta(p,q) &= F(q) - F(p)
	\end{align}
	for every $p,q\in\Lang_{D\setminus A}(\Omega)$ with $p\sim q$.
	Define an interaction $\Phi:\Lang(\Omega)\to\RR$ by
	\begin{align}
		\Phi(w) &\isdef
			\begin{cases}
				F(w)	& \text{if $w\in\Lang_{D\setminus A}(\Omega)$,} \\
				0		& \text{otherwise.}
			\end{cases}
	\end{align}
	By definition, $\Phi_C=0$ whenever $C\neq D\setminus A$, hence condition~\ref{item:markov_cocycle_approx_ii}
	is satisfied.
	Furthermore, $\psi_\Phi(x,y)=\Delta(x_{D\setminus A},y_{D\setminus A})=\psi_\ast(x,y)$
	for every $(x,y)\in\relation{T}_B(\Omega)$, thus condition~\ref{item:markov_cocycle_approx_i}
	is also satisfied.
\end{proof}

\begin{proof}[Proof of Theorem~\ref{thm:markov-cocycle:interaction:non-invariant}]
	Let $\psi$ be a Markov cocycle on $\relation{T}(\Omega)$.
	Pick an arbitrary co-final chain $ A_1\subsetneq A_2\subsetneq \cdots$
	of finite subsets of $\Sites$.
	We will inductively construct a sequence of finite-range interactions $(\Phi^{(n)})_{n=1}^\infty$
	such that
	\begin{enumerate}[label={(\alph*)}]
		\item \label{item:kozlov:markov:proof:I}
			$\psi_{\Phi^{(n)}}(x,y)=\psi(x,y)$ for every $(x,y) \in \relation{T}_{A_n}(\Omega)$, and
		\item \label{item:kozlov:markov:proof:II}
			$\Phi^{(m)}_C = \Phi^{(n)}_C$ for all $m \ge n$ and $C\Subset\Sites$
			such that $C \cap A_{n} \ne \varnothing$.
	\end{enumerate}
	To construct $\Phi^{(1)}$, apply Lemma~\ref{lem:markov_cocycle_approx_potential} to $\psi_\ast\isdef\psi$
	with $A\isdef\varnothing$ and $B\isdef A_1$.
	For $n >1$, assume that $\Phi^{(n-1)}$ has already been constructed as above.
	Apply Lemma~\ref{lem:markov_cocycle_approx_potential} to $\psi_\ast\isdef\psi - \psi_{\Phi^{(n-1)}}$
	with $A \isdef A_{n-1}$ and $B\isdef A_n$.
	Note that by the induction hypothesis $\psi_{\Phi^{(n-1)}}(x,y)=\psi(x,y)$ for every
	$(x,y) \in \relation{T}_{A_{n-1}}(\Omega)$,
	so indeed $\psi_\ast(x,y)=0$ for every  $(x,y) \in \relation{T}_{A_{n-1}}(\Omega)$.
	We thus obtain a finite-range interaction $\delta\Phi^{(n)}$ so that
	$\psi_{\delta\Phi^{(n)}}(x,y)=\psi(x,y)-\psi_{\Phi^{(n-1)}}(x,y)$
	for every $(x,y) \in \relation{T}_{A_n}(\Omega)$
	and $\delta\Phi^{(n)}_C =0$ whenever $C \cap A_{n-1}  \ne \varnothing$.
	Now let
	\begin{align}
		\Phi^{(n)} \isdef \Phi^{(n-1)} + \delta\Phi^{(n)}\;.
	\end{align}
	This completes the inductive construction of the sequence $(\Phi^{(n)})_{n=1}^\infty$.
	We now show that this sequence converges (pointswise) to a finite-range interaction $\Phi$
	that generates $\psi$.
	
	Let $C \Subset \Sites$ be arbitrary.  Since $ A_1\subsetneq A_2\subsetneq \cdots$ is co-final,
	$C\cap A_n\neq\varnothing$ for all sufficiently large~$n$.
	Hence, by property~\ref{item:kozlov:markov:proof:II} above, the sequence $(\Phi^{(n)}_C)_{n=1}^\infty$
	eventually stabilizes.  We define $\Phi_C$ as the eventual value of this sequence.
	In this fashion, we obtain an interaction $\Phi\isdef(\Phi_C)_{C\Subset\Sites}$.
	Let us verify that $\Phi$ is finite-range.
	Indeed, let $A\Subset\Sites$ be arbitrary.  Choose $n$ such that $A\subseteq A_n$.
	Then, by property~\ref{item:kozlov:markov:proof:II},
	$\Phi_C=\Phi^{(n)}_C$ for all $C\Subset\Sites$ such that $C\cap A\neq\varnothing$.
	Since $\Phi^{(n)}$ is finite-range,
	$\Phi^{(n)}_C\neq 0$ for no more than finitely many $C\Subset\Sites$ with $C\cap A\neq\varnothing$.
	It follows that $\Phi$ is finite-range.
	
	Lastly, let $(x,y)\in\relation{T}(\Omega)$.  Choose $n$ large enough
	such that $(x,y)\in\relation{T}_{A_n}(\Omega)$.
	Then, by~\ref{item:kozlov:markov:proof:I}, $\psi_{\Phi^{(n)}}(x,y)=\psi(x,y)$
	and by~\ref{item:kozlov:markov:proof:II}, $\psi_{\Phi}(x,y)=\psi_{\Phi^{(n)}}(x,y)$.
	We conclude that $\psi_\Phi=\psi$. 
\end{proof}

Recall that on a configuration space with the TMP, %
there is a simple bijective correspondence between strictly positive Markovian specifications %
and Markovian cocycles on the asymptotic relation %
(Section~\ref{sec:prelim:specification-cocycle:equivalence}).
Furthermore, only a configuration space that has the TMP admits positive Markovian specifications
(Proposition~\ref{prop:specification:continuous-positive:TMP}).
In general, every (not necessarily positive) Markovian specification has a well-defined
``support'', which carries all the information about the specification, and supports all the measures
consistent with that specification.
Namely, %
given a specification $K$ on a configuration space $\Omega$, define
\begin{align}\label{eq:supp_K_def}
	\supp(K) &\isdef
		\big\{z\in\Omega: \text{$K_A(z,[z_A])>0$ for every $A\Subset\Sites$}\big\} \;,
\end{align}
and call it the \emph{support} of~$K$.
The support of a specification $K$ has measure~$1$ with respect to every probability measure consistent with~$K$.
In particular, $\supp(K)$ is non-empty when $K$ is continuous.
When $K$ is Markovian, $\supp(K)$ is a closed subset of $\Omega$ that satisfies
the TMP, and $K$ induces a strictly positive (Markovian) specification on $\supp(K)$.
Theorem~\ref{thm:markov-cocycle:interaction:non-invariant} thus leads to the following characterization
of Markovian specifications.

\begin{corollary}[Gibbsian representation of arbitrary Markov specifications]\label{cor:Markov_spec_char}
 	A specification $K$ on a configuration space $\Omega$ is Markovian if and only if
 	$\supp(K)$ is a non-empty closed subset of $\Omega$ which has the TMP and the restriction of $K$
 	to $\supp(K)$ is given by a finite range interaction.
\end{corollary}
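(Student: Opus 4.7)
The plan is to deduce the corollary from Theorem~\ref{thm:markov-cocycle:interaction:non-invariant} via the specification--cocycle dictionary of Section~\ref{sec:prelim:specification-cocycle:equivalence}; the two directions are quite different in character.

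For the forward direction, I will assume $K$ is Markovian and use the structural remarks stated just before the corollary to conclude that $\supp(K)$ is a nonempty closed subset of $\Omega$ satisfying the TMP and that $K$ induces a strictly positive Markovian specification $\widetilde{K}$ on the configuration space $\supp(K)$. Because $\supp(K)$ has the TMP, Proposition~\ref{prop:specification-vs-cocycle:TMP:continuity} will produce from $\widetilde{K}$ a Markovian cocycle $\psi\colon\relation{T}(\supp(K))\to\RR$, and then Theorem~\ref{thm:markov-cocycle:interaction:non-invariant}, applied to $\supp(K)$, will furnish a finite-range interaction $\Phi$ with $\psi_\Phi=\psi$. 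Inverting the correspondence~\eqref{eq:specification-vs-cocycle} will then show that $\widetilde{K}$, i.e.\ the restriction of $K$ to $\supp(K)$, is given by the Gibbsian formula~\eqref{Gibbsian} with interaction $\Phi$.

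For the backward direction, I will assume the structural conditions hold with $K|_{\supp(K)}$ generated by a finite-range interaction $\Phi$, and verify the Markov property of $K$ on all of $\Omega$ directly. Given $A\Subset\Sites$, I will pick a finite $B\supseteq A$ that is simultaneously a TMP memory set for $A$ in $\supp(K)$ and contains every $C$ with $C\cap A\neq\varnothing$ on which $\Phi_C$ is not identically zero; this is possible by TMP and the finite range of $\Phi$. For $z\in\supp(K)$, the Gibbsian formula will show that $K_A(z,[w_A])$ depends on $z$ only through $z_{B\setminus A}$. For $z\in\Omega\setminus\supp(K)$, the axioms~\ref{sepc_cond_3} and~\ref{sepc_cond_4} of a specification, together with the fact that whether a completion of $z_{\Sites\setminus A}$ lies in $\supp(K)$ depends only on $z_{B\setminus A}$ (by TMP), will force $K_A(z,\cdot)$ to be a function of $z_{B\setminus A}$ alone. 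Hence $B$ is a memory set for $A$ with respect to $K$.

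The main obstacle is the forward direction, whose content is essentially Theorem~\ref{thm:markov-cocycle:interaction:non-invariant} itself, proved by the inductive construction of finite-range interactions using the partial-extension Lemma~\ref{lem:markov_cocycle_approx_potential}. The backward direction is a straightforward verification once one controls positivity via the TMP of $\supp(K)$ and energy differences via the finite range of $\Phi$; the only delicate point there is propagating the finite-coordinate dependence of $K_A(z,\cdot)$ from $z\in\supp(K)$ to $z\notin\supp(K)$, which is handled by an axiomatic argument rather than any new construction.
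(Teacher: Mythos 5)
Your forward direction follows the paper's intended route exactly: read off from the discussion preceding the corollary that $\supp(K)$ is nonempty, closed, and has the TMP, pass to the Markovian cocycle on $\supp(K)$ via Proposition~\ref{prop:specification-vs-cocycle:TMP:continuity}, and apply Theorem~\ref{thm:markov-cocycle:interaction:non-invariant}. This is the implication the paper regards as nontrivial (the remark after the corollary rephrases only this direction), and your sketch of it is sound.

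The backward direction, however, has a genuine gap, and it is located precisely at the point you flag as ``delicate'': the kernels $K_A(z,\cdot)$ for $z\in\Omega\setminus\supp(K)$ such that $z_{\Sites\setminus A}$ admits no completion inside $\supp(K)$. Two things go wrong. First, the assertion that ``whether a completion of $z_{\Sites\setminus A}$ lies in $\supp(K)$ depends only on $z_{B\setminus A}$ (by TMP)'' is false: the TMP of $\supp(K)$ quantifies over configurations already in $\supp(K)$, whereas $z$ ranges over all of $\Omega$, and two configurations $z,z'\in\Omega$ can agree on an arbitrarily large window while only one of them has an admissible completion of its $(\Sites\setminus A)$-restriction (the other may carry a forbidden pattern far from $A$). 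Second, and more fundamentally, for such a ``totally forbidden'' $z$, axiom~\ref{sepc_cond_4} only relates $K_B(z,[z_B])$, $K_B(z,[z_{B\setminus A}])$ and $K_A(z,[z_A])$, all of which vanish as soon as $B$ is large enough that $z_{\Sites\setminus B}$ admits a completion in $\supp(K)$; the axioms therefore leave $K_A(z,\cdot)$ essentially unconstrained on this part of $\Omega$. Concretely, one can build a specification $K$ on $\{\symb{0},\symb{1}\}^{\ZZ}$ whose support is the one-dimensional hard-core shift $\Omega_\hardcore$ and whose restriction to the support is the uniform specification (so your hypotheses hold with $\Phi\equiv 0$), but whose kernels on the totally-forbidden configurations are made to depend on a tail event such as ``$z$ contains infinitely many $\symb{1}$s''; such a $K$ satisfies every condition on the right-hand side of the corollary yet is not Markovian. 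The ``if'' direction is therefore really a statement about the canonical Gibbsian extension of $K|_{\supp(K)}$ to $\Omega$ (which the paper leaves implicit), not about an arbitrary $K$ meeting the stated conditions, and your proof of that direction would need to build in that canonical choice rather than try to derive it from the specification axioms.
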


\begin{remark}[Restatement of Corollary~\ref{cor:Markov_spec_char}]
	The non-trivial direction of Corollary~\ref{cor:Markov_spec_char} can be rephrased as follows:
	every Markovian specification on a configuration space $\Omega$ is generated by a
	generalized interaction $\Phi=\Phi^\infty + \Phi^\finiterange$ on $\Omega$
	where $\Phi^\infty$ takes only values $0$ and $+\infty$,
	and $\Phi^\finiterange$ is a finite-range interaction.
	Here, ``generalized'' simply means that $\Phi$ is allowed to take value~$+\infty$.
	\hfill\remarkqed
\end{remark}

\subsection{Norm-summable interactions for continuous cocycles}
Our next goal is to prove the main result of this section.
\begin{theorem}[Norm-summable interactions for continuous cocycles]
	\label{thm:gibbs-cocycle:interaction:non-invariant}
	Let $\Omega\subseteq\Sigma^\Sites$ be a configuration space over a countable set of sites~$\Sites$
	and assume that $\Omega$ satisfies the TMP.
	Then, every continuous cocycle on
	$\relation{T}(\Omega)$ is generated by a norm-summable interaction.
\end{theorem}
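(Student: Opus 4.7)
The plan is to inductively construct finite-range interactions $\Phi^{(n)}$ whose generated cocycles approximate $\psi$ increasingly well on an exhausting chain of finite sets, and to assemble the corrections $\delta\Phi^{(n)} \isdef \Phi^{(n)} - \Phi^{(n-1)}$ into a norm-summable interaction $\Phi$ generating $\psi$. This mimics the iterative scheme in the proof of Theorem~\ref{thm:markov-cocycle:interaction:non-invariant}; the new ingredient is a Markovization step that, at each stage, converts the continuous residual cocycle into a Markov cocycle amenable to Lemma~\ref{lem:markov_cocycle_approx_potential}.

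Fix a cofinal chain $\varnothing = A_0 \subsetneq A_1 \subsetneq A_2 \subsetneq \cdots$ exhausting $\Sites$ and, for each $n$, a finite memory set $D_n \supseteq A_n$ witnessing the TMP. Set $\Phi^{(0)} \equiv 0$ and suppose inductively that $\Phi^{(n-1)}$ is a finite-range interaction satisfying $\abs{\psi - \psi_{\Phi^{(n-1)}}} \le 2^{-n+1}$ uniformly on $\relation{T}_{A_{n-1}}(\Omega)$. Consider the continuous residual cocycle $\psi_\ast \isdef \psi - \psi_{\Phi^{(n-1)}}$. Exploiting uniform continuity of $\psi_\ast$ on the compact space $\relation{T}_{A_n}(\Omega)$ (compactness follows from Remark~\ref{rmk:etale-topology}) together with the TMP, one constructs a Markov cocycle $\tilde\psi^{(n)}$ on $\relation{T}(\Omega)$ with memory set $D_n$ (after possibly enlarging $D_n$) satisfying $\abs{\tilde\psi^{(n)} - \psi_\ast} \le 2^{-n-1}$ uniformly on $\relation{T}_{A_n}(\Omega)$. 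Apply Lemma~\ref{lem:markov_cocycle_approx_potential} to $\tilde\psi^{(n)}$ with $A \isdef A_{n-1}$ and $B \isdef A_n$ to obtain a finite-range correction $\delta\Phi^{(n)}$ which represents $\tilde\psi^{(n)}$ on $\relation{T}_{A_n}(\Omega)$ and is supported on the single shape $D_n \setminus A_{n-1}$. Setting $\Phi^{(n)} \isdef \Phi^{(n-1)} + \delta\Phi^{(n)}$ maintains the inductive hypothesis at scale $n$, since on $\relation{T}_{A_n}(\Omega)$ we have $\psi - \psi_{\Phi^{(n)}} = \psi_\ast - \tilde\psi^{(n)}$.

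Norm-summability of $\Phi \isdef \sum_{n \ge 1} \delta\Phi^{(n)}$ then comes almost for free: since each $\delta\Phi^{(n)}$ is supported on the single shape $D_n \setminus A_{n-1}$ and, for any fixed site $s$, one has $s \in A_{n-1}$ for all $n$ past some threshold, only finitely many $n$ satisfy $s \in D_n \setminus A_{n-1}$; thus $\sum_{C \ni s} \norm{\Phi_C}$ reduces to a finite sum. Applied to each $s \in A$, the same observation shows that for any $(x,y) \in \relation{T}_A(\Omega)$ only finitely many $n$ contribute to $\psi_\Phi(x,y)$, the partial sums agreeing with $\psi_{\Phi^{(n)}}(x,y)$; by the inductive bound these converge to $\psi(x,y)$, so $\psi_\Phi = \psi$.

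The main technical obstacle is the Markovization step: building $\tilde\psi^{(n)}$ as a genuine Markov cocycle on all of $\relation{T}(\Omega)$ which approximates $\psi_\ast$ uniformly on $\relation{T}_{A_n}(\Omega)$. The natural approach is to fix, for each admissible pattern on $D_n \setminus A_n$, a canonical admissible completion outside $D_n$ (guaranteed by the TMP), define a continuous ``truncation'' map $\alpha_n \colon \Omega \to \Omega$ by pasting these completions, and take the pullback $(x,y) \mapsto \psi_\ast(\alpha_n(x), \alpha_n(y))$. This pullback is naturally defined only on $\relation{T}_{A_n}(\Omega)$ and must be extended consistently with the cocycle identity to all of $\relation{T}(\Omega)$; the hierarchical structure of the $A_n$ together with the TMP makes such extensions possible. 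Uniform continuity of $\psi_\ast$ on the compact $\relation{T}_{A_n}(\Omega)$ then delivers the quantitative approximation required to drive the induction.
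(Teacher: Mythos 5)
Your scheme cannot work as stated, and the simplest way to see it is to notice what kind of interaction it produces. You require each correction $\delta\Phi^{(n)}$ to be supported on the single shape $D_n \setminus A_{n-1}$, i.e.\ $\delta\Phi^{(n)}_C = 0$ whenever $C \cap A_{n-1} \neq \varnothing$. This gives norm-summability ``for free'' exactly because it gives much more: for any $A\Subset\Sites$, pick $N$ with $A\subseteq A_N$; for every $m>N$ the shape $D_m\setminus A_{m-1}$ is disjoint from $A_N\supseteq A$, so only the finitely many $\delta\Phi^{(m)}$ with $m\le N$ contribute shapes meeting $A$. Hence the limit interaction $\Phi$ is \emph{finite-range}, not merely norm-summable, and $\psi_\Phi$ is a Markov cocycle. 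A continuous cocycle that is not Markov cannot be represented this way, so the construction cannot reproduce $\psi$.

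The same obstruction is visible in the induction itself. For $(x,y)\in\relation{T}_{A_n}(\Omega)$, every $\delta\Phi^{(m)}$ with $m>n$ contributes nothing to $\psi_\Phi(x,y)$ (its unique shape misses $A_n$), so $\psi_\Phi = \psi_{\Phi^{(n)}}$ on $\relation{T}_{A_n}(\Omega)$; the residual there is frozen forever at $\psi_\ast - \tilde\psi^{(n)}$. Moreover, to invoke Lemma~\ref{lem:markov_cocycle_approx_potential} with $A=A_{n-1}$ you must hand it a Markov cocycle that is \emph{exactly zero} on $\relation{T}_{A_{n-1}}(\Omega)$ --- that hypothesis is not optional, since conditions~(i)--(ii) of the lemma force $\psi_{\Phi}$ to vanish on $\relation{T}_A(\Omega)$. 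So $\tilde\psi^{(n)}\equiv 0$ on $\relation{T}_{A_{n-1}}(\Omega)$, and the frozen residual on $\relation{T}_{A_{n-1}}(\Omega)$ is simply $\psi_\ast$ itself, i.e.\ $\psi-\psi_{\Phi^{(n-1)}}$. Your claimed inductive bound $\abs{\psi-\psi_{\Phi^{(n)}}}\le 2^{-n}$ on $\relation{T}_{A_n}(\Omega)$ then already fails on the subset $\relation{T}_{A_{n-1}}(\Omega)$, where the error is inherited and equal to at most $2^{-n+1}$ but certainly not improved to $2^{-n}$. There is no choice of constants that repairs this, because the error on each $\relation{T}_{A_{n-1}}(\Omega)$ is determined once and for all at step $n-1$ by the irreducible Markovization error, and never decreases afterwards.

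The paper sidesteps this by replacing the exact-vanishing requirement with a quantitative one. Lemma~\ref{lem:gibbs-cocycle:interaction:non-invariant:I+II} allows $\Phi_C\neq 0$ for $C\cap A\neq\varnothing$ as long as $\sum_{C:C\cap A\neq\varnothing}\norm{\Phi_C}<3\varepsilon$; this lets later corrections genuinely reduce the residual on $\relation{T}_{A_{n-1}}(\Omega)$ while the cumulative norm near $A_{n-1}$ is controlled by $\sum_n\varepsilon_n<\infty$. That trade-off --- small but nonzero presence near $A_{n-1}$ in exchange for error reduction --- is precisely what your approach discards, and is the actual content the continuous case requires beyond the Markov one.
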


Our proof of Theorem~\ref{thm:gibbs-cocycle:interaction:non-invariant} will be based on the following analog of Lemma~\ref{lem:markov_cocycle_approx_potential}:

\begin{lemma}[Approximate partial extension]
\label{lem:gibbs-cocycle:interaction:non-invariant:I+II}
	Let $\Omega \subseteq\Sigma^\Sites$ be a configuration space with the TMP.
	Let $A \Subset \Sites$ a finite set of sites and $\varepsilon >0$,
	and let $\psi_\ast \colon \relation{T}(\Omega) \to \RR$ be
	a continuous cocycle on $\Omega$ such  that
	$\abs{\psi(x,y)}<\varepsilon$ for every $(x,y)\in\relation{T}_A(\Omega)$.
	Then, for every $B \Subset \Sites$ %
	and $\delta>0$,
	there exists a finite-range interaction $\Phi$ such that
	\begin{enumerate}[label={\rm (\roman*)}]
		\item \label{item:continuous_cocycle_approx_i}
			$\abs[\big]{\psi_{\Phi}(x,y)-\psi_\ast(x,y)}
			<\delta$ for every $(x,y)\in\relation{T}_B(\Omega)$, and
		\item \label{item:continuous_cocycle_approx_ii}
			$\sum_{C:C\cap A\neq\varnothing}\norm{\Phi_C}<3\varepsilon$.
	\end{enumerate}
\end{lemma}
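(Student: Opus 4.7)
The plan is to mimic the proof of Lemma~\ref{lem:markov_cocycle_approx_potential}, replacing the exact Markov factorization used there with a uniform-continuity approximation, and then splitting the resulting finite potential into two pieces so that only the piece supported on shapes intersecting $A$ needs to be controlled by~$\varepsilon$. Without loss of generality I assume $A \subseteq B$ (otherwise enlarge $B$ to $A \cup B$, which only strengthens condition~\ref{item:continuous_cocycle_approx_i}). Using the TMP of $\Omega$ together with uniform continuity of $\psi_\ast$ on the compact space $\relation{T}_B(\Omega)$, I would first choose a finite set $D \supseteq B$ which is simultaneously a memory set for $B$ in $\Omega$ and large enough that any two pairs $(x,y), (x',y') \in \relation{T}_B(\Omega)$ with $x_D = x'_D$ and $y_D = y'_D$ satisfy $\abs{\psi_\ast(x,y) - \psi_\ast(x',y')} < \delta$.

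Next I would build a finite ``potential'' $F \colon \Lang_D(\Omega) \to \RR$. For each admissible pattern $q \in \Lang_{D \setminus B}(\Omega)$ I fix a canonical extension $\breve{q} \in \Omega$ with $\breve{q}_{D \setminus B} = q$; for $w \in \Lang_D(\Omega)$ I set $\widetilde{w} \isdef w \lor \breve{q}_{\Sites \setminus D}$ where $q \isdef w_{D \setminus B}$, which is admissible thanks to the memory set property. Setting $F(w) \isdef \psi_\ast(\breve{q}, \widetilde{w})$, the cocycle identity for $\psi_\ast$ immediately yields $F(w_2) - F(w_1) = \psi_\ast(\widetilde{w_1}, \widetilde{w_2})$ whenever $w_1$ and $w_2$ agree on $D \setminus B$. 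The main obstacle, unlike in the Markov case, is that simply declaring $\Phi_D \isdef F$ does not give the norm bound~\ref{item:continuous_cocycle_approx_ii}, because the shape $D$ intersects $A$ while $\norm{F}$ need not be small. To circumvent this, for each $r \in \Lang_{D \setminus A}(\Omega)$ I would fix a canonical admissible filling $w_A^*(r) \in \Sigma^A$ with $w_A^*(r) \lor r \in \Lang_D(\Omega)$, then set $G'(r) \isdef F(w_A^*(r) \lor r)$ and $H(w) \isdef F(w) - G'(w_{D \setminus A})$. The key observation is that the $D$-patterns $w$ and $w_A^*(w_{D \setminus A}) \lor w_{D \setminus A}$ differ only inside $A$, so by construction their canonical extensions also differ only on $A$ and thus form a pair in $\relation{T}_A(\Omega)$; the cocycle identity then gives $\abs{H(w)} < \varepsilon$ uniformly over $w \in \Lang_D(\Omega)$.

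Finally I would take the finite-range interaction $\Phi$ with $\Phi_D \isdef H$, $\Phi_{D \setminus A} \isdef G'$, and $\Phi_C \isdef 0$ for all other $C \Subset \Sites$. Since $D \setminus A$ is disjoint from $A$, only $\Phi_D$ contributes to the sum in condition~\ref{item:continuous_cocycle_approx_ii}, yielding $\sum_{C : C \cap A \neq \varnothing} \norm{\Phi_C} = \norm{H} < \varepsilon < 3\varepsilon$. For any $(x,y) \in \relation{T}_B(\Omega)$, since $A \subseteq B$ we have $x_{D \setminus B} = y_{D \setminus B}$, so the two contributions recombine into $\psi_\Phi(x,y) = F(y_D) - F(x_D) = \psi_\ast(\widetilde{x_D}, \widetilde{y_D})$, which is within $\delta$ of $\psi_\ast(x,y)$ by the uniform-continuity choice of $D$, giving~\ref{item:continuous_cocycle_approx_i}. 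The delicate bookkeeping will be in verifying that the canonical choices $\breve{q}$ and $w_A^*(r)$ can indeed be made compatibly so that the decomposition $F = H + G'(\cdot_{D \setminus A})$ behaves correctly on the relevant equivalence classes of patterns.
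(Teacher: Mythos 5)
Your proof is correct and reaches the same conclusion by a genuinely different route from the paper. The paper's argument runs in two steps using two separate memory sets $D_1$ and $D_2$: it first builds an interaction $\Phi^1$, supported only on the shape $D_1\setminus A$ (disjoint from $A$), that $3\varepsilon$-approximates $\psi_\ast$ on $\relation{T}_B(\Omega)$ by way of a cocycle $\Delta_1$ on a finite equivalence relation over $\Lang_{D_1\setminus A}(\Omega)$; it then builds a correction $\Phi^2$, supported on $D_2$ (which does meet $A$), driving the error down to $\delta$, and the bound $\norm{\Phi^2_{D_2}}<3\varepsilon$ needed for condition~\ref{item:continuous_cocycle_approx_ii} comes from the fact that $\psi_\ast-\psi_{\Phi^1}$ is already bounded by $3\varepsilon$. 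You instead use a single set $D$, playing simultaneously the role of a memory set for $B$ and the radius of $\delta$-uniform-continuity, build one exact potential $F$ on $\Lang_D(\Omega)$ with $F(w_2)-F(w_1)=\psi_\ast(\widetilde{w_1},\widetilde{w_2})$ whenever $w_1,w_2$ agree on $D\setminus B$, and then split it as $F(w)=H(w)+G'(w_{D\setminus A})$, pushing all the mass onto the shape $D\setminus A$ except for a remainder $H$ whose sup-norm is controlled directly by the hypothesis $\abs{\psi_\ast}<\varepsilon$ on $\relation{T}_A(\Omega)$. This one-pass splitting avoids the intermediate cocycle-on-a-finite-set construction and even yields the slightly better bound $\varepsilon$ in place of $3\varepsilon$ in condition~\ref{item:continuous_cocycle_approx_ii}. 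The ``delicate bookkeeping'' you flag at the end is in fact unproblematic: since $\breve{q}$ depends only on $w_{D\setminus B}$, $w_A^*$ depends only on $w_{D\setminus A}$, and $A\subseteq B$, the patterns $w$ and $w_A^*(w_{D\setminus A})\lor w_{D\setminus A}$ automatically share the same $q$-part, so their canonical extensions $\widetilde{w}$ and $\widetilde{w'}$ agree outside $D$ as well as on $D\setminus A$, making the pair land in $\relation{T}_A(\Omega)$ exactly as required.
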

\begin{proof}
	We construct the desired interaction in two steps.
	First, we construct an interaction $\Phi^1$
	that $(3\varepsilon)$-approximates $\psi$ on $\relation{T}_B(\Omega)$
	and satisfies $\Phi^1_C=0$ whenever $C\cap A=\varnothing$.
	Then, we enhance the approximation to find an interaction of the form $\Phi=\Phi^1+\Phi^2$
	satisfying conditions~\ref{item:continuous_cocycle_approx_i} and~\ref{item:continuous_cocycle_approx_ii}.

	For the proof, we choose a canonical element $z\in[r]$ for each $r\in\Lang(\Omega)$.
	Without loss of generality, we assume that $A\subseteq B$.
	
	For the first step, let $D_1\supseteq B$ be a large enough finite set such that
	\begin{enumerate}[label={(1-\alph*)}]
		\item \label{cond:memory:1}
			$D_1$ is a memory set for $B$ with respect to $\Omega$, witnessing the TMP of $\Omega$,
		\item \label{cond:continuity:1}
			$\abs[\big]{\psi_\ast(x',y')-\psi_\ast(x,y)}<\varepsilon$
			for every $(x,y),(x',y')\in\relation{T}_B(\Omega)$ satisfying
			$x'_{D_1}=x_{D_1}$ and $y'_{D_1}=y_{D_1}$.
	\end{enumerate}
	As in the proof of Lemma~\ref{lem:markov_cocycle_approx_potential},
	consider the equivalence relation $\simI$ on $\Lang_{D_1\setminus A}(\Omega)$
	where $p\simI q$ if and only if $p = x_{D_1\setminus A}$ and $q = y_{D_1\setminus A}$
	for some $(x,y)\in\relation{T}_B(\Omega)$.
	We shall construct a cocycle $\Delta_1$ on $\simI$ such that
	\begin{align}
		\abs[\big]{\Delta_1(x_{D_1\setminus A},y_{D_1\setminus A}) - \psi_\ast(x,y)} &< 3\varepsilon
	\end{align}
	for every $(x,y)\in\relation{T}_B(\Omega)$.  Since $\Delta_1$ is a cocycle on an equivalence relation
	on a finite set, it is generated by a potential $F_1\colon\Lang_{D_1\setminus A}(\Omega)\to\RR$ in the sense
	of~\eqref{eq:cocycle:finite:potential}.  Define $\Phi^1:\Lang(\Omega)\to\RR$ by
	\begin{align}
		\Phi^1(w) &\isdef
			\begin{cases}
				F_1(w)	& \text{if $w\in\Lang_{D_1\setminus A}(\Omega)$,} \\
				0		& \text{otherwise.}
			\end{cases}
	\end{align}
	Clearly, $\Phi^1_C=0$ unless $C=D_1\setminus A$, and
	\begin{align}
		\abs[\big]{\psi_{\Phi^1}(x,y) - \psi_\ast(x,y)} &=
			\abs[\big]{\Delta_1(x_{D_1\setminus A},y_{D_1\setminus A}) - \psi_\ast(x,y)} < 3\varepsilon
	\end{align}
	for every $(x,y)\in\relation{T}_B(\Omega)$.
	
	Let us now construct $\Delta_1$.  Given $p,q\in\Lang_{D_1\setminus A}(\Omega)$, define
	\begin{align}
		\Delta_1(p,q) &\isdef \psi_\ast(z_{D_1^\complement} \lor p\lor u_A, z_{D_1^\complement}\lor q\lor v_A)
	\end{align}
	where $z$ is the canonical element of $[p_{D_1\setminus B}]=[q_{D_1\setminus B}]$,
	and $u$ and $v$ are respectively the canonical elements of $[p_{D_1\setminus A}]$ and $[q_{D_1\setminus A}]$.
	That $z_{D_1^\complement} \lor p\lor u_A$ and $r_{D_1^\complement}\lor q\lor v_A$ belong to $\Omega$
	is guaranteed by property~\ref{cond:memory:1}.
	Now, let $(x,y)\in\relation{T}(\Omega)$ be such that $x_{D_1\setminus A}=p$ and $y_{D_1\setminus A}=q$.
	From the cocycle equation and the triangle inequality, we get
	\begin{align}
		\abs[\big]{\Delta_1(p,q) - \psi_\ast(x,y)} &=
			\abs[\big]{\psi_\ast(z_{D_1^\complement} \lor p\lor u_A, z_{D_1^\complement}\lor q\lor v_A)
				- \psi_\ast(x,y)} \\
		&\leq
			\begin{multlined}[t]
				\abs[\big]{\psi_\ast(z_{D_1^\complement}\lor x_{D_1}, z_{D_1^\complement}\lor y_{D_1}) - \psi_\ast(x,y)} \\
				+ \abs[\big]{\psi_\ast(z_{D_1^\complement}\lor x_{D_1}, z_{D_1^\complement}\lor p\lor u_A)}
				+ \abs[\big]{\psi_\ast(z_{D_1^\complement}\lor y_{D_1}, z_{D_1^\complement}\lor q\lor v)}
			\end{multlined} \\
		&< 3\varepsilon \;.
	\end{align}
	The last inequality is by property~\ref{cond:continuity:1} and the hypothesis of the lemma.
	The fact that $z_{D_1^\complement}\lor x_{D_1}, z_{D_1^\complement}\lor y_{D_1}\in\Omega$
	is again by property~\ref{cond:memory:1}.

	For the second step, let $D_2\supseteq B$ be a large enough finite set such that
	\begin{enumerate}[label={(2-\alph*)}]
		\item \label{cond:memory:2}
			$D_2$ is a memory set for $B$ with respect to $\Omega$, witnessing the TMP of $\Omega$,
		\item \label{cond:continuity:2}
			for every $(x,y),(x',y')\in\relation{T}_B(\Omega)$ satisfying $x'_{D_2}=x_{D_2}$
			and $y'_{D_2}=y_{D_2}$,	we have
			\begin{align}
				\abs[\big]{(\psi_\ast-\psi_{\Phi^1})(x',y')-(\psi_\ast-\psi_{\Phi^1})(x,y)} &< \delta \;.
			\end{align}
	\end{enumerate}
	This time consider the equivalence relation $\simII$ on $\Lang_{D_2}(\Omega)$
	where $p\simII q$ if and only if $p = x_{D_2}$ and $q = y_{D_2}$ for some $(x,y)\in\relation{T}_B(\Omega)$.
	For $p,q\in\Lang_{D_2}(\Omega)$ satisfying $p\simII q$, define
	\begin{align}
		\Delta_2(p,q) &\isdef (\psi_\ast-\psi_{\Phi^1})(z_{D_2^\complement}\lor p, z_{D_2^\complement}\lor q) \;,
	\end{align}
	where $z$ is the canonical element of $[p_{D_2\setminus B}]=[q_{D_2\setminus B}]$.
	That $z_{D_2^\complement}\lor p, z_{D_2^\complement}\lor q\in \Omega$ is by property~\ref{cond:memory:2}.
	Clearly, $\Delta_2$ is a cocycle on $\simII$, and for every $(x,y)\in\relation{T}_B(\Omega)$,
	\begin{align}
		\abs[\big]{\Delta_2(x_{D_2},y_{D_2}) - (\psi_\ast-\psi_{\Phi^1})(x,y)} &< \delta
	\end{align}
	by property~\ref{cond:continuity:2}.
	Let $F_2\colon\Lang_{D_2}(\Omega)\to\RR$ be a potential generating $\Delta_2$
	in the sense of~\eqref{eq:cocycle:finite:potential}.  Note that we can choose $F_2$
	in such a way that
	\begin{align}
		\sup_{p}\abs{F_2(p)} &\leq \sup_{(p,q)}\abs{\Delta_2(p,q)} < 3\varepsilon \;.
	\end{align}
	Define $\Phi^2:\Lang(\Omega)\to\RR$ by
	\begin{align}
		\Phi^2(w) &\isdef
			\begin{cases}
				F_2(w)	& \text{if $w\in\Lang_{D_2}(\Omega)$,} \\
				0		& \text{otherwise,}
			\end{cases}
	\end{align}
	Clearly, $\norm{\Phi^2_C}<3\varepsilon$ when $C=D_2$ and $\norm{\Phi^2_C}=0$ otherwise.
	Furthermore,
	\begin{align}
		\abs[\big]{\psi_{\Phi^2}(x,y) - (\psi_\ast-\psi_{\Phi^1})(x,y)} &=
			\abs[\big]{\Delta_2(x_{D_1},y_{D_1})
				- (\psi_\ast-\psi_{\Phi^1})(x,y)} < \delta
	\end{align}
	for every $(x,y)\in\relation{T}_B(\Omega)$.
	
	The interaction $\Phi\isdef\Phi^1 + \Phi^2$ satisfies conditions~\ref{item:continuous_cocycle_approx_i}
	and~\ref{item:continuous_cocycle_approx_ii}.	
\end{proof}

The next proof is analogous to that of Theorem~\ref{thm:markov-cocycle:interaction:non-invariant}, with Lemma~\ref{lem:gibbs-cocycle:interaction:non-invariant:I+II} replacing Lemma~\ref{lem:markov_cocycle_approx_potential}.

\begin{proof}[Proof of Theorem~\ref{thm:gibbs-cocycle:interaction:non-invariant}]
	Let $\psi$ be a continuous cocycle on $\relation{T}(\Omega)$.
	Pick an arbitrary co-final chain $ A_1\subsetneq A_2\subsetneq \cdots$
	of finite subsets of $\Sites$, and a decreasing sequence $(\varepsilon_n)_{n=1}^\infty$
	of positive real numbers such that $\sum_n\varepsilon_n<\infty$.
	We will inductively construct a sequence of finite-range interactions $(\Phi^{(n)})_{n=1}^\infty$
	such that
	\begin{enumerate}[label={(\alph*)}]
		\item \label{item:kozlov:proof:I}
			$\abs[\big]{\psi_{\Phi^{(n)}}(x,y) - \psi(x,y)}<\varepsilon_n$
			for every $(x,y) \in \relation{T}_{A_n}(\Omega)$, and
		\item \label{item:kozlov:proof:II}
			$\sum_{C:C\cap A_{n-1}}\norm{\Phi^{(n)}_C - \Phi^{(n-1)}_C}\leq 3\varepsilon_n$.
	\end{enumerate}
	To construct $\Phi^{(1)}$, apply Lemma~\ref{lem:gibbs-cocycle:interaction:non-invariant:I+II}
	to $\psi_\ast\isdef\psi$
	with $A\isdef\varnothing$, $B\isdef A_1$, $\varepsilon$ arbitrary and $\delta\isdef\varepsilon_1$.
	For $n >1$, assume that $\Phi^{(n-1)}$ has already been constructed as above.
	Apply Lemma~\ref{lem:gibbs-cocycle:interaction:non-invariant:I+II} to
	$\psi_\ast\isdef\psi - \psi_{\Phi^{(n-1)}}$
	with $A \isdef A_{n-1}$, $B\isdef A_n$, $\varepsilon\isdef\varepsilon_{n-1}$ and $\delta\isdef\varepsilon_n$.
	Note that by the induction hypothesis
	$\abs[\big]{\psi_\ast(x,y)}=\abs[\big]{\psi_{\Phi^{(n-1)}}(x,y)-\psi(x,y)}<\varepsilon_{n-1}$
	for every $(x,y) \in \relation{T}_{A_{n-1}}(\Omega)$,
	We thus obtain a finite-range interaction $\delta\Phi^{(n)}$ so that
	$\abs[\big]{\psi_{\delta\Phi^{(n)}}(x,y)- (\psi(x,y)-\psi_{\Phi^{(n-1)}})(x,y)}<\varepsilon_n$
	for every $(x,y) \in \relation{T}_{A_n}(\Omega)$
	and $\sum_{C:C\cap A_{n-1}}\norm{\delta\Phi^{(n)}_C}\leq 3\varepsilon_{n-1}$.
	Now let
	\begin{align}
		\Phi^{(n)} \isdef \Phi^{(n-1)} + \delta\Phi^{(n)}\;.
	\end{align}
	This completes the inductive construction of the sequence $(\Phi^{(n)})_{n=1}^\infty$.
	We now verify that this sequence converges (pointswise) to a norm-summable interaction $\Phi$
	that generates $\psi$.
	
	Let $C \Subset \Sites$ be arbitrary.  Since $ A_1\subsetneq A_2\subsetneq \cdots$ is co-final,
	$C\cap A_n\neq\varnothing$ for all sufficiently large~$n$.
	Hence, by property~\ref{item:kozlov:proof:II} above, the sequence $(\Phi^{(n)}_C)_{n=1}^\infty$
	is Cauchy and thus converges.  We define $\Phi_C$ as the limit of this sequence.
	In this fashion, we obtain an interaction $\Phi\isdef(\Phi_C)_{C\Subset\Sites}$.
	Let us verify that $\Phi$ is norm-summable.
	Indeed, let $A\Subset\Sites$ be arbitrary.  Choose $n$ such that $A\subseteq A_n$.
	Then, by property~\ref{item:kozlov:proof:II},
	\begin{align}
		\sum_{C:C\cap A\neq\varnothing}\norm{\Phi_C} &\leq
			\sum_{C:C\cap A\neq\varnothing} \bigg(
				\norm[\big]{\Phi^{(1)}_C}	+ \sum_{n>1}\norm[\big]{\Phi^{(n)}_C - \Phi^{(n-1)}_C}
			\bigg) \\
		&\leq
			\sum_{C:C\cap A\neq\varnothing}\norm[\big]{\Phi^{(1)}_C} + 3\sum_{n>1}\varepsilon_n
		< \infty \;.
	\end{align}
	Lastly, let $(x,y)\in\relation{T}(\Omega)$.  Choose $n$ large enough
	such that $(x,y)\in\relation{T}_{A_n}(\Omega)$.
	Then, by~\ref{item:kozlov:proof:I}, $\psi_{\Phi^{(n)}}(x,y)$ converges to $\psi(x,y)$
	and by~\ref{item:kozlov:proof:II}, $\psi_{\Phi^{(n)}}(x,y)$ converges to $\psi_{\Phi}(x,y)$.
	We conclude that $\psi_\Phi=\psi$. 
\end{proof}

\begin{remark}
	Unlike the Markovian case,
	we do not get a complete characterization of (not necessarily positive) continuous specifications similar to
	\Cref{cor:Markov_spec_char}. This is is because when $K$ is merely a continuous specification on a
	configuration space $\Omega$, the set $\supp(K)$ given by \eqref{eq:supp_K_def} might not be closed.
	\hfill\remarkqed
\end{remark}

\section{Shift-invariant variation-summable representations (Sullivan's theorem)}
\label{sec:sullivan}
In this section we provide a proof of Theorem~\ref{thm:sull2_INTRO}, that is, of Sullivan's theorem on the existence of shift-invariant, variation-summable interactions which represent shift-invariant almost-Markovian specifications. As in previous sections we shall prove the equivalent statement in terms of continuous cocycles (Theorem~\ref{thm:sull2}).
We need the following definition to state the result:

\begin{definition}[Single-site fillability]
	An SFT $\Omega \subseteq\Sigma^{\ZZ^d}$ is called \emph{single-site fillable (SSF)} if there exists
	a finite set of forbidden finite patterns $\mathcal{F}$ defining $\Omega$ such that
	for every $A\Subset\ZZ^d$ and $k \in \ZZ^d \setminus A$, and every pattern $p$ with shape $A$
	that is locally admissible with respect to $\mathcal{F}$, there exists pattern $q$ with shape
	$A\cup\{k\}$ which is locally admissible with respect to $\mathcal{F}$ and such that $q_A = p$.
\end{definition}

An SFT is single-site fillable
if and only if it has a defining finite set of forbidden finite patterns with respect to which every locally-admissible finite pattern is (globally) admissible. Obviously, the full-shift is single-site fillable. Here are some less trivial examples:

\begin{example}[Hard-core shift]\label{ex:hard-core2}
	The hard-core shift $\Omega_\hardcore$ is single-site fillable.
	In fact, every SFT with a safe symbol is single-site fillable.
	\hfill\exampleqed
\end{example}

\begin{example}[$q$-coloring shift]\label{ex:coloring2}
	The shift $\Omega^d_{\coloring(q)}$ consisting of all $q$-colorings of $\ZZ^d$ is single-site fillable
	when $q\geq 2d+1$. Note that this shift does not have a safe symbol.
	\hfill\exampleqed
\end{example}

\begin{theorem}[Shift-invariant variation-summable representation]
\label{thm:sull2}
	Let $\Omega\subseteq\Sigma^{\ZZ^d}$ be an SFT which is single-site fillable and has the pivot property.
	Then, every continuous and shift-invariant cocycle on $\relation{T}(\Omega)$ is generated by
	a shift-invariant variation-summable interaction.
\end{theorem}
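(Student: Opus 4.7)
The plan is to recast Theorem~\ref{thm:sull2} as the surjectivity of the bounded linear map $T\colon \banach{B}_{\VS}(\Omega) \to \banach{B}_{\Sull}(\Omega)$, $\Phi \mapsto \psi_\Phi$, from Proposition~\ref{prop:BLT}. Because $\Omega$ is an SFT it has the TMP, and Proposition~\ref{prop:TMP+pivot->uniform-pivot} upgrades the pivot property to the uniform pivot property, so both spaces are Banach by Propositions~\ref{prop:BanachVS_complete} and~\ref{prop:Sullivan_norm}. By Proposition~\ref{prop:Banach_equivalences} it then suffices to find $R>0$ such that for every $\psi$ with $\normsull{\psi}\le 1$ and every $\varepsilon > 0$ there exists $\Phi \in \banach{B}_{\VS}(\Omega)$ with $\normVS{\Phi}\le R$ satisfying $\normsull{\psi-\psi_\Phi} < \varepsilon$.

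I would transport this density question into $\banach{C}(\Omega)$ via the isometric embedding $F\colon \banach{B}_{\Sull}(\Omega) \to \banach{C}(\Omega)$, $F(\psi)(x) = \psi(x,\zeta_0 x)$ of Lemma~\ref{lem:cont_inv_coycles_embed_linearly_in_cont_function}, whose image is a closed subspace. Setting $f \isdef F(\psi)$, compactness lets one uniformly approximate $f$ by local functions $f^{(n)}$ with bases $\Lambda_n\nearrow\ZZ^d$ chosen inside $F(\banach{B}_\Sull(\Omega))$: concretely, $f^{(n)} = F(\psi^{(n)})$ for shift-invariant Markov cocycles $\psi^{(n)}$, since single-site fillability combined with the uniform pivot property allows the local data on $\Lambda_n$ to be extended to a Markov cocycle on all of $\relation{T}(\Omega)$. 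By a geometric dampening scheme one may further arrange $\sum_n \norm{f^{(n)} - f^{(n-1)}} \le C_0\norm{f}$ for some absolute constant $C_0$.

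The core step is a uniform representation lemma: for every local $g \in F(\banach{B}_\Sull(\Omega))$ with base $B\Subset \ZZ^d$, construct a shift-invariant finite-range interaction $\Phi_g$ satisfying $T(\Phi_g) = F^{-1}(g)$ and $\normVS{\Phi_g}\le C\norm{g}$, where $C$ depends only on $\abs{\Sigma}$ and on structural constants of $\Omega$ but \emph{not} on $B$. This is the simplification of Sullivan's construction advertised in the introduction: decompose $g$ into one-site increments along pivot paths produced by the uniform pivot property, then assemble a shift-invariant interaction by averaging M\"obius-style alternating sums of $g$ evaluated along the single-site fill maps $\zeta_k$ (which are well-defined on all of $\Omega$ thanks to SSF). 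Verifying the identity $T(\Phi_g) = F^{-1}(g)$ on $\relation{T}_A(\Omega)$ for every $A\Subset\ZZ^d$, not merely on $\relation{T}_0(\Omega)$, again uses the pivot property to propagate the equality from single-site moves to all asymptotic pairs.

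The main obstacle is precisely the uniform bound $\normVS{\Phi_g}\le C\norm{g}$. Individual M\"obius coefficients of the construction have sup-norm growing with $\abs{B}$---this is exactly why the analogous norm-summable statement fails, cf.\ Theorem~\ref{thm:Kozlov_is_annoying}---so one must carefully exploit cancellations to bound the single-site variations $\Var_0(\Phi_{g,C})$ in a manner summable to a uniform multiple of $\norm{g}$. Once this lemma is in place, applying it to the telescoping increments $f^{(n)} - f^{(n-1)}$ produces $\delta\Phi^{(n)}$ with $\normVS{\delta\Phi^{(n)}} \le C\norm{f^{(n)} - f^{(n-1)}}$; the partial sums $\sum_{n\le N}\delta\Phi^{(n)}$ are Cauchy in $\banach{B}_{\VS}(\Omega)$ with norms bounded by $CC_0\norm{f}$, and their limit $\Phi$ satisfies $\psi_\Phi = \psi$ by continuity of $T$. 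This supplies the constant $R = CC_0$ and finishes the argument via Proposition~\ref{prop:Banach_equivalences}.
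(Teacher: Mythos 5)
Your skeleton is the same as the paper's up to the reduction: recast the theorem as surjectivity of $T\colon\banach{B}_\VS(\Omega)\to\banach{B}_\Sull(\Omega)$ and then use Proposition~\ref{prop:Banach_equivalences} to reduce to showing that the image of a ball of some fixed radius $R$ is dense in the unit ball of $\banach{B}_\Sull(\Omega)$. After that point your route diverges from the paper's and leaves the essential work unproven.

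The paper's Proposition~\ref{prop:approx} achieves the density in a single step: for each $n$ it manufactures, directly from the cocycle $\psi$, a shift-invariant finite-range interaction $\Phi^n$ supported on translates of $F_n$ via the fill-in map $z(\cdot,n)$ of Lemma~\ref{lemma:brayatan} (which uses SSF to patch the box $F_n$ to a fixed reference $w\in\Omega$), and the bound $\normVS{\Phi^n}\le 3\normsull{\psi}$ falls out by separating interior variations (each at most $2\normsull{\psi}$) from boundary variations (each at most a constant $K$ depending only on $\Omega$ and $\psi$) and averaging over $F_n$. Crucially, nothing is represented \emph{exactly}: $\psi_{\Phi^n}$ only approximates $\psi$ in $\normsull$, and that suffices.

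Your plan instead rests on two unproved claims, and each one is a genuine gap. First, you assert that $F(\psi)$ can be uniformly approximated by \emph{local} functions $f^{(n)}$ lying \emph{inside} $F(\banach{B}_\Sull(\Omega))$, i.e.\ that shift-invariant Markov cocycles are dense in $\banach{B}_\Sull(\Omega)$. The map $F$ into $\banach{C}(\Omega)$ is not onto, so taking an arbitrary local approximant of $F(\psi)$ does not produce something of the form $F(\psi')$; you would need an honest approximation lemma saying that prescribed local data on $\Lambda_n$ extends to a shift-invariant Markov cocycle, and you only gesture at this (``single-site fillability combined with the uniform pivot property allows the local data \ldots to be extended''). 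Such a statement is not in the paper and is not obvious; the approximation arguments the paper does have (Lemmas~\ref{lem:markov_cocycle_approx_potential} and~\ref{lem:gibbs-cocycle:interaction:non-invariant:I+II}) are non-shift-invariant and the whole difficulty of the shift-invariant case is that those constructions cannot be directly equivariantized. Second, and more seriously, your ``uniform representation lemma''---that every local $g\in F(\banach{B}_\Sull(\Omega))$ with base $B$ is the image under $T\circ F^{-1}$ of a shift-invariant finite-range interaction $\Phi_g$ with $\normVS{\Phi_g}\le C\norm{g}$ for a constant $C$ independent of $B$---is stated without proof. You yourself identify it as ``the main obstacle,'' observe that the naive M\"obius/Hammersley--Clifford coefficients have sup-norm growing with $\abs{B}$ (which is precisely the phenomenon behind Theorem~\ref{thm:Kozlov_is_annoying}), and then say one must ``carefully exploit cancellations'' without saying how. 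This lemma is not a technicality: it carries the full weight of the theorem, it is logically stronger than what the paper proves (the paper never exactly represents any cocycle, local or not, by an interaction of uniformly controlled $\VS$-norm), and absent a proof the argument is circular. Until both claims are established, the proposal does not constitute a proof.
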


By Examples~\ref{ex:hard-core1} and~\ref{ex:hard-core2}, the hard-core shift $\Omega_\hardcore$
has the pivot property and is single-site fillable.
Similarly, by Examples~\ref{ex:colorings} and~\ref{ex:coloring2},
the shift $\Omega^d_{\coloring(q)}$ of $q$-colorings of $\ZZ^d$ has the pivot property and
is single-site fillable provided that $q \geq 2d+2$.
Therefore, Theorem~\ref{thm:sull2} applies to both these examples.

An equivalent way to state~\Cref{thm:sull2} is to say that whenever $\Omega$ is single-site fillable and satisfies the pivot property, then the map $\Phi \mapsto \psi_\Phi$ from $\banach{B}_{\VS}(\Omega)$ to $\banach{B}_{\Sull}(\Omega)$
is surjective.
Recall from~\Cref{prop:BLT} that this map is a bounded linear transformation. By~\Cref{prop:Banach_equivalences}, in order to show~\Cref{thm:sull2} it suffices to prove that for some finite radius~$R$, the image of the ball of radius $R$ in $\banach{B}_{\VS}(\Omega)$ is dense in the unit ball of
$\banach{B}_{\Sull}(\Omega)$. Thus, in order to prove \Cref{thm:sull2}, it will suffice to prove the following:
\begin{proposition}[Approximation]
\label{prop:approx}
	Let $\Omega$ be an SFT which is single-site fillable and satisfies the pivot property.
	Given $\varepsilon > 0$ and  $\psi \in \banach{B}_{\Sull}(\Omega)$, there is a shift-invariant
	finite-range interaction $\Phi^\varepsilon$ such that:
	\begin{enumerate}[label={\textup{(\roman*)}}]
		\item \label{prop:sullivan:approx:VS-norm}
			$\normVS{ \Phi^\varepsilon}\le 3 \normsull[]{\psi}$.
		\item \label{prop:sullivan:approx:closeness}
			$\normsull[]{\psi_{\Phi^\varepsilon} - \psi} < \varepsilon$.
	\end{enumerate}
	In particular, the image of the ball of radius~$3$ in $\banach{B}_{\VS}$
	under $\Phi \mapsto \psi_\Phi$ is dense in the unit ball of~$\banach{B}_{\Sull}(\Omega)$.
\end{proposition}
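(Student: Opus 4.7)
The plan is to approximate $\psi$ by a cocycle induced by a local function, then convert that local function into a shift-invariant finite-range interaction via a Möbius-type decomposition enabled by single-site fillability (SSF). Using Lemma~\ref{lem:cont_inv_coycles_embed_linearly_in_cont_function}, identify $\psi$ with the continuous function $f = F(\psi) \in \banach{C}(\Omega)$, for which $\|f\|_\infty = \normsull{\psi}$. For a parameter $\varepsilon' > 0$ to be chosen later, pick a finite symmetric box $\Lambda \Subset \ZZ^d$ containing $0$ and the shapes of the forbidden patterns defining the SFT, together with a local function $\tilde f$ depending only on $x_\Lambda$ satisfying $\|f - \tilde f\|_\infty < \varepsilon'$ and $\|\tilde f\|_\infty \le \|f\|_\infty$.

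Next, exploit SSF to fix a canonical extension rule: for any locally admissible pattern $w_S$ with $S \Subset \ZZ^d$, let $\mathrm{ext}(w_S)$ denote the admissible pattern on $\Lambda \cup S$ obtained by iteratively filling each uncovered site in $\Lambda \setminus S$ with the minimal admissible symbol, according to a fixed total order on $\Sigma$ and a fixed enumeration of sites. For each admissible pattern $w$ on a shape $C \subseteq \Lambda$, set
\[
\tilde\Phi(w) \;\isdef\; \sum_{S \subseteq C}(-1)^{|C\setminus S|}\, \tilde f\bigl(\mathrm{ext}(w_S)\bigr),
\]
and extend shift-invariantly by choosing one representative of each translation orbit of shapes contained in some translate of $\Lambda$, defining $\tilde\Phi$ on the representatives via the formula above, and propagating by translation. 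Put $\Phi^\varepsilon \equiv 0$ on all other shapes. The resulting interaction $\Phi^\varepsilon$ is shift-invariant and finite-range, with range at most $\mathrm{diam}(\Lambda)$.

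For~(i), the key observation is that after summing the Möbius formula over all shapes $C \ni 0$, a massive telescoping occurs: only a small number of canonical evaluations of $\tilde f$ survive the cancellation (namely, those corresponding to $x$, $\zeta_0 x$, and their canonical extensions), yielding $\sum_{C \ni 0}\Var_0(\Phi^\varepsilon_C) \le 3\|\tilde f\|_\infty \le 3\normsull{\psi}$. This is the quantitative heart of Sullivan's construction and requires a careful combinatorial argument exploiting the alternating signs in the Möbius inversion. For~(ii), again by Möbius inversion, $\psi_{\Phi^\varepsilon}(x,\zeta_0 x) = \sum_{C \ni 0}[\Phi^\varepsilon_C(\zeta_0 x) - \Phi^\varepsilon_C(x)]$ telescopes into $\tilde f(x) - \tilde f(\zeta_0 x)$ up to error terms bounded by constants (depending on $|\Lambda|$) times $\|f-\tilde f\|_\infty$. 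Since $\psi(x,\zeta_0 x) = f(x)$ and $\psi(\zeta_0 x,\zeta_0(\zeta_0 x)) = f(\zeta_0 x)$, with the pivot property and TMP used to convert the Sullivan-norm comparison $\sup_x|\psi_{\Phi^\varepsilon}(x,\zeta_0 x) - \psi(x,\zeta_0 x)|$ into this local supremum, choosing $\varepsilon'$ small enough delivers $\normsull{\psi_{\Phi^\varepsilon} - \psi} < \varepsilon$.

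The main obstacle is establishing the sharp VS-norm bound with the universal constant $3$: a naive bound on the Möbius formula gives only something like $|\Lambda|\cdot 2^{|\Lambda|}\cdot\|\tilde f\|_\infty$, so one must track delicate cancellations to arrive at a constant independent of $|\Lambda|$. This is precisely where SSF enters as the substitute for a safe symbol: it ensures that the canonical extensions behave consistently under shifts, so that in $\sum_C \Phi^\varepsilon_C(x) - \sum_C\Phi^\varepsilon_C(\zeta_0 x)$ all but three types of terms cancel identically across translates. In Sullivan's original argument on the full shift, the safe symbol played this role; the present generalization relies on SSF for exactly the same cancellation, which is why this hypothesis appears in the statement, while the pivot property is what allows the single-site quantity $\normsull{\psi}$ to control the whole cocycle.
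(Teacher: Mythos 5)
Your approach takes a genuinely different route from the paper. The paper fixes a reference configuration $w\in\Omega$, uses single-site fillability (Lemma~\ref{lemma:brayatan}) to build a continuous interpolation $z(x,n)\in\Omega$ that agrees with $x$ on $F_n$ and with $w$ off $F_{n+N}$, and spreads the single quantity $f_n(x)\isdef\psi(z(x,n),w)$ uniformly over the box by setting $\Phi^n_{k+F_n}(x)\isdef\frac{1}{\abs{F_n}}f_n(\sigma^k x)$. The bound $\normVS{\Phi^n}\le 3\normsull{\psi}$ then comes from an amenability computation: $\normVS{\Phi^n}=\frac{1}{\abs{F_n}}\sum_{k\in F_n}\Var_k(f_n)$, interior sites ($k\in F_{n-N}$) each contribute at most $2\normsull{\psi}$, boundary sites a fixed constant~$K$, and the boundary-to-volume ratio vanishes as $n\to\infty$.

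Your M\"obius-inversion construction, however, cannot deliver condition~(i), and this is a genuine gap. The inclusion--exclusion cancellation applies to the \emph{signed} sum $\sum_{C\subseteq\Lambda}\Phi_C(x)=\tilde f(x)$, which is why your reasoning toward~(ii) is at least plausible; but $\normVS{\Phi^\varepsilon}=\sum_{C\ni 0}\Var_0(\Phi^\varepsilon_C)$ is a sum of \emph{absolute values} (variations), and these do not telescope. Concretely, take the full shift $\{\symb{0},\symb{1}\}^\ZZ$ with safe symbol $\symb{0}$ and $\tilde f(x)=\prod_{i\in\Lambda}(-1)^{x_i}$, which has $\norm{\tilde f}=1$. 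Your formula evaluates to $\Phi_C(x)=\prod_{i\in C}\bigl((-1)^{x_i}-1\bigr)$, so $\Var_0(\Phi_C)=2^{\abs{C}}$ whenever $0\in C$, and already the subsum $\sum_{0\in C\subseteq\Lambda}\Var_0(\Phi_C)=2\cdot 3^{\abs{\Lambda}-1}$. Thus the $\VS$-norm of the M\"obius-inverted interaction grows exponentially in $\abs{\Lambda}$ even when $\norm{\tilde f}$ stays bounded, so no universal constant, let alone~$3$, can be extracted by this route. To salvage the bound one would have to exploit the fact that $\tilde f$ approximates a function of the special form $F(\psi)$, and nothing in the argument does that. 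This is precisely why Sullivan's construction (and its generalization in the paper) averages a single potential over a growing box rather than inverting: the averaging keeps each site's contribution to the $\VS$-norm uniformly controlled by $\normsull{\psi}$.
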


\begin{remark}[Comparison with Sullivan's proof]
	Strictly speaking, Sullivan's original proof of \Cref{thm:sull2} deals only with the case where $\Omega$
	is the full-shift.
	The basic approach of using \Cref{prop:approx} to prove \Cref{thm:sull2} is implicit in \cite{Sul73}.
	However, Sullivan's original proof of \Cref{thm:sull2} seems to use some additional and very special
	properties of the group $\ZZ^d$ in addition to amenability, such as the existence of a left-invariant
	total order and residual finiteness. Our proof below relies only on the amenability of $\ZZ^d$ and
	can be easily adapted to show that the same result holds when
	$\ZZ^d$ is replaced by an arbitrary countable amenable group.
	\hfill\remarkqed
\end{remark}

\begin{remark}[Shift-invariant variation-summable representation: alternative hypothesis]
\label{rem:sullivan:alternative}
	Recall from Remark~\ref{rmk:Sullivan_norm_safe} that for a shift space $\Omega$ with a safe symbol,
	$\banach{B}_{\Sull}(\Omega)$  is a Banach space.   Since such an $\Omega$ satisfies
	the uniform pivot property,
	by Proposition~\ref{prop:BanachVS_complete}, $\banach{B}_{\VS}(\Omega)$ is also a Banach space.
	We claim that in this case the conclusion of
	Theorem~\ref{thm:sull2} still holds, even though $\Omega$ need not be of finite type.  This
	already recovers Sullivan's original result.

	The proof of this follows along the same lines as Proposition~\ref{prop:approx} below,
	but is much simpler.  Here, for a
	continuous cocycle $\psi\colon\relation{T}(\Omega) \to \RR$
	and $n \in \NN$ one defines a finite range interaction $\Phi^{(n)}:\Lang(\Omega) \to \RR$
	which is nonzero only on translates of $F_n \isdef [-n,n]^d \cap \ZZ^d$, namely
	\begin{align}
		\Phi^n_{k + F_n}(x) &\isdef
			\frac{1}{\abs{F_n}}\psi(x_{k+F_n} \vee \diamond^{\ZZ^d \setminus (F_n+k)},\diamond^{\ZZ^d}) \;.
	\end{align}
	Then, for every $\varepsilon >0$ and $n$ sufficiently large,
	$\Phi^\varepsilon \isdef \Phi^{(n)}$ will satisfy the conclusion of
	Proposition~\ref{prop:approx}.
	This interaction appears simpler than the one used by Sullivan~\cite{Sul73}.
	\hfill\remarkqed
\end{remark}

Our proof of  Proposition~\ref{prop:approx}  requires  two technical lemmas.
For a finite set $K \Subset \ZZ^d$, we say that $D \Subset \ZZ^d$ is \emph{$K$-separated} if for any distinct $u,v \in D$ we have $(u+K)\cap(v+K) = \varnothing$ and that $D$ \emph{$K$-covers} a set $F$ if for every $f \in F$ there is $u \in D$ such that $f \in u+K$.

\begin{lemma}\label{lemma:delonepartition}
	Let $K \Subset \ZZ^d$ be a symmetric set ($K = -K$) which contains $0$. For every $F \Subset \ZZ^d$, there is a partition of $F$ of size at most $\abs{K}^2$ such that each element of the partition is $K$-separated.
\end{lemma}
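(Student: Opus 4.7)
The plan is to reduce this to a standard graph-coloring problem. Call two distinct points $u,v \in F$ \emph{conflicting} if $(u+K) \cap (v+K) \neq \varnothing$; equivalently, since $K = -K$, iff $v - u \in (K - K) \setminus \{0\}$. A partition of $F$ into $K$-separated sets is exactly a proper vertex coloring of the graph $G$ on $F$ whose edges are the conflicting pairs.

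First I would observe that the number of conflicting neighbours of any $v \in F$ is at most $\abs{(K-K) \setminus \{0\}}$, because each neighbour $u$ determines, via $u - v$, a distinct nonzero element of $K - K$. Since $K - K = \{k_1 - k_2 : k_1, k_2 \in K\}$ has cardinality at most $\abs{K}^2$, the maximum degree of $G$ is at most $\abs{K}^2 - 1$.

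Next I would apply the greedy coloring algorithm: enumerate the vertices of $F$ in an arbitrary order $v_1, v_2, \ldots, v_N$ and, for each $v_i$, assign it the smallest color in $\{1, 2, \ldots, \abs{K}^2\}$ not already used by its (at most $\abs{K}^2 - 1$) neighbours among $v_1, \ldots, v_{i-1}$. Such a color always exists because there are at most $\abs{K}^2 - 1$ forbidden choices out of $\abs{K}^2$ available colors. The resulting coloring is proper, so each color class is a $K$-separated subset of $F$, and the color classes form the required partition of size at most $\abs{K}^2$.

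There is no serious obstacle here: the only thing to verify carefully is the equivalence between $K$-separation and non-adjacency in $G$, and the bound $\abs{K - K} \leq \abs{K}^2$, both of which are immediate from the definitions. The argument is purely combinatorial and does not use any structure of $\ZZ^d$ beyond that $F$ is a finite set, so it will transfer verbatim to the setting of arbitrary countable groups that one might want for the amenable-group generalization mentioned earlier.
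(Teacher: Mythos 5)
Your proof is correct, but it takes a genuinely different route from the paper. The paper's argument first produces a single maximal $K$-separated subset $D \subseteq F$, observes that maximality forces $D$ to $(K{+}K)$-cover $F$, and then disjointifies the cover $\{(u+D)\cap F : u \in K{+}K\}$ into at most $\abs{K+K} \le \abs{K}^2$ pieces, each of which is $K$-separated because it is a translate of the separated set $D$ intersected with $F$. You instead recast $K$-separation as independence in the conflict graph on $F$, bound the maximum degree by $\abs{(K-K)\setminus\{0\}} \le \abs{K}^2 - 1$, and apply greedy coloring. Both arguments give the same bound; yours is arguably the more standard and transparent one (degree-plus-one coloring), while the paper's is a compact translate-and-disjointify trick that avoids any explicit graph-theoretic framing. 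Both, as you note, use nothing about $\ZZ^d$ beyond the group structure, so both transfer to arbitrary countable groups with only notational changes. One small observation: your argument never actually needs $K=-K$ or $0 \in K$ (the difference set $K-K$ is automatically symmetric and contains $0$), whereas the paper's proof does invoke the symmetry of $K$ to convert ``$f - u \notin K+K$'' into ``$(f+K)\cap(u+K)=\varnothing$''; so your proof is valid under slightly weaker hypotheses.
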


\begin{proof}
	We claim that there is a subset $D \subseteq F$ which is both $K$-separated and that it $(K+K)$-covers~$F$.
	Indeed, let $D$ be a maximal $K$-separated subset of $F$ and suppose it does not $(K+K)$-cover $F$.
	Then there is $f \in F\setminus D$ such that $f \notin u+K+K$ for every $u \in D$. As $K$ is symmetric,
	we have $(f+K) \cap (u+K) = \varnothing$ for every $u \in D$, consequently $D \cup \{f\}$
	is also $K$-separated, contradicting the choice of $D$.
	
	Since $D\subseteq F$ is $K$-separated, for every $u \in \ZZ^d$ we have that $(u+D)\cap F$
	is also $K$-separated. As $D$ is $(K+K)$-covering the union of the sets in the collection
	$\mathcal{U} = \{ (u+D) \cap F : u \in K+K \}$ is~$F$. Making this cover disjoint yields
	a partition of $F$ into $K$-separated sets with at most $\abs{K+K}\leq \abs{K}^2$ elements.
\end{proof}

For the remainder of this section, we set $F_n \isdef [-n,n]^d \cap \ZZ^d$.

\begin{lemma}\label{lemma:brayatan}
	Let $\Omega \subseteq\Sigma^{\ZZ^d}$ be a single-site fillable SFT and $w \in \Omega$.
	There exist a constant $N \in \NN$,  a finite set $\Lambda \Subset \ZZ^d$ and a continuous function
	$z\colon\Omega \times\{n\in\NN: n>N\}\to \Omega$ such that for every $n >N$  the following hold:
	\begin{enumerate}[label={\textup{(\alph*)}}]
		\item \label{item:brayatan:condition:x}
			For every $x \in \Omega$, $z(x,n)_{F_n} = x_{F_n}$.
		\item \label{item:brayatan:condition:w}
			For every $x \in \Omega$,  $z(x,n)_{\ZZ^d\setminus F_{n+N}} = w_{\ZZ^d \setminus F_{n+N}}$.
		\item \label{item:brayatan:condition:interior}
			For every $j \in F_{n-N}$ and $(x,y)\in \relation{T}_{j}(\Omega)$,
			we have $\big(z(x,n),z(y,n)\big) \in \relation{T}_{j}(\Omega)$.
		\item \label{item:brayatan:condition:margin}
			For every $j \in F_{n}\setminus F_{n-N}$ and $(x,y)\in \relation{T}_{j}(\Omega)$,
			then $\big(z(x,n),z(y,n)\big) \in \relation{T}_{j+\Lambda}(\Omega)$.
	\end{enumerate}
\end{lemma}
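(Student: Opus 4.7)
The plan is to set $z(x,n)$ equal to $x$ on $F_n$, equal to $w$ on $\ZZ^d\setminus F_{n+N}$, and determined on the annulus $F_{n+N}\setminus F_n$ by a deterministic greedy completion that exploits the single-site fillability of $\Omega$. Concretely, let $\mathcal{F}$ be a finite defining set of forbidden patterns for $\Omega$ witnessing SSF, and let $r\geq 1$ be such that every pattern of $\mathcal{F}$ has its shape contained in $F_r$. I will set $N\isdef r+1$; this choice ensures the initial partial pattern $x_{F_n}\lor w_{\ZZ^d\setminus F_{n+N}}$ is locally admissible, because no shape of $l_\infty$-diameter at most $r$ can simultaneously touch $F_n$ and $\ZZ^d\setminus F_{n+N}$, which are at $l_\infty$-distance $\geq N>r$. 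Fixing a total order $\prec$ on $\Sigma$ and enumerating the sites of the annulus in an order to be specified below, I will at each site $s$ assign the $\prec$-smallest symbol preserving local admissibility, with the refinement that whenever the reference value $w_s$ is itself locally admissible, it is preferred. By SSF, a valid symbol always exists at each step, so the procedure produces a configuration that is locally admissible everywhere, hence an element of $\Omega$. The resulting $z(x,n)$ depends only on the finite word $x_{F_n}$, making $z$ continuous in $x$, and properties \ref{item:brayatan:condition:x} and \ref{item:brayatan:condition:w} hold by construction.

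Property \ref{item:brayatan:condition:interior} will follow directly from the choice $N>r$. Let $j\in F_{n-N}$ and $(x,y)\in\relation{T}_j(\Omega)$. Then every annulus site $s$ is at $l_\infty$-distance $\geq N>r$ from $j$, so the values within $F_r$ of $s$ coincide for $x$ and $y$. By induction on the filling order, the annulus is filled identically in both runs, and consequently $\big(z(x,n),z(y,n)\big)\in\relation{T}_j(\Omega)$.

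The main obstacle is property \ref{item:brayatan:condition:margin}: for $j\in F_n\setminus F_{n-N}$, I must bound the cascade of changes in the annulus by a fixed finite set $\Lambda$ independent of $n$. The core observation is that a newly filled site can differ between the two runs only if some site within $F_r$ of it currently differs, so the discrepancy region grows by at most an $F_r$-thickening per filling step. A naive layered analysis then gives spread at most $F_{Nr}$ after processing all $N$ layers, but one must still rule out within-layer cascade propagation arising from sequential greedy choices along a single layer. To control this, I would enumerate annulus sites in order of increasing $l_\infty$-distance from $j$ (so distant sites are filled only after the cascade has had a chance to stabilize) and exploit the prefer-$w$ rule: at any site $s$ whose $F_r$-context agrees between the two runs and is consistent with $w$'s neighborhood, the value $w_s$ remains valid in both runs and is chosen in both, cutting off the cascade. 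A careful induction then shows that the discrepancy region is confined to $j+\Lambda$ with $\Lambda\isdef F_M$ for some constant $M$ depending only on $\mathcal{F}$ and $N$. This final inductive argument, interleaving the growth of the cascade with the stabilizing effect of the prefer-$w$ rule, is the technically most delicate step of the proof.
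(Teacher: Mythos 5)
Your setup (choosing $N$ so that $F_n$ and $\ZZ^d\setminus F_{n+N}$ cannot be joined by the shape of a forbidden pattern, greedy filling of the annulus via SSF, continuity of $z$, and properties \ref{item:brayatan:condition:x}--\ref{item:brayatan:condition:interior}) matches the paper in spirit. The gap is in \ref{item:brayatan:condition:margin}, and you correctly identify it as the crux, but the two devices you propose to close it do not work.

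First, you propose to fill the annulus ``in order of increasing $l_\infty$-distance from $j$.'' But $z(x,n)$ is a single configuration, fixed once $x$ and $n$ are given; it must satisfy \ref{item:brayatan:condition:margin} \emph{simultaneously for every} $j\in F_n\setminus F_{n-N}$. You cannot choose a filling order adapted to a particular $j$. Second, the ``prefer-$w$'' refinement does not cut off a cascade: if the $F_r$-contexts of a site $s$ agree between the two runs, then \emph{any} deterministic greedy rule makes the same choice at $s$, so the prefer-$w$ rule adds nothing; and if the contexts disagree, the rule is silent. So nothing you have introduced prevents a difference from propagating sequentially all the way around a shell of the annulus, and the ``careful induction'' you invoke is exactly the missing argument.

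The paper closes this gap with a different structural idea: it partitions the annulus $F_{n+N}\setminus F_n$ into a \emph{bounded} number $\ell\le\abs{K}^2$ of $K$-separated sublayers $A_1,\dots,A_\ell$ (Lemma~\ref{lemma:delonepartition}, with $K=F_{N'}$ containing all forbidden shapes). Since distinct sites in a $K$-separated set cannot both lie in the translate of a forbidden shape, a whole sublayer $A_i$ can be filled ``in parallel'': each site's choice depends only on values from $F_n$, from $\ZZ^d\setminus F_{n+N}$, and from the previously filled sublayers $A_1,\dots,A_{i-1}$—never on other sites of $A_i$. This eliminates within-layer cascading by design, and the discrepancy region then grows by at most $K-K\subseteq F_N$ per sublayer, giving $\Lambda=F_{\ell N}$, uniform in $n$. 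This $K$-separated decomposition is the key mechanism your proposal lacks.
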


\begin{proof}
	Let $\mathcal{F}$ be a finite set of forbidden finite patterns defining $\Omega$
	with respect to which $\Omega$ is single-site fillable.
	Let $K' \Subset \ZZ^d$ be the union of the shapes of every pattern in $\mathcal{F}$. Let $N'$ be
	an integer such that $K' \subseteq F_{N'}$, and let $K \isdef F_{N'}$ and $N \isdef 2N'$.
	Note that $K$ is a symmetric finite subset which contains~$0$ and the support of every pattern
	in~$\mathcal{F}$.
	
	We claim that any $K$-separated set $D$ has the property that any two distinct $a_1,a_2 \in D$ may not belong to the shift of a shape of some $q \in \mathcal{F}$. Indeed, suppose there is $b \in \ZZ^d$ and $s_1,s_2$ in the shape of $q$ such that $a_1 =b+s_1$ and $a_2 = b+s_2$. We get that $a_1-s_1 = a_2-s_2$.  Since $K$ is symmetric, it follows that $(a_1 +K) \cap (a_2 +K) \neq \varnothing$, contradicting the fact that $D$ is $K$-separated.
	
	 Consider $x\in\Omega$ and $n>N$, and let us construct $z(x,n)$
	 (see Figure~\ref{fig:ejemplo_brayan} for an illustration). 
	 By~\Cref{lemma:delonepartition}, there is a partition $\{A_1,A_2,\dots A_{\ell}\}$ of $F_{n+N}\setminus F_{n}$ such that $\ell \leq \abs{K}^2$ and each $A_i$ is $K$-separated. Let us define a finite sequence of locally-admissible patterns $p^0,p^1,\dots p^{\ell}$ such that:
	\begin{enumerate}[label={(\roman*)}]
		\item \label{item:brayatan:i}
			$p^0 = x_{F_{n}}\vee w_{\ZZ^d \setminus F_{n+N}}$.
		\item \label{item:brayatan:ii}
			The shape of $p^k$ is $\ZZ^d \setminus \bigcup_{i > k} A_i$.
		\item \label{item:brayatan:iii}
			For every $1 \leq k \leq \ell$,  the restriction of $p^{k}$
			to $\ZZ^d \setminus \bigcup_{i > k-1} A_i$ is $p^{k-1}$.
	\end{enumerate}
	As $N = 2N'$, for every pair of sites $a\in F_n$ and $b\in \ZZ^d \setminus F_{n+N}$ is $K$-separated and thus $p^0$ is locally admissible. We only need to describe the values of $p^i$ on $A_i$ for $i \geq 1$.
	Let us fix an arbitrary total ordering of $\Sigma$.
	For $a \in A_i$, let us define $p^{i}_a$ as the smallest symbol of $\Sigma$ such that $p^{i-1} \vee p^{i}_a$ is a locally-admissible pattern. The existence of such symbol is guaranteed by the single-site fillability of $\Omega$. Note that the value $p^{i}_a$ only depends upon the values of $p^{i-1}$ in $a+K$
	(see Figure~\ref{fig:ejemplo_brayan2}).
	
	Let us show that $p^i$ is locally admissible. By definition, for each $a \in A_i$ we have that $p^{i-1} \vee p^{i}_a$ is locally admissible.  Therefore, if some $q \in \mathcal{F}$ appears in $p^i$, then its shape must contain at least two coordinates from $A_i$. This is impossible because $A_i$ is $K$-separated.
	
	By property~\ref{item:brayatan:ii}, the shape of $p^{\ell}$ is $\ZZ^d$.
	Let us define $z = z(x,n) = p^{\ell}$. Combining properties~\ref{item:brayatan:i} and~\ref{item:brayatan:iii},
	we have $z_{F_n} = x_{F_n}$ and $z_{\ZZ^d \setminus F_{n+N}} = w_{\ZZ^d \setminus F_{n+N}}$.
	It remains to verify conditions~\ref{item:brayatan:condition:interior}
	and~\ref{item:brayatan:condition:margin}.
	
	Let $j \in F_{n-N}$.  Note that $\{j\} \cup A_i$ is $K$-separated, hence no forbidden pattern can contain $j$ and some $a \in A_i$ simultaneously in its support. This shows that the values of $z$ at the sites in $F_{n+N}\setminus{F_n}$ do not depend upon $x_{j}$.  Therefore, $\big(z(x,n),z(y,n)\big) \in \relation{T}_{j}(\Omega)$
	whenever $(x,y)\in \relation{T}_{j}(\Omega)$.
	
	Let $j \in F_n \setminus F_{n-N}$.  Set $\Lambda_0 \isdef \{j\}$, and for $i \geq 1$ let
	\begin{align}
		\Lambda_i &\isdef \Lambda_{i-1} \cup
			\{ a \in A_i :
				\text{there is $b \in \Lambda_{i-1}$ and $c \in \ZZ^d$ such that
					$\{a,b\} \subseteq c+K$}
			\} \;.
	\end{align}
	Fix $b \in \Lambda_{i-1}$.  If $\{a,b\} \subseteq c+K $, then there are $s_1,s_2 \in K$ such that $a = c+s_1$ and $b = c+s_2$ and thus $a = b-s_2+s_1$. We get $\Lambda_{i+1} \subseteq \Lambda_i +K-K \subseteq\Lambda_i + F_{N}$.  Thus, letting $\Lambda = F_{\ell N}$, we obtain $\Lambda_{\ell} \subseteq j+\Lambda$. Note that $\Lambda_{\ell}$ contains the set of sites in $F_{n+N}\setminus F_{n}$ at which the value of $z$ depends upon $x_{j}$. We find that whenever $(x,y)\in \relation{T}_{j}(\Omega)$, we have $\big(z(x,n),z(y,n)\big) \in \relation{T}_{j+\Lambda}(\Omega)$.
\end{proof}
	
\begin{figure}[h!]
	\centering
	\begin{tikzpicture}[scale = 0.4, >=stealth']
	\begin{scope}[shift = {(0,0)}]
	\draw[fill = black!10] (0,0) rectangle (9,1);
	\draw[fill = black!10] (0,0) rectangle (1,9);
	\draw[fill = black!10] (8,0) rectangle (9,9);
	\draw[fill = black!10] (0,8) rectangle (9,9);
	\draw[fill = black!10] (2,2) rectangle (7,7);
	\node at (0.5,8.5) {$w$};
	\node at (4.5,4.5) {$x$};
	\draw[black!50] (0,0) grid (9,9);
	\end{scope}
	\draw [thick, ->] (9.5,4.5) -- (13.5,4.5);
	\begin{scope}[shift = {(14,0)}]
	\draw[fill = black!10] (0,0) rectangle (9,1);
	\draw[fill = black!10] (0,0) rectangle (1,9);
	\draw[fill = black!10] (8,0) rectangle (9,9);
	\draw[fill = black!10] (0,8) rectangle (9,9);
	\draw[fill = black!10] (2,2) rectangle (7,7);
	\node at (0.5,8.5) {$w$};
	\node at (4.5,4.5) {$x$};
	\foreach \i/\j in {2/1, 4/1, 6/1, 2/7, 4/7, 6/7, 1/2, 1/4, 1/6, 7/2, 7/4, 7/6 }{
		\draw[pattern = north east lines] (\i,\j) rectangle +(1,1);
	}
	\draw[black!50] (0,0) grid (9,9);
	\end{scope}
	\draw [thick, ->] (23.5,4.5) -- (27.5,4.5);
	\begin{scope}[shift = {(28,0)}]
	\draw[fill = black!10] (0,0) rectangle (9,1);
	\draw[fill = black!10] (0,0) rectangle (1,9);
	\draw[fill = black!10] (8,0) rectangle (9,9);
	\draw[fill = black!10] (0,8) rectangle (9,9);
	\draw[fill = black!10] (2,2) rectangle (7,7);
	\node at (0.5,8.5) {$w$};
	\node at (4.5,4.5) {$x$};
	\foreach \i/\j in {2/1, 4/1, 6/1, 2/7, 4/7, 6/7, 1/2, 1/4, 1/6, 7/2, 7/4, 7/6 }{
		\draw[fill = black!10] (\i,\j) rectangle +(1,1);
	}
	\foreach \i/\j in {1/1, 3/1, 5/1, 7/1, 1/7, 3/7, 5/7, 7/7, 1/3, 1/5, 7/3, 7/5 }{
		\draw[pattern = north east lines] (\i,\j) rectangle +(1,1);
	}
	\draw[black!50] (0,0) grid (9,9);
	\end{scope}
	\end{tikzpicture}
	\caption{%
		An illustration for the construction of $z(x,n)$ with $n=2$ in \Cref{lemma:brayatan}.
		Here, $\Omega$ is assumed to be a nearest-neighbour single-site fillable SFT.
		On the left, we start with the pattern which coincides with $x$ in $F_{n}$ and with $w$ outside $F_{n+1}$.
		In the middle picture, we fill every odd site in $F_{n+1} \setminus F_{n}$ with the smallest symbol
		that does not generate a forbidden pattern.
		(This can be done because of single-site fillability).
		Finally, in the right, we fill every even site in $F_{n+1}\setminus F_n$ with the smallest symbol that
		does not generate a forbidden pattern.
	}
	\label{fig:ejemplo_brayan}
\end{figure}
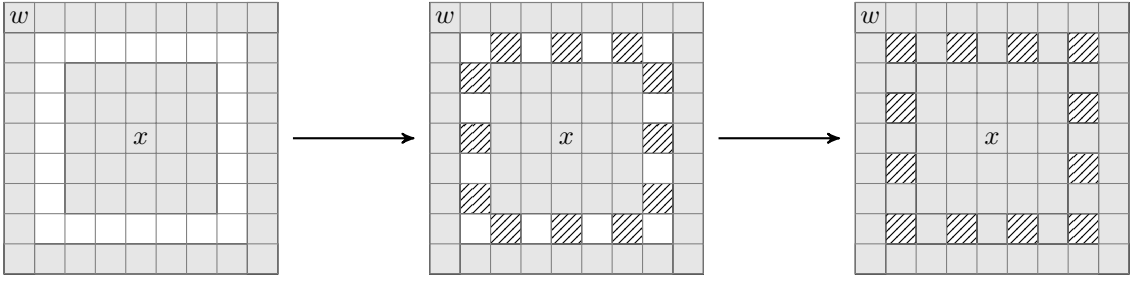

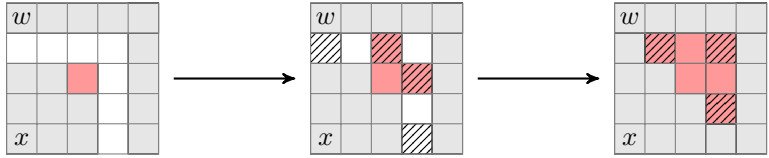
\begin{figure}[h!]
	\centering
	\begin{tikzpicture}[scale = 0.4, >=stealth']
	\begin{scope}[shift = {(0,0)}]
	\draw[fill = black!10] (8,4) rectangle (9,9);
	\draw[fill = black!10] (4,8) rectangle (9,9);
	\draw[fill = black!10] (4,4) rectangle (7,7);
	\draw[fill = red!40] (6,6) rectangle (7,7);
	\node at (4.5,8.5) {$w$};
	\node at (4.5,4.5) {$x$};
	\draw[black!50] (4,4) grid (9,9);
	\end{scope}
	\draw [thick, ->] (9.5,6.5) -- (13.5,6.5);
	\begin{scope}[shift = {(10,0)}]
	\draw[fill = black!10] (8,4) rectangle (9,9);
	\draw[fill = black!10] (4,8) rectangle (9,9);
	\draw[fill = black!10] (4,4) rectangle (7,7);
	\draw[fill = red!40] (6,6) rectangle (7,7);
	\node at (4.5,8.5) {$w$};
	\node at (4.5,4.5) {$x$};
	\foreach \i/\j in { 4/7,  7/4 }{
		\draw[pattern = north east lines] (\i,\j) rectangle +(1,1);
	}
	\foreach \i/\j in {  6/7, 7/6 }{
		\draw[fill = red!40] (\i,\j) rectangle +(1,1);
		\draw[pattern = north east lines] (\i,\j) rectangle +(1,1);
	}
	\draw[black!50] (4,4) grid (9,9);
	\end{scope}
	\draw [thick, ->] (19.5,6.5) -- (23.5,6.5);
	\begin{scope}[shift = {(20,0)}]
	\draw[fill = black!10] (8,4) rectangle (9,9);
	\draw[fill = black!10] (4,8) rectangle (9,9);
	\draw[fill = black!10] (4,4) rectangle (7,7);
	\node at (4.5,8.5) {$w$};
	\node at (4.5,4.5) {$x$};
	\draw[fill = red!40] (6,6) rectangle (7,7);
	\draw[fill = red!40] (7,6) rectangle (8,7);
	\draw[fill = red!40] (6,7) rectangle (7,8);
	\foreach \i/\j in { 4/7,  7/4 }{
		\draw[fill = black!10] (\i,\j) rectangle +(1,1);
	}
	\foreach \i/\j in {5/7, 7/7, 7/5 }{
		\draw[fill = red!40] (\i,\j) rectangle +(1,1);
		\draw[pattern = north east lines] (\i,\j) rectangle +(1,1);
	}
	\draw[black!50] (4,4) grid (9,9);
	\end{scope}
	\end{tikzpicture}
	\caption{%
		In \Cref{lemma:brayatan}, a modification of $x$ in the boundary of $F_n$ only affects the value
		of $z(x,n)$ in a bounded region.
	}
	\label{fig:ejemplo_brayan2}
\end{figure}

\begin{proof}[Proof of \Cref{prop:approx}]
	If $\psi = 0$, then the result is trivial.
	Thus, let us assume $\psi \neq 0$.
	
	Fix some $w \in \Omega$. By~\Cref{lemma:brayatan}, there is $N \in \NN$ and $\Lambda \Subset \ZZ^d$ such that
	for every $x \in \Omega$ and $n >N$ we have a configuration $z(x,n) \in \Omega$
	satisfying the conditions of the lemma.
	For $n>N$, let
	\begin{align}
		f_n(x) &= \psi(z(x,n),w) \;,
	\end{align}
	and define the interaction $\Phi^n$, supported only on translates of $F_n$, by
	\begin{align}
		\Phi^n_{k + F_n}(x) &\isdef \frac{1}{\abs{F_n}}f_n(\sigma^{k} x) \;.
	\end{align}
	We show that for $n$ sufficiently large, $\Phi^\varepsilon\isdef\Phi^n$ satisfies
	the required conditions.
	
	We start by arguing that for $k\in F_n$, the variation $\Var_{k}(f_n)$ is bounded by a constant
	independent of~$n$. 
	First, assume that $k \in F_{n-N}$.  By Lemma~\ref{lemma:brayatan},
	$\big(z(x,n),z(y,n)\big) \in \relation{T}_{k}(\Omega)$
	whenever $(x,y)\in \relation{T}_{k}(\Omega)$.
	Recall the notation $\zeta_k x$ from~\Cref{sec:prelim:Banach_cocycles},
	and note that if $(x,y)\in \relation{T}_{k}(\Omega)$, then $\zeta_k x = \zeta_k y$.
	From these two facts, we deduce
	\begin{align}
	\Var_{k}(f_n) &\label{eq:variation0} = \sup_{(x,y) \in \relation{T}_{k}(\Omega)} \abs[\big]{f_n(y) - f_n(x)}\\
	&  = \sup_{(x,y) \in \relation{T}_{k}(\Omega)} \abs[\big]{\psi(z(y,n),w) - \psi(z(x,n),w)}\\
	& = \sup_{(x,y) \in \relation{T}_{k}(\Omega)} \abs[\big]{\psi\big(z(x,n),z(y,n)\big)}\\
		& \leq \sup_{(x,y) \in \relation{T}_{k}(\Omega)} \abs{\psi(x,y)}\\
	& \label{eq:variation1} \leq \sup_{x \in \Omega} \abs[\big]{\psi(x,\zeta_k x)}+\sup_{y \in \Omega} \abs[\big]{\psi(y,\zeta_k y)}
	\leq 2 \normsull{\psi}.
	\end{align}
	Next, we claim there is a constant $K >0$ such that $\abs{\Var_{k}(f_n)} \leq K$ for all $k \in F_{n}\setminus F_{n-N}$. Indeed, if $(x,y)\in \relation{T}_k(\Omega)$ by the lemma we have $\big(z(x,n),z(y,n)\big) \in \relation{T}_{k+\Lambda}(\Omega)$. As $\psi$ is continuous  and $\relation{T}_{\Lambda}(\Omega)$ is compact, there is a $K \in \RR$ such that $\sup_{(x,y)\in \relation{T}_{\Lambda}(\Omega)}\abs{\psi(x,y)} \leq K$. Therefore,
	
	\begin{align}
	\Var_{k}(f_n) & = \sup_{(x,y) \in \relation{T}_{k}(\Omega)} \abs{f_n(y) - f_n(x)}\\
	 &\label{eq:variation2_prime} = \sup_{(x,y) \in \relation{T}_{k}(\Omega)} \abs[\big]{\psi\big(z(x,n),z(y,n)\big)}\\
	 &\label{eq:variation2} \leq \sup_{(x,y) \in \relation{T}_{k+\Lambda}(\Omega)} \abs{\psi(x,y)} = \sup_{(x,y) \in \relation{T}_{\Lambda}(\Omega)} \abs{\psi(x,y)} \leq K \;.
	\end{align}

	Now, observe that
	\begin{align}
	\normVS[]{\Phi^n} & = \sum_{\substack{k \in \ZZ^d\\ 0 \in k+F_n}}\Var_0(\Phi^n_{k + F_n}) = \sum_{k \in F_n }\Var_0(\Phi^n_{k + F_n})\\
	& = \frac{1}{\abs{F_n}}\sum_{k \in F_n }\Var_0(f_n \circ \sigma^{k}) = \frac{1}{\abs{F_n}}\sum_{k \in F_n }\Var_k(f_n) \;.
	\end{align}
	Using~\eqref{eq:variation1} and~\eqref{eq:variation2}, we find that	
	\begin{align}
		\normVS{\Phi^n} & = 
			\frac{1}{\abs{F_n}}\sum_{k \in F_{n-N} }\Var_k(f_n)
				+ \frac{1}{\abs{F_n}}\sum_{k \in F_{n}\setminus F_{n-N} }\Var_k(f_n) \\
		& \leq
			\frac{\abs{F_{n-N}}}{\abs{F_n}}2\normsull{\psi}
				+ \frac{\abs{F_n \setminus F_{n-N}}}{\abs{F_n}}K \;.
	\end{align}
	Since $\psi \neq 0$, letting $n$ be sufficiently large, we obtain
	\begin{align}
	\normVS{\Phi^n} \leq 3\normsull{\psi} \;,
	\end{align}
	and thus condition~\ref{prop:sullivan:approx:VS-norm} is satisfied.
	
	To verify condition~\ref{prop:sullivan:approx:closeness}, observe that
	\begin{align}\label{eq:psi_diff_psi_Phi_n}
		\psi_{\Phi^n}(x, \zeta_0 x) & =
			\sum_{k \in \ZZ^d} \big[\Phi^n_{k + F_n}(\zeta_0 x) - \Phi^n_{k + F_n}(x)\big] \\
		&=
			\sum_{k \in \ZZ^d} \big[\Phi^n_{F_n}(\sigma^k \zeta_0 x) - \Phi^n_{F_n}(\sigma^k x)\big] \\
		& =
			\frac{1}{\abs{F_n}}\sum_{k \in F_n} \big[f_n(\sigma^k \zeta_0 x) - f_n(\sigma^k x)\big] \\
		& \label{eq:psi_zeta}=
			\frac{1}{\abs{F_n}}\sum_{k \in F_n} \psi\big(z(\sigma^k x,n),z(\sigma^k \zeta_0 x,n) \big) \;.
	\end{align}
	By continuity of $\psi$,
	there exists $M \in \NN$ such that for every $m \geq M$
	and every $(x',y'),(x,y) \in \relation{T}_{0}(\Omega)$
	such that $x_{F_{m}} = x'_{F_{m}}$ and $y_{F_{m}} = y'_{F_{m}}$, we have
	\begin{align}
		\abs{ \psi(x',y') - \psi( x, y)} &\leq \frac{\varepsilon}{2} \;.
	\end{align}
	In particular, if we let %
	$N' = \max\{N,M\}$, then by shift-invariance of the cocycle, we have
	\begin{align}
		\MoveEqLeft\nonumber
		\abs[\big]{ \psi\big(z(\sigma^k x,n),z(\sigma^k \zeta_0 x,n)\big) - \psi(x, \zeta_0 x)} \\
		&= \abs[\big]{ \psi\big(z(\sigma^k x,n),z(\sigma^k \zeta_0 x,n)\big)
			- \psi(\sigma^k x, \sigma^k \zeta_0 x)}
			\leq \frac{\varepsilon}{2} \;.
	\end{align}
	for every $k\in F_{n-N'}$.
	For $k \in F_n \setminus F_{n-N'}$, on the other hand,
	by~\eqref{eq:variation2_prime} and~\eqref{eq:variation2}, we have
	\begin{align}
		\abs[\big]{\psi(z(\sigma^k x,n),z(\sigma^k \zeta_0 x,n) )} &\leq K \;,
	\end{align}
	and so for such $k$,
	\begin{align}
		\abs[\big]{ \psi(z(\sigma^k x,n),z(\sigma^k \zeta_0 x,n)) - \psi(x,  \zeta_0 x)}
			&\leq K + \normsull{\psi} \;.
	\end{align}
	Combining these two bounds with \eqref{eq:psi_zeta}, we obtain that for any $n \geq N'$,
	\begin{align}
		\abs[\big]{\psi_{\Phi^n}(x,\zeta_0 x) - \psi(x,\zeta_0 x)}
			&\leq \frac{\abs{F_{n-N'}}}{\abs{F_n}}\frac{\varepsilon}{2}
			+ \frac{\abs{F_n\setminus F_{n-N'}}}{\abs{F_n}}(K + \normsull{\psi}) \;.
	\end{align}
	Therefore, %
	choosing $n\geq N'$ large enough, we have
	\begin{align}
		\normsull[]{\psi_{\Phi^n} - \psi} &=
			\sup_{x \in \Omega}\abs[\big]{\psi_{\Phi^n}(x, \zeta_0 x) - \psi(x, \zeta_0 x)} < \varepsilon \;.
	\end{align}
	Hence, condition~\ref{prop:sullivan:approx:closeness} is also satisfied. 
\end{proof}

\bibliographystyle{plainurl}
\bibliography{bibliography}

\newcommand{\noopsort}[1]{}
\begin{thebibliography}{10}

\bibitem{Ave72}
M.~B. Averintsev.
\newblock Description of {M}arkovian random fields by {G}ibbsian conditional
  probabilities.
\newblock {\em Theory of Probability and Its Applications}, 17(1):20--33, 1972.
\newblock \href {http://dx.doi.org/10.1137/1117002}
  {\path{doi:10.1137/1117002}}.

\bibitem{BarGomMarTaa18}
S.~Barbieri, R.~G{\'{o}}mez, B.~Marcus, and S.~Taati.
\newblock Equivalence of relative {G}ibbs and relative equilibrium measures for
  actions of countable amenable groups.
\newblock {\em Nonlinearity}, 33(5):2409--2454, 2020.
\newblock \href {http://dx.doi.org/10.1088/1361-6544/ab6a75}
  {\path{doi:10.1088/1361-6544/ab6a75}}.

\bibitem{Cha17}
N.~Chandgotia.
\newblock Generalisation of the {H}ammersley-{C}lifford theorem on bipartite
  graphs.
\newblock {\em Transactions of the American Mathematical Society},
  369(10):7107--7137, 2017.
\newblock \href {http://dx.doi.org/10.1090/tran/6899}
  {\path{doi:10.1090/tran/6899}}.

\bibitem{ChaHanMarMeyPav14}
N.~Chandgotia, G.~Han, B.~Marcus, T.~Meyerovitch, and R.~Pavlov.
\newblock One-dimensional {M}arkov random fields, {M}arkov chains and
  topological {M}arkov fields.
\newblock {\em Proceedings of the American Mathematical Society},
  142(1):227--242, 2014.
\newblock \href {http://dx.doi.org/10.1090/S0002-9939-2013-11741-7}
  {\path{doi:10.1090/S0002-9939-2013-11741-7}}.

\bibitem{ChaMey16}
N.~Chandgotia and T.~Meyerovitch.
\newblock Markov random fields, {M}arkov cocycles and the 3-colored chessboard.
\newblock {\em Israel Journal of Mathematics}, 215(2):909--964, 2016.
\newblock \href {http://dx.doi.org/10.1007/s11856-016-1398-2}
  {\path{doi:10.1007/s11856-016-1398-2}}.

\bibitem{DacNah18}
S.~Dachian and B.~Nahapetian.
\newblock On the relationship of energy and probability in models of classical
  statistical physics.
\newblock {\em Preprint}, 2018.
\newblock \href {http://arxiv.org/abs/1810.05388} {\path{arXiv:1810.05388}}.

\bibitem{Dob68c}
R.~L. Dobrushin.
\newblock The description of a random field by means of conditional
  probabilities and conditions of its regularity.
\newblock {\em Theory of Probability and Its Applications}, 13(2):197--224,
  1968.
\newblock \href {http://dx.doi.org/10.1137/1113026}
  {\path{doi:10.1137/1113026}}.

\bibitem{Dob68b}
R.~L. Dobrushin.
\newblock Gibbsian random fields for lattice systems with pairwise
  interactions.
\newblock {\em Functional Analysis and Its Applications}, 2(4):292–--301,
  1968.
\newblock \href {http://dx.doi.org/10.1007/BF01075681}
  {\path{doi:10.1007/BF01075681}}.

\bibitem{EntFerSok93}
A.~C.~D. {\noopsort{Enter}}{van~E}nter, R.~Fern\'andez, and A.~D. Sokal.
\newblock Regularity properties and pathologies of position-space
  renormalization-group transformations: Scope and limitations of {G}ibbsian
  theory.
\newblock {\em Journal of Statistical Physics}, 72(5/6), 1993.
\newblock \href {http://dx.doi.org/10.1007/BF01048183}
  {\path{doi:10.1007/BF01048183}}.

\bibitem{Fer06}
R.~Fern\'andez.
\newblock Gibbsianness and non-{G}ibbsianness in lattice random fields.
\newblock In A.~Bovier, F.~Dunlop, F~{d}en Hollander, A.~{v}an Enter, and
  J.~Dalibard, editors, {\em Mathematical Statistical Physics}, Les Houches,
  Session LXXXIII, 2005, pages 731--799. Elsevier, 2006.
\newblock \href {http://dx.doi.org/10.1016/s0924-8099(06)80052-1}
  {\path{doi:10.1016/s0924-8099(06)80052-1}}.

\bibitem{FriVel17}
S.~Friedli and Y.~Velenik.
\newblock {\em Statistical Mechanics of Lattice Systems: a Concrete
  Mathematical Introduction}.
\newblock Cambridge U. Press, 2017.
\newblock \href {http://dx.doi.org/10.1017/9781316882603}
  {\path{doi:10.1017/9781316882603}}.

\bibitem{Geo88}
H.-O. Georgii.
\newblock {\em Gibbs Measures and Phase Transitions}.
\newblock Walter de Gruyter, 1988.
\newblock \href {http://dx.doi.org/10.1515/9783110250329}
  {\path{doi:10.1515/9783110250329}}.

\bibitem{GeoHagMae00}
H.-O. Georgii, O.~H\"aggstr\"om, and C.~Maes.
\newblock The random geometry of equilibrium phases.
\newblock In C.~Domb and J.~Lebowitz, editors, {\em Phase Transitions and
  Critical Phenomena}, volume~18, pages 1--142. Academic Press, 2000.
\newblock \href {http://dx.doi.org/10.1016/S1062-7901(01)80008-2}
  {\path{doi:10.1016/S1062-7901(01)80008-2}}.

\bibitem{Gol78}
S.~Goldstein.
\newblock A note on specifications.
\newblock {\em Zeitschrift f\"ur Wahrscheinlichkeitstheorie und Verwandte
  Gebiete}, 46(1):45--51, 1978.
\newblock \href {http://dx.doi.org/10.1007/BF00535686}
  {\path{doi:10.1007/BF00535686}}.

\bibitem{Gri73}
G.~R. Grimmett.
\newblock A theorem about random fields.
\newblock {\em Bulletin of the London Mathematical Society}, 5:81--84, 1973.
\newblock \href {http://dx.doi.org/doi:10.1112/blms/5.1.81}
  {\path{doi:doi:10.1112/blms/5.1.81}}.

\bibitem{Gro82}
L.~Gross.
\newblock Thermodynamics, statistical mechanics and random fields.
\newblock In P.~L. Hennequin, editor, {\em Ecole d'Et\'{e} de Probabilit\'{e}s
  de Saint-Flour X -- 1980}, pages 101--204. Springer, 1982.
\newblock \href {http://dx.doi.org/10.1007/bfb0095619}
  {\path{doi:10.1007/bfb0095619}}.

\bibitem{HamCli68}
J.~M. Hammersley and P.~Clifford.
\newblock Markov fields on finite graphs and lattices.
\newblock Unpublished manuscript, 1968.
\newblock URL: \url{http://www.statslab.cam.ac.uk/~grg/books/jmh.html}.

\bibitem{Hoe63}
W.~Hoeffding.
\newblock Probability inequalities for sums of bounded random variables.
\newblock {\em Journal of the American Statistical Association},
  58(301):13--30, 1963.
\newblock \href {http://dx.doi.org/10.1080/01621459.1963.10500830}
  {\path{doi:10.1080/01621459.1963.10500830}}.

\bibitem{Isr79}
R.~B. Israel.
\newblock {\em Convexity in the Theory of Lattice Gases}.
\newblock Princeton University Press, 1979.
\newblock \href {http://dx.doi.org/10.1515/9781400868421}
  {\path{doi:10.1515/9781400868421}}.

\bibitem{Koz74}
O.~K. Kozlov.
\newblock Gibbs description of a system of random variables.
\newblock {\em Problems of Information Transmission}, 10(3):258--265, 1974.
\newblock URL: \url{http://mi.mathnet.ru/ppi1046}.

\bibitem{Koz77}
O.~K. Kozlov.
\newblock Consistent systems of conditional distributions of a random field.
\newblock {\em Problems of Information Transmission}, 13(3):218--228, 1977.
\newblock URL: \url{http://mi.mathnet.ru/ppi1096}.

\bibitem{LanRue69}
O.~E. L{anford~III} and D.~Ruelle.
\newblock Observables at infinity and states with short range correlations in
  statistical mechanics.
\newblock {\em Communications in Mathematical Physics}, 13(3):194--215, 1969.
\newblock \href {http://dx.doi.org/10.1007/BF01645487}
  {\path{doi:10.1007/BF01645487}}.

\bibitem{McD89}
C.~McDiarmid.
\newblock On the method of bounded differences.
\newblock In {\em Surveys in combinatorics}, volume 141 of {\em London
  Mathematical Society Lecture Note Series}, pages 148--188. Cambridge
  University Press, 1989.
\newblock \href {http://dx.doi.org/10.1017/CBO9781107359949.008}
  {\path{doi:10.1017/CBO9781107359949.008}}.

\bibitem{PetSch97}
K.~Petersen and K.~Schmidt.
\newblock Symmetric {G}ibbs measures.
\newblock {\em Transactions of the American Mathematical Society},
  349(7):2775--2811, 1997.
\newblock \href {http://dx.doi.org/10.1090/s0002-9947-97-01934-x}
  {\path{doi:10.1090/s0002-9947-97-01934-x}}.

\bibitem{PirSin75}
S.~A. Pirogov and Ya.~G. Sinai.
\newblock Phase diagrams of classical lattice systems.
\newblock {\em Theoretical and Mathematical Physics}, 25(3):1185--1192, 1975.
\newblock \href {http://dx.doi.org/10.1007/bf01040127}
  {\path{doi:10.1007/bf01040127}}.

\bibitem{Pre76}
C.~Preston.
\newblock {\em Random Fields}, volume 534 of {\em Lecture Notes in
  Mathematics}.
\newblock Springer, 1976.
\newblock \href {http://dx.doi.org/10.1007/BFb0080563}
  {\path{doi:10.1007/BFb0080563}}.

\bibitem{Pre80}
C.~Preston.
\newblock Construction of specifications.
\newblock In L.~Streit, editor, {\em Quantum Fields --- Algebras, Processes},
  pages 269--292. Springer, 1980.
\newblock \href {http://dx.doi.org/10.1007/978-3-7091-8598-8_18}
  {\path{doi:10.1007/978-3-7091-8598-8_18}}.

\bibitem{Put18}
I.~F. Putnam.
\newblock {\em Cantor minimal systems}, volume~70 of {\em University Lecture
  Series}.
\newblock American Mathematical Society, 2018.
\newblock \href {http://dx.doi.org/10.1090/ulect/070}
  {\path{doi:10.1090/ulect/070}}.

\bibitem{Rue04}
D.~Ruelle.
\newblock {\em Thermodynamic Formalism}.
\newblock Cambridge University Press, 2nd edition, 2004.
\newblock \href {http://dx.doi.org/10.1017/CBO9780511617546}
  {\path{doi:10.1017/CBO9780511617546}}.

\bibitem{Spi71}
F.~Spitzer.
\newblock Markov random fields and {G}ibbs ensembles.
\newblock {\em American Mathematical Monthly}, 78:142--154, 1971.
\newblock \href {http://dx.doi.org/10.2307/2317621}
  {\path{doi:10.2307/2317621}}.

\bibitem{Sul73}
W.~G. Sullivan.
\newblock Potentials for almost {M}arkovian random fields.
\newblock {\em Communications in Mathematical Physics}, 33:61--74, 1973.
\newblock \href {http://dx.doi.org/10.1007/BF01645607}
  {\path{doi:10.1007/BF01645607}}.

\end{thebibliography}

\appendix

\section{Appendix}

\subsection{Some symbolic dynamics facts}

The following proposition generalizes a remark made in~\cite{ChaMey16}, at the end of Section~3.1.
\begin{proposition}[TMP + safe symbol $\Rightarrow$ SFT]
\label{prop:tmp+safe->sft}
	Every shift space with the TMP that has a safe symbol is of finite type.
\end{proposition}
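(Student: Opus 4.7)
My plan is as follows. Choose $A \isdef \{0\}$ and let $B \Subset \ZZ^d$ be a finite memory set for $\{0\}$ with $0 \in B$, guaranteed by the TMP. Define the finite set of forbidden patterns $\mathcal{F} \isdef \Sigma^B \setminus \Lang_B(\Omega)$ and let $\Omega'$ be the SFT they generate; equivalently, $x \in \Omega'$ iff $x_{k+B} \in \Lang_{k+B}(\Omega)$ for every $k \in \ZZ^d$. By shift-invariance of $\Omega$, for each $k$, the set $k + B$ is a memory set for $\{k\}$. The inclusion $\Omega \subseteq \Omega'$ is immediate from the definition of $\mathcal{F}$, so the task reduces to showing $\Omega' \subseteq \Omega$.

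Fix $x \in \Omega'$. For each finite $F \Subset \ZZ^d$, let $x^F \in \Sigma^{\ZZ^d}$ denote the configuration agreeing with $x$ on $F$ and equal to $\blank$ elsewhere. Since $x^{F_n} \to x$ in the product topology along any increasing exhaustion $F_n \nearrow \ZZ^d$, and $\Omega$ is closed, it suffices to prove that $x^F \in \Omega$ for every finite $F$. I will do this by induction on $\abs{F}$. The base case $F = \varnothing$ asserts $\blank^{\ZZ^d} \in \Omega$; this follows by starting from any $y \in \Omega$, applying the safe-symbol property at each site along an enumeration of $\ZZ^d$ to obtain configurations in $\Omega$ converging to $\blank^{\ZZ^d}$, and invoking closedness.

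For the inductive step, assume $x^{E'} \in \Omega$ whenever $\abs{E'} < \abs{E}$. Pick any $k \in E$ and set $E' \isdef E \setminus \{k\}$, so $x^{E'} \in \Omega$. Since $x \in \Omega'$, we have $x_{k+B} \in \Lang_{k+B}(\Omega)$, so there is some $z \in \Omega$ with $z_{k+B} = x_{k+B}$. Iteratively applying the safe-symbol property at the finitely many sites of $(k+B) \setminus E$ yields $z' \in \Omega$ such that $z'_j = x_j$ for $j \in E \cap (k+B)$ (in particular $z'_k = x_k$) and $z'_j = \blank$ for $j \in (k+B) \setminus E$. Then $z'$ agrees with $x^{E'}$ on $(k+B) \setminus \{k\}$, and the memory-set property for $\{k\}$ with memory set $k+B$ gives $z'_{k+B} \lor x^{E'}_{\ZZ^d \setminus \{k\}} \in \Omega$. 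A direct check shows this pasted configuration equals $x^E$, completing the induction.

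The main obstacle will be the inductive step — specifically, the careful choreography in constructing $z'$ that combines the local admissibility coming from $x \in \Omega'$ with the one-directional safe-symbol property (which replaces symbols \emph{by} $\blank$, not away from it) so that the memory-set axiom can then be invoked to actually insert the symbol $x_k$ at site $k$ over the background $x^{E'}$. Once this surgery is set up correctly, the remainder of the argument is just bookkeeping and a compactness/closedness limit.
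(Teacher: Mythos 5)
Your proof is correct and follows essentially the same strategy as the paper's: define $\Omega'$ from the memory-set patterns of shape $B$, show $\Omega'\subseteq\Omega$ by an induction that combines the memory-set axiom with safe-symbol replacements, and finish with a closedness/limit argument. The only organizational difference is that the paper inducts on the number of non-$\blank$ symbols of configurations in $\Omega'_0\subseteq\Omega'$ (so its pivot $y$ already agrees with the witness $z$ on $(k+B)\setminus\{k\}$, making the inductive step a touch shorter but implicitly requiring the check that $y$ stays in $\Omega'$), whereas you fix $x\in\Omega'$ and induct on $|F|$ for the truncations $x^F$, which sidesteps that check at the cost of the extra step of blanking out $(k+B)\setminus E$ to produce $z'$ before invoking the memory-set property.
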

\begin{proof}
	Let $\Omega\subseteq\Sigma^{\ZZ^d}$ be a shift space which has the TMP and a safe symbol~$\blank$.
	Let $B\Subset\ZZ^d$ be a memory set for the singleton $\{0\}$ witnessing the TMP of $\Omega$.
	Let $\mathcal{F}\subseteq\Sigma^B$ denote the set of patterns with shape $B$
	which are not (globally) admissible in $\Omega$.
	We claim that $\Omega$ coincides with the SFT $\Omega'$ defined by forbidding the patterns in $\mathcal{F}$.
	
	Every configuration in $\Omega$ clearly avoids the patterns in $\mathcal{F}$,
	hence $\Omega\subseteq\Omega'$.
	Conversely, let $\Omega'_0$ denote the set of configurations in $\Omega'$
	that have no more than finitely many non-safe symbols.
	We show that $\Omega'_0\subseteq\Omega$.  Since $\Omega'_0$ is dense in $\Omega'$,
	this would imply that $\Omega'\subseteq\Omega$.
	
	To show that every $x\in\Omega'_0$ is in $\Omega$, we use induction on the number of
	non-safe symbols of $x$.
	If $x$ has no non-safe symbol, it is clearly in $\Omega$.
	Suppose that every element of $\Omega'_0$ with at most $k$ non-safe symbols is in $\Omega$.
	Let $x\in\Omega'_0$ be a configuration with $k+1$ non-safe symbols.
	Pick an arbitrary $k\in\ZZ^d$ with $x_k\neq\blank$.
	On the one hand, the configuration $y\isdef x_{\ZZ^d\setminus\{k\}}\lor\blank^k$ obtained from $x$
	by replacing the symbol at site $k$ with $\blank$ has $k$ non-safe symbols and thus,
	by the induction hypothesis, is in $\Omega$.
	On the other hand, by definition, $x_{B+k}$ is admissible in $\Omega$ and thus occurs in
	a configuration $z\in\Omega$.
	Since $\Omega$ has the TMP, it follows that $x= y_{\ZZ^d\setminus\{k\}}\lor z_{B+k}$
	is also in $\Omega$.
\end{proof}

\begin{proposition}[TMP + pivot $\Rightarrow$ uniform pivot]
\label{prop:TMP+pivot->uniform-pivot}
	If a configuration space with the TMP has the pivot property, then
	it also has the uniform pivot property.
\end{proposition}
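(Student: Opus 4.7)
My plan is to deduce the uniform pivot property from the pivot property by a compactness argument on $\relation{T}_A(\Omega)$, using the TMP to show that ``pivotability within a fixed finite region'' is an open condition.

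Fix $A\Subset\Sites$. For each $B\Subset\Sites$ with $B\supseteq A$, let $U_B\subseteq\relation{T}_A(\Omega)$ be the set of pairs $(x,y)$ for which there is a finite sequence of admissible pivot moves transforming $x$ into $y$ whose pivot sites all lie in~$B$. The pivot property says $\bigcup_{B}U_B=\relation{T}_A(\Omega)$, and the family is directed: if $B\subseteq B'$ then $U_B\subseteq U_{B'}$.

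The key step is to verify that each $U_B$ is open in $\relation{T}_A(\Omega)$. Let $(x,y)\in U_B$ be witnessed by a pivot sequence $x=z^{(0)},z^{(1)},\ldots,z^{(n)}=y$ with pivot sites $k_1,\ldots,k_n\in B$, and let $B'\supseteq B$ be a finite memory set for $B$ in $\Omega$ (provided by the TMP). I claim that any $(x',y')\in\relation{T}_A(\Omega)$ with $x'_{B'}=x_{B'}$ and $y'_{B'}=y_{B'}$ also belongs to~$U_B$. Define $z'^{(i)}\isdef z^{(i)}_B\lor x'_{\Sites\setminus B}$ (using $x'_{\Sites\setminus B}=y'_{\Sites\setminus B}$, since $A\subseteq B$). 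Because $z^{(i)}$ agrees with $x$ on $B'\setminus B$ and $x$ agrees with $x'$ on $B'\setminus B$, the memory-set property applied to $z^{(i)}$ and $x'$ yields $z'^{(i)}\in\Omega$. The identities $x_B=x'_B$ and $y_B=y'_B$ (consequences of the agreement on~$B'$) give $z'^{(0)}=x'$ and $z'^{(n)}=y'$, and consecutive configurations $z'^{(i-1)},z'^{(i)}$ differ only at site~$k_i\in B$. Hence $(x',y')\in U_B$, and the neighborhood $\{(x',y'):x'_{B'}=x_{B'},\,y'_{B'}=y_{B'}\}\cap\relation{T}_A(\Omega)$ is contained in~$U_B$, establishing openness.

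By the compactness of $\relation{T}_A(\Omega)$ (see Remark~\ref{rmk:etale-topology}), the open cover $\{U_B\}$ admits a finite subcover; by directedness, a single $U_B$ already equals $\relation{T}_A(\Omega)$. This $B$ is the finite set demanded by the uniform pivot property for~$A$.

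The one delicate point I expect is bookkeeping around the TMP: one must apply the memory-set property to $B$ (not to $A$), so that the glued configurations $z'^{(i)}$ are admissible, and then carefully verify that the endpoints of the modified sequence are exactly $x'$ and $y'$. Everything else is a direct combination of the pivot hypothesis with the compactness of $\relation{T}_A(\Omega)$.
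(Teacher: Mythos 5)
Your proof is correct and follows essentially the same strategy as the paper's: use the TMP to show that ``pivotability inside a fixed finite region'' is locally constant on $\relation{T}_A(\Omega)$, then exploit compactness. The only organizational difference is minor: you cover $\relation{T}_A(\Omega)$ by the directed family $\{U_B\}_{B\Supset A}$ and invoke directedness to pass from a finite subcover to a single $U_B$, whereas the paper covers by neighborhoods $N_{A,x,y}$ (one per pair, each with its own pivot sequence and memory set $B_{x,y}$) and takes the union $\bigcup_{(x,y)\in F} B_{x,y}$ over the finite subcover at the end. Both variants apply the memory-set property to the finite set of pivot sites (your $B$, the paper's $A_{x,y}$) rather than to $A$ itself, which is exactly the ``delicate point'' you flagged, and your gluing $z'^{(i)} = z^{(i)}_B \lor x'_{\Sites\setminus B}$ matches the paper's $\bar{x}^{(i)} = \bar{x}_{\Sites\setminus A_{x,y}} \lor x^{(i)}_{B_{x,y}}$ in substance.
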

\begin{proof}
	Let $\Omega\subseteq\Sigma^\Sites$ be a configuration space which has
	the TMP and the pivot property.
	Let $A\Subset\Sites$ be fixed.
	For each $(x,y) \in\relation{T}_A(\Omega)$, fix a sequence $x = x^{(0)}\to x^{(1)}\to\cdots\to x^{(n)} = y$
	of single-site pivots at sites $v_0, v_1\ldots, v_n$, transforming $x$ to $y$.
	Let $B_{x,y}$ be a memory set for $A_{x,y} \isdef A \cup \{v_1, \ldots, v_n\}$.
	
	Observe that if $(\bar{x},\bar{y}) \in\relation{T}_{A}(\Omega)$ is any asymptotic pair
	such that $\bar{x}_{B_{x,y}} = x_{B_{x,y}}$ and $\bar{y}_{B_{x,y}} = y_{B_{x,y}}$, then one can
	construct a sequence $\bar{x} = \bar{x}^{(0)}\to \bar{x}^{(1)}\to\cdots\to \bar{x}^{(n)} = \bar{y}$
	of single-site pivot moves at the same sites $v_0, v_1\ldots, v_n$, transforming $\bar{x}$ to $\bar{y}$,
	by defining
	\begin{align}
		\bar{x}^{(i)} &\isdef \bar{x}_{\Sites\setminus A_{x,y}}\lor x^{(i)}_{B_{x,y}} \;,
	\end{align}
	for $i=0,1,\ldots,n$.
	Since $B_{x,y}$ is a memory set for $A_{x,y}$ and
	$\bar{x}_{B_{x,y}\setminus A_{x,y}}=x^{(i)}_{B_{x,y}\setminus A_{x,y}}$,
	the configurations $\bar{x}^{(i)}$ are admissible in $\Omega$.
	The set of all such pairs $(\bar{x},\bar{y})$ is an open neighbourhood of $(x,y)$ in $\relation{T}_{A}(\Omega)$
	which we denote by $N_{A,x,y}$.

	The open sets $N_{A,x,y}$ for $(x,y) \in \relation{T}_A(\Omega)$ cover $\relation{T}_{A}(\Omega)$.
	By compactness, we can choose a finite set $F\subseteq\relation{T}_A(\Omega)$ such that
	$\{N_{A,x,y}: (x,y) \in F\}$ still covers $\relation{T}_A(\Omega)$.
	Let $C\isdef\bigcup_{(x,y)\in F} B_{x,y}$.
	Then, for every $(u,v)\in\relation{T}_A(\Omega)$, there is a sequence of single-site pivots
	from $u$ to $v$ which stays within $C$.
	Since this holds for every $A\Subset\Sites$, we find that $\Omega$ has the bounded pivot property.
\end{proof}

\subsection{Specifications and cocycles}
\label{apx:specification-vs-cocycle}

\begin{proof}[Proof of Proposition~\ref{prop:specification:continuous-positive:TMP}]
	Suppose $\Omega$ has the TMP, then the uniform specification defined on Example~\ref{exp:uniform-specification} is local and hence continuous. Furthermore, by definition this specification is positive.
	
	Conversely, fix $A \Subset \ZZ^d$. As the specification is continuous and positive, by compactness of $\Omega$ it follows that $\varepsilon  \isdef \frac{1}{2}\inf_{x \in \Omega} K_A(x,[x_A])>0$. Also, by continuity of the specification,
	we can find a finite $B \supseteq A$ such that for all $p \in \Sigma^A$ and $x,y \in \Omega$ so that $x_{B \setminus A}= y_{B \setminus A}$, we have
	\begin{align}
	\abs{K_A(x, [p]) - K_A(y, [p])} &< \varepsilon.
	\end{align}
	In particular, we obtain that if $x,y \in \Omega$ so that $x_{B \setminus A}= y_{B \setminus A}$, then
	\begin{align}
	\abs{K_A(x, [x_A]) - K_A(y, [x_A])}
	&< \varepsilon \leq \frac{1}{2}K_A(x, [x_A])
	\end{align}
	and so $K_A(y, [x_A]) > 0$.  This shows that $x_A \lor y_{\ZZ^d \setminus A} \in \Omega$.
	As the choice of $B$ does not depend upon $x,y \in \Omega$ we deduce that $B$ is a memory set for $A$.
	Since $A$ was arbitrary, we conclude that $\Omega$ has the~TMP.
\end{proof}

\begin{proof}[Proof of Proposition~\ref{prop:specification-vs-cocycle:equivalence}]
	First, let $K$ be a positive specification on $\Omega$, and
	for $(x,y)\in\relation{T}(\Omega)$, define
	\begin{align}
		\psi(x,y) &\isdef -\log\left[\frac{K_A(y,[y_A])}{K_A(x,[x_A])}\right]
	\end{align}
	where $A\Subset\Sites$ is the set of sites at which $x$ and $y$ disagree.
	Note that if $B\supseteq A$ is another finite set containing $A$,
	then by the consistency of the kernels $K_A$ and $K_B$,
	\begin{align}
		\frac{K_B(y,[y_B])}{K_B(x,[x_B])} &=
			\frac{K_B(y,[y_{B\setminus A}])K_A(y,[y_A])}{K_B(x,[x_{B\setminus A}])K_A(x,[x_A])}
		=
			\frac{K_A(y,[y_A])}{K_A(x,[x_A])} \;.
	\end{align}
	Now, let $(x,y),(y,z)\in\relation{T}(\Omega)$.
	Define $B$ as the union of the disagreement positions of $(x,y)$ and $(y,z)$.
	Then,
	\begin{align}
		\psi(x,y) + \psi(y,z) &=
			-\log\left[\frac{K_B(y,[y_B])}{K_B(x,[x_B])}\right]
			-\log\left[\frac{K_B(z,[z_B])}{K_B(y,[y_B])}\right] \\
		&=
			-\log\left[\frac{K_B(z,[z_B])}{K_B(x,[x_B])}\right] \\
		&=
			\psi(x,z) \;,
	\end{align}
	which means $\psi$ is a cocycle on $\relation{T}(\Omega)$.
	Clearly $\psi$ is measurable with respect to the $\sigma$-algebra induced from $\Omega\times\Omega$.
	
	Conversely, let $\psi$ be a measurable cocycle on $\relation{T}(\Omega)$.
	For $x\in\Omega$ and $A\Subset\Sites$, define
	\begin{align}
	\label{eq:specification-of-a-cocycle:def}
		K_A\big(x,[q_B]\cap [p_A]\big) &\isdef
			\begin{cases}
				\frac{1}{Z_{A|x_{\Sites\setminus A}}}
					\ee^{-\psi(x,x_{\Sites\setminus A}\lor p_A)}
					& \text{if $x\in[q_B]$ and $x_{\Sites\setminus A}\lor p_A\in\Omega$,} \\
				0	& \text{otherwise,}
			\end{cases}
	\end{align}
	for each two patterns $p\in\Lang_A(\Omega)$ and $q\in\Lang_B(\Omega)$
	with $B\Subset\Sites\setminus A$, where
	\begin{align}
		Z_{A|x_{\Sites\setminus A}} &\isdef
			\sum_{p'_A\in\Lang_{A|x_{\Sites\setminus A}}(\Omega)}
			\ee^{-\psi(x,x_{\Sites\setminus A}\lor p'_A)} \;.
	\end{align}
	This extends to a unique probability measure $K_A(x,\cdot)$ on $\Omega$.
	The function $K_A\colon\Omega\times\field{F}(\Omega)\to[0,1]$ is a proper kernel
	from $\field{F}_{\Sites\setminus A}(\Omega)$ to $\field{F}(\Omega)$.
	Clearly, $K_A(x,[x_A])>0$ for each $x\in\Omega$ and $A\Subset\Sites$.
	It remains to show that these kernels are consistent.
	
	To this end, take $A\subseteq B\Subset\Sites$.
	Then, for every $x\in\Omega$,
	\begin{align}
		K_B(x,[x_{B\setminus A}])K_A(x,[x_A]) &=
			\Bigg(
				\sum_{r_A\in\Lang_{A|x_{\Sites\setminus A}}(\Omega)}
				\frac{1}{Z_{B|x_{\Sites\setminus B}}}
					\ee^{-\psi(x,x_{\Sites\setminus A}\lor r_A)}
			\Bigg)\cdot
			\frac{1}{Z_{A|x_{\Sites\setminus A}}}
				\ee^{-\psi(x,x)} \\
		&=
			\frac{1}{Z_{B|x_{\Sites\setminus B}}}
			\Bigg(
			{\underbrace{
				\sum_{r_A\in\Lang_{A|x_{\Sites\setminus A}}(\Omega)}
				\frac{1}{Z_{A|x_{\Sites\setminus A}}}
					\ee^{-\psi(x,x_{\Sites\setminus A}\lor r_A)}
			}_{1}}
			\Bigg)
			\ee^{-\psi(x,x)} \\
		&=
			\frac{1}{Z_{B|x_{\Sites\setminus B}}}\ee^{-\psi(x,x)} \\
		&=
			K_B(x,[x_B]) \;,
	\end{align}
	which means $K_A$ and $K_B$ are consistent.
	We conclude that $K$ is a positive specification.
\end{proof}

\subsection{Background on the $\VS$-norm}
\label{apx:interactions:variation-summable}

\begin{proof}[Proof of Proposition~\ref{prop:interaction:variation-summable:single-site}]
	Let $A\Subset\ZZ^d$.
	By the uniform pivot property, there exists a finite set $B\supseteq A$ such that
	for every $(x,y)\in\relation{T}_A(\Omega)$, there is a sequence
	$x = z^{(0)} \to z^{(1)} \to  z^{(2)} \to \cdots \to  z^{(n)} = y$
	of admissible pivot moves at sites $s_1,s_2,\ldots,s_n\in B$,
	transforming $x$ to $y$.
	Clearly, by removing the repetitions if necessary,
	this sequence can be chosen such that the number of visits to each site in $B$
	is bounded by $\ell\isdef\abs{\Lang_B(\Omega)}$. %
	Thus, for every continuous observable $f\in\banach{C}(\Omega)$,
	\begin{align}
		\abs[\big]{f(y) - f(x)} &\leq
			\sum_{i = 1}^n \abs[\big]{f(z^{(i)}) - f(z^{(i-1)})} \leq
			\sum_{i = 1}^n \Var_{s_i}(f) \leq
			\ell\sum_{s\in B}\Var_s(f) \;.
	\end{align}
	Since this is true for every $(x,y)\in\relation{T}_A(\Omega)$,
	we find that $\Var_A(f)\leq\ell\sum_{s\in B}\Var_s(f)$.
	It follows that
	\begin{align}
		\sum_{\substack{C\Subset\Sites\\ C\cap A\neq\varnothing}}\Var_A(\Phi_C) &\leq
			\sum_{\substack{C\Subset\Sites\\ C\cap A\neq\varnothing}}
			\ell\sum_{s\in B}\Var_s(\Phi_C) \\
		&=
			\ell\sum_{s\in B}\sum_{\substack{C\Subset\Sites\\ C\cap A\neq\varnothing}}\Var_s(\Phi_C) \leq
			\ell\sum_{s\in B}\smash{\overbrace{\sum_{\substack{C\Subset\Sites\\ C\ni s}}\Var_s(\Phi_C)}^{<\infty}} \;,
	\end{align}
	which is finite.  Hence, $\Phi$ is variation-summable.
\end{proof}

\begin{proof}[Proof of Lemma~\ref{lem:Phi_bound_VS}]
	Let $A\Subset\ZZ^d$ be arbitrary.
	Define a graph $G_A(\Omega)$ as follows.  The vertices of $G_A(\Omega)$ are the patterns in $\Lang_A(\Omega)$.
	Two patterns $w,w' \in \Lang_A(\Omega)$ are connected by an edge in $G_A(\Omega)$
	if and only if there exists a sequence of configurations $x^{(0)}, x^{(1)},\ldots, x^{(N)}\in\Omega$ with $x^{(0)}_A = w$, $x^{(N)}_A = w'$ such that each $x^{(i-1)}\to x^{(i)}$ is a pivot move and precisely one of these moves is in $A$.
	By the pivot property of~$\Omega$, the equivalence classes of $\overset{\Omega}{\sim}$ in $\Lang_A(\Omega)$ are precisely the connected components of the graph $G_A(\Omega)$.
	Observe that if $w,w' \in \Lang_A(\Omega)$ are adjacent in $G_A(\Omega)$, then
	\begin{align}
		\abs[\big]{\Phi(w)-\Phi(w')} &\leq \Var_{k}(\Phi_A) = \Var_0(\Phi_{A-k})
	\end{align}
	for some $k \in A$.
	It follows by induction that for any $w,w' \in \Lang_A(\Omega)$,
	\begin{align}
		\abs[\big]{\Phi(w)-\Phi(w')} &\leq \max_{k \in A}\Var_0(\Phi_{A-k})\,d_{G_A(\Omega)}(w,w') \;,
	\end{align}
	where $d_{G_A(\Omega)}$ denotes the graph distance of $w$ and $w'$ in $G_A(\Omega)$.
	If $w,w' \in \Lang_A(\Omega)$ are in the same equivalence class, then clearly $d_{G_A(\Omega)}(w,w')<\abs{\Lang_A(\Omega)}$.	On the other hand,
	$\max_{k \in A}\Var_0(\Phi_{A-k})\leq\sum_{\substack{C\Subset\ZZ^d\\ C\ni 0}}\Var_0(\Phi_C)=\normVS{\Phi}$.
	The claim follows.
\end{proof}

\begin{proof}[Proof of Proposition~\ref{prop:interaction:variation-summable:seminorm}]
	Clearly, if for every $C \Subset \ZZ^d$ the function $\Phi_C$
	is constant on each asymptotic class of $\Omega$, then $\Var_0(\Phi_C) =0$ whenever $0 \in C$,
	and thus $\normVS{\Phi}=0$.
	Conversely, if  $\normVS{\Phi}=0$, then by Lemma \ref{lem:Phi_bound_VS},
	for every $C \Subset \ZZ^d$ the function $\Phi_C$
	is constant on each asymptotic class of $\Omega$.
\end{proof}

\subsection{Surjectivity of linear maps on Banach spaces}
\label{apx:surjectivity}

\begin{proof}[Proof of Proposition~\ref{prop:Banach_equivalences}]\ 
	\begin{description}[font=\rmfamily\mdseries] %
	\item[\ref{prop:Banach_equivalences:1}~$\implies$~\ref{prop:Banach_equivalences:2}]
		By the open mapping theorem,
		the image of any ball centered at the origin in $X$
		contains a ball centered at the origin. Now scale up.
		Then the image of some ball centered at the origin
		in $X$ contains the unit ball.

	\item[\ref{prop:Banach_equivalences:2}~$\implies$~\ref{prop:Banach_equivalences:1}]
		The image of the map is the
		union of images of balls centered at the origin.
		By linearity, these images are all scalar multiples of
		one another. So, if the image of some ball centered
		at the origin contains the unit ball, then each ball
		centered at  the origin is contained in the image of
		some ball and so the map is surjective.
	
	\item[\ref{prop:Banach_equivalences:2}~$\implies$~\ref{prop:Banach_equivalences:3}]
		Trivial.

	\item[\ref{prop:Banach_equivalences:3}~$\implies$~\ref{prop:Banach_equivalences:2}]
		Let $y$ be in the unit ball in $Y$.
		We show that $y$ has a pre-image in the ball of radius $2R$ in $X$.
		Namely, the pre-image will be of the form $x\isdef\sum_{i=1}^\infty u_i$,
		where $u_i\in\ball{X}{\nicefrac{R}{2^{i-1}}}$, and the image of the partial sum
		$x_n\isdef \sum_{i=1}^n u_i$ will approximate $y$ with accuracy $\nicefrac{1}{2^{n}}$.
		
		It follows from~\ref{prop:Banach_equivalences:3} that
		for all $\delta > 0$,
		$T\big(\ball{X}{\delta R}\big)$ is dense in  $\ball{Y}{\delta}$.
		Set $v_1\isdef y$.
		Choose $u_1 \in \ball{X}{R}$ such that $\norm{v_1 - T(u_1)}<\nicefrac{1}{2}$.
		Inductively, suppose that $u_1,u_2,\ldots,u_n\in X$ are such that
		$u_i\in\ball{X}{\nicefrac{R}{2^{i-1}}}$ and $\norm{y - T(\sum_{i=1}^n u_i)}<\nicefrac{1}{2^n}$.
		Set $v_{n+1}\isdef y - T(\sum_{i=1}^n u_i)$ and
		choose $u_{n+1}\in\ball{X}{\nicefrac{R}{2^n}}$ such that $\norm{v_{n+1} - T(u_{n+1})}< \nicefrac{1}{2^n}$.
		It follows that
		\begin{align}
			\norm[\Big]{y - T\Big(\sum_{i=1}^{n+1} u_i\Big)} &=
				\norm[\Big]{y - T\Big(\sum_{i=1}^n u_i\Big) - T(u_{n+1})} =
				\norm[\Big]{v_{n+1} - T(u_{n+1})} < \frac{1}{2^n} \;.
		\end{align}
		By construction, the sequence $x_n=\sum_{i=1}^n u_i$ is Cauchy
		and thus has a limit $x$ in $X$.
		Furthermore, $\norm{x_n}\leq\sum_{i=1}^n\norm{u_i}<\sum_{i=1}^n \nicefrac{R}{2^{i-1}}$.
		Thus, $\norm{x}<\sum_{i=1}^\infty \nicefrac{R}{2^{i-1}}=2R$.
		Lastly, since $T(x_n)\to y$ as $n\to\infty$ and $T$ is continuous,
		we have $T(x)=y$.	
		\qedhere		
	\end{description}
\end{proof}

\end{document}